\newcommand{\bracket}[2]{\ensuremath{\langle#1 \vphantom{#2}| #2\vphantom{#1}\rangle}}
\newcommand{\ketbra}[2]{\ensuremath{|#1 \vphantom{#2}\rangle \langle #2\vphantom{#1}|}}
\newcommand{\norm}[1]{\left|\left|#1\right|\right|}
\newcommand{\unit}{\mathds{1}}
\newcommand{\tr}[1]{\mathrm{Tr}\left(#1\right)}
\newcommand{\trp}[2]{\mathrm{Tr}_\mathrm{#1}\left(#2\right)}
\newcommand{\abs}[1]{\left| #1 \right|} 
\newcommand{\im}{\mathrm{i}}
\newcommand{\com}[1]{}
\newcommand\scalemath[2]{\scalebox{#1}{\mbox{\ensuremath{\displaystyle #2}}}}
\let\oldproof\proof
\renewcommand{\proof}{\color{darkgray}\oldproof}
\theoremstyle{plain}
\newtheorem{postulate}{Postulate}
\newtheorem{theorem}{Theorem}[chapter]
\newtheorem{definition}{Definition}[chapter]
\theoremstyle{definition}
\newtheorem*{example}{Example}
\newtheorem*{remarks}{Remarks}
\begin{document}
\frontmatter	  
\thispagestyle{empty}
\begin{center}
    {\Huge\bfseries Quantum information theory\par}
    \vspace{0.5cm}
   {\LARGE \bfseries Lecture notes\par}
    \vspace{4cm}
    {\Large Christoph Dittel \\ Albert-Ludwigs-Universität Freiburg \par}
    \vspace{2cm}
	{\Large Summer semester 2022 \\ \vspace{0.5cm} (Version from November 2022)}
\end{center}
\clearpage

\addtocounter{page}{-1}
\chapter*{Preamble}
This lecture provides an introduction to \textit{quantum information} and \textit{quantum computation}, which are strongly related disciplines and subject of intense research. The lecture notes contain only a small selection of topics in these disciplines, with the aim of providing you with an overview and a basic introduction. Please do \textit{not} see these lecture notes as a replacement of a proper textbook. Instead, I recommend to read any textbooks of your choice alongside. Given that most parts of these notes are based on the textbook \textit{Quantum Computation and Quantum Information} by M. A. Nielsen and I. L. Chuang, I am sure that this book will help you deepening the discussed topics. However, since some topics in this lecture are not properly discussed there, please also consult other literature. 

The version from 2022 is the first version of my lecture notes on quantum information theory. If you find any errors or have suggestions for improvement, please don't hesitate to contact me, either in person or by email: \href{mailto:christoph.dittel@physik.uni-freiburg.de}{christoph.dittel@physik.uni-freiburg.de}. \par 
\vspace{0.5cm}

\begin{flushright}
Christoph Dittel, April 2022
\end{flushright}

{\hypersetup{linkcolor=black}
\tableofcontents
}

\mainmatter	  
\chapter*{Introduction}
\addcontentsline{toc}{chapter}{Introduction}
\chaptermark{Introduction}
There is plenty of literature about quantum information and quantum computation. Most of the topics discussed in this lecture can be found in the following references: \nocite{Nielsen-QC-2011} \nocite{Benenti-PQ-2007} \nocite{Preskill-QI-1998} \nocite{Mintert-BC-2009}

%


\vspace{0.5cm}
Let's start with asking what we need in order to deal with information. We need to encode it into something physical! Why? Think of the hard drive of your computer, or an old-fashioned music tape. If you want to store some information, you need a physical carrier. For example a set of capacitors, which can be charged or uncharged, or a magnetic tape, where the magnetization can be aligned. Next, we also know that our world is well described by quantum mechanics. Hence, if we study information theory, we better consider it from a quantum mechanical point of view. This is exactly what we will do in this lecture. 

What is then the difference between information (e.g. a music track) on a magnetic tape and information stored in a set of capacitors? In the end nothing. After decoding, the stored information must be independent on the physical carrier. In principle it doesn't matter whether you store your music track on tape, on vinyl, or on a SSD. In the end you want to perform a measurement and get back the information of your track as good as possible. Classically, this can be illustrated as follows: 
\begin{center}
\includegraphics[width=0.8\linewidth]{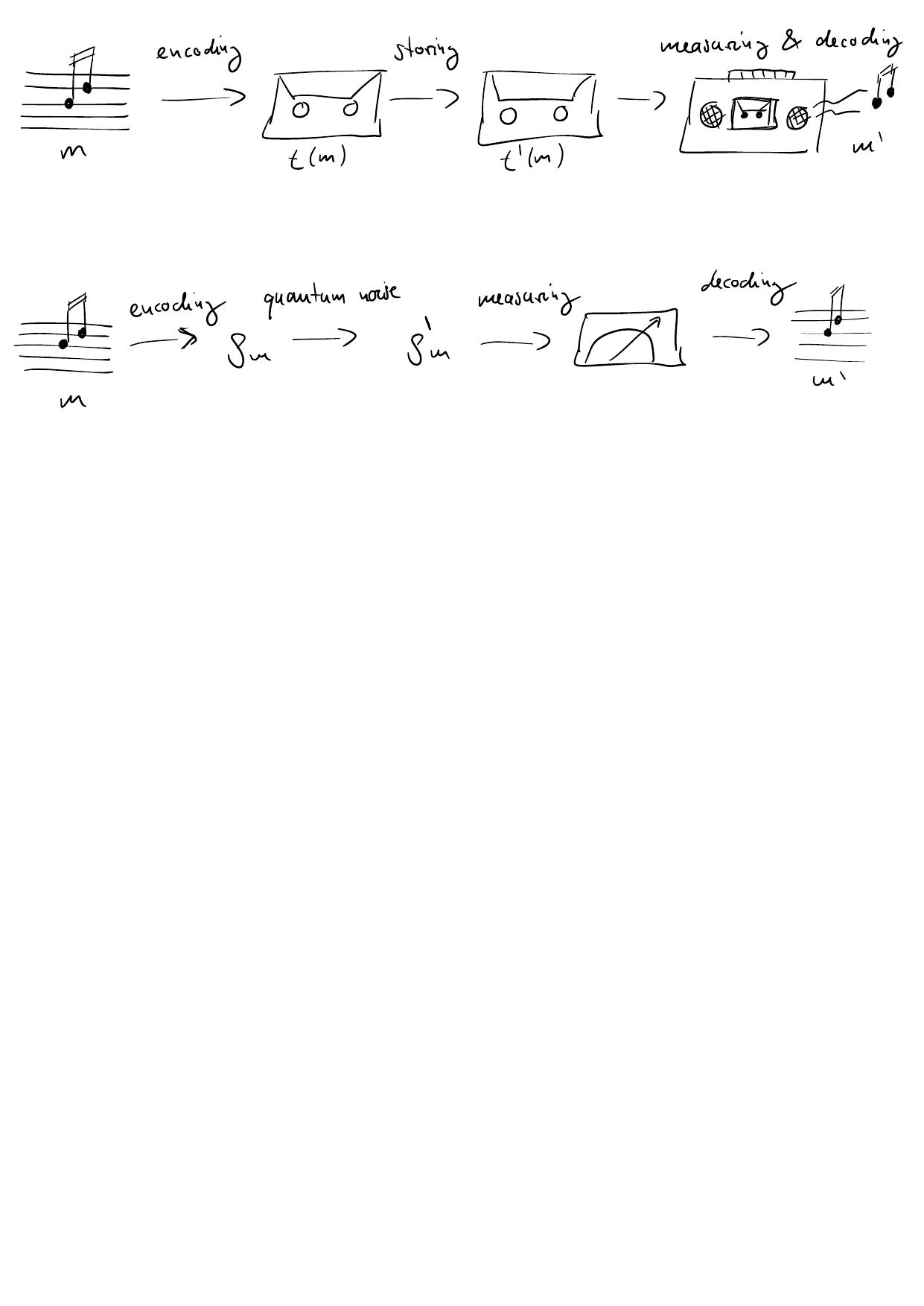}
\end{center}
If your original information $m$ is disrupted while being stored or processed (e.g. through scratches on the tape), then you get back $m'$ instead of $m$. The challenge (of our industry) is to find a convenient information carrier such that $m'$ is as close to $m$ as possible. 

Quantum mechanically we can formulate a similar schema:
\begin{center}
\centering
\includegraphics[width=0.8\linewidth]{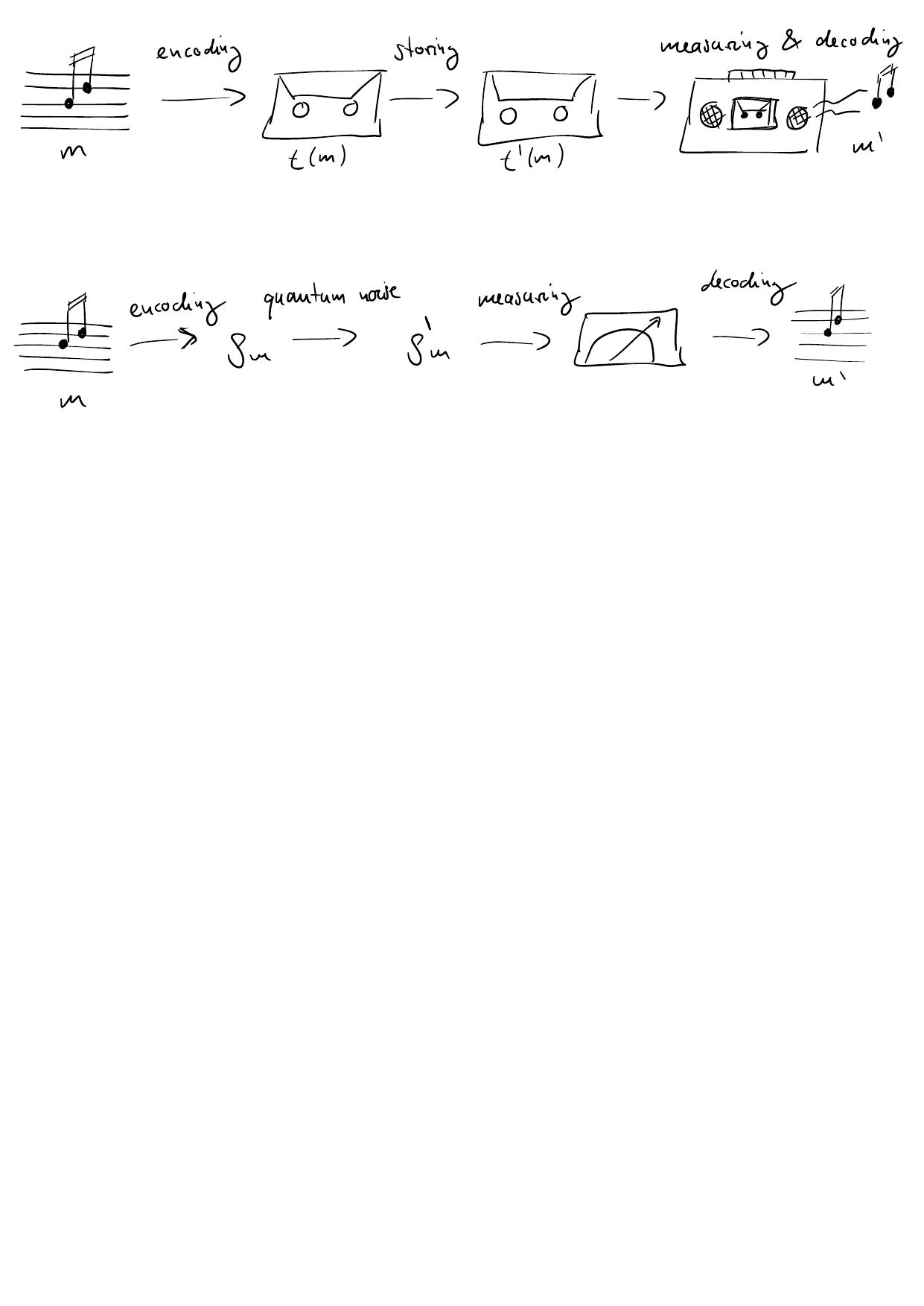}
\end{center}
The difference is that we now encode our information $m$ on a quantum state $\rho_m$, which experiences some noise, such that the measurement returns $m'$ instead of $m$. Similar as in the classical case, it does (in principle) not matter on which physical system (e.g. trapped ions, superconducting circuits, optomechanical systems, photons, etc.) we encode our information. That is, in quantum information theory we study quantum states $\rho_m$ independently on the exact physical system they describe. Thus, it can be seen as a theory about the quantum mechanical state space. Typical questions are then for example as follows: Given some noise, $\rho_m\rightarrow\rho'_{m}$, what is the best way to encode the information $m$ on a quantum state $\rho_m$, such that measuring $\rho'_{m}$ yields with high probability the original information $m$. 

In summary, in quantum information theory we study \emph{the general structure of quantum states} and their potential and limitations for information processing, such as computational tasks, communication, or error correction. This lecture provides you with an introduction into some (but not all) of these topics: 
\begin{center}
\includegraphics[width=\linewidth]{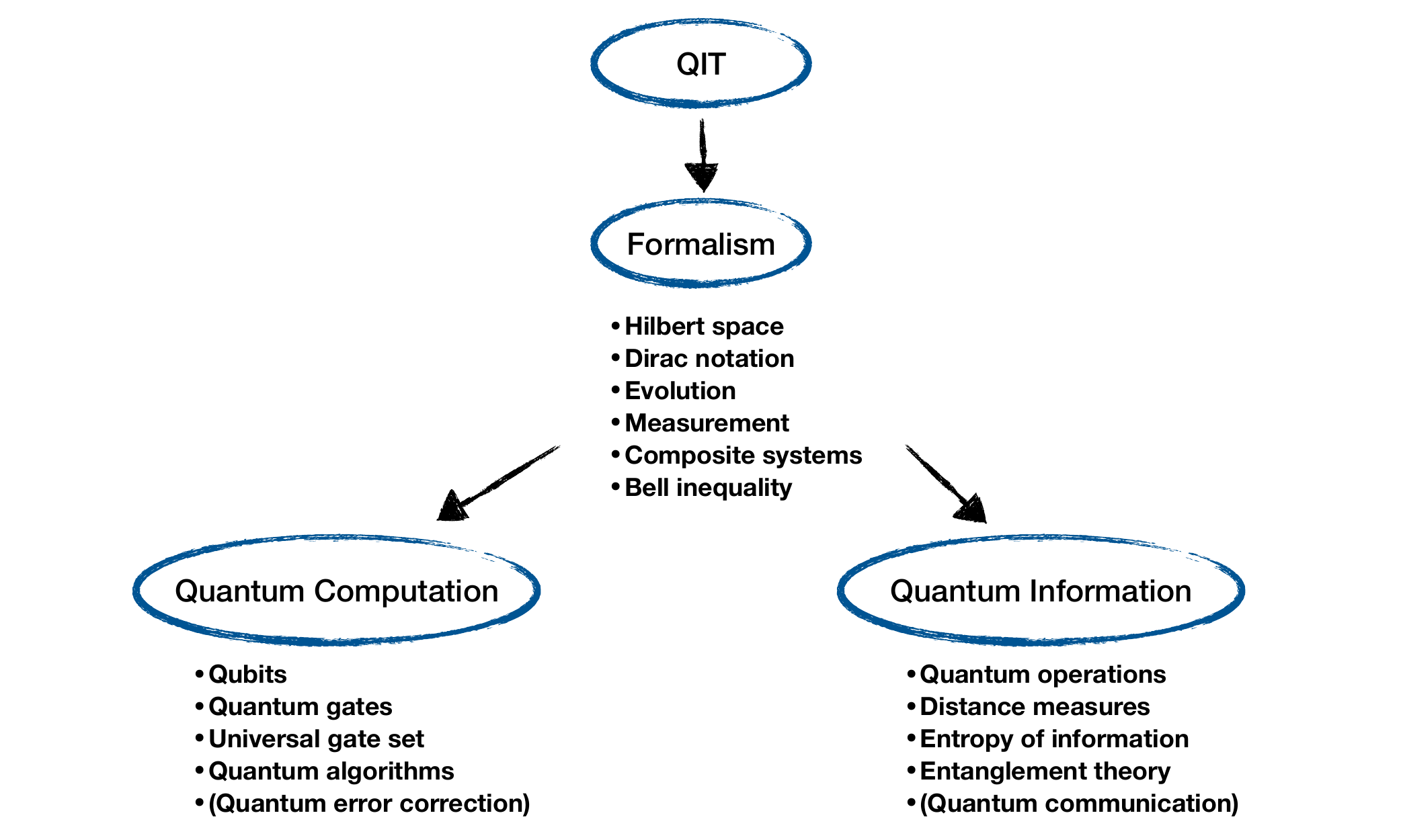}
\end{center}

\chapter{The formalism of quantum information} 

\section{Hilbert space and Dirac notation}

\begin{tcolorbox}
\begin{postulate}
Every isolated quantum system can be associate with a Hilbert space $\mathcal{H}$ (aka state space), which is a complex vector space together with an inner product. The system is completely described by its state vector $\psi\in \mathcal{H}$ (density operator $\rho$ on $\mathcal{H}$).
\end{postulate}
\end{tcolorbox}
In this lecture we restrict to finite dimensional Hilbert spaces $\mathcal{H}=\mathbb{C}^n$. The dual space of $\mathcal{H}$, denoted by $\mathcal{H}^*$, is the set of linear maps $\phi: \mathcal{H}\rightarrow \mathbb{C}$. Since the Hilbert space $\mathcal{H}$ is finite dimensional, the spaces $\mathcal{H}$ and $\mathcal{H}^*$ have the same dimension, we can associate a dual vector $\psi \in \mathcal{H}^*$ with each vector $\psi \in \mathcal{H}$. In particular, the dual vector $\phi \in \mathcal{H}^*$, which is a linear map $\phi: \mathcal{H}\rightarrow \mathbb{C}$, acts as $\psi \mapsto \left( \phi, \psi \right)$, where $\left( \phi, \psi \right)$ denotes the inner product of $\psi,\phi\in\mathcal{H}$. Recall from linear algebra that a function $(.,.) : \mathcal{H}\times \mathcal{H} \rightarrow \mathbb{C}$ is an inner product if, for all $a_j \in \mathbb{C}$ and $\psi,\psi_j,\phi \in \mathcal{H}$,
\begin{enumerate}[i)]
\item $\left( \psi, \sum_j a_j \psi_j\right)=\sum_j a_j \left( \psi,\psi_j \right)$
\item $\left( \psi,\phi\right) = \left( \phi,\psi\right)^*$
\item $\left( \psi,\psi\right)\geq 0$, with equality if and only if $\psi=0$. 
\end{enumerate}
The inner product induces the norm $\norm{\psi}=\sqrt{(\psi,\psi)}$. 

In the following, we mainly use Dirac's notation and write $\psi\in\mathcal{H}$ as $\ket{\psi}$, and its dual vector $\psi\in\mathcal{H}^*$ as $\bra{\psi}$. The inner product of $\ket{\psi},\ket{\phi} \in\mathcal{H}$ is then denoted by $(\phi,\psi)=(\ket{\phi}, \ket{\psi})=\bracket{\phi}{\psi}$, and the norm of $\ket{\psi} \in \mathcal{H}$ becomes $\norm{\psi}=\sqrt{\bracket{\psi}{\psi}}$. Note that the state vector $\ket{\psi} \in \mathcal{H}$ is sometimes also called \underline{state}, \underline{vector}, or simply \underline{ket}.

\begin{definition}
A \underline{basis} of a $n$-dimensional Hilbert space $\mathcal{H}$ is a set of linearly independent state vectors $\{ \ket{v_1},\dots,\ket{v_n}\}$, such that any state vector $\ket{v}\in\mathcal{H}$ can be written as
\begin{align}
\ket{v}=\sum_{j=1}^n a_j \ket{v_j},
\end{align}
where $a_j\in\mathbb{C}$. If, additionally, $\bracket{v_j}{v_k}=\delta_{j,k}$ for all $j,k$, then $\{ \ket{v_1},\dots,\ket{v_n}\}$ is an \underline{orthonormal basis} (ON-basis) of $\mathcal{H}$, and $\norm{v}=1$ if and only if $\sum_{j=1}^n|a_j|^2=1$.
\end{definition}

Given an ON-basis $\{\ket{v_j}\}_j$, we can represent the state vectors $\ket{v_j}\in\mathcal{H}$ by complex valued matrices (i.e. complex vectors), e.g. $\ket{v_1}=(1,0,0,\dots,0)^\top,\ket{v_2}=(0,1,0,\dots,0)^\top,\dots$. Hence, any state vector $\ket{\psi} =\sum_{j=1}^n \psi_j \ket{v_j} $ can be associated with a \ul{matrix representation}
\begin{align}
\ket{\psi}=\begin{pmatrix}\psi_1\\ \vdots \\ \psi_n\end{pmatrix}, \quad \bra{\psi}=\begin{pmatrix}\psi_1^*, \dots, \psi_n^*\end{pmatrix},
\end{align}
such that the inner product is
\begin{align}
\bracket{\phi}{\psi}=(\phi_1^*,\dots,\phi_n^*)\begin{pmatrix}\psi_1\\ \vdots \\ \psi_n\end{pmatrix}=\sum_{j=1}^n \phi^*_j\psi_j,
\end{align}
and the \underline{outer product}
\begin{align}
\ketbra{\psi}{\phi}=\begin{pmatrix}
\psi_1\phi_1^*& \cdots & \psi_1\phi^*_n \\ \vdots& & \vdots \\ \psi_n\phi_1^*& \cdots & \psi_n\phi^*_n
\end{pmatrix}.
\end{align}

\begin{example}
For the Hilbert space $\mathcal{H}=\mathbb{C}^2$ of a qubit (we will soon discuss the qubit in more detail) we choose the orthonormal basis 
\begin{align}
\ket{0}=\ket{v_1}=\begin{pmatrix}1\\0\end{pmatrix}, \quad \ket{1}=\ket{v_2}=\begin{pmatrix}0\\1\end{pmatrix},
\end{align}
such that any normalized vector $\ket{v} \in \mathcal{H}$ can be written as $\ket{v}=a_0\ket{0}+a_1\ket{1}$, with $|a_0|^2+|a_1|^2=1$. Moreover, we have 
\begin{align}
\begin{split}
\ketbra{0}{0}=\begin{pmatrix}1&0\\0&0\end{pmatrix} , \quad \ketbra{0}{1}=\begin{pmatrix}0&1\\0&0\end{pmatrix}\\
\ketbra{1}{0}=\begin{pmatrix}0&0\\1&0\end{pmatrix}\quad \ketbra{1}{1}=\begin{pmatrix}0&0\\0&1\end{pmatrix}.
\end{split}
\end{align}
Note that the vectors
\begin{align}
\ket{+}=\frac{1}{\sqrt{2}} (\ket{0}+\ket{1})=\frac{1}{\sqrt{2}} \begin{pmatrix}1\\1\end{pmatrix},\quad\ket{-}=\frac{1}{\sqrt{2}}(\ket{0}-\ket{1})=\frac{1}{\sqrt{2}} \begin{pmatrix}1\\-1\end{pmatrix},
\end{align}
also form an orthonormal basis of $\mathcal{H}$. The Hilbert space $\mathcal{H}=\mathbb{C}^2$ corresponds, for example, to the polarization of a single photon,  the spin state of a spin-$1/2$ particle, or the occupation of an atom in a double-well potential.
\end{example}

\section{Operators on Hilbert space}
\subsection{Linear operators}
\begin{definition}
A linear operator $A:\mathcal{H}\rightarrow\mathcal{H}$ is defined to be linear in its inputs,
\begin{align}
A\left(\sum_j a_j \ket{v_j}\right) = \sum_j a_j A(\ket{v_j}).
\end{align}
\end{definition}

\begin{remarks}\leavevmode
\begin{enumerate}[1)]

\item Every linear operators has a matrix representation. To see this, suppose that $\{\ket{v_1},\dots,\ket{v_n}\}$ is an ON-basis of $\mathcal{H}$. Then, for $A$ a linear operator there exist numbers $A_{k,j}$, such that $A(\ket{v_j})=\sum_k A_{k,j}\ket{v_k}$. The matrix with entries $A_{k,j}$ is the matrix representation of $A$, and 
\begin{align}
\bra{v_k}A(\ket{v_j})=\bra{v_k}\sum_{k'} A_{k',j} \ket{v_{k'}} = \sum_{k'} A_{k',j} \bracket{v_k}{v_{k'}} = A_{k,j}. 
\end{align}

\item A matrix $A$ with components $A_{k,j}$ is a liner operator, which satisfies $A\left(\sum_j a_j \ket{v_j}\right) = \sum_j a_j A\ket{v_j}$.

\item Given $A$ on $\mathcal{H}$ and $\ket{v}\in\mathcal{H}$, by remarks 1) and 2), we write $A(\ket{v})=A\ket{v}$.

\item Given the entries $A_{k,j}$ and the ON-basis $\{ \ket{v_1},\dots,\ket{v_n}\}$, the linear operator $A$ can be written in the \underline{outer product} representation
\begin{align}\label{eq:outerprod}
A=\sum_{k,j} A_{k,j} \ket{v_k}\bra{v_j},
\end{align}
such that $A \ket{v_j} = \sum_{k,j'} A_{k,j'} \ket{v_k}\bracket{v_{j'}}{v_j}=\sum_k A_{k,j} \ket{v_k}$, as required. 

\item From Eq.~\eqref{eq:outerprod} it follows that for any ON-basis $\{ \ket{v_1},\dots,\ket{v_n}\}$ of $\mathcal{H}$, the identity matrix can be written as
\begin{align}\label{eq:identity}
\unit = \sum_j \ketbra{v_j}{v_j},
\end{align}
which is called the \underline{completeness relation}.
\end{enumerate}
\end{remarks}
In the following we only consider linear operators. 

\begin{example}
Consider the linear operator $X$ on $\mathcal{H}=\mathbb{C}^2$, which acts as $X\ket{0}=\ket{1}$, and $X\ket{1}=\ket{0}$. It can be written as $X=\ketbra{0}{1}+\ketbra{1}{0}$, and, in the basis $\ket{0}=(1,0)^\top, \ket{1}=(0,1)^\top$, it has the matrix representation $X= \begin{pmatrix}0&1\\1&0\end{pmatrix}$. To check this, we calculate
\begin{align}
\begin{split}
X\ket{0}&=\begin{pmatrix}0&1\\1&0\end{pmatrix} \begin{pmatrix}1\\0\end{pmatrix} = \begin{pmatrix}0\\1\end{pmatrix}=\ket{1},\\
X\ket{1}&=\begin{pmatrix}0&1\\1&0\end{pmatrix} \begin{pmatrix}0\\1\end{pmatrix} = \begin{pmatrix}1\\0\end{pmatrix}=\ket{0},
\end{split}
\end{align}
and
\begin{align}
\begin{split}
X\ket{0}&=\left( \ketbra{0}{1}+\ketbra{1}{0}\right)\ket{0} =\ket{0} \bracket{1}{0} + \ket{1} \bracket{0}{0}=\ket{1},\\
X\ket{1}&=\left( \ketbra{0}{1}+\ketbra{1}{0}\right)\ket{1} =\ket{0} \bracket{1}{1} + \ket{1} \bracket{0}{1}=\ket{0}.
\end{split}
\end{align}
\end{example}

\subsection{Adjoints and Hermitian operators}
\begin{definition}
Let $A$ be a linear operator on $\mathcal{H}$. Its \underline{adjoint} (or \underline{Hermitian conjugate}) operator $A^\dagger$ on $\mathcal{H}$ is defined such that for all $\ket{v},\ket{w}\in\mathcal{H}$
\begin{align}\label{eq:adjoint}
\left( \ket{v},A\ket{w} \right)=\left( A^\dagger \ket{v}, \ket{w} \right),
\end{align}
where $(\cdot,\cdot)$ is the inner product.
\end{definition}

From Eq.~\eqref{eq:adjoint} we get $(\ket{v},AB\ket{w})= (\ket{v},A(B\ket{w}))=(A^\dagger\ket{v},B\ket{w})=(B^\dagger A^\dagger\ket{v},\ket{w})$, and, thus
\begin{align}
\left(AB\right)^\dagger = B^\dagger A^\dagger.
\end{align}
By defining $\ket{v}^\dagger=\bra{v}$, we have $(\ketbra{w}{v})^\dagger=\ketbra{v}{w}$ and $(A^\dagger \ket{v})^\dagger = \bra{v}A$. With this, one can identify both sides in Eq.~\eqref{eq:adjoint} with $\bra{v}A\ket{w}$.

\begin{definition}
An operator $A$ on $\mathcal{H}$ with adjoint $A^\dagger=A$ is called \underline{Hermitian} (or \underline{self-adjoint}).\footnote{Note that all self-adjoint operators are Hermitian, but in infinite dimensional Hilbert spaces Hermitian operators are not necessarily self-adjoint (this has to do with the domain of the linear operator and the domain of its adjoint -- for a detailed discussion consult the literature). However, since here we merely consider finite dimensional Hilbert spaces, any Hermitian operator is also self-adjoint, and, thus, we can use the terms Hermitian and self-adjoint interchangeably.}
\end{definition}

\begin{definition}
An operator $A$ on $\mathcal{H}$ satisfying $A^\dagger A = A A^\dagger$ is called \underline{normal}.
\end{definition}

\begin{definition}
An operator $U$ on $\mathcal{H}$ satisfying  $U^\dagger U=UU^\dagger =\unit$ is called \underline{unitary}.
\end{definition}

\begin{definition}
An operator $A$ on $\mathcal{H}$ is called \underline{positive} (or \underline{positive semi-definite}) if $\bra{v}A\ket{v} \geq 0$ for all $\ket{v} \in \mathcal{H}$. If $A$ is positive, we write $A\geq 0$. 
\end{definition}

\begin{definition}
An operator $A$ on $\mathcal{H}$ is called \underline{positive definite} if $\bra{v}A\ket{v} >0$ for all $\ket{v} \in \mathcal{H}$, $\ket{v} \neq 0$.  
\end{definition}

\begin{remarks}\leavevmode
\begin{enumerate}[1)]
\item Hermitian operators have real eigenvalues (Exercise).
\item Unitary operators preserve the inner product, since 
\begin{align}
(U \ket{v}, U\ket{w})=\bra{v}U^\dagger U \ket{w} = \bra{v}\unit\ket{w} = \bracket{v}{w}=( \ket{v}, \ket{w}).
\end{align}

\item Given an operator $A=\sum_{k,j} A_{k,j} \ket{v_k}\bra{v_j}$ in the outer product representation~\eqref{eq:outerprod} with matrix representation $A_{k,j}$, its adjoint  $A^\dagger$ is obtained by transposition and complex conjugation of the matrix representation, i.e. $A^\dagger=\sum_{k,j} A_{j,k}^* \ket{v_k}\bra{v_j}$.
\end{enumerate}
\end{remarks}

An important class of operators are so-called projectors: 

\begin{definition}
The \underline{projector} on a $k$-dimensional subspace $W \subset \mathcal{H}$ with ON-basis $\{\ket{w_1},\dots,\ket{w_k}\}$ is defined as
\begin{align}\label{eq:projector}
P=\sum_{j=1}^k \ketbra{w_j}{w_j}.
\end{align}
\end{definition}

\begin{remarks}\leavevmode
\begin{enumerate}[1)]
\item Projectors are Hermitian, $P^\dagger = \left(\sum_{j=1}^k \ketbra{w_j}{w_j}\right)^\dagger = \sum_{j=1}^k\left( \ketbra{w_j}{w_j}\right)^\dagger=\sum_{j=1}^k \ketbra{w_j}{w_j}=P$.
\item Hence, projectors are normal.
\item Projectors satisfy $P^2=P$, since $P^2= \sum_{j,j'}\ket{w_j}\bracket{w_j}{w_{j'}}\bra{w_{j'}}=\sum_{j=1}^k \ketbra{w_j}{w_j}=P$.
\item Projectors are positive, since $\bra{v}P\ket{v}=\sum_{j=1}^k \bracket{v}{w_j}\bracket{w_j}{v}=\sum_{j=1}^k |\bracket{v}{w_j}|^2\geq 0$. 
\end{enumerate}
\end{remarks}

\subsection{Spectral, polar, and singular value decomposition}
\begin{theorem}[\textbf{Spectral decomposition}]
Any normal operator $A$ on $\mathcal{H}$ is diagonal with respect to some ON-basis of $\mathcal{H}$. Conversely, any unitarily diagonalizable operator is normal.
\end{theorem}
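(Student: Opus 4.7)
The plan is to prove both directions separately. For the forward direction, I would proceed by induction on $n = \dim\mathcal{H}$, with the base case $n=1$ being immediate since every operator on a one-dimensional space is already a scalar multiple of the identity and hence diagonal in any basis. For the inductive step, the strategy is to split off a single eigenspace and reduce to a smaller dimension: since the characteristic polynomial $\det(A - z\unit)$ has a root over $\mathbb{C}$, $A$ possesses at least one eigenvalue $\lambda$ with a nontrivial eigenspace $V_\lambda \subseteq \mathcal{H}$, and I would decompose $\mathcal{H} = V_\lambda \oplus V_\lambda^\perp$.

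The key technical lemma I would establish is that a normal operator shares its eigenvectors with its adjoint: if $A\ket{v} = \lambda\ket{v}$, then $A^\dagger\ket{v} = \lambda^*\ket{v}$. This follows from the chain of equalities
\begin{align*}
\norm{(A - \lambda\unit)\ket{v}}^2 &= \bra{v}(A^\dagger - \lambda^*\unit)(A - \lambda\unit)\ket{v} \\
&= \bra{v}(A - \lambda\unit)(A^\dagger - \lambda^*\unit)\ket{v} = \norm{(A^\dagger - \lambda^*\unit)\ket{v}}^2,
\end{align*}
where the middle step uses that $A - \lambda\unit$ is normal whenever $A$ is, because $\lambda\unit$ commutes with every operator.

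With this lemma in hand, $V_\lambda^\perp$ becomes invariant under both $A$ and $A^\dagger$: for $\ket{w}\in V_\lambda^\perp$ and $\ket{v}\in V_\lambda$, one computes $\bracket{v}{A w} = (A^\dagger \ket{v},\ket{w}) = \lambda\bracket{v}{w} = 0$, and analogously $\bracket{v}{A^\dagger w} = 0$. Hence the restriction $A|_{V_\lambda^\perp}$ is a well-defined normal operator on a Hilbert space of strictly smaller dimension, and the inductive hypothesis furnishes an ON-basis of $V_\lambda^\perp$ that diagonalizes it. Combining this basis with any ON-basis of $V_\lambda$, on which $A$ acts as $\lambda\unit$ and is therefore trivially diagonal, yields an ON-basis of $\mathcal{H}$ with respect to which $A$ is diagonal.

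The converse is a short computation: if $A = U D U^\dagger$ with $U$ unitary and $D$ diagonal, then $A^\dagger = U D^* U^\dagger$, and since diagonal matrices commute, $A A^\dagger = U D D^* U^\dagger = U D^* D U^\dagger = A^\dagger A$. The main obstacle I anticipate is the technical lemma tying the eigenvectors of $A$ and $A^\dagger$ together; once that lemma is established, the rest of the forward direction is a routine invariant-subspace induction, and the converse is essentially automatic.
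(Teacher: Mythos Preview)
Your proof is correct. The paper itself does not give a proof but simply refers the reader to page~72 of Nielsen and Chuang~\cite{Nielsen-QC-2011}; the argument there is precisely the induction on $\dim\mathcal{H}$ via an invariant eigenspace splitting that you outline, including the same key lemma that $A\ket{v}=\lambda\ket{v}$ forces $A^\dagger\ket{v}=\lambda^*\ket{v}$ for normal $A$.
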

\begin{proof}
See page 72 in \cite{Nielsen-QC-2011}.
\end{proof}

By the spectral decomposition, if $A$ is normal, it can be written as $A=VDV^\dagger$, with $V=\sum_j \ketbra{\lambda_j}{v_j}$ unitary, $D=\sum_j \lambda_j \ketbra{v_j}{v_j}$, and $\{\ket{v_1},\dots,\ket{v_n}\}$ and $\{\ket{\lambda_1},\dots,\ket{\lambda_n}\}$ ON-bases of $\mathcal{H}$. With this, we have
\begin{align}
\begin{split}
A=VDV^\dagger&=\sum_j \ketbra{\lambda_j}{v_j}\sum_k \lambda_k \ketbra{v_k}{v_k}\sum_l \ketbra{v_l}{\lambda_l}\\
&=\sum_{j,k,l} \lambda_k \ket{\lambda_j} \bracket{v_j}{v_k}\bracket{v_k}{v_l} \bra{\lambda_l}\\
&=\sum_k \lambda_k \ketbra{\lambda_k}{\lambda_k}.
\end{split}
\end{align}
This outer product form directly shows that $\lambda_k$ are the eigenvalues of $A$ with corresponding eigenvectors $\ket{\lambda_k}$. It is called the \underline{spectral decomposition} of $A$.

\begin{theorem}[\textbf{Polar decomposition}]
For a linear operator $A$ on $\mathcal{H}$ there exists a unitary $U$, such that
\begin{align}\label{eq:polardecomposition}
A=UJ=KU,
\end{align}
with positive operators $J=\sqrt{A^\dagger A}=|A|$ and $K=\sqrt{AA^\dagger}$. If $A$ is invertible, then $U$ is unique.
\end{theorem}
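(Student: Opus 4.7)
The plan is to build $J$ first via the spectral theorem, then construct a unitary $U$ so that $A = UJ$, and finally recover $K$ from $U$ and $J$. Observe that $A^\dagger A$ is Hermitian (hence normal) and positive, since $\bra{v}A^\dagger A\ket{v} = \norm{A\ket{v}}^2 \geq 0$. By the spectral decomposition there is an ON-basis $\{\ket{j}\}_{j=1}^n$ with $A^\dagger A \ket{j} = \lambda_j \ket{j}$ and $\lambda_j \geq 0$, so I would define
\begin{align}
J = \sqrt{A^\dagger A} = \sum_j \sqrt{\lambda_j}\, \ketbra{j}{j},
\end{align}
which is manifestly positive and satisfies $J^2 = A^\dagger A$.

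Next I would construct $U$. The key observation is that for each $j$ with $\lambda_j > 0$, the vector $\ket{\psi_j} := A\ket{j}/\sqrt{\lambda_j}$ is a unit vector, and for any two such indices $\bracket{\psi_j}{\psi_k} = \bra{j}A^\dagger A\ket{k}/\sqrt{\lambda_j \lambda_k} = \delta_{j,k}$. So $\{\ket{\psi_j} : \lambda_j > 0\}$ is an orthonormal family. I would set $U\ket{j} = \ket{\psi_j}$ on these indices and extend $U$ on the remaining basis vectors (those with $\lambda_j = 0$) by any bijection between the orthogonal complements of $\mathrm{span}\{\ket{j}:\lambda_j>0\}$ and $\mathrm{span}\{\ket{\psi_j}:\lambda_j>0\}$, which exist in equal dimension. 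This yields a unitary $U$.

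To check $A = UJ$: when $\lambda_j > 0$, $UJ\ket{j} = U\sqrt{\lambda_j}\ket{j} = \sqrt{\lambda_j}\ket{\psi_j} = A\ket{j}$; when $\lambda_j = 0$, $\norm{A\ket{j}}^2 = \lambda_j = 0$ forces $A\ket{j} = 0 = J\ket{j}$, so both sides vanish. Agreement on a basis gives $A = UJ$. For the other half, set $K := UJU^\dagger$. Then $K$ is positive (unitary conjugation of a positive operator), $KU = UJU^\dagger U = UJ = A$, and $K^2 = UJ^2U^\dagger = U(A^\dagger A)U^\dagger = (UJ)(UJ)^\dagger = AA^\dagger$, so positivity of $K$ identifies it as $\sqrt{AA^\dagger}$. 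For uniqueness when $A$ is invertible, $A^\dagger A$ is invertible so every $\lambda_j > 0$, hence $J$ is invertible and $U = AJ^{-1}$ is forced.

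The only subtle step is the extension of $U$ past the kernel of $J$: the partial isometry $\ket{j} \mapsto \ket{\psi_j}$ is defined only on $\ker(J)^\perp$, and I need to extend it to a genuine unitary on all of $\mathcal{H}$. This is possible precisely because $\dim\ker(A) = \dim\ker(J)$ (both equal the number of vanishing $\lambda_j$) and, by a dimension count on ranges, $\dim\ker(A^\dagger) = \dim\ker(J)$ as well; any isometry between these two kernels completes $U$. This is also the reason uniqueness of $U$ fails in the non-invertible case, since the extension on $\ker(J)$ is arbitrary.
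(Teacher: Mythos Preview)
Your proof is correct and is essentially the standard argument: diagonalize $A^\dagger A$, build $J$ from its square-root eigenvalues, define $U$ on the range of $J$ via $A\ket{j}/\sqrt{\lambda_j}$, and extend unitarily across the kernel. The paper itself does not prove this theorem but refers to page~78 of Nielsen and Chuang, where exactly this construction is given; so your approach matches the cited reference.
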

\begin{proof}
See page 78 in \cite{Nielsen-QC-2011}.
\end{proof}

\begin{theorem}[\textbf{Singular value decomposition}]
Let $A$ be a square matrix. Then there exist unitary matrices $U$ and $V$, and a diagonal matrix $D$ with non-negative entries, such that
\begin{align}
A=UDV.
\end{align}
The diagonal entries of $D$ are called \underline{singular values} of $A$. 
\end{theorem}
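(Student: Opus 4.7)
The plan is to obtain the singular value decomposition as a direct consequence of the two decompositions just proved: the polar decomposition handles the factoring of $A$ into a unitary times a positive operator, and the spectral decomposition then diagonalizes that positive operator in an orthonormal basis. This is the canonical route and it avoids having to redo any spectral analysis from scratch.

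Concretely, I would first invoke the polar decomposition~\eqref{eq:polardecomposition} to write $A=SJ$, where $S$ is unitary on $\mathcal{H}$ and $J=\sqrt{A^\dagger A}$ is positive. Next, since $J$ is positive it is in particular Hermitian, hence normal, so the spectral decomposition applies: there exists a unitary $W$ and a diagonal matrix $D$ with $J=W D W^\dagger$. The diagonal entries of $D$ are the eigenvalues of $J$; I would argue that they are non-negative by using positivity of $J$, namely $D_{j,j}=\bra{w_j} J \ket{w_j}\geq 0$ for the orthonormal eigenbasis $\{\ket{w_j}\}_j$ provided by the spectral theorem. Substituting gives
\begin{align}
A=SJ=S W D W^\dagger = U D V,
\end{align}
where I set $U:=SW$ and $V:=W^\dagger$. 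Both $U$ and $V$ are unitary as products (or adjoints) of unitaries, which completes the decomposition.

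The reasoning is essentially a composition of two theorems already in hand, so there is no serious analytic obstacle; the only point that requires a brief justification rather than a single citation is the non-negativity of the diagonal entries of $D$, which is where the hypothesis that $J$ is \emph{positive} (not merely Hermitian) enters decisively. One might also briefly remark that the construction gives non-uniqueness of $U$ and $V$ in general (corresponding to the non-uniqueness of $S$ when $A$ is not invertible, and to reordering of the singular values together with possible phases on degenerate eigenspaces of $J$), but this is not needed for the existence statement actually claimed.
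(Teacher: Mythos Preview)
Your proof is correct and follows essentially the same route as the paper: polar decomposition $A=SJ$ followed by spectral decomposition of the positive operator $J=WDW^\dagger$, then setting $U=SW$ and $V=W^\dagger$. The only difference is cosmetic (the paper uses $T$ where you use $W$), and your extra justification that positivity of $J$ forces $D_{j,j}\geq 0$ is a welcome addition that the paper leaves implicit.
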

\begin{proof}
By the polar decomposition, $A=SJ$, with $S$ unitary and $J$ positive. Now, note that: $J$ is positive $\Rightarrow$ $J$ is Hermitian $\Rightarrow$ $J$ is normal. Hence, $J$ has a spectral decomposition  $J=TDT^\dagger$, with $T$ unitary and $D$ diagonal with non-negative entries. Then $A=STDT^\dagger=UDV$, with $U=ST$ and $V=T^\dagger$. 
\end{proof}

\subsection{Trace of operators}
\begin{definition}
The trace of an operator $A$ on $\mathcal{H}$ is defined as the trace of any matrix representation of $A$. That is, for any ON-basis $\{\ket{v_1},\dots,\ket{v_n}\}$ of $\mathcal{H}$, we have
\begin{align}\label{eq:trace}
\tr{A}=\sum_l \bra{v_l}A\ket{v_l}.
\end{align}
\end{definition}
To see that this definition of the trace of an operator $A$ gives the trace of any matrix representation of $A$, let us write $A$ in the outer product form $A=\sum_{k,j} A_{k,j} \ketbra{w_k}{w_j}$ with the ON-basis $\{\ket{w_1},\dots,\ket{w_n}\}$ of $\mathcal{H}$. By Eq.~\eqref{eq:trace} we then have
\begin{align}
\begin{split}
\tr{A}&=\sum_l \sum_{k,j} A_{k,j} \bracket{v_l}{w_k}\bracket{w_j}{v_l}\\
&=\sum_{k,j} \sum_l A_{k,j} \bracket{w_j}{v_l}\bracket{v_l}{w_k}\\
&=\sum_{k,j}A_{k,j}  \bra{w_j} \left(   \sum_l \ketbra{v_l}{v_l} \right) \ket{w_k}\\
&\overset{\eqref{eq:identity}}{=} \sum_j A_{j,j},
\end{split}
\end{align}
which is the trace of the matrix representation of $A$. Accordingly, all properties of the matrix trace also hold for the trace of linear operators. In particular, for $A,B$ linear operators and $z\in\mathbb{C}$, we have
\begin{subequations}
\begin{align}
\quad \tr{AB}&=\tr{BA} \label{eq:Trprop01}\\
\quad \tr{A+B}&=\tr{A}+\tr{B}\\
\quad \tr{zA}&=z\ \tr{A}.
\end{align}
\end{subequations}
Note that from~\eqref{eq:Trprop01} it follows that the trace is invariant under unitary transformations, i.e., for unitary $U$ we have 
\begin{align}
\tr{U A U^\dagger}=\tr{U^\dagger U A }=\tr{ A}.
\end{align}

\subsection{Density operators}
So far we described the state of a quantum system with associated Hilbert space $\mathcal{H}$ by a state vector $\ket{\psi}\in\mathcal{H}$. However, it can happen that the quantum system is with ``classical'' probability $p_1$ in the state $\ket{\psi_1}$, and with probability $p_2$ in $\ket{\psi_2}$, and so on. That is, we deal with an \underline{ensemble} of pure state $\{p_j,\ket{\psi_j}\}_j$, and we know that the quantum system is with probability $p_j$ in the state $\ket{\psi_j}$. For example, suppose Alice throws a coin, and prepares $\ket{\psi_\mathrm{H}}$ if it shows heads and $\ket{\psi_\mathrm{T}}$ if it shows tails. If she sends us the state, we must describe it by the ensemble $\{(1/2,\ket{\psi_\mathrm{H}}),(1/2,\ket{\psi_\mathrm{T}})\}$. This is \textit{not} a coherent superposition such as the single state vector $(\ket{\psi_\mathrm{H}}+\ket{\psi_\mathrm{T}})/\sqrt{2}$. Indeed, the ensemble \textit{cannot} be described by a single state vector. Instead, we must describe it via an incoherent sum of state vectors $\ket{\psi_j}$, which appear with probability $p_j$. This can be done via the \underline{density operator} (or \underline{density matrix})
\begin{align}
\rho=\sum_j p_j \ketbra{\psi_j}{\psi_j}.
\end{align}

\begin{theorem}[\textbf{Characterization of density operators}]
An operator $\rho$ is the density operator associated to some ensemble $\{p_j,\ket{\psi_j}\}_j$ if and only if it satisfies the following conditions:
\begin{enumerate}[i)]
\item $\rho$ has trace equal to one, i.e., $\tr{\rho}=1$
\item $\rho$ is positive (and, thus, Hermitian), i.e., $\rho\geq 0$.
\end{enumerate}
\end{theorem}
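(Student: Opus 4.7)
The plan is to prove the two implications separately. For the forward direction ($\Rightarrow$), I assume $\rho=\sum_j p_j \ketbra{\psi_j}{\psi_j}$ with $p_j\geq 0$, $\sum_j p_j=1$, and each $\ket{\psi_j}$ normalized. Then (i) follows from linearity of the trace and $\tr{\ketbra{\psi_j}{\psi_j}}=\bracket{\psi_j}{\psi_j}=1$, giving $\tr{\rho}=\sum_j p_j=1$. For (ii), I would compute for arbitrary $\ket{v}\in\mathcal{H}$
\begin{align}
\bra{v}\rho\ket{v}=\sum_j p_j \bracket{v}{\psi_j}\bracket{\psi_j}{v}=\sum_j p_j |\bracket{v}{\psi_j}|^2\geq 0,
\end{align}
since each $p_j\geq 0$. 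Hermiticity then follows from positivity (a standard fact on finite-dimensional complex Hilbert spaces, which one verifies by noting that $\bra{v}\rho\ket{v}\in\mathbb{R}$ for all $\ket{v}$ implies $\rho^\dagger=\rho$ via polarization).

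For the backward direction ($\Leftarrow$), the key idea is to produce an explicit ensemble from $\rho$ using the spectral decomposition. Assuming (i) and (ii), the positivity $\rho\geq 0$ implies $\rho$ is Hermitian, hence normal, so the spectral decomposition theorem gives an ON-basis $\{\ket{\lambda_j}\}_j$ of $\mathcal{H}$ and eigenvalues $\lambda_j\in\mathbb{R}$ with
\begin{align}
\rho=\sum_j \lambda_j \ketbra{\lambda_j}{\lambda_j}.
\end{align}
It remains to check that the $\lambda_j$ define a valid probability distribution. Positivity gives $\lambda_j=\bra{\lambda_j}\rho\ket{\lambda_j}\geq 0$, and the trace condition yields $\sum_j \lambda_j=\tr{\rho}=1$. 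Thus $\{(\lambda_j,\ket{\lambda_j})\}_j$ is an ensemble of pure states whose associated density operator is exactly $\rho$.

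The argument is essentially two short computations glued together by the spectral theorem, so there is no real obstacle; the only point requiring care is to justify that a positive operator on a finite-dimensional complex Hilbert space is automatically Hermitian (so that the spectral decomposition can be invoked). Aside from that, everything reduces to linearity of the trace, orthonormality of the eigenbasis, and nonnegativity of $|\bracket{v}{\psi_j}|^2$.
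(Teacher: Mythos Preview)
Your proof is correct and complete. The paper leaves this theorem as an exercise, so there is no proof in the text to compare against; your argument via the spectral decomposition for the backward direction and the direct computations for the forward direction is exactly the standard route one would expect.
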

\begin{proof}
Exercise.
\end{proof}

\begin{theorem}[\textbf{Pure density operators}]
The density operator describes a pure state (i.e., a single state vector) if and only if $\tr{\rho^2}=1$.
\end{theorem}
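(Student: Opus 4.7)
The plan is to use the spectral decomposition of $\rho$ (which applies since $\rho$ is Hermitian, hence normal), together with the fact that the eigenvalues of a density operator are non-negative and sum to one.

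First I would handle the easy direction: if $\rho$ is pure, i.e.\ $\rho = \ketbra{\psi}{\psi}$ with $\bracket{\psi}{\psi}=1$, then a direct computation gives $\rho^2 = \ket{\psi}\bracket{\psi}{\psi}\bra{\psi} = \rho$, so $\tr{\rho^2} = \tr{\rho} = 1$.

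For the converse, I would invoke the spectral decomposition theorem to write $\rho = \sum_j \lambda_j \ketbra{\lambda_j}{\lambda_j}$ with $\{\ket{\lambda_j}\}_j$ an ON-basis of $\mathcal{H}$ and $\lambda_j \in \mathbb{R}$. By the characterization theorem for density operators, positivity forces $\lambda_j \geq 0$ and unit trace gives $\sum_j \lambda_j = 1$. Squaring the spectral form and using orthonormality yields $\rho^2 = \sum_j \lambda_j^2 \ketbra{\lambda_j}{\lambda_j}$, so $\tr{\rho^2} = \sum_j \lambda_j^2$. The key observation is then that each $\lambda_j \in [0,1]$, hence $\lambda_j^2 \leq \lambda_j$ with equality if and only if $\lambda_j \in \{0,1\}$. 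Summing, $\tr{\rho^2} \leq \sum_j \lambda_j = 1$, with equality precisely when every $\lambda_j$ is $0$ or $1$. Combined with $\sum_j \lambda_j = 1$, this forces exactly one eigenvalue to equal $1$ and all others to vanish, so $\rho = \ketbra{\lambda_{j_0}}{\lambda_{j_0}}$ is pure.

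There is no real obstacle; the only subtle point is justifying $\lambda_j \leq 1$, which follows immediately from $\lambda_j \geq 0$ and $\sum_k \lambda_k = 1$. Everything else is bookkeeping with the spectral decomposition and the properties of the trace already established earlier in the excerpt.
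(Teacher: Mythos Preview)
Your proof is correct, but it takes a different route from the paper's. The paper works directly with a general ensemble decomposition $\rho=\sum_j p_j \ketbra{\psi_j}{\psi_j}$ (with the $\ket{\psi_j}$ not assumed orthogonal), expands $\tr{\rho^2}=\sum_{j,k} p_j p_k |\bracket{\psi_j}{\psi_k}|^2$, and bounds each overlap by $|\bracket{\psi_j}{\psi_k}|^2\leq 1$; equality forces all ensemble members to coincide up to phase. You instead pass to the spectral decomposition and use $\lambda_j^2\leq \lambda_j$ for eigenvalues in $[0,1]$. Your approach is cleaner in that orthonormality kills the cross terms and the equality analysis is purely combinatorial (exactly one eigenvalue equals $1$). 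The paper's approach has the advantage of connecting the purity condition directly to the defining ensemble picture rather than detouring through the eigenbasis, and its equality criterion ``all $\ket{\psi_j}$ are parallel'' matches the phrase ``single state vector'' more transparently. Both arguments are standard and equally rigorous.
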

\begin{proof}
\begin{align}
\begin{split}
\tr{\rho^2}&=\tr{\sum_{j,k} p_j p_k \ket{\psi_j}\bracket{\psi_j}{\psi_k}\bra{\psi_k}}\\
&=\sum_{j,k} p_j p_k\tr{\ket{\psi_j}\bracket{\psi_j}{\psi_k}\bra{\psi_k}}  \\
&=\sum_{j,k} p_j p_k \underbrace{|\bracket{\psi_j}{\psi_k}|^2}_{\leq 1} \\
&\leq \sum_{j,k} p_j p_k\\
&=1.
\end{split}
\end{align}
Here the equality holds if and only if $|\bracket{\psi_j}{\psi_k}|=1$ for all $j,k$, which is the case if and only if $\rho$ is pure. 
\end{proof}
\begin{remarks}\leavevmode
\begin{enumerate}[1)]
\item We denote the space of density operators on $\mathcal{H}$ by $\mathcal{D}(\mathcal{H})$.
\item We call $\tr{\rho^2}$ the \underline{purity} of $\rho$. It satisfies $1/\mathrm{dim}(\mathcal{H})\leq \tr{\rho^2} \leq 1$. 
\item If $\rho$ is not pure, i.e., $\tr{\rho^2}\neq 1$, then we say that $\rho$ is \underline{mixed}. 
\item If $\tr{\rho^2}=1/\mathrm{dim}(\mathcal{H})$, we say that $\rho$ is \underline{maximally mixed}.
\end{enumerate}
\end{remarks}

Note that different ensembles can give rise to the same density operator. For example, with $\ket{a}=\sqrt{3/4} \ket{0} + \sqrt{1/4} \ket{1}$ and $\ket{b}=\sqrt{3/4} \ket{0} - \sqrt{1/4} \ket{1}$, the ensemble $\{(1/2,\ket{a}),(1/2,\ket{b})\}$ gives rise to the same density operator as $\{(3/4,\ket{0}),(1/4,\ket{1})\}$, since
\begin{align}
\begin{split}
\rho&=\frac{1}{2} \ketbra{a}{a}+\frac{1}{2} \ketbra{b}{b}\\
&=\frac{1}{2}\left(\frac{3}{4} \ketbra{0}{0} + \frac{\sqrt{3}}{4} \ketbra{0}{1} +\frac{\sqrt{3}}{4} \ketbra{1}{0} +  \frac{1}{4} \ketbra{1}{1} \right) \\
&+ \frac{1}{2}\left(\frac{3}{4} \ketbra{0}{0} - \frac{\sqrt{3}}{4} \ketbra{0}{1} -\frac{\sqrt{3}}{4} \ketbra{1}{0} +  \frac{1}{4} \ketbra{1}{1} \right) \\
&=\frac{3}{4} \ketbra{0}{0} +  \frac{1}{4} \ketbra{1}{1} .
\end{split}
\end{align}

\begin{theorem}[\textbf{Unitary freedom in the ensemble of density operators}]\label{th:unitaryfreedomdensity}
For normalized states $\{\ket{\psi_j}\}_j$ and $\{\ket{\phi_k}\}_k$, and probability distributions $\{p_j\}_j$ and $\{q_k\}_k$, we have $\rho=\sum_j p_j \ketbra{\psi_j}{\psi_j}=\sum_k q_k \ketbra{\phi_k}{\phi_k}$ if and only if 
\begin{align}\label{eq:Ufreedom}
\sqrt{p_j} \ket{\psi_j} = \sum_k U_{j,k} \sqrt{q_k} \ket{\phi_k}
\end{align}
for some unitary $U_{j,k}$. If the ensembles are of different size, we expand the smaller ensemble with entries having probability zero.
\end{theorem}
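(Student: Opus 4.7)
The plan is to reduce the statement to a cleaner claim about unnormalized vectors. Introduce $\ket{\tilde\psi_j}:=\sqrt{p_j}\ket{\psi_j}$ and $\ket{\tilde\phi_k}:=\sqrt{q_k}\ket{\phi_k}$, so that the two density operator equalities read $\rho=\sum_j\ketbra{\tilde\psi_j}{\tilde\psi_j}=\sum_k\ketbra{\tilde\phi_k}{\tilde\phi_k}$ and condition~\eqref{eq:Ufreedom} becomes $\ket{\tilde\psi_j}=\sum_k U_{j,k}\ket{\tilde\phi_k}$. After padding the smaller ensemble with zero-probability entries, I may assume both index sets have the same cardinality $N$, so that $U$ is a genuine $N\times N$ matrix. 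The easy (``$\Leftarrow$'') direction is then immediate: substituting the relation and using $\sum_j U_{j,k}^* U_{j,k'}=\delta_{k,k'}$ collapses all cross terms in $\sum_j\ketbra{\tilde\psi_j}{\tilde\psi_j}$ and reproduces $\sum_k\ketbra{\tilde\phi_k}{\tilde\phi_k}$.

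For the nontrivial (``$\Rightarrow$'') direction I invoke the spectral decomposition $\rho=\sum_{l=1}^{r}\lambda_l\ketbra{l}{l}$ with strictly positive $\lambda_l$, orthonormal $\{\ket{l}\}$, and $r$ the rank of $\rho$. The key intermediate claim is that any ensemble $\{\ket{\tilde\psi_j}\}_j$ yielding $\rho$ must satisfy $\ket{\tilde\psi_j}=\sum_l c_{j,l}\sqrt{\lambda_l}\ket{l}$ for some matrix $c\in\mathbb{C}^{N\times r}$ with orthonormal columns, i.e.\ $c^\dagger c=\unit$. To prove it I first observe that any $\ket{v}$ in the kernel of $\rho$ obeys $0=\bra{v}\rho\ket{v}=\sum_j|\bracket{v}{\tilde\psi_j}|^2$, forcing $\bracket{v}{\tilde\psi_j}=0$ for every $j$; hence each $\ket{\tilde\psi_j}$ lies in the range of $\rho$, i.e.\ in $\mathrm{span}\{\ket{l}\}_{l=1}^{r}$. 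Expanding accordingly and equating matrix elements $\bra{l}\rho\ket{l'}$ on both sides of $\sum_j\ketbra{\tilde\psi_j}{\tilde\psi_j}=\sum_l\lambda_l\ketbra{l}{l}$ then yields $\sum_j c_{j,l'}^* c_{j,l}=\delta_{l,l'}$, which is exactly the desired column-orthonormality.

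Applying the lemma to both ensembles produces isometries $c,d\in\mathbb{C}^{N\times r}$ with $\ket{\tilde\psi_j}=\sum_l c_{j,l}\sqrt{\lambda_l}\ket{l}$ and $\ket{\tilde\phi_k}=\sum_l d_{k,l}\sqrt{\lambda_l}\ket{l}$. Since any ensemble of $\rho$ must contain at least $r$ states (to span its range), $N\geq r$, and I can extend $c$ and $d$ to $N\times N$ unitaries $C$ and $D$ by adjoining $N-r$ further orthonormal columns. Setting $U:=CD^\dagger$, which is unitary as a product of unitaries, a direct computation using $\sum_k D_{k,l}^* D_{k,l'}=\delta_{l,l'}$ gives $\sum_k U_{j,k}\ket{\tilde\phi_k}=\sum_l C_{j,l}\sqrt{\lambda_l}\ket{l}=\ket{\tilde\psi_j}$, completing the argument. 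The main obstacle is the intermediate lemma — specifically pinning down that the $\ket{\tilde\psi_j}$ all lie in the range of $\rho$, which is what forces the coefficient matrix to be an isometry and hence to be extendable to a unitary.
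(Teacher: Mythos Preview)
Your proof is correct. The paper itself does not supply a proof of this theorem but defers to page~104 of Nielsen and Chuang; your argument is essentially the standard one given there --- reduce to unnormalized vectors, relate each ensemble to the spectral decomposition of $\rho$ via an isometry (using the kernel argument to confine the $\ket{\tilde\psi_j}$ to the support of $\rho$), extend the isometries to unitaries, and compose.
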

\begin{proof}
See page 104 in \cite{Nielsen-QC-2011}.
\end{proof}

\section{Evolution and measurement}
\subsection{Evolution of states}\label{sec:EvoOfStates}
\begin{tcolorbox}
\begin{postulate}\label{p:evolution}
The evolution of a closed quantum system is described by a unitary transformation. The state vectors $\ket{\psi}\in\mathcal{H}$ and density operators $\rho \in \mathcal{D}(\mathcal{H})$ transform according to
\begin{align}\label{eq:Uevolve}
\ket{\psi}\rightarrow \ket{\psi'}=U\ket{\psi}, \quad \rho \rightarrow \rho'=U \rho U^\dagger,
\end{align}
with $U$ unitary. 
\end{postulate}
\end{tcolorbox}

In this lecture we usually don't care about how such unitary transformations can be realized through the properties of specific quantum systems. However, for completeness, let us state that if $H$ is the Hamiltonian of the quantum system, then the time evolution is governed by the Schrödinger equation
\begin{align}
\im \hbar \frac{\mathrm{d}}{\mathrm{d}t} \ket{\psi} = H \ket{\psi}
\end{align}
and von Neumann equation 
\begin{align}
\frac{\mathrm{d}}{\mathrm{d}t} \rho = -\frac{\im}{\hbar}[H,\rho],
\end{align}
which give rise to the unitary time evolution operator $U(t,t_0)=\exp(-\im H(t-t_0)/\hbar)$. Since the Hamiltonian $H$ must be Hermitian (its eigenvalues correspond to the eigenenergies, and, hence, must be real), we can use its spectral decomposition $H=\sum_j E_j \ketbra{E_j}{E_j}$, and find $U(t,t_0)=\sum_j \exp(-\im E_j(t-t_0)/\hbar)\ketbra{E_j}{E_j}$.

\subsection{Quantum measurement}

\begin{tcolorbox}
\begin{postulate}\label{p:measurement}
Quantum measurements are described by a collection $\{M_m\}_m$ of \underline{measurement operators} on $\mathcal{H}$, which satisfy the \underline{completeness equation} $\sum_m M_m^\dagger M_m=\unit$. The index $m$ refers to the measurement outcomes that may occur in the experiment. If the state of the system is $\rho$ immediately before the measurement then the probability that result $m$ occurs is given by
\begin{align}
p(m)=\tr{M_m^\dagger M_m \rho},
\end{align}
and the state of the system after the measurement is 
\begin{align}\label{eq:measurementstate}
\frac{M_m \rho M_m^\dagger}{\tr{M_m^\dagger M_m \rho}}.
\end{align}
\end{postulate}
\end{tcolorbox}

\begin{remarks}\leavevmode
\begin{enumerate}[1)]
\item The completeness equation expresses the fact that the probabilities sum to one,
\begin{align}
\sum_m p_m = \sum_m \tr{M_m^\dagger M_m \rho}= \tr{\sum_m M_m^\dagger M_m\rho}=\tr{\rho}=1.
\end{align}
\item If $\rho$ is pure, $\rho=\ketbra{\psi}{\psi}$, then $p(m)=\bra{\psi}M_m^\dagger M_m \ket{\psi}$, and the post-measurement state is $M_m\ket{\psi}/\sqrt{p(m)} $.
\end{enumerate}
\end{remarks}

\begin{example}
The measurement of a two-level system $\mathcal{H}=\mathbb{C}^2$ (qubit) in the computational basis $\{ \ket{0},\ket{1}\}$ is described by the measurement operators $M_0=\ketbra{0}{0}$ and $M_1=\ketbra{1}{1}$. These are projectors [see Eq.~\eqref{eq:projector}] satisfying $M_j^\dagger=M_j, M_j^2=M_j$. The completeness relation holds, since $M_0^\dagger M_0+M_1^\dagger M_1=M_0+M_1=\ketbra{0}{0}+\ketbra{1}{1}=\unit$, where we used Eq.~\eqref{eq:identity} in the last step. 

If the state being measured is $\ket{\psi}=a\ket{0}+b\ket{1}$, with $a,b\in \mathbb{C}$, then $p(0)=\bra{\psi} M_0^\dagger M_0 \ket{\psi}=\bra{\psi}M_0\ket{\psi}= \bracket{\psi}{0}\bracket{0}{\psi}=|\bracket{\psi}{0}|^2=|a|^2$, and, similarly, $p(1)=|b|^2$. If the outcome is $0$, then the state after the measurement is $a\ket{0}/|a|$. Writing $a=|a|\exp(\im \varphi_a)$ yields $a\ket{0}/|a|=\exp(\im \varphi_a)\ket{0}$. That is, up to a negligible global phase, we get the post-measurement state $\ket{0}$. Similarly, if the outcome is $1$, we get $\ket{1}$.
\end{example}

\subsubsection{Projective measurement}

\begin{definition}
A \underline{projective measurement} (or \underline{projective-valued measure} PVM, or \ul{von Neumann measurement}) is described by the \underline{observable} $M$, which is a Hermitian operator on $\mathcal{H}$ with spectral decomposition 
\begin{align}
M=\sum_m m\ P_m.
\end{align}
The projectors $P_m$ [see Eq.~(\ref{eq:projector})] onto the eigenspace of $M$ with eigenvalue $m$ constitute the measurement operators of the projective measurement. 
\end{definition}
The projective measurement is a special case of postulate~\ref{p:measurement}, where the measurement operators are projectors, $M_m=P_m$, satisfying $P_mP_{m'}=\delta_{m,m'} P_m$. The possible outcomes of the measurement correspond to the eigenvalues $m$ of $M$. Measuring the outcome $m$ appears with probability $p(m)=\tr{P_m\rho}$, and results in the post-measurement state $P_m\rho P_m/\tr{P_m\rho}$.

\begin{example}
The observable $\sigma_z=\ketbra{0}{0}-\ketbra{1}{1}$ of a two-level system (qubit) was already discussed above. Here the outcome $m=1$ corresponds to the measurement operator $M_0=\ketbra{0}{0}$, and $m=-1$ to $M_1=\ketbra{1}{1}$.
\end{example}

\subsubsection{POVM measurement}\label{sec:POVM}
If the post-measurement state is not of interest (e.g. the quantum system gets destroyed or the experiment is concluded after a single measurement) we can use the particularly useful formalism of \underline{positive-operator-valued measures} (POVMs). To this end, consider postulate~\ref{p:measurement} and define the positive operator $E_m=M_m^\dagger M_m$. Then we have a set of positive operators $\{E_m\}_m$ such that $\sum_m E_m=\unit$ and $p(m)=\tr{E_m \rho}$. 

\begin{definition}
A POVM is a set of positive operators $\{E_m\}_m$ (aka \ul{POVM elements}), such that $\sum_m E_m=\unit$. 
\end{definition}

\begin{remarks}\leavevmode
\begin{enumerate}[1)]
\item For every POVM $\{E_m\}_m$ there exists a set of measurement operators $\{M_m\}_m$. \begin{proof} Define $M_m=\sqrt{E_m}$, then $M_m^\dagger=M_m$, and we have $\sum_m M_m^\dagger M_m = \sum_m E_m=\unit$. \end{proof}
\item Projective measurements are POVM measurements whose POVM elements $E_m$ are projectors.
\item POVMs are the most general kind of measurements in quantum mechanics. 
\end{enumerate}
\end{remarks}

\begin{theorem}[\textbf{Neumark's theorem}]
Any POVM can be realized by extending the Hilbert space to a larger space, and performing a projective measurement.
\end{theorem}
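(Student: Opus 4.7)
The plan is to construct a Naimark dilation. Given a POVM $\{E_m\}_{m=1}^N$ on $\mathcal{H}$, the idea is to enlarge $\mathcal{H}$ by tensoring with an ancilla space $\mathcal{K}$, initialize the ancilla in a fixed reference state, apply a suitable unitary on the composite system, and then perform a projective measurement on the ancilla whose statistics reproduce those of the POVM.

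First I would choose $\mathcal{K} = \mathbb{C}^N$ with orthonormal basis $\{\ket{m}\}_{m=1}^N$, designate $\ket{0}_\mathcal{K}$ as the ancilla reference state, and (using Remark 1 of the POVM subsection) set $M_m = \sqrt{E_m}$, so that $\sum_m M_m^\dagger M_m = \unit$. The core of the construction is the linear map
\begin{align}
V : \mathcal{H} \to \mathcal{H} \otimes \mathcal{K}, \qquad V\ket{\psi} = \sum_{m=1}^N M_m \ket{\psi} \otimes \ket{m},
\end{align}
which a short computation using POVM completeness shows to be an isometry: $(V\ket{\psi}, V\ket{\phi}) = \sum_m \bra{\psi} M_m^\dagger M_m \ket{\phi} = \bracket{\psi}{\phi}$.

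Next I would extend $V$ to a unitary $U$ on $\mathcal{H} \otimes \mathcal{K}$ by declaring $U(\ket{\psi} \otimes \ket{0}_\mathcal{K}) = V\ket{\psi}$ for every $\ket{\psi} \in \mathcal{H}$ and completing in any unitary way on the orthogonal complement — this is possible because $V$ isometrically embeds the $n$-dimensional subspace $\mathcal{H} \otimes \ket{0}_\mathcal{K}$ into another subspace of the same dimension inside a space of total dimension $nN$, so an orthonormal basis of one orthogonal complement can simply be mapped to an orthonormal basis of the other. The projective measurement on $\mathcal{H} \otimes \mathcal{K}$ is then defined by
\begin{align}
P_m = \unit_\mathcal{H} \otimes \ketbra{m}{m}, \qquad \sum_m P_m = \unit, \qquad P_m P_{m'} = \delta_{m,m'} P_m.
\end{align}

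Finally, I would verify that this dilation reproduces the POVM statistics. For any $\rho \in \mathcal{D}(\mathcal{H})$, a straightforward bookkeeping using the orthonormality of $\{\ket{m}\}$ and the action of $U$ on the reference ancilla gives
\begin{align}
\tr{P_m\, U(\rho \otimes \ketbra{0}{0}_\mathcal{K})\, U^\dagger} = \tr{M_m^\dagger M_m \rho} = \tr{E_m \rho},
\end{align}
which matches the POVM probabilities demanded by postulate~\ref{p:measurement}. The main subtle point, and where I would take most care, is the unitary extension step: one must be sure that $V$ is genuinely an isometry (not merely an inner-product-preserving map on a subset) and that the ambient space $\mathcal{H} \otimes \mathcal{K}$ is large enough for $V(\mathcal{H})$ and $\mathcal{H} \otimes \ket{0}_\mathcal{K}$ to have equidimensional orthogonal complements — both are guaranteed by the construction above, so the extension is automatic.
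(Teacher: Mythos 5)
Your construction is correct and is essentially the same dilation the paper itself presents later (in the remark on measurements in quantum circuits, where the unitary $U\ket{\psi}\ket{0}=\sum_m M_m\ket{\psi}\ket{m}$ with $M_m=\sqrt{E_m}$ and the projectors $P_m=\unit\otimes\ketbra{m}{m}$ are used and explicitly identified with Neumark's theorem, and in Theorem~\ref{thm:EtoU}, where the isometry-to-unitary extension is justified the same way). The only blemish is cosmetic: you index the ancilla basis as $\{\ket{m}\}_{m=1}^{N}$ but then call the reference state $\ket{0}_{\mathcal{K}}$, so you should either start the index at $0$ or designate one of the $\ket{m}$ as the reference.
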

\begin{proof}
Exercise. 
\end{proof}

\section{Composite systems}
Suppose we have $n\geq 2$ quantum systems and we want to describe their combined system.

\begin{tcolorbox}
\begin{postulate}
The state space $\mathcal{H}$ of a composite physical system is the tensor product $\mathcal{H}_1 \otimes \mathcal{H}_2 \otimes \dots \otimes \mathcal{H}_n$ of the state spaces $\mathcal{H}_j$ of the component physical systems. If system $j$ is separately prepared in state $\rho_j$, then the joint state of the total system is $\rho_1 \otimes \rho_2 \otimes \dots \otimes \rho_n$. 
\end{postulate}
\end{tcolorbox}

\subsection{Tensor product}

\begin{definition}
The \underline{tensor product} $\otimes: \mathcal{H}_1 \times \mathcal{H}_2 \rightarrow \mathcal{H}_1 \otimes \mathcal{H}_2$ maps $\ket{v}\in \mathcal{H}_1$ and $\ket{w}\in \mathcal{H}_2$ as $(\ket{v},\ket{w}) \mapsto \ket{v}\otimes\ket{w}$, where $\mathrm{dim}(\mathcal{H}_1 \otimes \mathcal{H}_2)=\dim(\mathcal{H}_1)\dim(\mathcal{H}_2)$. For all $\ket{v},\ket{v_1},\ket{v_2}\in \mathcal{H}_1$, $\ket{w},\ket{w_1},\ket{w_2}\in \mathcal{H}_2$, and $z\in\mathbb{C}$, it satisfies
\begin{subequations}
\begin{align}
z(\ket{v}\otimes \ket{w}) &= (z\ket{v})\otimes\ket{w}=\ket{v}\otimes(z\ket{w})=z\ket{v}\otimes\ket{w}\\
(\ket{v_1}+\ket{v_2})\otimes\ket{w} &= \ket{v_1}\otimes\ket{w}+\ket{v_2}\otimes\ket{w}\\
\ket{v} \otimes(\ket{w_1}+\ket{w_2})&= \ket{v} \otimes\ket{w_1}+\ket{v} \otimes\ket{w_2}
\end{align}
\end{subequations}
\end{definition}

\begin{definition}
For linear operators $A$ on $\mathcal{H}_1$ and $B$ on $\mathcal{H}_2$ we have
\begin{align}\label{eq:opacttensor}
(A\otimes B)\left(\sum_j a_j \ket{v_j} \otimes \ket{w_j}  \right) = \sum_j a_j A\ket{v_j} \otimes B\ket{w_j}
\end{align}
for all $\ket{v_j}\in\mathcal{H}_1$, $\ket{w_j}\in\mathcal{H}_2$, and $a_j\in\mathbb{C}$. 
\end{definition}

\begin{definition}
The inner product on $\mathcal{H}=\mathcal{H}_1\otimes\mathcal{H}_2$ is defined by
\begin{align}
\begin{split}
\left( \sum_j a_j \ket{v_j}\otimes\ket{w_j} , \sum_k b_k \ket{v_k'} \otimes \ket{w_k'}\right)&=\sum_{j,k} a_j^* b_k (\ket{v_j},\ket{v_k'})\  (\ket{w_j},\ket{w_k'})\\
&=\sum_{j,k} a_j^* b_k \bracket{v_j}{v_k'} \bracket{w_j}{w_k'}
\end{split}
\end{align}
for all $\ket{v_j},\ket{v_k'}\in\mathcal{H}_1$, $\ket{w_j},\ket{w_k'}\in\mathcal{H}_2$, and $a_j,b_k\in\mathbb{C}$.
\end{definition}

\begin{remarks}\leavevmode
\begin{enumerate}[1)]
\item If $\mathrm{dim}(\mathcal{H}_1)=d_1$ and $\mathrm{dim}(\mathcal{H}_2)=d_2$ then $\mathrm{dim}(\mathcal{H}_1 \otimes \mathcal{H}_2)=d_1d_2$. For example, for $n$ two level systems (qubits), we have $\mathcal{H}=(\mathbb{C}^2)^{\otimes n}$, and, hence, $\mathrm{dim}(\mathcal{H})=\mathrm{dim}(\mathbb{C}^2 \otimes\mathbb{C}^2 \otimes \dots \otimes \mathbb{C}^2 )=2^n$, which grows exponentially in $n$. 
\item The elements of $\mathcal{H}_1\otimes \mathcal{H}_2$ are linear combinations of tensor products $\ket{v}\otimes\ket{w}$, where $\ket{v}\in\mathcal{H}_1$ and $\ket{w}\in\mathcal{H}_2$.
\item If $\{\ket{v_j}\}_j$ is an ON-basis of $\mathcal{H}_1$ and $\{\ket{w_k}\}_k$ is an ON-basis of $\mathcal{H}_2$ then $\{ \ket{v_j}\otimes \ket{w_k}\}_{j,k}$ is an ON-basis of $\mathcal{H}_1 \otimes \mathcal{H}_2$.
\item We often use the abbreviation $\ket{v}\otimes\ket{w}=\ket{v,w}=\ket{vw}$.
\item An arbitrary linear operator $C$ on $\mathcal{H}_1 \otimes \mathcal{H}_2$ can be represented as a linear combination of tensor products of linear operators $A_j$ on $\mathcal{H}_1$ and $B_j$ on $\mathcal{H}_2$, 
\begin{align}\label{eq:OponH1H2}
C=\sum_j c_j A_j \otimes B_j,
\end{align}
with $c_j\in\mathbb{C}$. By Eq.~\eqref{eq:opacttensor} we then have
\begin{align}
\left( \sum_j c_j A_j\otimes B_j \right) \ket{v} \otimes \ket{w} = \sum_j c_j A_j \ket{v}\otimes B_j\ket{w}.
\end{align}

\item Given an ON-basis $\{\ket{v_j}\}_j$ of $\mathcal{H}_1$ and $\{\ket{w_k}\}_k$ of $\mathcal{H}_2$, we can write an arbitrary operator $A$ on $\mathcal{H}_1 \otimes \mathcal{H}_2$ as
\begin{align}
A=\sum_{j,j',k,k'} A_{k,k'}^{j,j'} \ketbra{v_j}{v_{j'}} \otimes \ketbra{w_k}{w_{k'}},
\end{align}
where $A_{k,k'}^{j,j'} \in\mathbb{C}$. 
\end{enumerate}

\end{remarks}

\subsection{Matrix representation}
In the matrix representation, the tensor product $A\otimes B$ between linear operators $A$ on $\mathcal{H}_1$ and $B$ on $\mathcal{H}_2$ can be calculated via the \underline{Kronecker product} of the matrix representation of $A$ and $B$. Suppose $A$ is represented by a $d_1\times d_1$ matrix and $B$ by a $d_2\times d_2$ matrix, then
\begin{align}
A\otimes B  =\begin{pmatrix}
A_{1,1}B& A_{1,2}B &\cdots & A_{1,d_1}B \\
A_{2,1}B&A_{2,2}B&\cdots &A_{2,d_1}B \\
\vdots&\vdots & \ddots& \vdots\\
A_{d_1,1}B & A_{d_1,2}B &\hdots & A_{d_1,d_1}B
\end{pmatrix},
\end{align}
where
\begin{align}
A_{j,k}B=\begin{pmatrix}
A_{j,k}B_{1,1} &\hdots& A_{j,k} B_{1,d_2}\\
\vdots & \ddots & \vdots \\
A_{j,k}B_{d_2,1}& \hdots & A_{j,k}B_{d_2,d_2}
\end{pmatrix}.
\end{align}
Hence, $A\otimes B$ is a $d_1d_2\times d_1d_2$ matrix. 

The Kronecker product of two vectors works similar. For $\ket{\psi}=(\psi_1,\dots,\psi_{d_1})^\top \in \mathcal{H}_1$ and $\ket{\phi}=(\phi_1,\dots,\phi_{d_2})^\top\in \mathcal{H}_2$, we have
\begin{align}
\begin{split}
\ket{\psi} \otimes \ket{\phi}&=\begin{pmatrix}
\psi_1 \begin{pmatrix} \phi_1\\ \vdots\\ \phi_{d_2}\end{pmatrix}\\
\vdots\\
\psi_{d_1} \begin{pmatrix} \phi_1\\ \vdots\\ \phi_{d_2}\end{pmatrix}
\end{pmatrix}=(\psi_1\phi_1,\dots,\psi_1\phi_{d_2},\psi_2\phi_1,\dots,\psi_2\phi_{d_2},\dots,\psi_{d_1}\phi_1,\dots,\psi_{d_1}\phi_{d_2})^\top.
\end{split}
\end{align}
\begin{example}
The Kronecker product of $\ket{\psi}=(1,2)^\top/\sqrt{5}$ and $\ket{\phi}=(3,4)^\top/5$ is
\begin{align}
\ket{\psi} \otimes \ket{\phi}=\frac{1}{5\sqrt{5}}\begin{pmatrix}
1 \begin{pmatrix} 3\\4\end{pmatrix} \\
2 \begin{pmatrix} 3\\4\end{pmatrix} \\
\end{pmatrix}
=\frac{1}{5\sqrt{5}}\begin{pmatrix}3\\4\\6\\8\end{pmatrix}.
\end{align}
\end{example}

\subsection{Partial trace}
Suppose we have a state $\rho_\mathrm{AB}$ of a composite quantum system with state space $\mathcal{H}=\mathcal{H}_\mathrm{A}\otimes \mathcal{H}_\mathrm{B}$. Further suppose we have no access to system B, and we want to describe the state and measurement statistics of system A alone. That is, we want to describe the \underline{reduced density operator} $\rho_\mathrm{A}$ of system A. This is obtained from $\rho_\mathrm{AB}$ by taking the partial trace over system B, 
\begin{align}
\rho_\mathrm{A}=\trp{B}{\rho_\mathrm{AB}}.
\end{align}
\begin{definition}
The \ul{partial trace} is linear in its inputs and defined by 
\begin{align}
\trp{B}{\ketbra{v_1}{v_2}\otimes \ketbra{w_1}{w_2}} =\tr{\ketbra{w_1}{w_2}} \ \ketbra{v_1}{v_2} =\bracket{w_2}{w_1} \ \ketbra{v_1}{v_2},
\end{align}
where $\ket{v_1},\ket{v_2} \in \mathcal{H}_\mathrm{A}$, and $\ket{w_1},\ket{w_2} \in \mathcal{H}_\mathrm{B}$.
\end{definition}

\begin{remarks}\leavevmode
\begin{enumerate}[1)]
\item It's not obvious but this definition provides the correct measurement statistics for measurements made on system A. 
\item To be precise, the partial trace is the unique function with the property that $\tr{M \rho_\mathrm{A}}=\tr{(M\otimes \unit) \rho_\mathrm{AB}}$ for all linear operators $M$ (for a proof, see page 107 in \cite{Nielsen-QC-2011}).
\end{enumerate}
\end{remarks}

\begin{example}
Consider a product state (i.e., an uncorrelated state) $\rho_\mathrm{AB}=\rho\otimes \sigma \in \mathcal{D}(\mathcal{H}_\mathrm{A}\otimes \mathcal{H}_\mathrm{B})$. Then $\trp{B}{\rho_\mathrm{AB}} = \trp{B}{\rho\otimes \sigma} =\tr{\sigma} \rho = \rho$, as intuitively expected. 
\end{example}
\begin{example}
Consider a correlated state $\rho_\mathrm{AB}=\ketbra{\psi}{\psi}$, with $\ket{\psi}=(\ket{00}+\ket{11})/\sqrt{2}$, where $\ket{\psi} \in \mathcal{H}=\mathbb{C}^2 \otimes \mathbb{C}^2$. Then
\begin{align}
\begin{split}
\rho_\mathrm{AB}&=\frac{1}{\sqrt{2}} \left(\ket{00}+\ket{11} \right) \frac{1}{\sqrt{2}} \left(\bra{00}+\bra{11} \right)\\
&=\frac{1}{2}\left( \ketbra{00}{00}+\ketbra{00}{11}+\ketbra{11}{00}+\ketbra{11}{11} \right) .
\end{split}
\end{align}
Taking the partial trace over system B yields
\begin{align}
\begin{split}
\rho_\mathrm{A}&=\trp{B}{\rho_\mathrm{AB}}\\
&=\frac{1}{2}\left(\bracket{0}{0} \ \ketbra{0}{0} + \bracket{1}{0} \ \ketbra{0}{1} +\bracket{0}{1}\  \ketbra{1}{0} +\bracket{1}{1}\  \ketbra{1}{1} \right)\\
&=\frac{1}{2}\left(\ketbra{0}{0} + \ketbra{1}{1} \right)\\
&=\frac{1}{2}\unit.  
\end{split}
\end{align}
This state is maximally mixed since $\tr{\rho_\mathrm{A}^2} = 1/\mathrm{dim}(\mathbb{C}^2)=1/2$. Further, observe that
\begin{enumerate}[i)]
\item the state of the joint system is pure, and, thus, exactly known, while
\item the state of system A is maximally mixed, and, thus, completely undetermined.
\end{enumerate}
As a result, a measurement of $\{\ketbra{0}{0},\ketbra{1}{1}\}$ of system A is completely undetermined, while a subsequent measurement of $\{\ketbra{0}{0},\ketbra{1}{1}\}$ of system B is strictly correlated with the initial random outcome of the measurement of $A$. This holds for any orthogonal measurement basis, not only for $\{\ketbra{0}{0},\ketbra{1}{1}\}$ and is a manifestation of the state's entanglement.
\end{example}

\subsection{Bipartite pure-state entanglement}\label{sec:bipureent}
Composite quantum systems give rise to a unique quantum phenomenon: entanglement. In Ch.~\ref{ch:qinfo} we discuss entanglement in more detail. For the moment we restrict to a definition under which condition bipartite pure states must be entangled. 

\begin{definition}
A bipartite pure state $\ket{\psi}\in\mathcal{H}_\mathrm{A}\otimes \mathcal{H}_\mathrm{B}$ is called \underline{entangled} if it cannot be written as a product state $\ket{\psi_\mathrm{A}} \otimes \ket{\psi_\mathrm{B}}$ for any choices of states $\ket{\psi_\mathrm{A}} \in \mathcal{H}_\mathrm{A}$ and $\ket{\psi_\mathrm{B}} \in \mathcal{H}_\mathrm{B}$.
\end{definition}

\begin{example}
Consider the bipartition $\mathcal{H}=\mathbb{C}^2 \otimes \mathbb{C}^2$. The state $\ket{\phi}=(\ket{00}-\ket{01}+\ket{10}-\ket{11})/2$ is not entangled, since it can be written as a product state $\ket{\phi}=(\ket{0}+\ket{1})/\sqrt{2} \otimes (\ket{0}-\ket{1})/\sqrt{2} =\ket{+}\otimes \ket{-}$. On the other hand, the state $(\ket{00}+\ket{11})/\sqrt{2}$ [see the previous example] cannot be written as a product state, and, hence, is entangled. Indeed, it is one of the \underline{Bell states}
\begin{align}\label{eq:Bellstates}
\begin{split}
\ket{\psi^+}&=\frac{1}{\sqrt{2}}\left( \ket{01}+\ket{10}\right),  \quad \ket{\phi^+}=\frac{1}{\sqrt{2}}\left( \ket{00}+\ket{11}\right),\\
\ket{\psi^-}&=\frac{1}{\sqrt{2}}\left( \ket{01}-\ket{10}\right),  \quad \ket{\phi^-}=\frac{1}{\sqrt{2}}\left( \ket{00}-\ket{11}\right),
\end{split}
\end{align}
which are maximally entangled states in $\mathcal{H}=\mathbb{C}^2 \otimes \mathbb{C}^2$.
\end{example}

How can we find out whether a state is entangled or not?

\subsection{Schmidt decomposition}
\begin{theorem}[\textbf{Schmidt decomposition}]
Let $\ket{\psi}\in\mathcal{H}_\mathrm{A}\otimes \mathcal{H}_\mathrm{B}$ be a bipartite pure state, where $\mathrm{dim}(\mathcal{H}_\mathrm{A})=d_\mathrm{A}$ and $\mathrm{dim}(\mathcal{H}_\mathrm{B})=d_\mathrm{B}$. Then there exist ON-bases $\{\ket{v_j}\}_j$ of $\mathcal{H}_\mathrm{A}$ and $\{\ket{w_j}\}_j$ of $\mathcal{H}_\mathrm{B}$ such that
\begin{align}\label{eq:Schmidt}
\ket{\psi}= \sum_{j=1}^{d} \sqrt{\lambda_j} \ket{v_j} \otimes \ket{w_j},
\end{align}
where $\sqrt{\lambda_j}>0$ are strictly positive real numbers satisfying $\sum_j\lambda_j=1$, called \underline{Schmidt coefficients}, and $d\leq \min\{d_\mathrm{A},d_\mathrm{B}\}$ is the \underline{Schmidt rank} (or \underline{Schmidt number}), i.e., the number of non-vanishing Schmidt coefficients. 
\end{theorem}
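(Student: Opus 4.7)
The plan is to reduce everything to the singular value decomposition just established. I would begin by fixing arbitrary ON-bases $\{\ket{j}\}_{j=1}^{d_\mathrm{A}}$ of $\mathcal{H}_\mathrm{A}$ and $\{\ket{k}\}_{k=1}^{d_\mathrm{B}}$ of $\mathcal{H}_\mathrm{B}$, and expanding
\begin{align*}
\ket{\psi} = \sum_{j,k} C_{j,k} \ket{j} \otimes \ket{k},
\end{align*}
so that the problem is encoded in a (generally rectangular) matrix $C$ of coefficients, with $\sum_{j,k} |C_{j,k}|^2 = 1$ by normalization.

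The next step is to apply SVD to $C$, writing $C_{j,k} = \sum_i U_{j,i}\, \sqrt{\lambda_i}\, V_{i,k}$ with $U, V$ unitary and $\sqrt{\lambda_i} \geq 0$. Substituting and collecting factors then yields
\begin{align*}
\ket{\psi} = \sum_i \sqrt{\lambda_i}\, \Bigl(\sum_j U_{j,i} \ket{j}\Bigr) \otimes \Bigl(\sum_k V_{i,k} \ket{k}\Bigr) = \sum_i \sqrt{\lambda_i}\, \ket{v_i} \otimes \ket{w_i},
\end{align*}
where I define $\ket{v_i} = \sum_j U_{j,i} \ket{j}$ and $\ket{w_i} = \sum_k V_{i,k} \ket{k}$. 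Orthonormality of the new vectors is immediate from unitarity, e.g.\ $\bracket{v_i}{v_{i'}} = \sum_j U_{j,i}^* U_{j,i'} = (U^\dagger U)_{i,i'} = \delta_{i,i'}$, and likewise for the $\ket{w_i}$. Restricting the sum to the strictly positive Schmidt coefficients and relabelling gives exactly \eqref{eq:Schmidt} with $d \leq \min\{d_\mathrm{A},d_\mathrm{B}\}$, while $\sum_i \lambda_i = \tr{C^\dagger C} = 1$ follows from the normalization of $\ket{\psi}$.

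The main obstacle is that the SVD theorem as stated in the excerpt applies to \emph{square} matrices, whereas $C$ is $d_\mathrm{A} \times d_\mathrm{B}$. I would handle this by embedding the smaller Hilbert space isometrically into a copy of the larger one (equivalently, padding $C$ with rows or columns of zeros so as to make it square), applying the square-case SVD, and then noting that the padded basis vectors pair only with vanishing singular values and thus drop out of the Schmidt sum. This is the only non-routine step; everything else is bookkeeping, and the bound $d \leq \min\{d_\mathrm{A},d_\mathrm{B}\}$ reflects exactly the fact that $C$ has at most $\min\{d_\mathrm{A},d_\mathrm{B}\}$ non-zero singular values.
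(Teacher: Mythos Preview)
Your proposal is correct and is the standard SVD-based argument; the paper itself leaves this theorem as an exercise, so there is no proof to compare against. Your handling of the rectangular case by zero-padding is fine and is exactly how one extends the square-matrix SVD stated earlier in the notes.
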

\begin{proof}
Exercise. 
\end{proof}

\begin{remarks}\leavevmode
\begin{enumerate}[1)]
\item Given any ON-basis $\{\ket{x_j}\}_j$ of $\mathcal{H}_\mathrm{A}$ and $\{\ket{y_k}\}_k$ of $\mathcal{H}_\mathrm{B}$, we express $\ket{\psi}\in\mathcal{H}_\mathrm{A}\otimes \mathcal{H}_\mathrm{B}$ in general as $\ket{\psi}=\sum_{j,k} c_{j,k} \ket{x_j}\otimes \ket{y_k}$, where $c_{j,k}\in \mathbb{C}$. That is, the sum runs over two indizes, $j=1,\dots, \mathrm{dim}(\mathcal{H}_\mathrm{A})$ and $k=1,\dots, \mathrm{dim}(\mathcal{H}_\mathrm{B})$. The beauty of the Schmidt form in Eq.~\eqref{eq:Schmidt} is that the sum only runs over a single index $j=1,\dots,\min\{d_\mathrm{A},d_\mathrm{B}\}$. 

\item The squared Schmidt coefficients $\lambda_j$ are the non-vanishing eigenvalues of the reduced state $\rho_\mathrm{A}$ as well as of $\rho_\mathrm{B}$.
\begin{proof}
\begin{align}
\begin{split}
\rho_\mathrm{A}&=\trp{B}{\ketbra{\psi}{\psi}}\\
&=\sum_{j,j'} \sqrt{\lambda_j \lambda_{j'}} \trp{B}{\ketbra{v_j}{v_{j'}} \otimes \ketbra{w_j}{w_{j'}}} \\
&=\sum_{j,j'} \sqrt{\lambda_j \lambda_{j'}} \tr{\ketbra{w_j}{w_{j'}}} \ketbra{v_j}{v_j'}\\
&=\sum_j \lambda_j \ketbra{v_j}{v_j},
\end{split}
\end{align}
which is the spectral decomposition of $\rho_\mathrm{A}$. Similarly, we get 
\begin{align}\label{eq:Schmidteigenvaluesrhoa}
\rho_\mathrm{A}&=\sum_j \lambda_j \ketbra{w_j}{w_j}.
\end{align}
\end{proof}

\item A state $\ket{\psi}\in\mathcal{H}_\mathrm{A}\otimes \mathcal{H}_\mathrm{B}$ is a product state if and only if it has Schmidt rank $d=1$. 
\begin{proof}
Exercise
\end{proof}

\item Hence, a state $\ket{\psi}\in\mathcal{H}_\mathrm{A}\otimes \mathcal{H}_\mathrm{B}$ is entangled if and only if it has Schmidt rank $d>1$. 

\item The Schmidt decomposition cannot be generalized to more than two parties. 
\end{enumerate}
\end{remarks}

\subsection{Purification}
Given a mixed state $\rho_\mathrm{A}\in\mathcal{D}(\mathcal{H}_\mathrm{A})$, we can introduce a fictitious reference system $\mathcal{H}_\mathrm{R}$ to get a pure state $\ket{\psi} \in \mathcal{H}_\mathrm{A} \otimes \mathcal{H}_\mathrm{R}$, such that $\rho_\mathrm{A}=\trp{R}{\ketbra{\psi}{\psi}}$. This technique is called purification and is a purely mathematical procedure, which allows us to associate pure states with mixed states. 

\begin{definition}
A \underline{purification} of a density operator $\rho_\mathrm{A} \in \mathcal{D}(\mathcal{H}_\mathrm{A})$ is a pure bipartite state $\ket{\psi}\in \mathcal{H}_\mathrm{A} \otimes \mathcal{H}_\mathrm{R}$ on the composite system of $\mathcal{H}_\mathrm{A}$ and a reference system $\mathcal{H}_\mathrm{R}$, with the property that 
\begin{align}\label{eq:purification}
\rho_\mathrm{A}=\trp{R}{\ketbra{\psi}{\psi}}.
\end{align}
\end{definition}

\begin{remarks}\leavevmode
\begin{enumerate}[1)]
\item If $\rho_\mathrm{A}$ has the spectral decomposition $\rho_\mathrm{A}=\sum_j p_j \ketbra{p_j}{p_j}$, and $\{\ket{r_j}\}_j$ is a set of orthonormal vectors of the reference system, then
\begin{align}\label{eq:purifiedstate}
\ket{\psi}=\sum_j \sqrt{p_j} \ket{p_j}\otimes \ket{r_j}
\end{align}
is a purification of $\rho_\mathrm{A}$. 
\begin{proof}
\begin{align}
\trp{R}{\ketbra{\psi}{\psi}}&= \sum_{j,j'} \sqrt{p_j p_{j'}}\ \tr{\ketbra{r_j}{r_{j'}}} \ketbra{p_j}{p_{j'}}\\
&=\sum_j p_j \ketbra{p_j}{p_j}\\
&=\rho_\mathrm{A}
\end{align}
\end{proof}

\item Note the relation between the Schmidt decomposition and purification by Eqs.~\eqref{eq:Schmidt} and~\eqref{eq:purifiedstate}.

\end{enumerate}
\end{remarks}

\section{Bell inequality}

\subsection{History}
Consider the Bell state $\ket{\phi^+}=(\ket{00}+\ket{11})/\sqrt{2} \in \mathcal{H}_\mathrm{A} \otimes \mathcal{H}_\mathrm{B}$ from Eq.~\eqref{eq:Bellstates}. Since the system $A$ can be locally separated in spacetime from system $B$, the entanglement of $\ket{\phi^+}$ describes correlations between locally separated objects. Amongst others, Einstein, Podolsky, and Rosen (EPR) argued that such correlations must be unphysical. In particular, they argued that quantum mechanics is incomplete and \underline{hidden variables} have to be added in order to explain entanglement. In the sense of EPR [see Einstein, Podolsky, and Rosen, \textit{Phys. Rev.} \textbf{47}, 777 (1935)]:
\begin{description}
\item[Reality.] If we can, without disturbing a physical system, predict the value of a physical quantity with certainty, then there exists an element of reality associated with this physical quantity. 

\item[Locality.] Asserting the physical process occurring at one place should have no immediate effect on the element of reality at another location.

\item[Completeness.] A complete physical theory must have a theoretical object for each element of physical reality.
\end{description}

In 1964 John Bell derived an inequality for correlation measurements, which showed that the results for entangled states, which are predicted by quantum mechanics, could not be reproduced by a local realistic theory based on hidden variables. In 1969 Clauser, Horne, Shimony, and Holt (CHSH) presented a similar inequality, which is well suited to be tested experimentally. We discuss the latter inequality in the following.

\subsection{The CHSH Bell inequality}

\subsubsection{Local realistic interpretation of entanglement with hidden variables}
Suppose that two particles are entangled in a property for which a measurement of a single particle yields $+1$ or $-1$. 
\begin{center}
\includegraphics[width=0.7\linewidth]{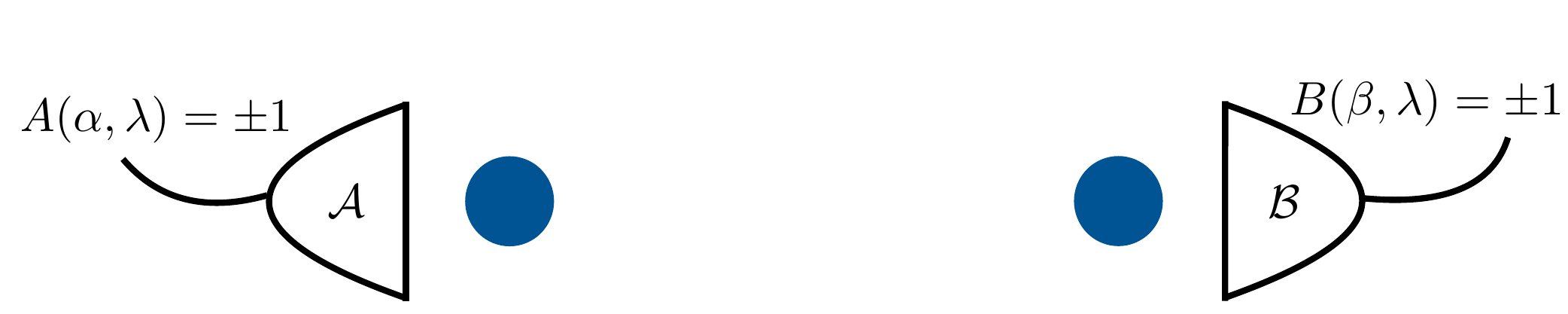}
\end{center}
In order to describe this scenario of entangled particles by means of a local realistic theory, we introduce local hidden variables. To this end, suppose that $\lambda$ is a single or a set of local hidden variables, and suppose that $\lambda$ is distributed according to a probability distribution 
\begin{align}
\rho(\lambda) \geq 0,\quad \int \mathrm{d}\lambda\ \rho(\lambda)=1.
\end{align}
For a theory to be local, measurements of the first particle at detector $\mathcal{A}$ must be completely determined by the setting $\alpha$ of $\mathcal{A}$ and the local hidden variables $\lambda$ (where $\alpha$ can be set independently on $\lambda$). Similar, measurements of the second particle are completely determined by the setting $\beta$ of detector $\mathcal{B}$ and by $\lambda$. The measurement results can be $A(\alpha,\lambda)=\pm 1$, $B(\beta,\lambda)=\pm 1$. 

The probability that $A=x$, where $x=\pm 1$, is 
\begin{align}
p_{A=x} = \int \mathrm{d}\lambda\ \rho(\lambda)\frac{1+x A(\alpha,\lambda)}{2},
\end{align}
and that $(A,B)=(x,y)$, where $x,y=\pm 1$, is
\begin{align}\label{eq:Bellpxy}
p_{xy} = \int \mathrm{d}\lambda\ \rho(\lambda)\frac{1+x A(\alpha,\lambda)}{2} \ \frac{1+y B(\beta,\lambda)}{2}.
\end{align}
Now, let us define the particle correlation measure 
\begin{align}\label{eq:Bellcorrelator}
E(\alpha,\beta)=p_{++}+p_{--}-p_{+-}-p_{-+},
\end{align}
which satisfies $-1 \leq E(\alpha,\beta) \leq 1$. The lower bound holds if the detectors always disagree, and the upper bound if they always agree. Plugging~\eqref{eq:Bellpxy} into~\eqref{eq:Bellcorrelator} and using the shorthand $A(\alpha,\lambda)\equiv A$, $B(\beta,\lambda)\equiv B$, yields
\begin{align}\label{eq:BellEint}
\begin{split}
E(\alpha,\beta)&=\int \mathrm{d}\lambda\ \rho(\lambda) \left( \frac{1+A}{2}\ \frac{1+B}{2}+ \frac{1-A}{2}\ \frac{1-B}{2} - \frac{1+A}{2}\ \frac{1-B}{2}- \frac{1-A}{2}\ \frac{1+B}{2}\right)\\
&=\int \mathrm{d}\lambda\ \rho(\lambda) \frac{1}{4} \left(1+A+B+AB+1-A-B+AB-1-A+B+AB-1-B+A+AB \right)\\
&=\int \mathrm{d}\lambda\ \rho(\lambda) A(\alpha,\lambda)B(\beta,\lambda).
\end{split}
\end{align}
Next, consider the detector settings $\alpha,\alpha',\beta,\beta'$, and define the quantity
\begin{align}\label{eq:BellShv}
S(\alpha,\alpha',\beta,\beta')=A(\alpha,\lambda)\left[ B(\beta,\lambda)-B(\beta',\lambda) \right] + A(\alpha',\lambda)\left[ B(\beta,\lambda)+B(\beta',\lambda) \right].
\end{align}
Note that, by definition, $S \in\{-2,2\}$. Using Eqs.~\eqref{eq:BellEint} and~\eqref{eq:BellShv}, the expectation value of $S$ becomes
\begin{align}
\begin{split}
\braket{S} &= \int \mathrm{d}\lambda\ \rho(\lambda)\  S(\alpha,\alpha',\beta,\beta')\\
&=E(\alpha,\beta)-E(\alpha,\beta')+E(\alpha',\beta)+E(\alpha',\beta'),
\end{split}
\end{align}
and we have
\begin{align}
\begin{split}
\abs{\braket{S}} &= \abs{\int \mathrm{d}\lambda\ \rho(\lambda)\ S(\alpha,\alpha',\beta,\beta')} \\
&\leq \int \mathrm{d}\lambda\ \rho(\lambda) \underbrace{\abs{S(\alpha,\alpha',\beta,\beta')}}_{= 2} \\
& \leq 2 \int \mathrm{d}\lambda\ \rho(\lambda) \\
&=2.
\end{split}
\end{align}
That is, we found that
\begin{align}\label{eq:BellSex}
\abs{\braket{S}} =\abs{E(\alpha,\beta)-E(\alpha,\beta')+E(\alpha',\beta)+E(\alpha',\beta')} \leq 2
\end{align}
holds for any local realistic theory. 

\subsubsection{Quantum mechanical description}
We now describe the situation of the two entangled particles by assuming that quantum mechanics is complete, i.e., that for the description of entanglement no additional hidden variables are needed. Recall that we assumed that the particles are entangled in a property for which a measurement of a single particle yields $+1$ or $-1$. This applies to the Bell states~\eqref{eq:Bellstates}. Hence, we can assume that the particles are in the Bell state $\ket{\psi^-}=(\ket{01}-\ket{10})/\sqrt{2} \in \mathcal{H}=\mathbb{C}^2 \otimes \mathbb{C}^2$. 
\begin{center}
\includegraphics[width=0.6\linewidth]{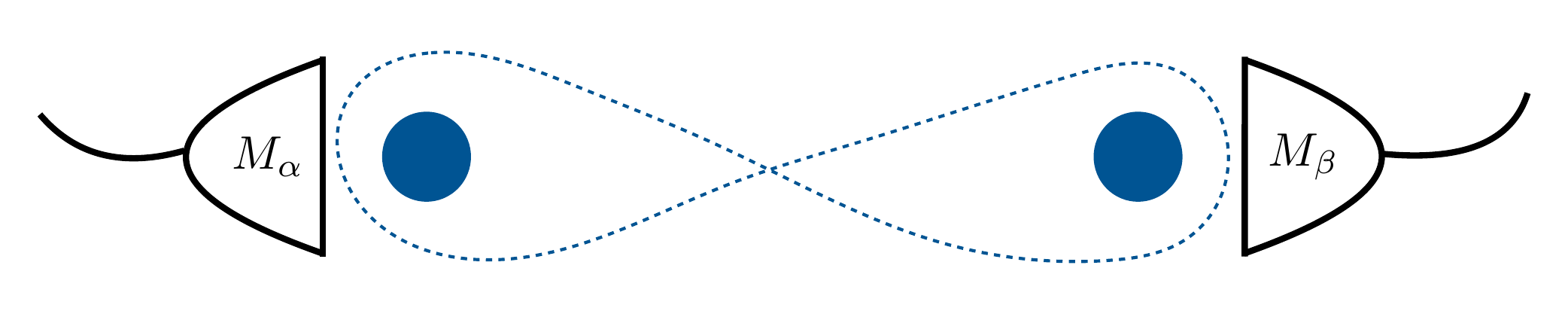}
\end{center}
For the measurement of each particle consider the projective measurement according to the observable $M_\alpha=P_\alpha-P_{\alpha+\pi/2}$. The projectors are
\begin{align}
P_\alpha= \ketbra{\alpha}{\alpha},\quad \ket{\alpha}=\cos(\alpha) \ket{1} -\sin(\alpha)\ket{0},
\end{align}
and satisfy the completeness equation $P_\alpha+P_{\alpha+\pi/2}=\unit$. The correlation measure~\eqref{eq:Bellcorrelator} is then given by
\begin{align}\label{eq:BellEqu}
E(\alpha,\beta)=p_{\alpha,\beta} + p_{\alpha+\pi/2,\beta+\pi/2} -p_{\alpha,\beta+\pi/2} -p_{\alpha+\pi/2,\beta},
\end{align}
where
\begin{align}
\begin{split}
p_{\alpha,\beta} &= \tr{P_\alpha \otimes P_\beta \ketbra{\psi^-}{\psi^-}}\\
&=\bra{\psi^-}P_\alpha \otimes P_\beta\ket{\psi^-}\\
&= \bracket{\psi^-}{\alpha\beta}\bracket{\alpha\beta}{\psi^-}\\
&=\abs{\bracket{\psi^-}{\alpha\beta}}^2\\
&=\frac{1}{2}\abs{\left( \bra{01}-\bra{10} \right) \ket{\alpha\beta}}^2\\
&=\frac{1}{2}\abs{ \bracket{0}{\alpha}\bracket{1}{\beta} - \bracket{1}{\alpha}\bracket{0}{\beta}}^2\\
&=\frac{1}{2}\abs{ -\sin(\alpha) \cos(\beta) + \cos(\alpha)\sin(\beta)}^2\\
&=\frac{1}{2}\abs{ - \sin (\alpha-\beta)}^2\\
&=\frac{1}{2} \sin^2(\alpha-\beta).
\end{split}
\end{align}
Using this in Eq.~\eqref{eq:BellEqu}, we get 
\begin{align}
\begin{split}
E(\alpha,\beta)&=\frac{1}{2}\Bigg[ \sin^2(\alpha-\beta)+\sin^2\left(\alpha+\frac{\pi}{2}-\beta-\frac{\pi}{2}\right)-\underbrace{\sin^2\left(\alpha-\beta-\frac{\pi}{2}\right)}_{\cos^2(\alpha-\beta)}-\underbrace{\sin^2\left(\alpha+\frac{\pi}{2}-\beta)\right)}_{\cos^2(\alpha-\beta)} \Bigg]\\
&=\sin^2(\alpha-\beta)-\cos^2(\alpha-\beta)\\
&=-\cos\left(2(\alpha-\beta)\right).
\end{split}
\end{align}
Therewith, we get for $\abs{\braket{S}}$ in Eq.~\eqref{eq:BellSex}
\begin{align}
\begin{split}
\abs{\braket{S}} &=\abs{E(\alpha,\beta)-E(\alpha,\beta')+E(\alpha',\beta)+E(\alpha',\beta')} \\
&=\abs{\cos\left( 2(\alpha-\beta)\right)-\cos\left( 2(\alpha-\beta')\right)+\cos\left( 2(\alpha'-\beta)\right)+\cos\left( 2(\alpha'-\beta')\right)}.
\end{split}
\end{align}
Now we are free to choose the angles $\alpha,\alpha',\beta,\beta'$. Note that we want to choose them such that $\abs{\braket{S}}$ is maximal, and check whether the maximum of $\abs{\braket{S}}$ beats the upper bound~\eqref{eq:BellSex} of any local realistic theory. It turns out that $\abs{\braket{S}}$ is miximal by choosing
\begin{align}
\alpha = 0+\varphi,\quad
\beta= \frac{\pi}{8} + \varphi,\quad
\alpha'= \frac{2\pi}{8} + \varphi,\quad
\beta'= \frac{3\pi}{8} + \varphi,
\end{align}
for any $\varphi \in \mathbb{R}$. With these angles, we get
\begin{align}
\begin{split}
\abs{\braket{S}} &=\abs{ \cos\left(-\frac{\pi}{4}\right)- \cos\left(-\frac{3\pi}{4}\right)+\cos\left(\frac{\pi}{4}\right)+\cos\left(-\frac{\pi}{4}\right)}\\
&=\abs{ \frac{1}{\sqrt{2}} - \left(  -\frac{1}{\sqrt{2}}\right)+ \frac{1}{\sqrt{2}}+ \frac{1}{\sqrt{2}}}\\
&= \frac{4}{\sqrt{2}}\\
&=2\sqrt{2}.
\end{split}
\end{align}
By assuming that quantum mechanics is complete, we found a situation where $\abs{\braket{S}}=2\sqrt{2}$. This is the quantum mechanical upper bound, i.e., in general we have  $\abs{\braket{S}}\leq 2\sqrt{2}$, which is called \underline{Tsirelson's bound}. Surprisingly, this bound beats the upper bound $\abs{\braket{S}}\leq 2$ [see Eq.~\eqref{eq:BellSex}] of any local realistic theory. That is, we found a way to experimentally check whether or not hidden variables are needed to make quantum mechanics a local realistic theory. There has been a huge experimental effort, with the outcomes speaking in favor of quantum mechanics. That is, nature is well described by quantum mechanics, and no additional hidden variables are required to turn quantum mechanics into a local realistic theory  (in the sense of EPR, see above).

\chapter{Quantum computation}  

\section{Qubits} 
\begin{definition} A \underline{qubit} is a two-level quantum system with Hilbert space $\mathcal{H}=\mathbb{C}^2$. The orthonormal basis states $\ket{0}$ and $\ket{1}$ of $\mathcal{H}=\mathbb{C}^2$ are called \underline{computational basis states} (the choice of this ON-basis is arbitrary).
\end{definition}

\begin{remarks}\leavevmode
\begin{enumerate}[1)]
\item A pure state of a qubit is a state vector 
\begin{align}\label{eq:QubitGeneral}
\ket{\psi} = a \ket{0} + b \ket{1},
\end{align}
with $a,b\in\mathbb{C}$ satisfying $|a|^2+|b|^2=1$.
\item In this chapter we will mainly work with pure states.
\item A possibly mixed state of a qubit is described by a density operator (i.e., Hermitian, positive, and trace one operator) $\rho\in\mathcal{D}(\mathbb{C}^2)$.
\end{enumerate}
\end{remarks}

\subsection{Pauli matrices} 
\begin{definition}
The Pauli matrices are defined as
\begin{align}\label{eq:Pauli}
X=\sigma_1=\begin{pmatrix}0&1\\ 1&0\end{pmatrix}, \quad Y=\sigma_2=\begin{pmatrix}0&-\im\\ \im&0\end{pmatrix}, \quad Z=\sigma_3=\begin{pmatrix}1&0\\ 0&-1\end{pmatrix}.
\end{align}
\end{definition}

\begin{remarks}\leavevmode
\begin{enumerate}[1)]
\item The Pauli matrices satisfy
\begin{align}\label{eq:sigmacom}
\sigma_j \sigma_k = \delta_{jk}\ \unit + \im \sum_{l=1}^3 \epsilon_{jkl} \ \sigma_l,
\end{align}
with $\delta_{jk}$ the Kronecker delta, $\epsilon_{jkl}$ the Levi-Civita symbol, and $j,k\in\{1,2,3\}$.

\item The Pauli matrices, together with the identity matrix
\begin{align}
\unit=\sigma_0 = \begin{pmatrix}1&0\\ 0&1\end{pmatrix}
\end{align}
form a basis for the real (hence, $v_0,v_1,v_2,v_3\in \mathbb{R}$ in Eq,~\eqref{eq:Hermdecomp}) vector space of $2\times 2$ Hermitian matrices. Hence, any $2\times 2$ Hermitian matrix $A$ can be written as
\begin{align}\label{eq:Hermdecomp}
A=\frac{1}{2} \left(v_0 \sigma_0 + v_1 \sigma_1 + v_2 \sigma_2 + v_3 \sigma_3 \right)=\frac{1}{2} \left(  v_0 \unit + \vec{v} \cdot \vec{\sigma}\right),
\end{align}
with $v_0,v_1,v_2,v_3\in \mathbb{R}$, $\vec{v}=(v_1,v_2,v_3) \in \mathbb{R}^3$, and $\vec{\sigma}=(\sigma_1,\sigma_2,\sigma_3)$.

\item If the Hermitian matrix $A$ from Eq.~\eqref{eq:Hermdecomp} is additionally positive and has unite trace, then $v_0=1$ and $\abs{\vec{v}} \leq 1$. Hence, any density operator of a qubit can be written as 
\begin{align}\label{eq:QubitRho}
\rho=\frac{1}{2} \left(  \unit + \vec{v} \cdot \vec{\sigma}\right),
\end{align}
with $\vec{v}\in\mathbb{R}^3$, and $\abs{\vec{v}} \leq 1$.

\item The density operator in~\eqref{eq:QubitRho} describes a pure state if and only if $\abs{\vec{v}} = 1$.

\end{enumerate}
\end{remarks}

\begin{proof}
1) - 4): Exercise. 
\end{proof}

\subsection{Bloch sphere} 
By Eq.~\eqref{eq:QubitRho}, we can illustrate the state of a qubit by a vector $\vec{v}\in\mathbb{R}^3$ within the \underline{Bloch sphere} (unite sphere):
\begin{center}
\includegraphics[width=0.5\linewidth]{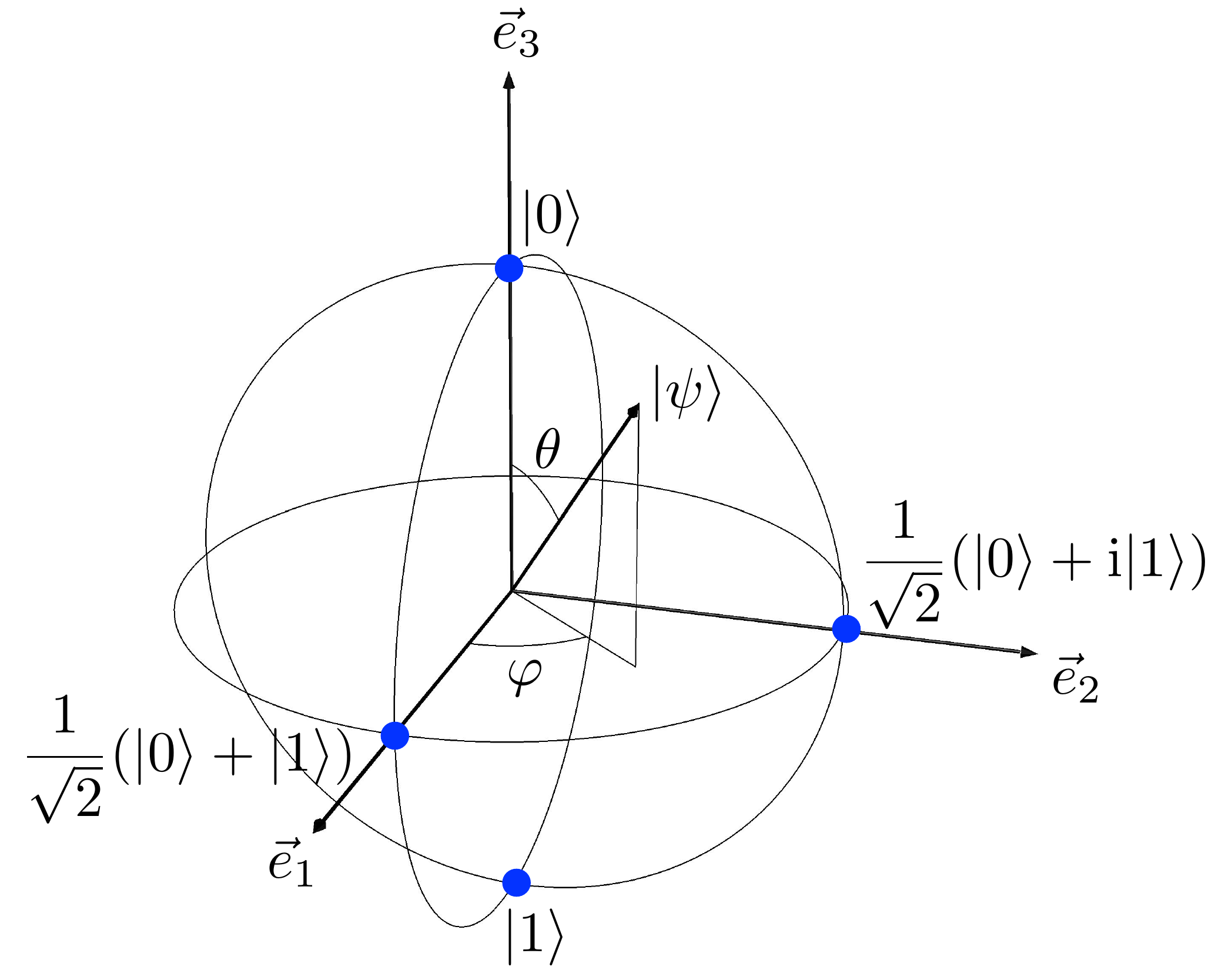}
\end{center}
In spherical coordinates we have 
\begin{align}
\vec{v}=\abs{\vec{v}} \begin{pmatrix} \sin \theta \cos\varphi \\ \sin \theta \sin\varphi \\ \cos\theta \end{pmatrix},
\end{align}
where $\abs{\vec{v}}\leq1$, $0\leq\theta\leq \pi$, and $0\leq\varphi<2\pi$. Pure states lie on the surface of the sphere and can be written as
\begin{align}\label{eq:QubitSphere1}
\ket{\psi}= \cos\left(\frac{\theta}{2} \right) \ket{0} + e^{\im \varphi} \sin\left(\frac{\theta}{2} \right) \ket{1},
\end{align}
and mixed states lie within the unit sphere. Note that Eq.~\eqref{eq:QubitSphere1} results from Eq.~\eqref{eq:QubitGeneral} by setting $a=\cos(\theta/2)e^{\im\gamma}$ and $b=\sin(\theta/2) e^{\im(\gamma+\varphi)}$ (which obviously satisfy $|a|^2+|b|^2=1$), such that $\ket{\psi}= e^{\im \gamma}[\cos(\theta/2) \ket{0} + e^{\im \varphi} \sin(\theta/2) \ket{1}]$, and neglecting the global phase $e^{\im \gamma}$.

\section{Quantum gates} 
 In general, quantum computation is nothing but the controlled execution of a unitary on a quantum register composed of qubits. A universal quantum computer would then allow to program arbitrary unitaries on such registers. In many ways, the faithful implementation of a quantum computation is therefore a specific application of coherent control techniques. 
 
 To mimic the structure of classical algorithms, quantum computation protocols are often formulated in terms of quantum circuits, where general unitaries on an $N$-qubit register are decomposed into elementary gates acting on subsets of qubits. The idea being that once one knows how to build reliably few-qubit gates one only needs to concatenate them properly to establish a $N$-qubit gate with $N\gg 1$. 

\subsection{Single qubit gates}
Operations on qubits must preserve the norm of pure states. This means that they must be unitary operations. In particular, single qubit gates correspond to arbitrary rotations on the Bloch sphere. Such rotations can be generated by the Pauli matrices from Eq.~\eqref{eq:Pauli}: A rotation by an angle $\vartheta$ around an unit vector
\begin{align}
\hat{n}=(n_x,n_y,n_z),\quad \abs{\hat{n}}=1
\end{align}
is mediated by the unitary
\begin{align}\label{eq:Rntheta}
R_{\hat{n}}(\vartheta) = e^{-\im \frac{\vartheta}{2} \hat{n}\cdot\vec{\sigma}},
\end{align}
where $\vec{\sigma}=(\sigma_1,\sigma_2,\sigma_3)=(X,Y,Z)$. Accordingly, an arbitrary single qubit unitary operator can be written as
\begin{align}\label{eq:Usq}
U=e^{\im \alpha}R_{\hat{n}}(\vartheta),
\end{align}
where $\alpha,\vartheta \in \mathbb{R}$ and $\hat{n} \in \mathbb{R}^3$. 

Besides the Pauli-$X$, $Y$, $Z$ gates, there are also other important single qubit gates:\\
\underline{$T$ gate (or $\pi/8$-gate)}
\begin{align}\label{eq:pi8gate}
T=e^{\im \pi/8} \begin{pmatrix} e^{-\im \pi/8} & 0 \\ 0 & e^{\im \pi/8}\end{pmatrix}=\begin{pmatrix} 1 & 0 \\ 0 & e^{\im \pi/4}\end{pmatrix}
\end{align}
\underline{phase gate}
\begin{align}
S=T^2=\begin{pmatrix} 1 & 0 \\ 0 & \im \end{pmatrix}
\end{align}
\underline{phase shift gate}
\begin{align}\label{eq:Pgate}
P_\varphi=\begin{pmatrix} 1 & 0 \\ 0 & e^{\im \varphi} \end{pmatrix}
\end{align}
\underline{Hadamard gate}
\begin{align}\label{eq:Hgate}
H=\frac{1}{\sqrt{2}}\left( X+Z\right)=\frac{1}{\sqrt{2}}\begin{pmatrix} 1 & 1 \\ 1 & -1 \end{pmatrix}=e^{\im \pi/2} R_{\hat{n}}(\pi),
\end{align}
with $\hat{n}=(1,0,1)/\sqrt{2}$.

\begin{remarks}\leavevmode
\begin{enumerate}[1)]
\item For fixed non-parallel unit vectors $\hat{n}$ and $\hat{m}$, any unitary single qubit gate $U$ [see Eq.~\eqref{eq:Usq}] can be decomposed as
\begin{align}\label{eq:Unm}
U=e^{\im\alpha} R_{\hat{n}}(\beta) R_{\hat{m}}(\gamma) R_{\hat{n}}(\delta),
\end{align}
where $\alpha,\beta,\gamma,\delta \in \mathbb{R}$.
\begin{proof}
Exercise.
\end{proof}

\item If $U$ is a unitary single qubit gate, then there exist unitary single qubit gates $A,B,C$, such that $ABC=\unit$ and 
\begin{align}\label{eq:ABC}
U=e^{\im\alpha} AXBXC.
\end{align}
\begin{proof}
Exercise. 
\end{proof}

\item The Pauli-$X$ gate is also know as \underline{quantum NOT} gate, since $X\ket{0}=\ket{1}$ and $X\ket{1}=\ket{0}$.

\item Quantum gates are often represented by \underline{quantum circuit diagrams}, where
\begin{enumerate}[i)]
\item time proceeds from left to right, and
\item wires represent qubits.
\end{enumerate}
The above single qubit gates are represented by

\begin{center}
\begin{minipage}[c]{0.4\textwidth}
\begin{align*}
\text{Pauli-X gate:}& 
\begin{tikzcd}
\lstick{} &\gate{X} &\qw
\end{tikzcd}\\
\text{Pauli-Y gate:}&\begin{tikzcd}
\lstick{} &\gate{Y} &\qw
\end{tikzcd}\\
\text{Pauli-Z gate:}&
\begin{tikzcd}
\lstick{} &\gate{Z} &\qw
\end{tikzcd}
\end{align*}
\end{minipage}
\begin{minipage}[c]{0.4\textwidth}
\begin{align*}
\text{$T$ gate:}&
\begin{tikzcd}
\lstick{} &\gate{T} &\qw
\end{tikzcd}\\
\text{phase gate:}&
\begin{tikzcd}
\lstick{} &\gate{S} &\qw
\end{tikzcd}\\
\text{phase shift gate:}&
\begin{tikzcd}
\lstick{} &\gate{P_\varphi} &\qw
\end{tikzcd}\\
\text{Hadamard gate:}&
\begin{tikzcd}
\lstick{} &\gate{H} &\qw
\end{tikzcd}
\end{align*}
\end{minipage}
\end{center}

\end{enumerate}
\end{remarks}

\subsection{CNOT-gate}
One of the most prominent and useful two-qubit gates is the \underline{controlled-$NOT$} or \underline{$CNOT$-gate}, with matrix and diagrammatic representation
\begin{align}
CNOT=\begin{pmatrix}
1&0&0&0\\
0&1&0&0\\
0&0&0&1\\
0&0&1&0
\end{pmatrix} = \begin{tikzcd} \lstick{} &\ctrl{1} &\qw\\ \lstick{} &\targ{} &\qw \end{tikzcd}=\begin{tikzcd} \lstick{} &\ctrl{1} &\qw\\
\lstick{} &\gate{X} &\qw \end{tikzcd}.
\end{align}
The first qubit is often called the \underline{control qubit}, and the second the \underline{target qubit}. The $CNOT$-gate is unitary and self-inverse, 
\begin{align}
(CNOT)^2=\unit,
\end{align}
and acts on computational basis states as
\begin{align}
CNOT\ket{c}\otimes \ket{t} =\ket{c} \otimes \ket{t \oplus c},
\end{align}
where $\oplus$ is the addition modulo $2$, and $c,t\in\{0,1\}$. That is, it acts as
\begin{align}
\begin{split}
CNOT\ket{00}&=\ket{00},\quad CNOT\ket{10}=\ket{11}\\
CNOT\ket{01}&=\ket{01},\quad CNOT\ket{11}=\ket{10}.
\end{split}
\end{align}

\begin{remarks}\leavevmode
\begin{enumerate}[1)]
\item In the following we often use the matrix representation 
\begin{align}
\ket{00}=\ket{0}=\begin{pmatrix}1\\0\\0\\0\end{pmatrix}, \quad \ket{01}=\ket{1}=\begin{pmatrix}0\\1\\0\\0\end{pmatrix}, \quad \ket{10}=\ket{2}=\begin{pmatrix}0\\0\\1\\0\end{pmatrix}, \quad \ket{11}=\ket{3}=\begin{pmatrix}0\\0\\0\\1\end{pmatrix},
\end{align}
and similar for more than two qubits. 

\item The $CNOT$ gate can generate entanglement. For example, consider
\begin{align}
CNOT\left( a \ket{0} + b\ket{1}\right) \otimes \ket{0} = a\ket{00} + b\ket{11} \neq \ket{\chi} \otimes \ket{\phi}
\end{align}
for all $a,b\neq 0$. 

\item The $CNOT$ in combination with the Hadamard gate allows to transform the computational basis states $\ket{00},\ket{01},\ket{10},\ket{11}$ into the Bell basis states from Eq.~\eqref{eq:Bellstates} by applying the circuit
\begin{center}
\begin{minipage}[c]{0.4\textwidth}
\begin{align*}
\begin{tikzcd} 
\lstick{} & \gate{H} &\ctrl{1} &\qw\\ 
\lstick{} & \qw &\targ{} &\qw 
\end{tikzcd}
\end{align*}
\end{minipage}
\begin{minipage}[c]{0.4\textwidth}
\begin{align}\label{eq:Bellcircuit}
\begin{split}
\ket{00}&\rightarrow \ket{\phi^+}\\
\ket{01}&\rightarrow \ket{\psi^+}\\
\ket{10}&\rightarrow \ket{\phi^-}\\
\ket{11}&\rightarrow \ket{\psi^-}
\end{split}
\end{align}
\end{minipage}
\end{center}

\begin{proof}
Exercise. 
\end{proof}

\item Since $(CNOT)^2=\unit$ and $H^2=\unit$, running the circuit from right to left inverts the transformation~\eqref{eq:Bellcircuit}. Hence, via the inverse circuit, the Bell states can unambiguously be identified via projective measurements in the computational basis. 

\item A $CNOT$ with control on the second qubit is realized by
\begin{align}
\begin{tikzcd} 
\lstick{} & \gate{H} &\ctrl{1} &\gate{H}&\qw\\ 
\lstick{} & \gate{H} &\targ{} &\gate{H}&\qw 
\end{tikzcd}
=
\begin{tikzcd} 
\lstick{} & \targ{} &\qw\\ 
\lstick{} & \ctrl{-1} &\qw 
\end{tikzcd}
\end{align}
\begin{proof}
Exercise.
\end{proof}

\item Concatenation of $CNOT$s allows to swap the computational basis states of two qubits,
\begin{align}\label{eq:SWAP}
SWAP=
\begin{tikzcd} 
\lstick{} & \swap{1} &\qw\\ 
\lstick{} & \targX{} &\qw 
\end{tikzcd}
=
\begin{tikzcd}
&  \gate[swap]{} & \qw  \\
&  &  \qw
\end{tikzcd}
=
\begin{tikzcd} 
\lstick{} & \ctrl{1} &\targ{}&\ctrl{1}&\qw\\ 
\lstick{} & \targ{} &\ctrl{-1}&\targ{}&\qw 
\end{tikzcd},
\end{align}
acting as
\begin{align}
SWAP \ket{a}\otimes \ket{b}=\ket{b} \otimes \ket{a},
\end{align}
for $a,b\in\{0,1\}$. 
\begin{proof}
Exercise. 
\end{proof}

\item A controlled-$NOT$ operation with control on $\ket{0}$ (instead of $\ket{1}$) is represented by 
\begin{align}
\begin{tikzcd} 
\lstick{} & \octrl{1} &\qw\\ 
\lstick{} & \targ{} &\qw 
\end{tikzcd}
=
\begin{tikzcd} 
\lstick{} & \gate{X}& \ctrl{1} &\gate{X}& \qw\\ 
\lstick{} & \qw & \targ{} &\qw & \qw 
\end{tikzcd}.
\end{align}
\end{enumerate}

\end{remarks}

\subsection{Controlled unitary gates}

\subsubsection{Single control}
For an arbitrary single qubit unitary operation $U$, the controlled-$U$ gate is represented by
\begin{align}
C(U)=\begin{pmatrix}1&0&0&0\\0&1&0&0\\0&0&U_{00}&U_{01}\\0&0&U_{10}&U_{11} \end{pmatrix}
=
\begin{tikzcd} 
\lstick{} & \ctrl{1} &\qw\\ 
\lstick{} & \gate{U} & \qw 
\end{tikzcd},
\end{align}
and acts as 
\begin{align}
C(U) \ket{c} \otimes \ket{t} = \ket{c} \otimes U^c\ket{t}
\end{align}
for $c,t\in\{0,1\}$. It can be decomposed into single qubit gates and $CNOT$ gates using the expression $U=e^{\im\alpha} AXBXC$ with $ABC=\unit$ from Eq.~\eqref{eq:ABC}. With this, it is easy to check the circuit identity
\begin{align}\label{cir:contrU}
\begin{tikzcd} 
\lstick{} & \ctrl{1} &\qw\\ 
\lstick{} & \gate{U} & \qw 
\end{tikzcd}
=
\begin{tikzcd} 
\lstick{} & \qw & \ctrl{1} &\qw & \ctrl{1}& \gate{P_\alpha} & \qw\\ 
\lstick{} & \gate{C} & \targ{} & \gate{B} & \targ{} & \gate{A}& \qw 
\end{tikzcd}.
\end{align}

\subsubsection{Multiple controls}
More generally, one also uses multi-qubit gates which condition the execution of a unitary $U$ on $k$ target qubits on the state of $n$ control qubits. Formally, this reads
\begin{align}
C^n(U)\ket{x_1x_2\dots x_n} \otimes \ket{\psi} = \ket{x_1x_2\dots x_n} \otimes U^{x_1x_2\cdots x_n}\ket{\psi}, 
\end{align}
with $x_j\in\{0,1\}$ and $\ket{\psi} \in (\mathbb{C}^2)^{\otimes k}$ a $k$-qubit state. The circuit diagram is
\begin{align}
\begin{tikzcd}
\lstick[wires=4]{$n$ control qubits} & \qw & \ctrl{1} & \qw & \qw\\
\lstick{} &\qw & \ctrl{1} & \qw & \qw\\
\lstick{} & \text{$\vdots$} &\text{$\vdots$} & \text{$\vdots$} & \\
\lstick{} &\qw & \ctrl{1} & \qw & \qw\\
\lstick[wires=4]{$k$ target qubits}  & \qw & \gate[4,nwires={3}]{U} & \qw & \qw\\
\lstick{}  & \qw &  & \qw & \qw \\
\lstick{}  & \text{$\vdots$}  & &\text{$\vdots$}  & \\
\lstick{}  & \qw &   & \qw & \qw
\end{tikzcd}.
\end{align}
A special incident thereof is the \underline{Toffoli gate} (or \underline{$C^2(NOT)$} gate or \ul{$CCNOT$ gate})
\begin{align}
C^2(NOT) =
\begin{tikzcd}
\lstick{} & \ctrl{2} & \qw \\
\lstick{} & \ctrl{1} & \qw \\
\lstick{} & \targ{} & \qw
\end{tikzcd},
\end{align}
acting as
\begin{align}
C^2(NOT) \ket{a}\otimes\ket{b}\otimes\ket{c} = \ket{a}\otimes\ket{b}\otimes\ket{c\oplus ab},
\end{align}
where $a,b,c\in\{0,1\}$, and $[C^2(NOT)]^2=\unit$. It can be decomposed into Hadamard, phase, $T$, and $CNOT$ gates via
\begin{align}\label{cir:Toffoli}
\begin{tikzcd}
\lstick{} & \ctrl{2} & \qw \\
\lstick{} & \ctrl{1} & \qw \\
\lstick{} & \targ{} & \qw
\end{tikzcd}
=
\begin{tikzcd}
\lstick{} & \qw & \qw & \qw & \ctrl{2} & \qw & \qw & \qw & \ctrl{2} & \qw & \ctrl{1} & \qw & \ctrl{1} & \gate{T} &  \qw \\
\lstick{} & \qw & \ctrl{1} & \qw & \qw & \qw & \ctrl{1} & \qw & \qw & \gate{T^\dagger} & \targ{} & \gate{T^\dagger} & \targ{} & \gate{S} &  \qw \\
\lstick{} & \gate{H} & \targ{} & \gate{T^\dagger} & \targ{} & \gate{T} & \targ{} & \gate{T^\dagger} & \targ{} & \gate{T} & \gate{H} & \qw & \qw & \qw &  \qw 
\end{tikzcd}
\end{align}

\begin{proof}
Exercise.
\end{proof}

The Toffoli gate allows for a simple construction of a $C^n(U)$-gate, where $n-1$ \underline{working} (or \underline{ancillary}) qubits are used in addition to the $n$ control qubits and the target qubit(s). For $n=5$, the circuit is
\begin{align}\label{cir:CnU}
\begin{tikzcd}
\lstick[wires=5]{$n$ control qubits} & & \lstick{$\ket{c_1}$}& \ctrl{5} & \qw & \qw & \qw & \qw & \qw & \qw & \qw & \ctrl{5} & \qw \rstick{$\ket{c_1}$} \\
\lstick{} & & \lstick{$\ket{c_2}$}& \ctrl{4} & \qw & \qw & \qw & \qw & \qw & \qw & \qw & \ctrl{4} & \qw\rstick{$\ket{c_2}$}  \\
\lstick{} &  &\lstick{$\ket{c_3}$}& \qw & \ctrl{4} & \qw & \qw & \qw & \qw & \qw & \ctrl{4} & \qw & \qw \rstick{$\ket{c_3}$} \\
\lstick{} &  &\lstick{$\ket{c_4}$}& \qw & \qw & \ctrl{4} & \qw & \qw & \qw & \ctrl{4} & \qw & \qw & \qw \rstick{$\ket{c_4}$} \\
\lstick{} & & \lstick{$\ket{c_5}$}& \qw & \qw & \qw & \ctrl{4} & \qw & \ctrl{4} & \qw & \qw & \qw & \qw \rstick{$\ket{c_5}$} \\
\lstick[wires=4]{$n-1$ working qubits} &  &\lstick{$\ket{0}$}& \targ{} & \ctrl{1} & \qw & \qw & \qw &\qw &  \qw & \ctrl{1} & \targ{} & \qw \rstick{$\ket{0}$} \\
\lstick{} & & \lstick{$\ket{0}$}& \qw & \targ{} & \ctrl{1} & \qw & \qw & \qw & \ctrl{1} & \targ{} & \qw & \qw \rstick{$\ket{0}$} \\
\lstick{} & & \lstick{$\ket{0}$}& \qw & \qw & \targ{} & \ctrl{1} & \qw & \ctrl{1} & \targ{} & \qw & \qw & \qw \rstick{$\ket{0}$} \\
\lstick{} & & \lstick{$\ket{0}$}& \qw & \qw & \qw & \targ{} & \ctrl{1} & \targ{} & \qw & \qw & \qw & \qw \rstick{$\ket{0}$} \\
\lstick{target qubit(s)} & &   & \qw & \qw & \qw & \qw  & \gate{U} & \qw & \qw & \qw & \qw & \qw 
\end{tikzcd}.
\end{align}
Via the first $n-1$ Toffoli gates, this circuit maps the last working qubit from initial state $\ket{0}$ to state $\ket{c_1c_2\cdots c_n}$ on which $U$ is controlled. The last $n-1$ Toffoli gates rest the working qubits to their initial states using the self-inverse property of $C^2(NOT)$.

\subsection{Measurements}
A final element used in quantum circuits is measurement. A projective measurement of $\ket{\psi}$ in the computational basis with measurement operators (i.e., projectors) $\{\ketbra{0}{0}, \ketbra{1}{1}\}$ is illustrated by
\begin{align}
\begin{tikzcd}
\lstick{$\ket{\psi}$} &\meter{}
\end{tikzcd}.
\end{align}
In the theory of quantum circuits it is conventional to not use any special symbols to denote more general measurements. Why?

\begin{remarks}\leavevmode
\begin{enumerate}[1)]
\item Any measurement can be represented by unitary transformations with ancilla qubits followed by projective measurements.
\begin{proof}
(This is essentially Neumark's theorem [see Sec.~\ref{sec:POVM}]) Suppose we have $\ket{\psi}$ and want to perform a measurement with measurement operators $\{M_m\}_m$. Then introduce ancilla qubits in $\ket{000\dots}=\ket{0}$, and implement a unitary $U$ on $\ket{\psi}\ket{0}$ such that
\begin{align}
U\ket{\psi}\ket{0} = \sum_m M_m \ket{\psi}\ket{m}.
\end{align}
A projective measurement of the ancilla qubits with projectors $P_m=\unit\otimes\ketbra{m}{m}$ then yields the probability
\begin{align}
p(m)&=\tr{P_m U \ket{\psi}\ket{0}\bra{\psi}\bra{0} U^\dagger}\\
&=\bra{\psi}\bra{0} U^\dagger P_m U \ket{\psi}\ket{0}\\
&= \sum_{j,k} \bra{\psi} M_k^\dagger \bra{k} \left( \unit \otimes \ketbra{m}{m} \right) M_j \ket{\psi}\ket{j}\\
&=\sum_{j,k} \bra{\psi} M_k^\dagger M_j \ket{\psi} \bracket{k}{m}\bracket{m}{j}\\
&=\bra{\psi} M_m^\dagger M_m \ket{\psi}\\
&=\tr{M_m^\dagger M_m \ketbra{\psi}{\psi}}.
\end{align}
\end{proof}
The circuit diagram of this measurement is
\begin{align}
\begin{tikzcd}
\lstick{$\ket{\psi}$} & \gate[wires=2]{U} \qwbundle{} & \qwbundle{} & \qw \rstick{$\frac{M_m \ket{\psi}}{\sqrt{\bra{\psi} M_m^\dagger M_m \ket{\psi}}}$}\\
\lstick{$\ket{0}$} & \qwbundle{}  &\qwbundle{} & \meter{$m$} 
\end{tikzcd},
\end{align}
where $\begin{tikzcd} &\qwbundle{} \end{tikzcd}$ denotes a bundle of qubits. Note that we will regularly experience diagrams of this form. 

\item Measurement is generally considered to be an irreversible operation, destroying quantum information and replacing it with classical information.

\item Measurements commute with controls, that is,
\begin{align}
\begin{tikzcd}
\lstick{} & \ctrl{1} & \meter{} \\
& \gate{U} & \qw
\end{tikzcd}
\quad =\quad 
\begin{tikzcd}
\lstick{} & \meter{} & \cwbend{1} & \\
& \qw & \gate{U} & \qw
\end{tikzcd}
\quad =\quad 
\begin{tikzcd}
\lstick{} & \meter{} & \\
& \gate{U} \vcw{-1} & \qw
\end{tikzcd},
\end{align}
where the double line $\begin{tikzcd} & \cw \end{tikzcd}$ respresents a classical bit. 
\begin{proof}
Exercise. 
\end{proof}

\item Any unterminated quantum wires at the end of a quantum circuit may be assumed to be measured. 
\end{enumerate}
\end{remarks}

\section{Universal gate sets} 
The aim of the following is to show that any unitary operation on a $n$-qubit quantum register can be approximated to arbitrary accuracy by a combination of $CNOT$s and a finite set of single-qubit gates. ($CNOT$, Hadarmard, $T$-gate). We proceed in three steps, and show that
\begin{enumerate}[1)]
\item any unitary can be decomposed into a product of two-level unitaries,

\item any two-level unitary can be implemented by combinations of $CNOT$s and single-qubit gates,

\item any single-qubit gate can be approximated to arbitrary accuracy by a concatenation Hadamard and $T$-gates.
\end{enumerate}

\subsection{Decomposition into products of two-level unitaries}
The essential idea behind the decomposition can be understood by the example of a $3\times 3$ unitary
\begin{align}\label{eq:DU}
U=\begin{pmatrix}
a&d&g \\ 
b&e&h \\ 
c&f&j 
\end{pmatrix}.
\end{align}
We want to find two-level unitaries $U_1,U_2,U_3$, such that $U_3U_2U_1 U=\unit$, and, hence, $U=U_1^\dagger U_2^\dagger U_3^\dagger$. First we eliminate $b$ in~\eqref{eq:DU}. To this end, we choose the two-level unitary
\begin{align}
U_1=\begin{pmatrix}
\frac{a^*}{\sqrt{|a|^2+|b|^2}} & \frac{b^*}{\sqrt{|a|^2+|b|^2}} & 0 \\
\frac{b}{\sqrt{|a|^2+|b|^2}} & \frac{-a}{\sqrt{|a|^2+|b|^2}} & 0 \\
0 & 0 & 1 
\end{pmatrix},
\end{align}
such that 
\begin{align}\label{eq:DU2}
U_1 U= \begin{pmatrix}
a' & d' & g' \\
0 & e' & h' \\
c' & f' & j'
\end{pmatrix}.
\end{align}
Note that $U_1^\dagger U_1=\unit$. Next we eliminate $c'$ in~\eqref{eq:DU2} in a similar way by choosing
\begin{align}
U_2=\begin{pmatrix}
\frac{a'^*}{\sqrt{|a'|^2+|c'|^2}} & 0 & \frac{c'^*}{\sqrt{|a'|^2+|c'|^2}}  \\
0 & 1 & 0 \\
\frac{c'}{\sqrt{|a'|^2+|c'|^2}} & 0 & \frac{-a'}{\sqrt{|a'|^2+|c'|^2}} 
\end{pmatrix},
\end{align}
such that
\begin{align}\label{eq:DU3}
U_2U_1 U= \begin{pmatrix}
1 & d'' & g'' \\
0 & e'' & h'' \\
0 & f'' & j''
\end{pmatrix}.
\end{align}
Since $U_2,U_1,U$ are unitary, $U_2U_1U$ is also unitary. Hence, the first row in~\eqref{eq:DU3} must have norm $1$. Accordingly, $d''=g''=0$, hence
\begin{align}\label{eq:DU4}
U_2U_1 U= \begin{pmatrix}
1 & 0 & 0 \\
0 & e'' & h'' \\
0 & f'' & j''
\end{pmatrix}.
\end{align}
Finally, to eliminate $f''$ in the second row (and, thus, $h''$), we multiply by
\begin{align}
U_3= \begin{pmatrix}
1 & 0 & 0 \\
0 & e''^* & f''^* \\
0 & h''^* & j''^*
\end{pmatrix},
\end{align}
leading to
\begin{align}
U_3U_2U_1 U= \unit. 
\end{align}
In general, if $U$ is a $d\times d$ unitary, we similarly find two-level unitaries $U_1,\dots,U_{d-1}$ such that
\begin{align}
U_{d-1} \cdots U_1 U =     \left(
    \begin{array}{c | c c c} 
1 & 0 & 0 & \cdots \\ \hline
0 &   &   &  \\
0 &   & V & \\
\vdots &   &   & 
     \end{array} 
    \right).
\end{align}
Repeating this procedure for the $d-1\times d-1$ subunitary $V$ and all further subunitaries results in the decomposition
\begin{align}\label{eq:Udecomp}
U=U_1^\dagger U_2^\dagger \dots U_k^\dagger,
\end{align}
with 
\begin{align}
k \leq (d-1)+(d-2)+ \dots +1= \frac{d(d-1)}{2}
\end{align}
factors. Note that for an $n$-qubit register, we have $d=2^n$, such that
\begin{align}\label{eq:Unumberk}
k\leq \frac{2^n(2^n-1)}{2} = 2^{n-1}(2^n-1),
\end{align}
which grows exponentially in $n$, as $\mathcal{O}(4^n)$.

\subsection{Representation of two-level unitaries by single-qubit and CNOT gates}\label{sec:twolevelbyCNOT}
We now show that any of the two-level unitaries $U_j^\dagger$ in Eq.~\eqref{eq:Udecomp} can be implemented by a combination of $CNOT$ gates and a single qubit unitary. 

Consider the example of an $n=3$ qubit register with the $d=2^n=8$ computational basis states labeled $000,001,010,011,100,101,110,111$. Given, for example, the two-level unitary
\begin{align}
U=\begin{pmatrix}
a & 0 & 0 & 0 & 0 & 0 & 0 & c \\
0 & 1 & 0 & 0 & 0 & 0 & 0 & 0 \\
0 & 0 & 1 & 0 & 0 & 0 & 0 & 0 \\
0 & 0 & 0 & 1 & 0 & 0 & 0 & 0 \\
0 & 0 & 0 & 0 & 1 & 0 & 0 & 0 \\
0 & 0 & 0 & 0 & 0 & 1 & 0 & 0 \\
0 & 0 & 0 & 0 & 0 & 0 & 1 & 0 \\
b & 0 & 0 & 0 & 0 & 0 & 0 & d 
\end{pmatrix},
\end{align}
we see that $U$ only acts non-trivially on the basis states $\ket{000}$ and $\ket{111}$, with the embedded two-level unitary 
\begin{align}
V=\begin{pmatrix}a&c \\b&d \end{pmatrix}.
\end{align}
Since $V$ is essentially a single qubit unitary acting on the subspace $\{\ket{000},\ket{111}\}$ we aim at transforming this subspace to the subspace $\{\ket{0},\ket{1}\}$ spanned by the basis states of a single qubit, applying $V$, and then transforming back via the inverse transformation. That is, 
\begin{enumerate}[1)]
\item map $\ket{000}$ and $\ket{111}$ onto $\ket{0}\otimes\ket{11}$ and $\ket{1}\otimes\ket{11}$,

\item execute V on the first qubit, conditioned on the second and third qubit,

\item map $\ket{011}$ and $\ket{111}$ back to $\ket{000}$ and $\ket{111}$.
\end{enumerate}
 
 We use a so-called \underline{Gray-code}, which is a sequence of register states which differ by one qubit state only and connect the two states on which $U$ acts non-trivially,
 \begin{align}\label{eq:gray}
 \begin{array}{c c c c }
                & \text{A} & \text{B} & \text{C} \\
 \ket{000} & 0 & 0 & 0\\
                & 0 & 0 & 1\\
               & 0 & 1 & 1\\
 \ket{111} & 1 & 1 & 1\\
 \end{array}.
 \end{align}
From the Gray-code~\eqref{eq:gray} and steps 2) and 3), we find that $U$ is implemented by the circuit
\begin{align}
\begin{tikzcd}
\lstick{A} & \octrl{1} & \octrl{1} & \gate{V} & \octrl{1} & \octrl{1} & \qw \\
\lstick{B} & \octrl{1} & \targ{} & \ctrl{-1} & \targ{} & \octrl{1} & \qw \\
\lstick{C} & \targ{} & \ctrl{-1} &  \ctrl{-2} & \ctrl{-1} & \targ{} & \qw
\end{tikzcd}.
\end{align}
Here the first controlled-$NOT$ maps $\ket{000}$ and $\ket{111}$ to $\ket{001}$ and $\ket{111}$. The second controlled-$NOT$ maps $\ket{001}$ and $\ket{111}$ to $\ket{011}$ and $\ket{111}$. The unitary $V$ is then applied conditioned on the last two qubits, and, due to controlled-$NOT$ gates being self-inverse, the last two controls map $\ket{011}$ and $\ket{111}$ back to $\ket{000}$ and $\ket{111}$. Similarly, one can find such circuits for any two-level unitary. 

In conclusion, together with~\eqref{cir:contrU},~\eqref{cir:Toffoli}, and~\eqref{cir:CnU} we showed that single qubit and $CNOT$ gates are universal, since their concatenation allows the implementation of arbitrary $d\times d$ unitaries. 

\subsubsection*{How many gates are needed to implement $U$?}
\begin{itemize}
\item A two-level unitary $U_j^\dagger$ requires at most $2(n-1)$ controlled operations, as well as a controlled $V$ operation. Each of them requires $\mathcal{O}(n)$ $CNOT$s and single-qubit gates (not shown here). Hence, $U_j^\dagger$ requires $\mathcal{O}(n^2)$ gates. 

\item In Eq.~\eqref{eq:Unumberk} we found that $\mathcal{O}(2^{2n})=\mathcal{O}(4^n)$ two-level unitaries are needed for a $n$-qubit unitary $U$.
\end{itemize}

Altogether, an arbitrary unitary operation on $n$ qubits can be implemented using circuits containing $\mathcal{O}(n^2 4^n)$ single-qubit and $CNOT$ gates. Since this number is exponentially in $n$, it is difficult to find fast quantum algorithms using this universality approach.

\subsection{Approximation of arbitrary single-qubit gates by Hadamard and T-gates}
When contemplating the construction of a universal quantum computer from a set of standard constituents, one seeks to approximate an arbitrary single-qubit gate [see Eq.~\eqref{eq:Usq}] by concatenation of a discrete set of gates. In the following we show that the Hadamard and $\pi/8$-gate from Eqs.~\eqref{eq:pi8gate} and~\eqref{eq:Hgate} constitute such a set. First, however, let us quantify how well an unitary $V$ approximates the ideal unitary $U$. 

\begin{definition}
For $V$ a unitary which approximates the unitary $U$ on $\mathcal{H}$, the \underline{approximation error} is defined as
\begin{align}\label{eq:approxerror}
E(U,V)=\max_{\ket{\psi}\in\mathcal{H}} \norm{ (U-V) \ket{\psi}},
\end{align}
where the maximization is over all normalized states $\ket{\psi}\in\mathcal{H}$.
\end{definition}

\begin{remarks}\leavevmode
\begin{enumerate}[1)]
\item $E(U,V)$ gives an upper bound for the difference between the measurement statistics derived from measurements on $U\ket{\psi}$ and $V\ket{\psi}$, respectively, for arbitrary $\ket{\psi}\in\mathcal{H}$ and arbitrary measurement operators $M$. 
\begin{proof}
For measurements on $U\ket{\psi}$ and $V\ket{\psi}$, the output probability corresponding to the measurement operator $M$ is given by $p_U=\mathrm{Tr}(M^\dagger M U\ketbra{\psi}{\psi}U^\dagger)$ and $p_V=\mathrm{Tr}(M^\dagger M V\ketbra{\psi}{\psi}V^\dagger)$, respectively. Their difference is
\begin{align}
\begin{split}
\abs{p_U-p_V}&=\abs{\bra{\psi} U^\dagger M^\dagger MU\ket{\psi} -\bra{\psi} V^\dagger M^\dagger  MV\ket{\psi} }\\
&=\abs{\bra{\psi} U^\dagger M^\dagger M(U-V)\ket{\psi} +\bra{\psi} (U^\dagger -V^\dagger) M^\dagger MV\ket{\psi} }\\
&\leq \abs{\bra{\psi} U^\dagger M^\dagger M(U-V)\ket{\psi}} +\abs{\bra{\psi} (U^\dagger -V^\dagger) M^\dagger MV\ket{\psi} }.
\end{split}
\end{align}
Next we use the Cauchy Schwarz inequality $\abs{\bracket{x}{y}} \leq \norm{x}\norm{y}$, as well as $\norm{M^\dagger M(U-V)\ket{\psi}} \leq \norm{(U-V)\ket{\psi}}$ for all $M$, such that
\begin{align}
\begin{split}
\abs{p_U-p_V}&\leq \norm{(U-V)\ket{\psi}} +\norm{\bra{\psi} (U^\dagger -V^\dagger)  } \\
&\leq 2 E(U,V).
\end{split}
\end{align}
\end{proof}

\item For unitaries $U_j$ and $V_j$, with $j\in\{1,\dots,m\}$, the approximation error satisfies
\begin{align}\label{eq:ApproxErrorSum}
E(U_mU_{m-1}\dots U_m, V_mV_{m-1}\dots V_1)\leq \sum_{j=1}^m E(U_j,V_j).
\end{align}
\begin{proof}
We prove by induction and start with $m=2$:
\begin{align}
\begin{split}
E(U_2U_1,V_2V_1) &= \max_{ \ket{\psi}\in\mathcal{H}} \norm{ (U_2U_1-V_2V_1)\ket{\psi}}\\
&=\max_{ \ket{\psi}\in\mathcal{H}} \norm{ (U_2U_1-U_2V_1)\ket{\psi} +(U_2V_1-V_2V_1)\ket{\psi} }\\
&\leq \max_{ \ket{\psi}\in\mathcal{H}} \left( \norm{ U_2(U_1-V_1)\ket{\psi} } + \norm{(U_2-V_2)V_1\ket{\psi} } \right) \\
&\leq E(U_1,V_1) + E(U_2,V_2).
\end{split}
\end{align}
The case $m>2$ follows from induction. 
\end{proof}

\end{enumerate}
\end{remarks}

We are now set to show that arbitrary single-qubit gates can be approximated to arbitrary accuracy by concatenations of Hadamard and $T$-gates. Let us start with rewriting the $T$-gate from Eq.~\eqref{eq:pi8gate} using
\begin{align}\label{eq:ecossin}
e^{\im \varphi \ \vec{n} \cdot \vec{\sigma}} = \cos\varphi\ \unit + \im \sin\varphi \ \vec{n}\cdot \vec{\sigma},
\end{align}
such that
\begin{align}\label{eq:Tsc}
\begin{split}
T&=e^{\im \frac{\pi}{8}} \begin{pmatrix}e^{-\im \frac{\pi}{8}}  & 0 \\ 0 & e^{\im \frac{\pi}{8}}   \end{pmatrix}\\
&=e^{\im \frac{\pi}{8}}  e^{-\im \frac{\pi}{8} Z}  \\
&= e^{\im \frac{\pi}{8}}   \left[ \cos\left( \pi/8 \right) \unit - \im \sin\left( \pi/8 \right) Z\right].
\end{split}
\end{align}
Next, we use $H=\frac{1}{\sqrt{2}}(X+Z)$ from Eq.~\eqref{eq:Hgate} and calculate $HTH$ using [recall Eq.~\eqref{eq:sigmacom}]
\begin{align}
\sigma_j \sigma_k = \delta_{jk}\ \unit +\im \sum_{l=1}^3 \epsilon_{jkl} \ \sigma_l
\end{align}
for $j,k,l \in \{1,2,3\}$, which gives, in particular, 
\begin{align}
\begin{split}
ZX&=\im Y \quad XZ=-\im Y \\
XY&=\im Z \quad YX=-\im Z\\
YZ&=\im X \quad ZY=-\im X.
\end{split}
\end{align}
We get
\begin{align}\label{eq:HTH}
\begin{split}
HTH&=\frac{1}{2}e^{\im \frac{\pi}{8}} (X+Z) \left[ \cos\left( \pi/8 \right) \unit - \im \sin\left( \pi/8 \right) Z\right] (X+Z)\\
&=\frac{1}{2}e^{\im \frac{\pi}{8}} \left[ \cos\left( \pi/8 \right)X -\sin\left( \pi/8 \right) Y +\cos\left( \pi/8 \right)Z -\im  \sin\left( \pi/8 \right) \unit\right] (X+Z)\\
&=\frac{1}{2}e^{\im \frac{\pi}{8}} \Big[ \cos\left( \pi/8 \right) \unit -\im \cos\left( \pi/8 \right) Y + \im \sin\left( \pi/8 \right)  Z - \im \sin\left( \pi/8 \right) X \\
&+ \im \cos\left( \pi/8 \right) Y + \cos\left( \pi/8 \right)\unit - \im \sin\left( \pi/8 \right)X - \im \sin\left( \pi/8 \right)Z\Big] \\
&=e^{\im \frac{\pi}{8}} \left[\cos\left( \pi/8 \right) \unit - \im \sin\left( \pi/8 \right)X  \right] 
\end{split}
\end{align}
Using Eqs.~\eqref{eq:Tsc}, and~\eqref{eq:HTH}, concatenation of $T$ and $HTH$ gives
\begin{align}\label{eq:THTH}
\begin{split}
T(HTH)&=e^{\im \frac{\pi}{4}}\left[\cos\left( \pi/8 \right) \unit - \im \sin\left( \pi/8 \right)Z  \right]\left[\cos\left( \pi/8 \right) \unit - \im \sin\left( \pi/8 \right)X  \right]\\
&=e^{\im \frac{\pi}{4}} \Big[\cos^2\left( \pi/8 \right) \unit - \im \sin\left( \pi/8 \right)\cos\left( \pi/8 \right)X - \im \sin\left( \pi/8 \right)\cos\left( \pi/8 \right)Z-\im \sin^2\left( \pi/8 \right) Y\Big]\\
&=e^{\im \frac{\pi}{4}} \left[\cos^2\left( \pi/8 \right) \unit - \im \sin\left( \pi/8 \right) \left[\cos\left( \pi/8 \right), \sin\left( \pi/8 \right),\cos\left( \pi/8 \right) \right] \cdot \vec{\sigma}\right]\\
&=e^{\im \frac{\pi}{4}} \left[\cos^2\left( \pi/8 \right) \unit - \im \sin\left( \pi/8 \right) \sqrt{1+\cos^2\left(\pi/8\right)}\  \hat{n}\cdot\vec{\sigma} \right]
\end{split}
\end{align}
with
\begin{align}\label{eq:nrot}
\hat{n}=\frac{\left[\cos\left( \pi/8 \right), \sin\left( \pi/8 \right),\cos\left( \pi/8 \right) \right]}{\sqrt{1+\cos^2\left(\pi/8\right)}}.
\end{align}
Now we define $\theta$ such that
\begin{align}\label{eq:theta}
\cos(\theta/2)=\cos^2(\pi/8).
\end{align}
It can be shown that $\vartheta$ is an irrational multiple of $2\pi$. A proof can be found in the literature, e.g. see P. O. Boykin, et al., arXiv:quant-ph/9906054, pp. 4+10. 

With Eq.~\eqref{eq:theta} we have
\begin{align}\label{eq:thetaSin}
\begin{split}
\sin(\pi/8) \sqrt{1+\cos^2\left(\pi/8\right)} &=\sqrt{1-\cos\left(\theta/2\right)}\sqrt{1+\cos\left(\theta/2\right)}\\
&=\sqrt{1-\cos^2\left(\theta/2\right)}\\
&=\sin(\theta/2),
\end{split}
\end{align}
such that Eq.~\eqref{eq:THTH} becomes 
\begin{align}\label{eq:THTHRn}
\begin{split}
T(HTH)&= e^{\im \frac{\pi}{4}} \left[  \cos(\theta/2) \ \unit - \im \sin(\theta/2)\ \hat{n}\cdot\vec{\sigma}\right]\\
&= e^{\im \frac{\pi}{4}} R_{\hat{n}}(\theta).
\end{split}
\end{align}
Hence, $THTH$ represents a rotation by $\theta$ around the unit vector $\hat{n}$ defined in Eq.~\eqref{eq:nrot}. 

Next, we proceed similarly for $(HTH)T$ using Eqs.~\eqref{eq:Tsc} and~\eqref{eq:HTH}:
\begin{align}
\begin{split}
(HTH)T&=e^{\im \frac{\pi}{4}}\left[\cos\left( \pi/8 \right) \unit - \im \sin\left( \pi/8 \right)X  \right]\left[\cos\left( \pi/8 \right) \unit - \im \sin\left( \pi/8 \right)Z  \right]\\
&=e^{\im \frac{\pi}{4}} \Big[\cos^2\left( \pi/8 \right) \unit - \im \sin\left( \pi/8 \right)\cos\left( \pi/8 \right)Z - \im \sin\left( \pi/8 \right)\cos\left( \pi/8 \right)X+\im \sin^2\left( \pi/8 \right) Y\Big]\\
&=e^{\im \frac{\pi}{4}} \left[\cos^2\left( \pi/8 \right) \unit - \im \sin\left( \pi/8 \right) \sqrt{1+\cos^2\left(\pi/8\right)}\  \hat{m}\cdot\vec{\sigma} \right]
\end{split},
\end{align}
with
\begin{align}\label{eq:mrot}
\hat{m}=\frac{\left[\cos\left( \pi/8 \right), -\sin\left( \pi/8 \right),\cos\left( \pi/8 \right) \right]}{\sqrt{1+\cos^2\left(\pi/8\right)}}.
\end{align}
Again, with $\theta$ from Eq.~\eqref{eq:theta}, and using Eq.~\eqref{eq:thetaSin}, we find
\begin{align}\label{eq:HTHTRm}
\begin{split}
(HTH)T&= e^{\im \frac{\pi}{4}} \left[  \cos(\theta/2) \ \unit - \im \sin(\theta/2)\ \hat{m}\cdot\vec{\sigma}\right]\\
&= e^{\im \frac{\pi}{4}} R_{\hat{m}}(\theta).
\end{split}
\end{align}
This is a rotation by $\theta$ around the unit vector $\hat{m}$, which is non-parallel to $\hat{n}$ [cf. Eqs.~\eqref{eq:nrot} and~\eqref{eq:mrot}]. 

Due to the irrationality of $\theta$, for any angle $\alpha \in [0,2\pi)$ and any desired accuracy $\epsilon >0$,
\begin{align}
\exists\ l : \abs{(l\theta -\alpha) \mod 2\pi}<\epsilon,
\end{align}
since integer multiples of an irrational number fill the unit circle densely. Hence, 
\begin{align}
\left[ R_{\hat{n}}(\theta) \right]^l \approx R_{\hat{n}}(\alpha).
\end{align}
Consequently, for the approximation error~\eqref{eq:approxerror}, we find that
\begin{align}\label{eq:errorRot}
\forall \delta >0 \ \exists \ l\in\mathbb{N}\ : \ E\left( \left[ R_{\hat{n}}(\theta) \right]^l, R_{\hat{n}}(\alpha)\right) < \delta. 
\end{align}
Finally, by Eq.~\eqref{eq:Unm} we found that (up to an unimportant global phase) an arbitrary single-qubit gate $U$ can be written as $U=R_{\hat{n}}(\alpha)R_{\hat{m}}(\beta)R_{\hat{n}}(\gamma)$. The results~\eqref{eq:THTHRn},~\eqref{eq:HTHTRm}, and~\eqref{eq:errorRot} therefore imply that
\begin{align}
\begin{split}
&\forall \delta>0 \ \exists\ l_1,l_2,l_3 \in \mathbb{N}\ :\\
  &E\left( \left[ R_{\hat{n}}(\theta) \right]^{l_1} \left[R_{\hat{m}}(\theta) \right]^{l_2}\left[R_{\hat{n}}(\theta) \right]^{l_3}, R_{\hat{n}}(\alpha)R_{\hat{m}}(\beta)R_{\hat{n}}(\gamma)\right)\\
&\overset{\eqref{eq:ApproxErrorSum}}{\leq} E\left( \left[ R_{\hat{n}}(\theta) \right]^{l_1} , R_{\hat{n}}(\alpha)\right) + E\left( \left[R_{\hat{m}}(\theta) \right]^{l_2},R_{\hat{m}}(\beta)\right)+E\left( \left[R_{\hat{n}}(\theta) \right]^{l_3},R_{\hat{n}}(\gamma)\right)\\
&\overset{\eqref{eq:errorRot}}{<} 3 \delta.
\end{split}
\end{align}
That is, any single-qubit gate $U$ can be approximated to arbitrary accuracy by concatenation of Hadamard and $T$-gates.

\begin{remarks}\leavevmode
\begin{enumerate}[1)]
\item Scaling of Resources: 
\begin{itemize}
\item In Sec.~\ref{sec:twolevelbyCNOT} we found that $\mathcal{O}(n^24^n)$ single qubit and $CNOT$ gates are required to implement an arbitrary $U$.
\item By the \underline{Solovay Kitaev theorem} (see Appendix 3 in \cite{Nielsen-QC-2011}), an arbitrary single-qubit gate $U$ can be approximated to an accuracy $\epsilon$ using $\mathcal{O}([\log_2(1/\epsilon)]^c)$ gates from the discrete set (Hadamard, and $T$-gate), where $c\approx 2$. 
\end{itemize}
Altogether, approximately $n^24^n[\log_2(1/\epsilon)]^c$ gates from the discrete set are required. This is exponential in $n$. 

\item In general, an exponential overhead in $n$ has to be anticipated for implementations of arbitrary $n$-qubit unitaries. 

\item This does not address the problem which families of unitary operations can be computed efficiently in the quantum circuit model. 
\end{enumerate}
\end{remarks}

Before we discuss quantum algorithms, we need one more ingredient.

\section{The quantum Fourier transform} 
The quantum Fourier transform is the key ingredient for many interesting quantum algorithms. It enables an efficient realization of quantum phase estimation, which is the key for several other interesting problems including the order-finding problem and the factoring problem (i.e., Shor's algorithm). 

\begin{definition}
The \underline{quantum Fourier transform} (QFT) on an orthonormal basis $\{\ket{0},\dots,\ket{N-1}\}$ is defined to be a linear operator $\mathcal{F}$ acting as
\begin{align}\label{eq:F}
\mathcal{F}\ket{j}=\frac{1}{\sqrt{N}} \sum_{k=0}^{N-1} e^{\im 2\pi j k /N} \ket{k}.
\end{align} 
\end{definition}

\begin{remarks}\leavevmode
\begin{enumerate}[1)]
\item From Eq.~\eqref{eq:F} we get $\bra{k}\mathcal{F}\ket{j}=\frac{1}{\sqrt{N}} e^{\im 2\pi j k /N}$, such that the outer product representation [see Eq.~\eqref{eq:outerprod}] of $\mathcal{F}$ reads
\begin{align}
\mathcal{F}=\sum_{j,k=0}^{N-1}\frac{1}{\sqrt{N}} e^{\im 2\pi jk/N} \ketbra{k}{j}.
\end{align}
\item $\mathcal{F}$ is unitary, i.e., $\mathcal{F}^\dagger \mathcal{F}=\unit$, where 
\begin{align}\label{eq:Fdagger}
\mathcal{F}^\dagger \ket{j}=\frac{1}{\sqrt{N}} \sum_{k=0}^{N-1} e^{-\im 2\pi j k /N} \ket{k}.
\end{align}
\begin{proof}
Exercise.
\end{proof}
\item Since $\mathcal{F}$ is unitary, it represents a basis transformation, and we denote the new basis states by
\begin{align}
\ket{p_j}=\mathcal{F}\ket{j} = \frac{1}{\sqrt{N}} \sum_{k=0}^{N-1} e^{\im 2\pi kj/N} \ket{k}.
\end{align}
\end{enumerate}
\end{remarks}

In the following we consider the QFT on a register of $n$ qubits. Hence, $N=2^n$, and $\ket{0},\dots,\ket{2^n-1}$ are the computational basis states.

\subsection{Binary representation}
In order to rewrite $\mathcal{F}\ket{j}$ from Eq.~\eqref{eq:F}, we now introduce some notation: So far we already used that the basis state labels $j\in\{0,\dots,2^n-1\}$, i.e., 
\begin{align}
j=j_{n-1}2^{n-1}+j_{n-2}2^{n-2} + \dots + j_02^0=\sum_{l=0}^{n-1} j_l 2^l,
\end{align}
with $j_l\in\{0,1\}$, can be expressed in \underline{binary representation} as
\begin{align}\label{eq:binrep}
j=j_{n-1}j_{n-2}\dots j_0.
\end{align}
For example, for $n=3$ we have $6=1\times 2^2+1\times 2^1+0\times 2^0$, which has binary representation $110$. Now consider
\begin{align}\label{eq:binfrac}
\begin{split}
\frac{j}{2^l}&= \sum_{q=0}^{n-1} j_q \frac{2^q}{2^l} \\
&= \underbrace{\sum_{q=0}^{l-1} \frac{j_q}{2^{l-q}} }_{l>q}+ \underbrace{\sum_{q=l}^{n-	1} j_q 2^{q-l}}_{l\leq q}\\
&=\underbrace{\frac{j_0}{2^l}+\frac{j_1}{2^{l-1}}+\dots + \frac{j_{l-1}}{2}  }_{<1} + \underbrace{j_l 2^0 + j_{l+1} 2^1 + \dots + j_{n-1}2^{n-1-l}}_{\in\mathbb{N}} \\
&=: 0.j_{l-1}j_{l-2}\dots j_0 + j_{n-1}j_{n-2}\dots j_l\\
&=:j_{n-1}j_{n-2}\dots j_l.j_{l-1}j_{l-2}\dots j_0,
\end{split}
\end{align}
where we used the binary representation~\eqref{eq:binrep} and introduced the notation of the \underline{binary fraction} $0.j_{l-1}j_{l-2}\dots j_0$.

\subsection{Rewriting the quantum Fourier transform}
We now rewrite the quantum Fourier transform~\eqref{eq:F} using the binary representation:
\begin{align}\label{eq:Frewrite}
\begin{split}
\mathcal{F}\ket{j}&=\frac{1}{\sqrt{2^n}} \sum_{k=0}^{2^n-1} e^{\im 2\pi j k/2^n} \ket{k} \\
&=\frac{1}{\sqrt{2^n}} \sum_{k_0,\dots,k_{n-1}=0}^1 e^{\im 2\pi j \sum_{l=0}^{n-1} k_l/2^{n-l}} \ket{k_{n-1}k_{n-2}\dots k_0}\\
&=\frac{1}{\sqrt{2^n}}  \sum_{k_0,\dots,k_{n-1}=0}^1\left( \prod_{l=0}^{n-1} e^{\im 2\pi j k_l/2^{n-l}}  \right) \ket{k_{n-1}} \otimes \ket{k_{n-2}} \otimes \dots \otimes \ket{k_0}\\
&= \frac{1}{\sqrt{2^n}} \left(  \sum_{k_{n-1}=0}^1 e^{\im 2\pi j k_{n-1}/2^1} \ket{k_{n-1}}  \right) \otimes \dots \otimes \left(  \sum_{k_0=0}^1 e^{\im 2\pi j k_0/2^n} \ket{k_0}  \right)\\
&= \frac{1}{\sqrt{2^n}} \bigotimes_{l=1}^n \sum_{k_{n-l}=0}^1 e^{\im 2\pi j k_{n-l}/2^l} \ket{k_{n-l}}\\
&=\frac{1}{\sqrt{2^n}} \bigotimes_{l=1}^n \left(\ket{0} + e^{\im 2\pi j/2^l}  \ket{1} \right).
\end{split}
\end{align}
By inserting $j/2^l$ from Eq.~\eqref{eq:binfrac}, we see that $j_{n-1}j_{n-2}\dots j_l \in \mathbb{N}$ leads to integer multiples of $2\pi$ in~\eqref{eq:Frewrite}, hence can be dropped. We are thus left with
\begin{align}\label{eq:Fprod}
\begin{split}
\mathcal{F}\ket{j}&=\frac{1}{\sqrt{2^n}} \bigotimes_{l=1}^n \left(\ket{0} + e^{\im 2\pi 0.j_{l-1}j_{l-2}\dots j_0}  \ket{1} \right)\\
&=\frac{1}{\sqrt{2^n}} \left(\ket{0} + e^{\im 2\pi 0.j_0}  \ket{1} \right)\otimes \left(\ket{0} + e^{\im 2\pi 0.j_1j_0}  \ket{1} \right)\otimes \dots \otimes \left(\ket{0} + e^{\im 2\pi 0.j_{n-1}j_{n-2}\dots j_0}  \ket{1} \right).
\end{split}
\end{align}
Equation~\eqref{eq:Fprod} is a product representation (i.e. a separable state) of all $n$ qubits. Each single qubit is in a balanced superposition of $\ket{0}$ and $\ket{1}$, with an additional phase attached to $\ket{1}$, depending on the index of the qubit. Since a balanced superposition of a single qubit state is generated by the Hadamard gate from Eq.~\eqref{eq:Hgate}, and phase shifts on $\ket{1}$ by the phase shift gate from Eq.~\eqref{eq:Pgate}, Eq.~\eqref{eq:Fprod} suggests that there exists an efficient quantum circuit of the QFT in terms of these gates. 

\subsection{Circuit of the quantum Fourier transform}
We now use Eq.~\eqref{eq:Fprod} to find a quantum circuit for the QFT. To this end, let us define the phase gate 
\begin{align}\label{eq:RgateQFT}
R_k=P_{2\pi/2^k}=\begin{pmatrix}
1&0\\0&e^{\im 2\pi /2^k}
\end{pmatrix}.
\end{align}
The circuit implementing the QFT then looks as follows:

\begin{align}\label{cir:QFT}
\scalemath{0.7}{
\begin{tikzcd}
\lstick{$\ket{j_{n-1}}$} & \gate{H} \gategroup[wires=5,steps=5,style={dashed, rounded corners,inner sep=2pt}]{$(1)$}& \gate{R_2} &\ \ldots\ \qw & \gate{R_{n-1}} & \gate{R_n} & \qw \gategroup[wires=5,steps=8,style={dashed, rounded corners, inner sep=2pt}]{$(2)-(n)$} &\ \ldots\ \qw &\qw &\qw &\ \ldots\ \qw &\qw &\qw &\qw & \swap{4} \gategroup[wires=5,steps=3,style={dashed, rounded corners, inner sep=2pt}]{$(SWAP)$}& \qw &\ \ldots\ \qw \\
\lstick{$\ket{j_{n-2}}$} & \qw & \ctrl{-1} & \ \ldots\ \qw &  \qw & \qw & \gate{H} & \ \ldots\ \qw &\gate{R_{n-2}} & \gate{R_{n-1}} & \ \ldots\ \qw &\qw &\qw &\qw &\qw & \swap{2} & \ \ldots\ \qw\\
\lstick{\vdots}\wave &&&&&&&&&&&&&&&&\\
\lstick{$\ket{j_1}$} & \qw & \qw &\ \ldots\ \qw & \ctrl{-3} & \qw & \qw &\ \ldots\ \qw & \ctrl{-2} & \qw &\ \ldots\ \qw & \gate{H} &  \gate{R_2} & \qw &\qw &\targX{}&\ \ldots\ \qw\\
\lstick{$\ket{j_0}$} & \qw & \qw &\ \ldots\ \qw &  \qw & \ctrl{-4} &  \qw &\ \ldots\ \qw &\qw & \ctrl{-3} & \ \ldots\ \qw &\qw &\ctrl{-1} &\gate{H} & \targX{} & \qw &\ \ldots\ \qw
\end{tikzcd}.}
\end{align}

\subsubsection*{Part $(1)$}
Starting with $\ket{j}=\ket{j_{n-1}j_{n-2}\dots j_0}=\ket{j_{n-1}}\otimes \dots \otimes \ket{j_0}$ and acting with $H$ on the first qubit yields [recall $H\ket{0}=(\ket{0}+\ket{1})/\sqrt{2}$ and $H\ket{1}=(\ket{0}-\ket{1})/\sqrt{2}$]
\begin{align}
\begin{split}
H\ket{j_{n-1}} \otimes \ket{j_{n-2} j_{n-3} \dots j_0} &= \begin{cases} 
\frac{1}{\sqrt{2}}\left(\ket{0} + \ket{1} \right) \otimes \ket{j_{n-2} j_{n-3} \dots j_0} \quad &\text{if }\  j_{n-1}=0\\
\frac{1}{\sqrt{2}}\left(\ket{0} - \ket{1} \right) \otimes \ket{j_{n-2} j_{n-3} \dots j_0} \quad &\text{if }\  j_{n-1}=1
 \end{cases}\\
 &=\frac{1}{\sqrt{2}}\left(\ket{0} +e^{\im 2\pi 0.j_{n-1}} \ket{1} \right) \otimes \ket{j_{n-2} j_{n-3} \dots j_0} .
 \end{split}
\end{align}
Next, the controlled phase shift on the first qubit conditioned on the second quit, $C^{(1,2)}(R_2)$, results in 
\begin{align}
\begin{split}
&\left[ C^{(1,2)}(R_2)\frac{1}{\sqrt{2}}\left(\ket{0} +e^{\im 2\pi 0.j_{n-1}} \ket{1} \right) \otimes \ket{j_{n-2}} \right] \otimes \ket{j_{n-3} \dots j_0} \\
&=(R_2)^{j_{n-2}} \frac{1}{\sqrt{2}}\left(\ket{0} +e^{\im 2\pi 0.j_{n-1}} \ket{1} \right)  \otimes \ket{j_{n-2} j_{n-3} \dots j_0} \\
&\overset{\eqref{eq:RgateQFT}}{=} \frac{1}{\sqrt{2}}\left(\ket{0} +e^{\im 2\pi 0.j_{n-1}} e^{\im 2\pi j_{n-2}/2^2}\ket{1} \right) \otimes \ket{j_{n-2} j_{n-3} \dots j_0}  \\
&\overset{\eqref{eq:binfrac}}{=} \frac{1}{\sqrt{2}}\left(\ket{0} +e^{\im 2\pi 0.j_{n-1}j_{n-2}} \ket{1} \right)  \otimes \ket{j_{n-2} j_{n-3} \dots j_0} .
\end{split}
\end{align}
All other controlled phase gates act similar, resulting in
\begin{align}
\frac{1}{\sqrt{2}}\left(\ket{0} +e^{\im 2\pi 0.j_{n-1}j_{n-2}\dots j_0} \ket{1} \right) \otimes \ket{j_{n-2} j_{n-3} \dots j_0} .
\end{align}

\subsubsection*{Part $(2)$-$(n)$}
Repetition of the same procedure on quits $2$ to $n$ (i.e., first Hadamard, then controlled phase gates) produces the state
\begin{align}\label{eq:Part1n}
\frac{1}{\sqrt{2}}\left(\ket{0} +e^{\im 2\pi 0.j_{n-1}j_{n-2}\dots j_0} \ket{1} \right) \otimes \frac{1}{\sqrt{2}}\left(\ket{0} +e^{\im 2\pi 0.j_{n-2}\dots j_0} \ket{1} \right)\otimes \dots \otimes \frac{1}{\sqrt{2}}\left(\ket{0} +e^{\im 2\pi 0. j_0} \ket{1} \right).
\end{align}

\subsubsection*{Part $(SWAP)$}
By comparing~\eqref{eq:Part1n} with~\eqref{eq:Fprod}, we see that both states coincide up to the ordering of the qubits. Hence, by swapping the $k$-th with the $(n+1-k)$-th qubit (i.e., reversing the order of the qubits), we arrive at the desired output. 

\subsubsection*{How many gates does the circuit use?}
\begin{enumerate}[\hspace{60pt}]
\item[(1):] $1$ Hadamard + $(n-1)$ phase gates = $n$ gates
\item[(2):] $1$ Hadamard + $(n-2)$ phase gates = $n-1$ gates
\item[$\vdots$]
\item[($n$):] $1$ Hadamard = $1$ gate
\item[(SWAP):] at most $n/2$ gates. 
\end{enumerate}
In total we have $n+(n-1)+\dots+1+n/2=n(n+1)2+n/2=n(n+2)/2$ gates. Since by~\eqref{eq:SWAP} and~\eqref{cir:contrU} each $SWAP$ gate and each controlled unitary (with a single control) requires at most a constant number of gates, the circuit~\eqref{cir:QFT} of the QFT requires $\mathcal{O}(n^2)$ elementary gates.

\section{Some quantum algorithms} 
There are three approaches to study the difference between the capabilities of classical and quantum computers:
\begin{enumerate}[1)]
\item\textbf{Relativized exponential speedup:} Given a quantum black box (also called oracle), which performs an a priori unknown unitary transformation, there are quantum algorithms which can analyze what the black box does with (exponential) speedup compared to a classical analysis. Example: Simon's algorithm (see Sec.~\ref{sec:Simon}).

\item\textbf{Exponential speedup for apparently hard problems:} Quantum algorithms that solve a problem in polynomial time, where the problem appears to be hard classically. Example: Shor's factoring algorithm (see Sec.~\ref{sec:PhaseEstimation}).

\item\textbf{Nonexponential speedup:} Quantum algorithms that are demonstrably faster than the best classical algorithm, but not exponentially faster. Example: Grover's search algorithm (see Sec.~\ref{sec:Grover}).
\end{enumerate}

\subsection{Deutsch's algorithm}
Deutsch's algorithm is one of the simplest algorithms which demonstrates how quantum parallelism (i.e., the superposition principle) can outperform classical algorithms. Suppose we are given a black box (called \underline{oracle}), which evaluates a function $f:\{0,1\}\rightarrow \{0,1\}$ with a one bit domain and range according to the unitary
\begin{align}\label{eq:Deutschf}
C(U_f)\ket{x}\ket{y}=\ket{x}\ket{y\oplus f(x)}.
\end{align}
The task is to decide whether the function is constant (i.e., $f(0)=f(1)$) or balanced (i.e., $f(0)\neq f(1)$). 

\subsubsection*{Classical input}
If we are restricted to classical inputs, to get the answer we must access the box (i.e., call the oracle) twice, for $x=0$ and $x=1$.

\subsubsection*{Quantum input}
If we can input coherent superpositions, we only need to access the box (i.e., call the oracle) once. To see this, consider the circuit 
\begin{align}\label{cir:Deutsch}
\begin{tikzcd}
 \lstick{$\ket{0}$}  & \qw  & \gate{H}& \ctrl{1}& \gate{H}  & \meter{} \\
\lstick{$\ket{0}$} & \gate{X} & \gate{H} & \gate{U_f}& \qw & \qw
\end{tikzcd}
=
\begin{tikzcd}
 \lstick{$\ket{0}$}  & \qw  & \gate{H}& \gate[wires=2][2cm]{U_f} \gateinput{$x$} \gateoutput{$x$} & \gate{H}  & \meter{} \\
\lstick{$\ket{0}$} & \gate{X} & \gate{H} & \qw \gateinput{$y$} \gateoutput{$y\oplus f(x)$} & \qw & \qw
\end{tikzcd}.
\end{align}
The $X$ and the first Hadamard gates act on the input as
\begin{align}
\ket{0}\ket{0} \xrightarrow{ \unit \otimes X} \ket{0} X\ket{0} =\ket{0}\ket{1}
\end{align}
and
\begin{align}\label{eq:DeutschHH}
\ket{0}\ket{1} \xrightarrow{ H\otimes H} H\ket{0}H\ket{1}= \frac{1}{\sqrt{2}}(\ket{0}+\ket{1}) \frac{1}{\sqrt{2}}(\ket{0}-\ket{1}).
\end{align}
Next, note that the controlled unitary acts on $\ket{x}\frac{1}{\sqrt{2}}(\ket{0}-\ket{1})$ as 
\begin{align}
\begin{split}
C(U_f)\ket{x} \frac{1}{\sqrt{2}}\left(\ket{0}-\ket{1}\right)&\overset{\eqref{eq:Deutschf}}{=} \ket{x} \frac{1}{\sqrt{2}}\left(\ket{0\oplus f(x)}-\ket{1\oplus f(x)}\right)\\
&= \begin{cases}
\phantom{-}\ket{x} \frac{1}{\sqrt{2}}\left(\ket{0}-\ket{1}\right) \quad \text{if }\ f(x)=0\\
-\ket{x} \frac{1}{\sqrt{2}}\left(\ket{0}-\ket{1}\right) \quad \text{if }\ f(x)=1
\end{cases}\\
&=(-1)^{f(x)} \ket{x} \frac{1}{\sqrt{2}}\left(\ket{0}-\ket{1}\right).
\end{split}
\end{align}
Using this, Eq.~\eqref{eq:DeutschHH} transforms as
\begin{align}
\frac{1}{\sqrt{2}}\left(\ket{0}+\ket{1}\right)\frac{1}{\sqrt{2}}\left(\ket{0}-\ket{1}\right) \xrightarrow{C(U_f)} \frac{1}{\sqrt{2}}\left((-1)^{f(0)}\ket{0}+(-1)^{f(1)}\ket{1}\right)\frac{1}{\sqrt{2}}\left(\ket{0}-\ket{1}\right).
\end{align}
Now we can forget about the second qubit, and calculate the action of the final Hadamard on the first qubit alone, 
\begin{align}
\begin{split}
H\frac{1}{\sqrt{2}}\left((-1)^{f(0)}\ket{0}+(-1)^{f(1)}\ket{1}\right) &=\frac{1}{2}\left[ (-1)^{f(0)} \left( \ket{0}+\ket{1} \right)  + (-1)^{f(1)} \left( \ket{0}-\ket{1} \right) \right]\\
&=\frac{1}{2}\left[ \left( (-1)^{f(0)}+(-1)^{f(1)} \right) \ket{0} + \left( (-1)^{f(0)}-(-1)^{f(1)} \right) \ket{1} \right]\\
=&\begin{cases}
\ket{0} \quad \text{if }\ f(0)=f(1)\\
\ket{1} \quad \text{if }\ f(0)\neq f(1).
\end{cases}
\end{split}
\end{align}
Hence, we only used a single call of the oracle, and, by measuring the first qubit, we find with certainty whether $f$ is constant or balanced. 

This computational advantage is due to interference: The first Hadamard gates put the state into a superposition, such that $f(x)$ is evaluated in parallel for $x=0$ and $x=1$, and the recombination by the second Hadamard gate on the first qubit then leads to an interference of the alternatives associated with $f(0)$ and $f(1)$.

\subsection{Simon's algorithm}\label{sec:Simon}
Simon's algorithm provides an example for a raltivized separation between quantum and classical complexity, i.e., that quantum computers can solve specific problems exponentially faster than classical computers. 
Consider a binary function $f:\{0,1\}^n \rightarrow \{0,1\}^n$, which is evaluated by an oracle and promised to satisfy $f(x)=f(x\oplus s)$ for all $x\in\{0,1\}^n$ and an unknown bit string $s\in\{0,1\}^n$, with $s\neq 0^n$ and $f(x)\neq f(y)$ for $y\neq x\oplus s$. Here $\oplus$ denotes the bitwise addition modulo two. For example, if $x\oplus s=y$, then 
\begin{align}
y_j=(x_j+s_j)\mod 2
\end{align} 
for all $j\in\{0,\dots,n-1\}$. The problem is to find $s$. 
\begin{example}\ \\
\begin{center}
\begin{tabular}{c|c|c|c|c|c|c|c|c}
$x$&000&001&010&011&100&101&110&111\\ \hline
$f(x)$&101&010&000&110&000&110&101&010
\end{tabular}
\end{center}
Since $f(x)=f(x\oplus s)$, and $f(000)=f(110)$, we find that $s=110$. 
\end{example}

\subsubsection*{Classical input}
First note that by $s\neq 0$ and $x\oplus s \oplus s=x$, each outcome appears exactly twice. Hence, we must find two times the same output to deduce $s$. The maximum number of oracle calls after which we can find two equal outcomes is 
\begin{align}
2^n/2+1=2^{n-1}+1.
\end{align}
That is, we need an exponential number of oracle calls to find $s$. To see that this exponential scaling holds in general (not only for the maximum number of calls), suppose that we call the oracle $2^{n/4}$ times, such that there are less than $(2^{n/4})^2$ possible pairs of outcomes $(f(x),f(y))$. For each pair, the probability that $f(x)=f(y)$, i.e., $x\oplus y=s$, is $2^{-n}$. Hence, the probability of successfully finding $s$ is less than
\begin{align}
2^{-n}(2^{n/4})^2=2^{-n/2}.
\end{align}
That is, with an exponential number of calls, the probability to find $s$ is still exponentially small. Indeed, the best known algorithm requires at least $\mathcal{O}(2^{n/2})$ oracle calls. 

\subsubsection*{Quantum input}
Using a quantum algorithm, we now show that we can find $s$ by $\mathcal{O}(n)$ oracle calls. This is an exponential speed up compared to classical computation. Let's consider the quantum circuit
\begin{align}
\begin{tikzcd}
 \lstick{$\ket{0}$}  & \gate{H^{\otimes n}} \qwbundle{} & \gate[wires=2][2cm]{U_f} \gateinput{$x$} \gateoutput{$x$} & \qw & \gate{H^{\otimes n}}  & \meter{} \\
\lstick{$\ket{0}$} &  \qw \qwbundle{} & \qw \gateinput{$y$} \gateoutput{$y\oplus f(x)$} & \meter{} & &
\end{tikzcd}
\end{align}
where $U_f$ performs the function evaluation (quantum black box)
\begin{align}
U_f(\ket{x}\ket{0})=\ket{x}\ket{f(x)}.
\end{align}
After the first Hadamards, the state is
\begin{align}
\ket{0}\ket{0} \xrightarrow{H^{\otimes n}\otimes \unit} H^{\otimes n}\ket{0} \ket{0} = \frac{1}{\sqrt{2^n}} \sum_{x=0}^{2^n-1} \ket{x} \ket{0},
\end{align}
and after the oracle call (i.e., after applying $U_f$), we have
\begin{align}
\frac{1}{\sqrt{2^n}} \sum_{x=0}^{2^n-1} \ket{x} \ket{0} \xrightarrow{U_f} \frac{1}{\sqrt{2^n}} \sum_{x=0}^{2^n-1} \ket{x} \ket{f(x)}.
\end{align}
Now suppose that the measurement of the second register yields $f(w)=f(w \oplus s)$. Then the first register is in the state 
\begin{align}
\frac{1}{\sqrt{2}}\left(\ket{w} + \ket{w\oplus s} \right).
\end{align}
Performing again Hadamard gates in the first register then leads to 
\begin{align}\label{eq:SimonH}
\frac{1}{\sqrt{2}}\left(\ket{w} + \ket{w\oplus s} \right) \xrightarrow{H^{\otimes n}} \frac{1}{\sqrt{2}}\left(H^{\otimes n}\ket{w} + H^{\otimes n}\ket{w\oplus s} \right) .
\end{align}
Next, note that
\begin{align}
\begin{split}
H^{\otimes n}\ket{w}&=H\ket{w_{n-1}} \otimes \dots \otimes H\ket{w_0} \\
&=\bigotimes_{j=1}^n H\ket{w_{n-j}}\\
&=\bigotimes_{j=1}^n \frac{1}{\sqrt{2}} \left( \ket{0} + (-1)^{w_{n-j}} \ket{1} \right)\\
&= \frac{1}{\sqrt{2^n}} \bigotimes_{j=1}^n \sum_{x_{n-j}=0}^1 (-1)^{x_{n-j}w_{n-j}} \ket{x_{n-j}}\\
&=\frac{1}{\sqrt{2^n}} \sum_{x_{n-1},\dots,x_0=0}^1  (-1)^{x_{n-1}w_{n-1}+\dots+x_0w_0} \ket{x_{n-j}\dots x_0}\\
&=\frac{1}{\sqrt{2^n}} \sum_{x=0}^{2^n-1} (-1)^{x\cdot w} \ket{x},
\end{split}
\end{align}
where
\begin{align}
x\cdot w= x_{n-1}w_{n-1}+x_{n-2}w_{n-2}+\dots + x_0w_0.
\end{align}
Using this in Eq.~\eqref{eq:SimonH}, the output of the first register becomes
\begin{align}
\begin{split}
\frac{1}{\sqrt{2}}\left(H^{\otimes n}\ket{w} + H^{\otimes n}\ket{w\oplus s} \right) &=\frac{1}{\sqrt{2}}\left(\frac{1}{\sqrt{2^n}} \sum_{x=0}^{2^n-1} (-1)^{x\cdot w} \ket{x}+ \frac{1}{\sqrt{2^n}} \sum_{x=0}^{2^n-1} (-1)^{x\cdot (w\oplus s)} \ket{x} \right) \\
&=\frac{1}{\sqrt{2^{n+1}}} \sum_{x=0}^{2^n-1} \left[  (-1)^{x\cdot w} +  (-1)^{x\cdot (w\oplus s)}\right] \ket{x}.
\end{split}
\end{align}
If the measurement of the first register yields outcome $z$, we must have
\begin{align}
\begin{split}
(-1)^{z\cdot w} &= (-1)^{z\cdot (w\oplus s)}\\
\Leftrightarrow  (z\cdot w)\mod 2 &= \left( z \cdot (w \oplus s)\right)\mod 2\\
\Leftrightarrow (z\cdot w)\mod 2 &= \left( z \cdot w \oplus z \cdot s \right)\mod 2\\
\Leftrightarrow  0 &= \left(  z \cdot s \right)\mod 2.
\end{split}
\end{align}
That is, by calling the oracle one time, we find $z\in\{0,1\}^n$ such that $z\cdot s=0 \mod 2$. 

Now suppose we call the oracle more often, such that the $j$-th call returns $z_j$. Given $z_1\neq 0^n$, by $z_1\cdot s=0 \mod 2$, we can rule out half of all possibilities for $s\in\{0,\dots,2^n-1\}$. Hence, we are left with $2^n/2$ numbers for $s$. By measuring $z_2\neq z_1$ (again $z_2 \neq 0^n$), we can further rule out half of the remaining numbers for $s$, such that we are left with $2^n/2^2$ numbers. Proceeding similar, by measuring $(n-1)$ linearly independent $z_j\neq 0^n$, we are left with $2^n/2^{n-1}=2$ different possibilities for $s$. One of them is $s=0^n$. However, since we know that $s\neq 0^n$, we found $s$. 

\begin{example}
Let us return to the example of $n=3$ qubits as discussed in the beginning of this section. Suppose that we first measure $z_1=001\neq 0^n$. In this case, we find
\begin{align*}
(z_1 \cdot 000) \mod 2&=0 \hspace{2cm} (z_1 \cdot 100) \mod 2=0 \\
(z_1 \cdot 001) \mod 2&=1 \hspace{2cm} (z_1 \cdot 101) \mod 2=1 \\
(z_1 \cdot 010) \mod 2&=0  \hspace{2cm} (z_1 \cdot 110) \mod 2=0 \\
(z_1 \cdot 011) \mod 2&=1  \hspace{2cm} (z_1 \cdot 111) \mod 2=1 ,
\end{align*}
such that $s\in\{000,010,100,110\}$, i.e., we are left with $2^n/2=4$ possibilities for $s$. Next, suppose we measure $z_2=110 \neq 0^n$, for which we find
\begin{align*}
(z_2 \cdot 000) \mod 2&=0  \hspace{2cm} (z_2 \cdot 100) \mod 2=1 \\
(z_2 \cdot 010) \mod 2&=1 \hspace{2cm} (z_2 \cdot 110) \mod 2=0.
\end{align*}
The possibilities for $s$ then reduce to $2^n/2^2=2$, namely $s\in\{000,110\}$. However, since we know that $s\neq 0^n$, we found $s=110$. 
\end{example}

In summary, we must call the oracle as long as we found $n-1$ linearly independent outcomes $z\neq 0^n$. This can be accomplished with a probability exponential close to $1$ by $\mathcal{O}(n)$ oracle calls. Hence, given the above quantum oracle, we can solve Simon's problem with a polynomial number of oracle calls by exploiting quantum superpositions, while an exponential number of oracle calls is required in the classical case.

\subsection{Phase estimation}\label{sec:PhaseEstimation}
Phase estimation is the key for many quantum algorithms, such as solving the order finding problem or the factoring problem (via Shor's algorithm). The problem of phase estimation is as follows: Suppose a unitary $U$ has an eigenvector $\ket{u}$ with eigenvalue $e^{\im 2\pi \varphi}$, where $\varphi \in [0,1)$ is unknown. The task is to find $\varphi$. 

Suppose we have oracles (i.e., black boxes) capable of preparing $\ket{u}$ and performing the controlled $U^j$ operation for non-negative integers $j$, and consider the circuit
\begin{align}\label{cir:Phase1}
\scalemath{0.9}{
\begin{tikzcd}
\lstick[wires=5]{$n$ qubits} & & \lstick{$\ket{0}$}& \gate{H} & \qw & \qw &  \qw & \ \ldots\ \qw & \ctrl{5} &\qw\\
&&\wave & & & & & & & &\\
& & \lstick{$\ket{0}$}&\gate{H} & \qw & \qw & \ctrl{3} & \ \ldots\ \qw &\qw &\qw\\
& & \lstick{$\ket{0}$}&\gate{H} & \qw & \ctrl{2} &  \qw & \ \ldots\ \qw &\qw &\qw\\
& & \lstick{$\ket{0}$}&\gate{H} & \ctrl{1} &\qw &  \qw & \ \ldots\ \qw &\qw &\qw\\
& & \lstick{$\ket{u}$}& \qwbundle{} &\gate{U^{2^0}} & \gate{U^{2^1}} & \gate{U^{2^2}} &\ \ldots\ \qw &\gate{U^{2^{n-1}}} & \qw
\end{tikzcd}}
=
\begin{tikzcd}
\lstick{$\ket{0}$}&\qw \qwbundle{n} &\gate{H^{\otimes n}} & \ctrl{1} & \qw \\
 \lstick{$\ket{u}$}&\qw \qwbundle{} &\qw & \gate{U^j} & \qw
\end{tikzcd}.
\end{align}
The Hadamards act as
\begin{align}
\ket{0}\ket{u} \xrightarrow{H^{\otimes n}\otimes \unit} H^{\otimes n}\ket{0}\ket{u} = \frac{1}{\sqrt{2^n}} \sum_{k=0}^{2^n-1} \ket{k}\ket{u},
\end{align}
and the controlled $U^{2^j}$ gates as
\begin{align}\label{eq:Phasestep1}
\begin{split}
\frac{1}{\sqrt{2^n}} \sum_{k=0}^{2^n-1} \ket{k}\ket{u} \xrightarrow{C(U^{2^j})} &\frac{1}{\sqrt{2^n}} \sum_{k=0}^{2^n-1} \ket{k}\left[ U^{2^{n-1}}\right]^{k_{n-1}} \dots \left[ U^{2^1}\right]^{k_1} \left[ U^{2^0}\right]^{k_0} \ket{u}\\
=&\frac{1}{\sqrt{2^n}} \sum_{k=0}^{2^n-1} \ket{k} e^{\im 2\pi \varphi 2^{n-1}k_{n-1}}\dots e^{\im 2\pi \varphi 2^1k_1} e^{\im 2\pi \varphi 2^0k_0}\ket{u}\\
=&\frac{1}{\sqrt{2^n}} \sum_{k=0}^{2^n-1} \ket{k} e^{\im 2\pi \varphi \sum_{j=1}^n 2^{n-j}k_{n-j}} \ket{u}\\
=&\frac{1}{\sqrt{2^n}} \sum_{k=0}^{2^n-1}  e^{\im 2\pi \varphi k} \ket{k}\ket{u}.
\end{split}
\end{align}
Next we consider an inverse Fourier transform $\mathcal{F}^\dagger$ acting on the first register, followed by a measurement. Hence, the circuit~\eqref{cir:Phase1} extends to
\begin{align}
\begin{tikzcd}
\lstick{$\ket{0}$}&\qw \qwbundle{n} &\gate{H^{\otimes n}} & \ctrl{1} & \gate{\mathcal{F}^\dagger} &\meter{} \\
 \lstick{$\ket{u}$}&\qw \qwbundle{} &\qw & \gate{U^j} & \qw &\qw
\end{tikzcd}.
\end{align}
To calculate the outcome, let us drop the second register in~\eqref{eq:Phasestep1} and act with $\mathcal{F}^\dagger$ from Eq.~\eqref{eq:Fdagger},
\begin{align}
\begin{split}
\mathcal{F}^\dagger\frac{1}{\sqrt{2^n}} \sum_{k=0}^{2^n-1}  e^{\im 2\pi \varphi k} \ket{k}&=\frac{1}{\sqrt{2^n}} \sum_{k=0}^{2^n-1}  e^{\im 2\pi \varphi k} \mathcal{F}^\dagger\ket{k}\\
&\overset{\eqref{eq:Fdagger}}{=} \frac{1}{2^n} \sum_{j,k=0}^{2^n-1} e^{\im 2\pi  k(\varphi-j/2^n)} \ket{j}.
\end{split}
\end{align}
Finally, the outcome $\ket{2^n \tilde{\varphi}}$, where $\tilde{\varphi}=0.\tilde{\varphi}_{n-1}\dots \tilde{\varphi}_0$ and $2^n \tilde{\varphi}=\tilde{\varphi}_{n-1}\dots \tilde{\varphi}_0$, appears with probability
\begin{align}\label{eq:PhaseEstDer1}
\begin{split}
p(2^n \tilde{\varphi})&=\abs{ \frac{1}{2^n} \sum_{j,k=0}^{2^n-1} e^{\im 2\pi  k(\varphi-j/2^n)} \bracket{2^n\tilde{\varphi}}{j} }^2\\
&=\abs{ \frac{1}{2^n} \sum_{k=0}^{2^n-1} e^{\im 2\pi  k(\varphi-\tilde{\varphi})} }^2\\
&=\abs{ \frac{1}{2^n} \sum_{k=0}^{2^n-1} \left[ e^{\im 2\pi  (\varphi-\tilde{\varphi})} \right]^k}^2.
\end{split}
\end{align}
Using the geometric series $\sum_{k=0}^n x^k=(1-x^{n+1})/(1-x)$ yields
\begin{align}\label{eq:PhaseEstDer2}
\begin{split}
p(2^n \tilde{\varphi})&=\abs{ \frac{1}{2^n} \ \frac{1-e^{\im 2\pi 2^n (\varphi-\tilde{\varphi})} }{ 1-e^{\im 2\pi  (\varphi-\tilde{\varphi})}   }}^2\\
&=\abs{ \frac{1}{2^n} \ \frac{e^{\im \pi 2^n (\varphi-\tilde{\varphi})} \left( e^{-\im \pi 2^n (\varphi-\tilde{\varphi})} -e^{\im \pi 2^n (\varphi-\tilde{\varphi})} \right)}{ e^{\im \pi  (\varphi-\tilde{\varphi})} \left(  e^{-\im \pi  (\varphi-\tilde{\varphi})}- e^{\im \pi  (\varphi-\tilde{\varphi})}  \right) }}^2\\
&=\abs{  \frac{\sin\left( 2^n \pi  (\varphi-\tilde{\varphi}) \right)}{2^n \sin\left(\pi(\varphi-\tilde{\varphi})\right)}  }^2.
\end{split}
\end{align}
First note that if $\varphi$ can exactly be expressed via $n$ bits, then the second line in~\eqref{eq:PhaseEstDer1} implies for $\tilde{\varphi}=\varphi$ that $p(2^n \varphi)=\left|\frac{1}{2^n} \sum_{k=0}^{2^n-1} \right|^2=1$. That is, we find $\varphi$ with certainty. On the other hand, if $\varphi$ cannot be expressed via $n$ bits, we can only find an $n$-bit approximation of $\varphi$. To see that we find with high probability an $n$-bit approximation $\tilde{\varphi}$ which is close to the optimal $n$-bit approximation $\tilde{\varphi}_\text{opt}$, note that $|\varphi-\tilde{\varphi}|<1$, such that $|\sin(\pi(\varphi-\tilde{\varphi}))| \leq |\pi (\varphi-\tilde{\varphi})|$, and Eq.~\eqref{eq:PhaseEstDer2} becomes 
\begin{align}\label{eq:PhaseEstDer3}
p(2^n \tilde{\varphi})&\geq \abs{ \frac{\sin\left( 2^n \pi (\varphi-\tilde{\varphi}) \right)}{2^n \pi (\varphi-\tilde{\varphi})}}^2= \text{sinc}^2\left( 2^n \pi(\varphi-\tilde{\varphi})\right).
\end{align}
Accordingly, $\tilde{\varphi}$ close to $\tilde{\varphi}_\text{opt}$ appears with high probability since we know that $\lim_{x\rightarrow 0}\text{sinc}(x)=1$. For example, for $\tilde{\varphi}=\tilde{\varphi}_\text{opt}$, we have $\abs{\varphi-\tilde{\varphi}_\text{opt}} \leq 1/2^{n+1}$ and, accordingly, 
\begin{align}
\Big|\sin( \pi \underbrace{2^n(\varphi-\tilde{\varphi}_\text{opt})}_{\in [-1/2,1/2]} ) \Big| \geq \abs{2^{n+1} (\varphi-\tilde{\varphi}_\text{opt})},
\end{align}
where we used $\abs{\sin(x)} \geq 2 \abs{x}/\pi$ for $x \in [-\pi/2,\pi/2]$. Using this in Eq.~\eqref{eq:PhaseEstDer3} then yields
\begin{align}
p(2^n \tilde{\varphi}_\text{opt})&\geq \abs{\frac{2^{n+1}(\varphi-\tilde{\varphi}_\text{opt})}{2^n \pi (\varphi-\tilde{\varphi}_\text{opt})}}^2=\frac{4}{\pi^2} \approx 0.405.
\end{align}
The connection of phase estimation to order and period finding will possibly be discussed in the Exercises.

\subsection{Grover's search algorithm}\label{sec:Grover}
Suppose we wish to search through an unstructured data base of $N=2^n$ elements which are labeled by $0,1,\dots,2^n-1$ (e.g. find the name associated with a given phone number in a phone book). Once you're suggested a solution (a name), it's easy to check its validity (numbers are sorted by names), but it's difficult to find the solution to start with. Classically, this takes $\mathcal{O}(N)$ operations (validity checks) to solve the problem. Using quantum mechanics L. K. Grover showed in 1996 that only $\mathcal{O}(\sqrt{N})$ operations are needed. This is a quadratic speedup. 

Consider a register of $n$ qubits to represent the data base with $N=2^n$ labels, and a single oracle qubit. The oracle verifies whether a given label is an element of the set of solutions (note that there can be more than a single solution). Hence, the speed of the algorithm is determined by the number of oracle calls. For $\ket{x}$ a state vector of the data base and $\ket{y}$ a state vector of the oracle, the action of the oracle is defined as
\begin{align}
\ket{x}\ket{y} \xrightarrow{\text{oracle}} \ket{x}\ket{y\oplus f(x)}
\end{align}
with 
\begin{align}
f(x)=\begin{cases}
0\quad \text{if $x$ is no solution} \\
1\quad \text{if $x$ is a solution.}
\end{cases}
\end{align}
For $\ket{y}=(\ket{0}-\ket{1})/\sqrt{2}$ this implies the mapping
\begin{align}
\ket{x}\ket{y} \xrightarrow{\text{oracle}} (-1)^{f(x)}\ket{x}\ket{y},
\end{align}
such that $\ket{y}$ can be disregarded [c.f. Deutsch's algorithm in~\eqref{cir:Deutsch}], and we can compactly write 
\begin{align}\label{eq:GroverO}
O: \ket{x} \rightarrow (-1)^{f(x)} \ket{x}.
\end{align}
The Grover algorithm then has the representation 
\begin{align}\label{cir:Grover}
\begin{tikzcd}
\lstick[wires=4]{data  base\\ ($n$ qubits)} & & \lstick{$\ket{0}$}& \qw& \gate[4,nwires={3}]{H^{\otimes n}}& \gate[5,nwires={3}]{G}& \gate[5,nwires={3}]{G}& \ \ldots\ \qw &\gate[5,nwires={3}]{G}&\qw  \\
& & \lstick{$\ket{0}$}&\qw& & & & \ \ldots\ \qw & &\qw  \\
& &\lstick{\vdots}  & \text{$\vdots$} & & & & \text{$\vdots$} & &  \\
& & \lstick{$\ket{0}$}&\qw& & & & \ \ldots\ \qw & &\qw  \\
\lstick[wires=1]{oracle} & & \lstick{$\ket{0}$}&\gate{X}&\gate{H} & & & \ \ldots\ \qw & &\qw 
\end{tikzcd}
\end{align}
with the Grover iteration $G$ given by
\begin{align}\label{cir:GroverIteration}
\begin{tikzcd}
\lstick[wires=4]{data  base\\ ($n$ qubits)} &\qw&\gate[5,nwires={3}]{G}&\qw& & &\qw& \gate[5,nwires={3}]{O} & \gate[4,nwires={3}]{H^{\otimes n}} &\gate[4,nwires={3}]{2\ketbra{0}{0}-\unit} &\gate[4,nwires={3}]{H^{\otimes n}} &\qw \\
 \lstick{ }& \qw&&\qw & & & \qw&&&&&\qw \\
 \lstick{ }& \text{$\vdots$}&&  & =& &\text{$\vdots$}&&&&& \\
 \lstick{ }& \qw&&\qw &&& \qw&&&&&\qw \\ 
\lstick[wires=1]{oracle} & \qw&& \qw& && \qw&&\qw& \qw& \qw&\qw
\end{tikzcd}.
\end{align}
Let us now describe how this algorithm works: Before the first Grover iteration, the state of the data base register is
\begin{align}\label{eq:GroverPsi}
\ket{\psi}=H^{\otimes n} \ket{0}=\left[\frac{1}{\sqrt{2}} (\ket{0}+\ket{1})\right]^{\otimes n}= \frac{1}{\sqrt{2^n}}\sum_{x=0}^{2^n-1} \ket{x}.
\end{align}
The register is thus in a balanced superposition of all states (labels of the data set), from which the subsequent Grover iterations $G$ have to distill the solutions. The Grover iteration $G$, decomposed in~\eqref{cir:GroverIteration}, starts with an oracle operation $O$ according to Eq.~\eqref{eq:GroverO}, which marks the solutions with a minus sign. The subsequent operation $H^{\otimes n}\left( 2 \ketbra{0}{0}-\unit\right)H^{\otimes n}$ only acts on the data set qubits, where $2 \ketbra{0}{0}-\unit$ garnishes all computational basis states except $\ket{0}$ with a minus sign. Together we have
\begin{align}\label{eq:GroverH00H}
\begin{split}
H^{\otimes n}\left( 2 \ketbra{0}{0}-\unit\right)H^{\otimes n}&=2H^{\otimes n}\ketbra{0}{0}H^{\otimes n} - H^{\otimes n}H^{\otimes n}\\
&\overset{\eqref{eq:GroverPsi}}{=}2\ketbra{\psi}{\psi}-\unit,
\end{split}
\end{align}
such that the Grover iteration~\eqref{cir:GroverIteration} performs the operation
\begin{align}\label{eq:GroverG}
G=\left( 2\ketbra{\psi}{\psi}-\unit\right) O.
\end{align}
In order to visualize the action of the Grover algorithm, first consider the action of~\eqref{eq:GroverH00H} on a general state $\ket{\phi}$,
\begin{align}
\left( 2\ketbra{\psi}{\psi}-\unit\right) \ket{\phi} = 2 \bracket{\psi}{\phi} \ket{\psi}- \ket{\phi},
\end{align}
which corresponds to a reflection of $\ket{\phi}$ with respect to $\ket{\psi}$:
\begin{center}
\includegraphics[width=0.7\linewidth]{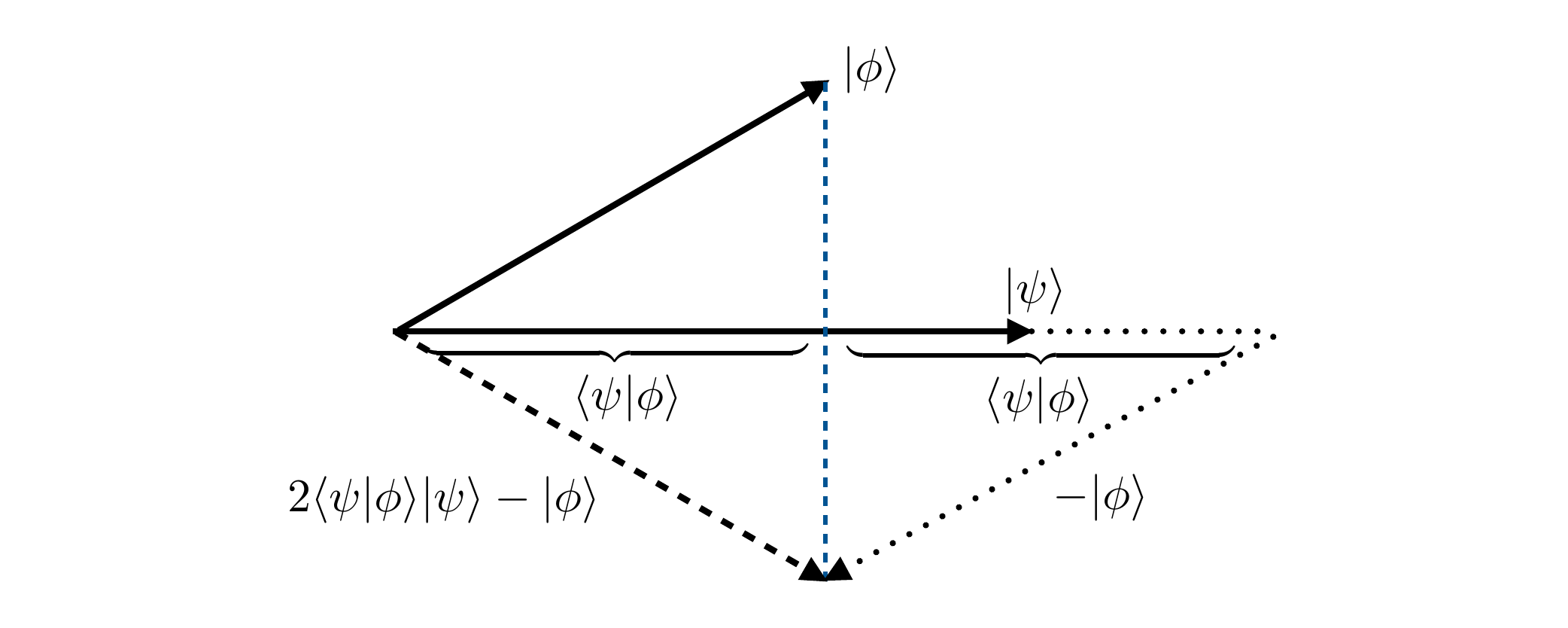}
\end{center}
Likewise also the oracle operation $O$ from Eq.~\eqref{eq:GroverO} can be understood as a reflection. To this end, decompose the Hilbert space into the subspace $S$, spanned by $M$ states $\{\ket{x}\}_{x\in S}$, which represent a solution, and the subspace $\bar{S}$, spanned by $N-M$ states $\{\ket{x}\}_{x\in \bar{S}}$, which are no solution. Then define the states $\ket{\alpha}$ and $\beta$ on these subspaces, 
\begin{align}\label{eq:Groverab}
\begin{split}
\ket{\alpha}&=\frac{1}{\sqrt{N-M}} \sum_{x \in \bar{S}} \ket{x},\\
\ket{\beta}&=\frac{1}{\sqrt{M}} \sum_{x \in S} \ket{x},
\end{split}
\end{align}
such that, by Eq.~\eqref{eq:GroverO},
\begin{align}
O(\alpha \ket{\alpha} + \beta \ket{\beta}) = \alpha \ket{\alpha} - \beta \ket{\beta}.
\end{align}
Geometrically, this is a reflection of $\alpha \ket{\alpha} + \beta \ket{\beta}$ with respect to $\ket{\alpha}$:
\begin{center}
\includegraphics[width=0.6\linewidth]{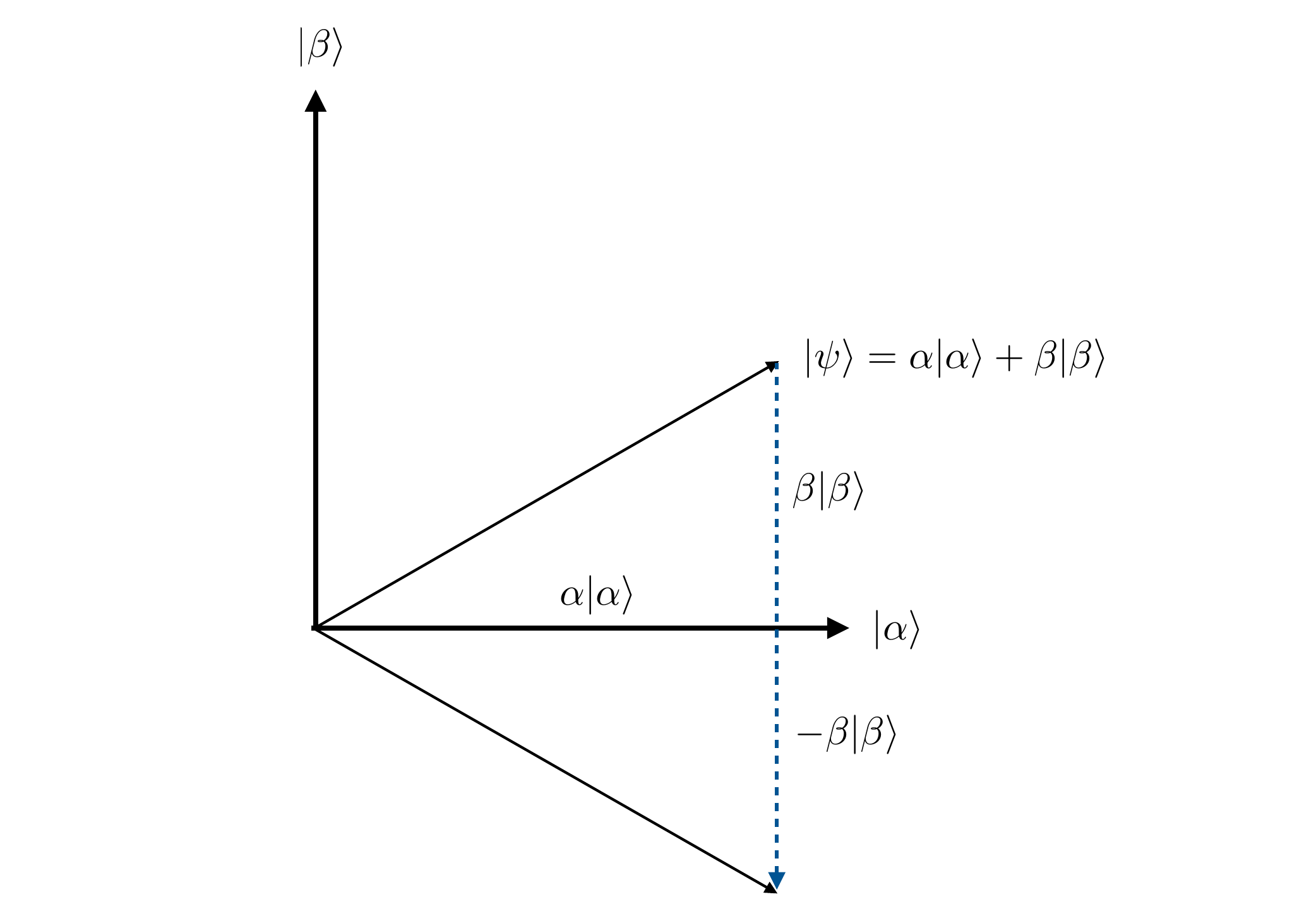}
\end{center}
Altogether, the two reflections $O$ and $2\ketbra{\psi}{\psi}-\unit$ acting on the state 
\begin{align}\label{eq:GroverPsiab}
\ket{\psi}\overset{\eqref{eq:GroverPsi}}{=}\frac{1}{\sqrt{2^n}}\sum_{x=0}^{2^n-1} \ket{x}\overset{\eqref{eq:Groverab}}{=} \sqrt{\frac{N-M}{N}} \ket{\alpha} + \sqrt{\frac{M}{N}} \ket{\beta} 
\end{align}
before the first Grover iteration [see Eq.~\eqref{eq:GroverPsi}] can geometrically be visualized as follows:
\begin{center}
\includegraphics[width=0.6\linewidth]{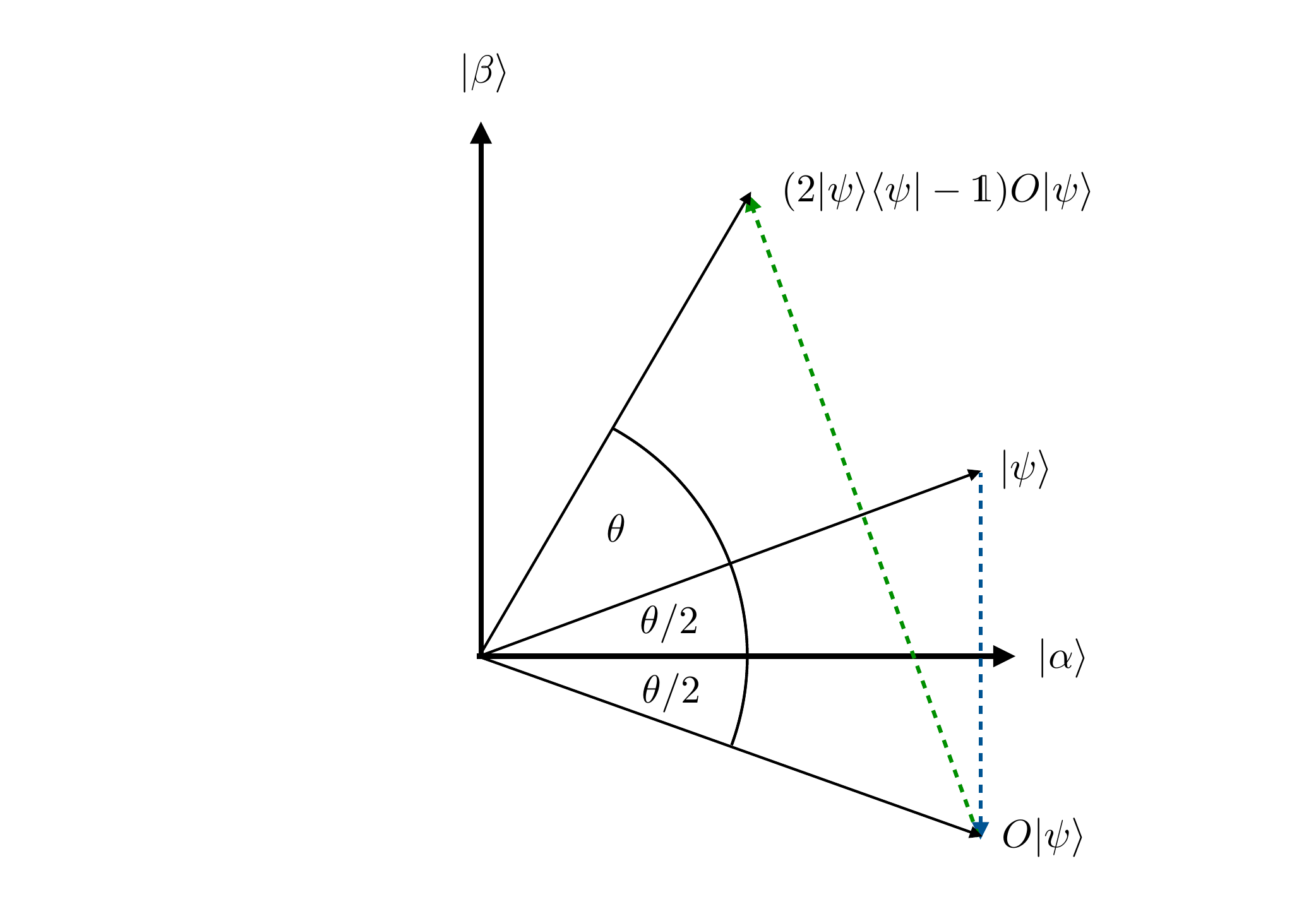}
\end{center}
By Eq.~\eqref{eq:GroverPsiab}, we have the angle
\begin{align}\label{eq:GroverSinCos}
\cos\left( \frac{\theta}{2}\right) =\sqrt{\frac{N-M}{N}}, \quad \sin\left( \frac{\theta}{2}\right) =\sqrt{\frac{M}{N}},
\end{align}
and, consequently, by~\eqref{eq:GroverG},
\begin{align}\label{eq:GroverGPsi}
G \ket{\psi} = \cos\left( \frac{3 \theta}{2}\right) \ket{\alpha} + \sin\left( \frac{3 \theta}{2}\right) \ket{\beta}.
\end{align}
From the graphical representation of $G$, we see that the action of $G$ merely happens in the plane spanned by $\ket{\alpha}$ and $\ket{\beta}$. In particular, by comparing~\eqref{eq:GroverPsiab} and~\eqref{eq:GroverGPsi}, it can be seen as a rotation by the angle 
\begin{align}\label{eq:Grovertheta2arsin}
\theta = 2 \arcsin\sqrt{\frac{M}{N}}.
\end{align}

Now, coming back to the quantum circuit~\eqref{cir:Grover}, we apply $G$ multiple times. After $k$ Grover iterations, the state reads
\begin{align}
G^k\ket{\psi}=\cos\left( \frac{(2k+1) \theta}{2}\right) \ket{\alpha} + \sin\left( \frac{(2k+1) \theta}{2}\right) \ket{\beta}.
\end{align}
For $k=4$, this can be illustrated as follows:
\begin{center}
\includegraphics[width=0.7\linewidth]{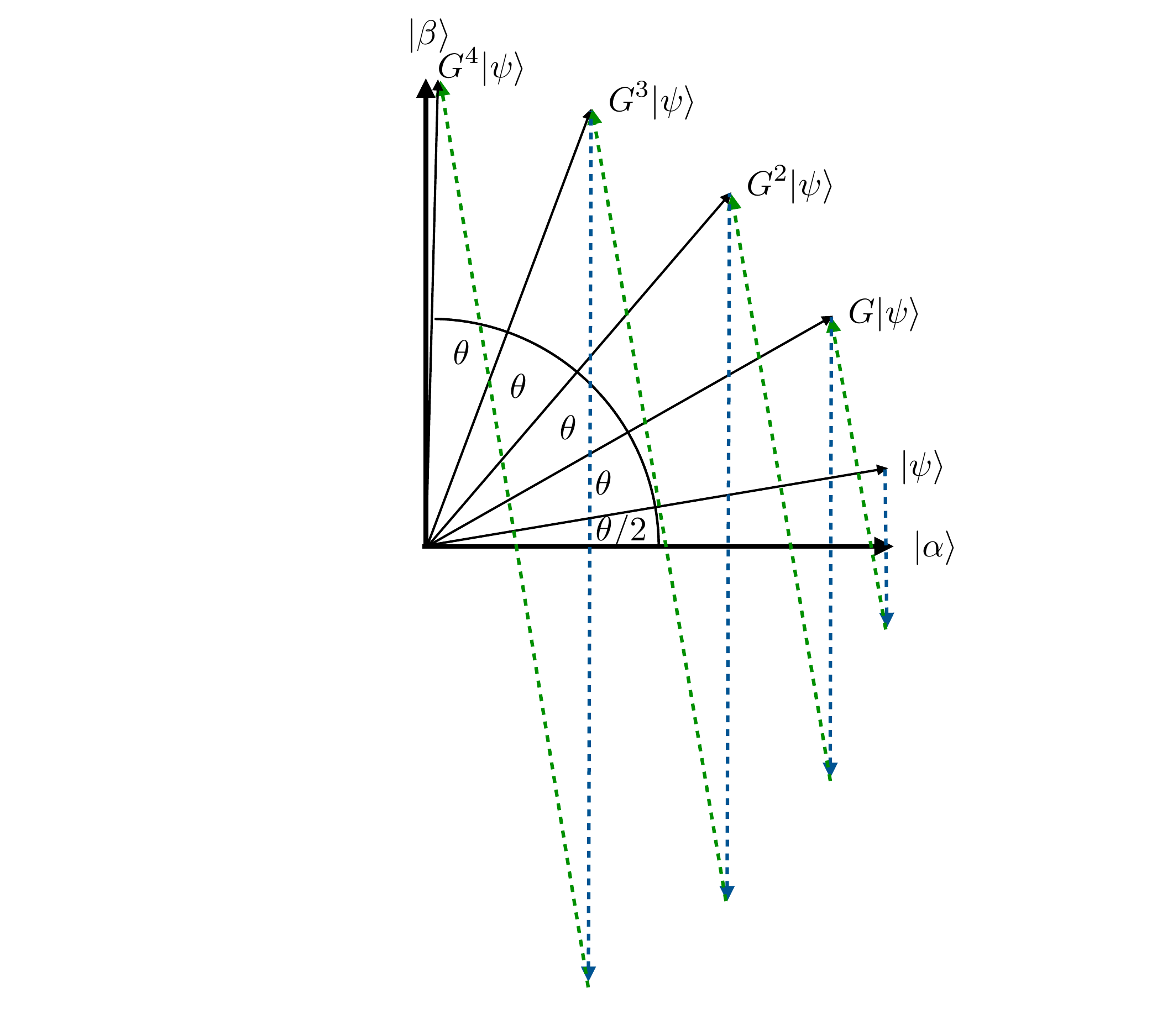}
\end{center}
The optimal solution to the search problem is then achieved for $k$ such that the amplitude of $\ket{\beta}$ (which is the amplitude of coherent superpositions of solutions) is as close as possible to one, i.e.,
\begin{align}\label{eq:GroverKupperBound}
\begin{split}
\frac{2k+1}{2} \theta &= \frac{\pi}{2} \\
\Leftrightarrow k \theta &= \frac{\pi}{2}-\frac{\theta}{2} \\
\Leftrightarrow k  &= \frac{\pi}{2\theta}-\frac{1}{2}.
\end{split}
\end{align}
This implies that the integer $k_0$ closest to $\frac{\pi}{2\theta}-\frac{1}{2}$ achieves the largest probability to find a solution by measurement of the output state, where
\begin{align}\label{eq:Groverk0}
k_0=\min\left\{ k\in \mathbb{N}\ : \ k+\frac{1}{2} \geq \frac{\pi}{ 4 \arcsin\sqrt{M/N}} -\frac{1}{2}\right\}.
\end{align}
\begin{center}
\includegraphics[width=0.6\linewidth]{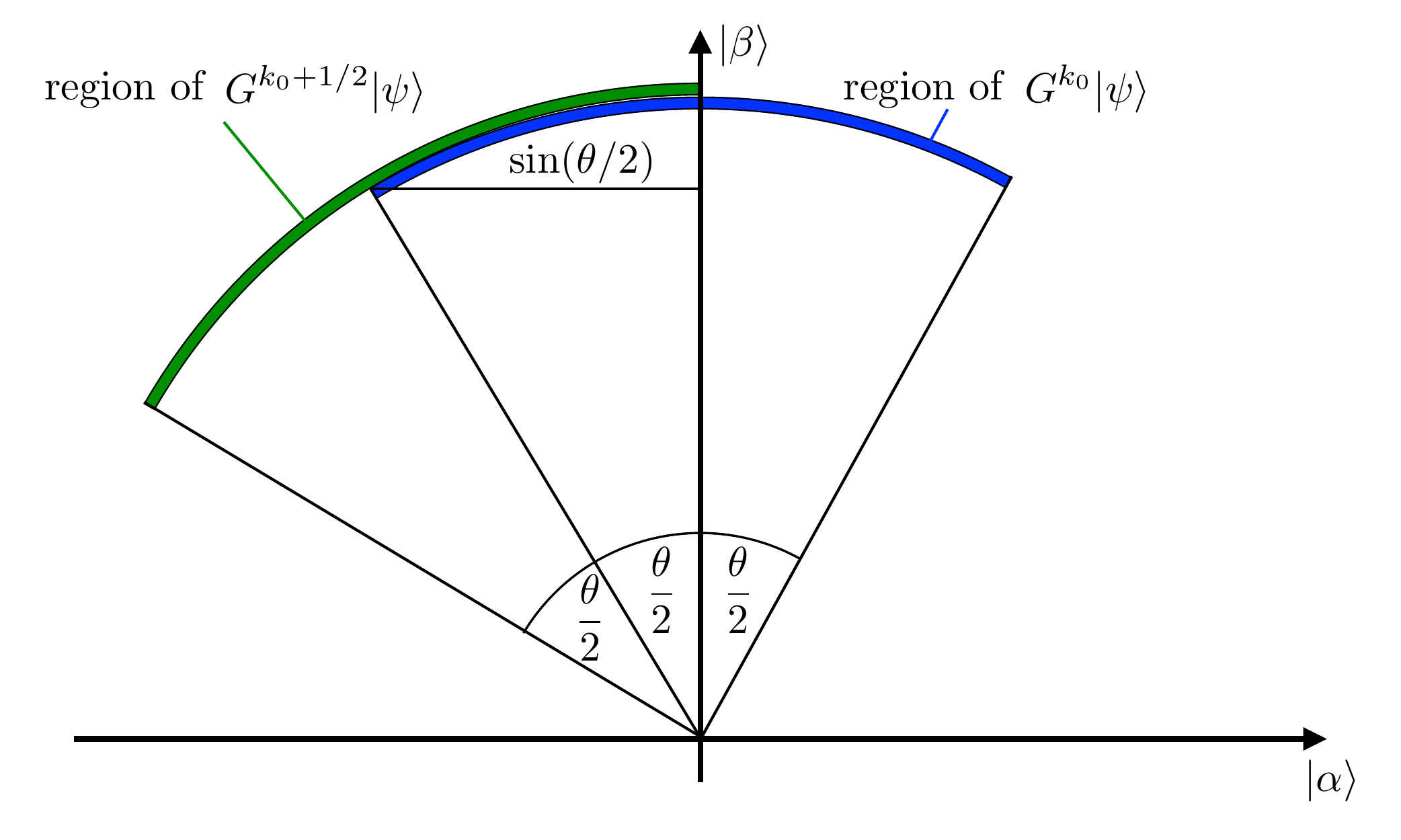}
\end{center}
After $k_0$ iterations, the state $G^{k_0}\ket{\psi}$ is not further away from $\ket{\beta}$ than $\theta/2$. Hence, the probability to detect a wrong solution on output (by measuring $G^{k_0}\ket{\psi}$ in the computational basis) is bounded by
\begin{align}
p_\text{error} \leq \sin^2 \frac{\theta}{2} \overset{\eqref{eq:GroverSinCos}}{=} \frac{M}{N}.
\end{align}
Furthermore, from Eq.~\eqref{eq:GroverKupperBound}, we find 
\begin{align}
\begin{split}
k  &\overset{\eqref{eq:Grovertheta2arsin}}{=}\frac{\pi}{4 \arcsin \sqrt{M/N}}-\frac{1}{2} \\
&\leq \frac{\pi}{4} \sqrt{\frac{N}{M}} ,
\end{split}
\end{align}
where we used $\arcsin x\geq x$ for $x\geq 0$. Hence, the Grover algorithm converges after $\mathcal{O}(\sqrt{N/M})$ oracle calls, which is a quadratic speedup compared to classical algorithms, which need $\mathcal{O}(N/M)$ oracle calls. 

Note that iterations beyond $k_0$ times will degrade the result. Hence, by virtue of Eq.~\eqref{eq:Groverk0}, one needs to know $M$ to properly determine $k_0$. There are modifications of the algorithm for which a priori knowledge of $M$ is not required [e.g. see Sec.~6.3 in \cite{Nielsen-QC-2011}].

\subsubsection*{Grover search generated by a Hamiltonian}
The Grover iteration $G$ represents a rotation in the two-dimensional subspace spanned by $\ket{\alpha}$ and $\ket{\beta}$. This suggests to consider the rotation to be generated by a suitable Hamiltonian in continuous time. Indeed, such rotation is generated by the Hamiltonian
\begin{align}\label{eq:GroverH}
H=\ketbra{\beta}{\beta}+\ketbra{\psi}{\psi},
\end{align}
with $\ket{\psi}$ from Eq.~\eqref{eq:GroverPsiab}, reading
\begin{align}
\ket{\psi}=\alpha\ket{\alpha} + \beta\ket{\beta},
\end{align}
where $\alpha=\sqrt{(N-M)/N}$ and $\beta=\sqrt{M/N}$, satisfying $\alpha^2+\beta^2=1$. In the representation $\ket{\alpha}=(1,0)^\top, \ket{\beta}=(0,1)^\top$, the Hamiltonian~\eqref{eq:GroverH} reads
\begin{align}
\begin{split}
H&=\begin{pmatrix}0&0\\0&1\end{pmatrix}+\begin{pmatrix}\alpha^2&\alpha\beta\\ \alpha\beta&\beta^2\end{pmatrix}\\
&= \begin{pmatrix}1-\beta^2&\alpha\beta\\ \alpha\beta&1+\beta^2\end{pmatrix}\\
&\overset{\eqref{eq:Pauli}}{=} \unit + \beta \left( \alpha X - \beta Z\right)\\
&= \unit + \beta \hat{n} \cdot \vec{\sigma},
\end{split}
\end{align}
where $\hat{n}=(\alpha,0,-\beta)$, and $\abs{\hat{n}} = \alpha^2+\beta^2=1$. This generates the unitary (see Sec.~\ref{sec:EvoOfStates}, and set $\hbar=1$)
\begin{align}
\begin{split}
U&=e^{-\im tH}\\
&=e^{-\im t(\unit + \beta \hat{n} \cdot \vec{\sigma})}\\
&=e^{-\im t} e^{-\im t \beta\hat{n} \cdot \vec{\sigma} }\\
&=e^{-\im t} \left[ \cos(-\beta t) \unit + \im \sin(-\beta t) \hat{n} \cdot \vec{\sigma} \right]\\
&=e^{-\im t} \left[ \cos(\beta t) \unit - \im \sin(\beta t) \hat{n} \cdot \vec{\sigma} \right].
\end{split}
\end{align}
The unitary acts on the initial state $\ket{\psi}$ as 
\begin{align}\label{eq:GroverUcont}
U\ket{\psi}=e^{-\im t} \left[ \cos(\beta t) \ket{\psi} - \im \sin(\beta t) \hat{n} \cdot \vec{\sigma} \ket{\psi}\right],
\end{align}
where
\begin{align}
\begin{split}
\hat{n} \cdot \vec{\sigma} \ket{\psi}&=(\alpha,0,-\beta)\cdot (X,Y,Z) \ket{\psi}\\
&= \begin{pmatrix}-\beta&\alpha\\ \alpha & \beta\end{pmatrix} \begin{pmatrix}\alpha \\ \beta\end{pmatrix}\\
&=\begin{pmatrix}0\\1\end{pmatrix}\\
&=\ket{\beta}.
\end{split}
\end{align}
Using this in~\eqref{eq:GroverUcont} yields
\begin{align}
U \ket{\psi} = e^{-\im t} \left[ \cos(\beta t) \ket{\psi} - \im \sin(\beta t) \ket{\beta}\right],
\end{align}
Hence, with $\beta =\sqrt{M/N}$, we find that after
\begin{align}
t=\frac{\pi}{2\beta} = \frac{\pi}{2} \sqrt{\frac{N}{M}}
\end{align}
the solution $\ket{\beta}$ is found with certainty. 

In conclusion, the Hamiltonian~\eqref{eq:GroverH} generates a unitary transformation which rotates $\ket{\psi}$ to the solution $\ket{\beta}$ after time $t=\pi/2 \sqrt{N/M}$. However, simulating the Hamiltonian via a quantum circuit does not reduce the computational costs compared to Grover's search algorithm, which needs $\mathcal{O}(\sqrt{N})$ oracle calls. Indeed, no quantum algorithm can perform the search problem using fewer than $\Omega(\sqrt{N})$ accesses to the oracle (note that $\Omega(f(x))$ denotes a ``lower bound'', similar to $\mathcal{O}(f(x))$ denoting an ``upper bound''). Hence, Grover's algorithm is optimal (you possibly prove this in the exercises).

\chapter{Quantum information}\label{ch:qinfo}

\section{Quantum noise and quantum operations} 
So far we dealt with closed (i.e., noiseless) quantum systems, which evolve unitarily. However, real systems suffer from unwanted interactions with the outside world (environment) that manifest as noise in quantum information processing (recall the introduction). Such open quantum systems can be described via the formalism of quantum operations, which is a powerful tool to address a wide range of physical scenarios. 

\begin{definition}
A \underline{trace preserving (TP) quantum operation} is a transformation of a quantum state (i.e., density operator) $\rho$ on $\mathcal{H}$ according to the map
\begin{align}\label{eq:E}
\begin{split}
\mathcal{E}: \mathcal{D}(\mathcal{H}) &\rightarrow \mathcal{D}(\mathcal{H})\\
\rho&\mapsto \mathcal{E}(\rho),
\end{split}
\end{align}
where $\mathcal{D}(\mathcal{H})$ denotes the space of density operators on $\mathcal{H}$, i.e. $\mathcal{E}$ maps density operators to density operators. 
\end{definition}
We already came across two examples, the unitary transformation $\mathcal{E}(\rho)=U\rho U^\dagger$ from Eq.~\eqref{eq:Uevolve}, and the measurement $\mathcal{E}_m(\rho)=M_m \rho M_m^\dagger/\tr{M_m^\dagger M_m\rho}$ according to Eq.~\eqref{eq:measurementstate}.

\subsection{Environments}
From postulate~\ref{p:evolution} in Sec.~\ref{sec:EvoOfStates}, we know that closed quantum systems evolve unitarily. From this point of view it is natural to consider the evolution of an open \underline{principal system} to be generated by a unitary evolution $U$ of a larger system composed of the principal system and an environment. By assuming no initial correlations between principal system and environment, $\rho\otimes \rho_\mathrm{E}$, the corresponding map reads
\begin{align}\label{eq:OpenU}
\mathcal{E}(\rho) = \trp{E}{U(\rho \otimes \rho_\mathrm{E})U^\dagger}.
\end{align}

\begin{remarks}\leavevmode
\begin{enumerate}[1)]
\item It is reasonable to consider initially uncorrelated states, since such states can often be prepared in the lab. The case of initial correlations is possibly discussed later. 

\item If $U$ describes no interaction between principal system and environment (i.e., a closed principal system), then $U=U_\mathrm{S} \otimes U_\mathrm{E}$, and~\eqref{eq:OpenU} becomes $\mathcal{E}(\rho)=\trp{E}{U_\mathrm{S}\rho U_\mathrm{S}^\dagger \otimes U_\mathrm{E}\rho_\mathrm{E}U_\mathrm{E}^\dagger}=U_\mathrm{S}\rho U_\mathrm{S}^\dagger$. This is a unitary evolution of the principal system, which is consistent with postulate~\ref{p:evolution}.

\item By the possibility to purify the environment [see Eq.~\eqref{eq:purification}], it is sufficient to consider the environment initially in a pure state, $\rho_\mathrm{E}=\ketbra{e_0}{e_0}$.
\end{enumerate}
\end{remarks}

\begin{example}
The circuit
\begin{align}
\begin{tikzcd}
\lstick{$\rho$}&\ctrl{1} & \qw \rstick{$\mathcal{E}(\rho)$}\\
 \lstick{$\ket{0}$}&\targ{} & \qw
\end{tikzcd}
\end{align}
illustrates a map $\mathcal{E}(\rho)$ with combined initial state $\rho\otimes\ketbra{0}{0}$ and unitary $U=\ketbra{0}{0}\otimes \unit + \ketbra{1}{1} \otimes X = P_0 \otimes \unit + P_1 \otimes X$. Hence, 
\begin{align}
\begin{split}
\mathcal{E}(\rho)&=\trp{2}{ [P_0\otimes \unit + P_1 \otimes X] \rho \otimes \ketbra{0}{0}[ P_0 \otimes \unit + P_1 \otimes X] }\\
&=\trp{2}{ [P_0\rho \otimes \ketbra{0}{0} + P_1 \rho \otimes \ketbra{1}{0}] [ P_0 \otimes \unit + P_1 \otimes X] }\\
&=\trp{2}{ P_0\rho P_0 \otimes \ketbra{0}{0} + P_0 \rho P_1 \otimes \ketbra{0}{1}  +P_1 \rho P_0 \otimes \ketbra{1}{0}+P_1\rho P_1 \otimes \ketbra{1}{1} }\\
&=P_0 \rho P_0 + P_1 \rho P_1.
\end{split}
\end{align}
This map destroys all coherences, since for $\rho=\begin{pmatrix}
\rho_{00}&\rho_{01}\\ \rho_{10}&\rho_{11}\end{pmatrix}$ we get $\mathcal{E}(\rho)=\begin{pmatrix}
\rho_{00}&0\\ 0&\rho_{11}\end{pmatrix}$.
\end{example}

\subsection{Operator-sum representation}
Let us suppose that the environment is w.l.o.g. initially in the state $\rho_\mathrm{E}=\ketbra{e_0}{e_0}$, and that $\{\ket{e_k}\}_k$ is an ON-basis of the environment. Equation~\eqref{eq:OpenU} can then be re-stated in terms of its \underline{operator-sum representation} or \underline{Kraus representation}
\begin{align}\label{eq:osr}
\begin{split}
\mathcal{E}(\rho)&=\trp{E}{ U(\rho \otimes \ketbra{e_0}{e_0}) U^\dagger }\\
&=\sum_k \bra{e_k} U(\rho \otimes \ketbra{e_0}{e_0}) U^\dagger \ket{e_k}\\
&=\sum_k \bra{e_k}U\ket{e_0} \rho \bra{e_0} U^\dagger \ket{e_k}\\
&=E_k \rho E_k^\dagger,
\end{split}
\end{align}
where $E_k=\bra{e_k}U\ket{e_0}$ are operators on the state space of the principal system known as \underline{operation elements} (or \underline{Kraus operators}).

\begin{remarks}\leavevmode
\begin{enumerate}[1)]
\item For $\mathcal{E}$ is a trace-preserving quantum operation, we must have
\begin{align}
\begin{split}
1&=\tr{\mathcal{E}(\rho)}\\
&=\tr{\sum_k E_k \rho E_k^\dagger}\\
&=\tr{\sum_k E_k^\dagger E_k \rho }.
\end{split}
\end{align}
Since this must hold for all $\rho$, we find that the operation elements must satisfy the \underline{completeness} \underline{relation}
\begin{align}
\sum_k E_k^\dagger E_k=\unit. 
\end{align}

\item Non-trace-preserving quantum operations can also be expressed via the operator-sum representation~\eqref{eq:osr}. However, their operation elements satisfy $\sum_k E_k^\dagger E_k < \unit$. See Sec.~\ref{sec:KrausTheorem} below. 

\item The operator-sum representation describes the dynamics of the principal system without having to explicitly consider properties of the environment. All that is needed are the operation elements $E_k$, which act on the principal system alone. 
\end{enumerate}
\end{remarks}

\subsubsection*{Physical interpretation}
There is an intuitive interpretation of the operator-sum representation: Suppose that after the unitary evolution $U$ of principal system and environment a projective measurement of the environment is performed in the basis $\{\ket{e_k}\}_k$. If the outcome $k$ occurs, the principal system is in the state
\begin{align}
\begin{split}
\rho_k &\propto \trp{E}{ (\unit \otimes \ketbra{e_k}{e_k})U( \rho \otimes \ketbra{e_0}{e_0})U^\dagger}\\
&=\bra{e_k}U\ket{e_0} \rho \bra{e_0}U^\dagger \ket{e_k}\\
&= E_k \rho E_k^\dagger,
\end{split}
\end{align}
and, including normalization, we have
\begin{align}
\rho_k =\frac{E_k \rho E_k^\dagger}{\tr{E_k^\dagger E_k \rho}}.
\end{align}
Since outcome $k$ occurs with probability 
\begin{align}
\begin{split}
p(k)&=\tr{ (\unit \otimes \ketbra{e_k}{e_k}) U(\rho \otimes \ketbra{e_0}{e_0})U^\dagger}\\
&=\tr{E_k^\dagger E_k \rho},
\end{split}
\end{align}
the state of the principal system after the measurement becomes
\begin{align}
\mathcal{E}(\rho)=\sum_k p(k) \rho_k = \sum_k E_k \rho E_k^\dagger. 
\end{align}

\subsubsection*{System-environment model for any operator-sum representation}
\begin{theorem}\label{thm:EtoU}
Given a trace-preserving quantum operation in the operator-sum representation $\mathcal{E}(\rho)=\sum_k E_k \rho E_k^\dagger$, we can construct a system-environment model which gives rise to the operation elements $E_k$. 
\end{theorem}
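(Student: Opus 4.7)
The plan is to reverse-engineer the derivation of the operator-sum representation in Eq.~\eqref{eq:osr}: we will build the environment, specify its initial state, and define a joint unitary whose partial trace reproduces $\mathcal{E}$. Concretely, I would introduce an environment Hilbert space $\mathcal{H}_\mathrm{E}$ of dimension equal to the number of Kraus operators, fix an ON-basis $\{\ket{e_k}\}_k$ of $\mathcal{H}_\mathrm{E}$, and pick $\ket{e_0}$ as its initial state. The defining ansatz for the joint unitary $U$ on $\mathcal{H}\otimes\mathcal{H}_\mathrm{E}$ is
\begin{align}
U\left( \ket{\psi}\otimes\ket{e_0}\right) = \sum_k \left(E_k\ket{\psi}\right)\otimes\ket{e_k},
\end{align}
for every $\ket{\psi}\in\mathcal{H}$. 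This prescribes the action of $U$ on the subspace $\mathcal{H}\otimes\ket{e_0}$ only, so there are two things to verify: (i) the prescription is an isometry on that subspace, and (ii) it can be extended to a unitary on the full tensor product.

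For (i) I would compute, for arbitrary $\ket{\psi},\ket{\phi}\in\mathcal{H}$,
\begin{align}
\left( \sum_k E_k\ket{\psi}\otimes\ket{e_k},\sum_l E_l\ket{\phi}\otimes\ket{e_l}\right) = \sum_{k,l}\bra{\psi}E_k^\dagger E_l\ket{\phi}\bracket{e_k}{e_l} = \bra{\psi}\sum_k E_k^\dagger E_k\ket{\phi} = \bracket{\psi}{\phi},
\end{align}
where the last equality uses the completeness relation $\sum_k E_k^\dagger E_k = \unit$ that follows from $\mathcal{E}$ being trace-preserving. Hence $U$ preserves inner products on the subspace $\mathcal{H}\otimes\ket{e_0}$, and in particular maps any ON-basis $\{\ket{v_j}\otimes\ket{e_0}\}_j$ to an orthonormal set in $\mathcal{H}\otimes\mathcal{H}_\mathrm{E}$.

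Step (ii) is the main (though standard) obstacle: we must extend a partial isometry to a unitary. The idea is to use the standard linear-algebra fact that any orthonormal set in a finite-dimensional inner product space can be completed to an ON-basis. Concretely, complete the image vectors $\{U(\ket{v_j}\otimes\ket{e_0})\}_j$ to an ON-basis of $\mathcal{H}\otimes\mathcal{H}_\mathrm{E}$, and define $U$ on the remaining basis vectors $\{\ket{v_j}\otimes\ket{e_k}\}_{j, k\geq 1}$ by mapping them bijectively to the added basis vectors. By construction the resulting $U$ is unitary (it maps an ON-basis to an ON-basis).

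Finally, I would check that this $U$ indeed reproduces $\mathcal{E}$ via Eq.~\eqref{eq:OpenU} with $\rho_\mathrm{E}=\ketbra{e_0}{e_0}$. For a pure input $\rho=\ketbra{\psi}{\psi}$,
\begin{align}
\trp{E}{U(\ketbra{\psi}{\psi}\otimes\ketbra{e_0}{e_0})U^\dagger} = \sum_l \bra{e_l}\left(\sum_{k,k'} E_k\ketbra{\psi}{\psi}E_{k'}^\dagger\otimes\ketbra{e_k}{e_{k'}}\right)\ket{e_l} = \sum_k E_k\ketbra{\psi}{\psi}E_k^\dagger,
\end{align}
which is $\mathcal{E}(\ketbra{\psi}{\psi})$; linearity of both $\mathcal{E}$ and the partial trace then extends this to all mixed $\rho$, completing the construction.
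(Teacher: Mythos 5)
Your proposal is correct and follows essentially the same route as the paper's proof: the same ansatz $U(\ket{\psi}\otimes\ket{e_0})=\sum_k E_k\ket{\psi}\otimes\ket{e_k}$, the same isometry check via $\sum_k E_k^\dagger E_k=\unit$, extension to a unitary, and verification by partial trace. The only cosmetic differences are that you spell out the basis-completion argument for the unitary extension and verify the channel on pure states before invoking linearity, whereas the paper inserts the spectral decomposition of $\rho$ directly.
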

\begin{proof}
Let $\{\ket{e_k}\}_k$ be an ON-basis of the environment, in a one-to-one correspondence with the index $k$ of the operators $E_k$. Define an operator $U$, which acts on $\ket{\psi}\ket{e_0}$ as
\begin{align}\label{eq:opsUpr}
U\ket{\psi}\ket{e_0}= \sum_k \left( E_k \ket{\psi}\right)\ket{e_k}.
\end{align}
Since this operator satisfies 
\begin{align}
\begin{split}
\bra{\psi}\bra{e_0} U^\dagger U \ket{\phi}\ket{e_0}&=\sum_{k,k'} \left( \bra{\psi} E_{k'}^\dagger\right) \bra{e_{k'}} \left( E_k\ket{\phi} \right) \ket{e_k}\\
&=\sum_{k,k'} \bra{\psi} E_{k'}^\dagger E_k \ket{\phi} \bracket{e_{k'}}{e_k}\\
&= \sum_k \bra{\psi} E_k^\dagger E_k \ket{\phi}\\
&=\bracket{\psi}{\phi},
\end{split}
\end{align}
it can be extended to a unitary operator (which preserves the inner product) on the entire state space of the joint system. Using the spectral decomposition $\rho = \sum_j \lambda_j \ketbra{\lambda_j}{\lambda_j}$, we then have
\begin{align}
\begin{split}
\trp{E}{ U( \rho \otimes \ketbra{e_0}{e_0})U^\dagger}&=\trp{E}{ \sum_j \lambda_j U\ket{\lambda_j} \ket{e_0} \bra{\lambda_j} \bra{e_0}U^\dagger}\\
&\overset{\eqref{eq:opsUpr}}{=}\trp{E}{ \sum_{k,k'}\sum_j \lambda_j \left(E_k\ket{\lambda_j}\right) \ket{e_k} \left(\bra{\lambda_j} E_{k'}^\dagger\right)\bra{e_{k'}}U^\dagger}\\
&=\trp{E}{ \sum_{k,k'} E_k \left[\sum_j \lambda_j \ketbra{\lambda_j}{\lambda_j} \right]E_{k'}^\dagger \otimes \ketbra{e_{k'}}{e_k}}\\
&=\trp{E}{ \sum_{k,k'} E_k \rho E_{k'}^\dagger \otimes \ketbra{e_{k'}}{e_k}}\\
&=\sum_{k,k'} E_k \rho E_{k'}^\dagger \bracket{e_k}{e_{k'}}\\
&=\sum_{k} E_k \rho E_k^\dagger.
\end{split}
\end{align}
\end{proof}

\subsection{Kraus representation theorem}\label{sec:KrausTheorem}
It is reasonable to assume that a general quantum map with domain and range $\mathcal{H}$, i.e. $\mathcal{E}: \mathcal{H} \rightarrow \mathcal{H}$, must satisfy the following properties:
\begin{enumerate}
\item[(P1)] $\mathcal{E}$ is \underline{completely positive} (CP), i.e., 
\begin{align}\label{eq:P1}
\begin{split}
\mathcal{E}(A) \geq 0 \quad &\forall\ A\geq 0,\ A\text{ on }\mathcal{H},\\
(\unit \otimes \mathcal{E})(B) \geq 0 \quad &\forall\ B\geq 0,\ B\text{ on }\mathcal{H}_\mathrm{E}\otimes\mathcal{H} .
\end{split}
\end{align}

\item[(P2)] The trace must satisfy
\begin{align}
0 \leq \tr{\mathcal{E}(\rho)} \leq 1 \quad \forall \rho\in\mathcal{D}(\mathcal{H}).
\end{align}

\item[(P3)] $\mathcal{E}$ is linear, i.e., 
\begin{align}
\mathcal{E}\left( \sum_j p_j \rho_j  \right) = \sum_j p_j \mathcal{E}(\rho_j) \quad \forall p_j\geq 0, \ \sum_j p_j=1.
\end{align}
\end{enumerate}

Here, (P1) ensures that positive (density) operators are mapped to positive (density) operators. In (P2), $\tr{\mathcal{E}(\rho)}$ is the probability that the process (i.e. transition) represented by $\mathcal{E}$ occurs. If $\tr{\mathcal{E}(\rho)} <1$, then $\mathcal{E}$ does not provide a complete description of all possible transitions (it is not trace preserving). Hence, (P2) simply states that probabilities never exceed $1$. (P3) is due to quantum mechanics being linear. 

\begin{theorem}[\textbf{Kraus representation theorem}]
The map $\mathcal{E}$ satisfies (P1), (P2), and (P3) if and only if it has an operator-sum representation (or Kraus representation)
\begin{align}
\mathcal{E}(\rho)=\sum_j E_j \rho E_j^\dagger
\end{align}
for some set of operators $\{E_j\}_j$, with $E_j: \mathcal{H}\rightarrow \mathcal{H}$ linear, and $\sum_j E_j^\dagger E_j \leq \unit$.\footnote{For an operator $A$ on $\mathcal{H}$, you can read $A\leq \unit$ as $\unit-A \geq 0$, which means that $\unit-A$ must be positive. Hence, all eigenvalues of $A$ must be smaller or equal to one.}
\end{theorem}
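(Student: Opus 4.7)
The plan is to prove both implications. The easy direction assumes the operator-sum form and verifies (P1)--(P3): linearity (P3) is immediate from the definition, the trace bound (P2) follows from $\tr{\mathcal{E}(\rho)} = \tr{\sum_j E_j^\dagger E_j \rho} \leq \tr{\rho}=1$ using $\sum_j E_j^\dagger E_j \leq \unit$ and $\rho \geq 0$, and complete positivity (P1) is verified by noting that each summand satisfies $\bra{v}E_j A E_j^\dagger \ket{v} = \bra{w}A\ket{w}\geq 0$ with $\ket{w}=E_j^\dagger\ket{v}$, and the same argument applied to $\unit \otimes E_j$ gives complete positivity of the full map.

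For the hard direction I would follow the Choi construction. First, introduce a reference system $\mathcal{H}_\mathrm{R}$ isomorphic to $\mathcal{H}$ with ON-basis $\{\ket{i}_\mathrm{R}\}$ matching a fixed basis $\{\ket{i}\}$ of $\mathcal{H}$, and form the unnormalized maximally entangled vector $\ket{\Phi}=\sum_i \ket{i}_\mathrm{R}\otimes \ket{i}\in\mathcal{H}_\mathrm{R}\otimes\mathcal{H}$. By complete positivity (P1) applied to the positive operator $\ketbra{\Phi}{\Phi}$, the Choi operator
\begin{align}
\sigma := (\unit_\mathrm{R}\otimes \mathcal{E})(\ketbra{\Phi}{\Phi})
\end{align}
is positive on $\mathcal{H}_\mathrm{R}\otimes\mathcal{H}$. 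Its spectral decomposition then gives $\sigma=\sum_j \ketbra{s_j}{s_j}$, where I absorb the non-negative eigenvalues into the unnormalized vectors $\ket{s_j}$. Since each $\ket{s_j}\in\mathcal{H}_\mathrm{R}\otimes\mathcal{H}$, I can write $\ket{s_j}=\sum_{k,l}(E_j)_{l,k}\,\ket{k}_\mathrm{R}\otimes\ket{l}$ and thereby \emph{define} operators $E_j$ on $\mathcal{H}$ via their matrix entries $(E_j)_{l,k}$, i.e. $E_j\ket{k}=\sum_l (E_j)_{l,k}\ket{l}$.

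The core step is to verify $\mathcal{E}(\rho)=\sum_j E_j\rho E_j^\dagger$ for all $\rho\in\mathcal{D}(\mathcal{H})$. The key identity is that for any $\ket{\psi}=\sum_k \psi_k \ket{k}\in\mathcal{H}$, one has $(\bra{\psi^*}\otimes \unit)\ket{\Phi}=\ket{\psi}$, where $\ket{\psi^*}=\sum_k \psi_k^* \ket{k}_\mathrm{R}$. Applying $(\bra{\psi^*}\otimes \unit)$ on the left and $(\ket{\phi^*}\otimes \unit)$ on the right of $\sigma$ and using linearity (P3) of $\mathcal{E}$ on the reference side yields
\begin{align}
(\bra{\psi^*}\otimes \unit)\,\sigma\,(\ket{\phi^*}\otimes \unit) = \mathcal{E}(\ketbra{\psi}{\phi}),
\end{align}
while the same contraction of the spectral sum produces $\sum_j E_j \ketbra{\psi}{\phi} E_j^\dagger$, since $(\bra{\psi^*}\otimes\unit)\ket{s_j}=E_j\ket{\psi}$ by construction. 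This establishes the operator-sum representation on rank-one operators $\ketbra{\psi}{\phi}$, and then (P3) extends it to all density operators (which span by linear combinations any operator, in particular any $\rho$). Finally, the completeness bound $\sum_j E_j^\dagger E_j \leq \unit$ follows from (P2): for every $\ket{\psi}$, $\bra{\psi}\sum_j E_j^\dagger E_j\ket{\psi}=\tr{\mathcal{E}(\ketbra{\psi}{\psi})}\leq 1=\bracket{\psi}{\psi}$.

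The main obstacle is the identification $(\bra{\psi^*}\otimes\unit)\ket{s_j}=E_j\ket{\psi}$ and the resulting contraction formula -- this is the bridge between the abstract object $\sigma$ produced by complete positivity and a concrete recipe for operation elements. Once that identity is in place, the rest is bookkeeping: linearity promotes the rank-one result to arbitrary $\rho$, and (P2) converts to the subnormalization of $\sum_j E_j^\dagger E_j$.
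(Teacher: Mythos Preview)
Your proposal is correct and follows essentially the same Choi-matrix approach as the paper: form $\sigma=(\unit\otimes\mathcal{E})(\ketbra{\Phi}{\Phi})$, use (P1) to get $\sigma\geq 0$, spectrally decompose, and read off the $E_j$ via the contraction $(\bra{\psi^*}\otimes\unit)\ket{s_j}=E_j\ket{\psi}$. The only cosmetic differences are that the paper defines $E_j$ directly by $E_j\ket{\psi}:=\bracket{\psi^*}{s_j}$ rather than via matrix entries, and it works with diagonal rank-ones $\ketbra{\psi}{\psi}$ before extending by convexity, whereas you handle off-diagonal $\ketbra{\psi}{\phi}$ in one stroke.
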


\begin{proof}
``$\Leftarrow$'': Suppose that $\mathcal{E}(\rho) = \sum_j E_j \rho E_j^\dagger$, with $\sum_j E_j^\dagger E_j \leq \unit$, $E_j$ linear. 
\begin{enumerate}
\item[(P3):] 
\begin{align}
\mathcal{E}\left( \sum_k p_k \rho_k\right) = \sum_j E_j \sum_k p_k \rho_k E_j^\dagger\overset{E_j \text{ linear}}{=} \sum_k p_k \sum_j E_j \rho_k E_j^\dagger = \sum_k p_k \mathcal{E}(\rho_k).
\end{align}

\item [(P1):] Let $B\geq 0$ be any positive operator on $\mathcal{H}_\text{E} \otimes \mathcal{H}$, let $\ket{\psi} \in \mathcal{H}_\text{E} \otimes \mathcal{H}$ be any state, and define $\ket{\phi_j}=(\unit \otimes E_j^\dagger)\ket{\psi}$. By writing $B$ as $B=\sum_k b_k B_\text{E}^k \otimes B_\text{S}^k$ [see Eq.~\eqref{eq:OponH1H2}], it follows that
\begin{align}
\begin{split}
\bra{\psi} (\unit \otimes \mathcal{E}) (B)\ket{\psi} &= \bra{\psi} (\unit \otimes \mathcal{E}) \left( \sum_k b_k  B_\text{E}^k \otimes B_\text{S}^k \right) \ket{\psi}\\
&\overset{\mathcal{E} \text{ linear}}{=} \bra{\psi} \sum_k b_k  B_\text{E}^k \otimes \mathcal{E}(B_\text{S}^k)  \ket{\psi}\\
&=\bra{\psi} \sum_k b_k  B_\text{E}^k \otimes \sum_j E_j B_\text{S}^k E_j^\dagger  \ket{\psi}\\
&=\sum_j \bra{\psi} (\unit \otimes E_j)\left( \sum_k b_k  B_\text{E}^k \otimes  B_\text{S}^k\right) (\unit \otimes E_j^\dagger ) \ket{\psi}\\
&=\sum_j \bra{\psi} (\unit \otimes E_j)B (\unit \otimes E_j^\dagger ) \ket{\psi}\\
&=\sum_j \bra{\phi_j} B \ket{\phi_j}\\
&\geq 0
\end{split}
\end{align}
for any $\ket{\psi}$. The last line follows from $B$ being a positive operator. Hence, for any positive operator $B$, the operator $(\unit \otimes \mathcal{E})(B)$ is also positive. The first line in~\eqref{eq:P1} follows by choosing $B=C\otimes A$. 

\item [(P2):] Let's use the spectral decomposition $\rho=\sum_k \lambda_k \ketbra{\lambda_k}{\lambda_k}$, with $\lambda_k \geq 0$, $\sum_k \lambda_k=1$:
\begin{align}
\begin{split}
\tr{\mathcal{E}(\rho)} &= \tr{ \sum_j E_j \sum_k \lambda_k \ketbra{\lambda_k}{\lambda_k} E_j^\dagger} \\
&=\sum_k \lambda_k \sum_j \bra{\lambda_k } E_j^\dagger E_j  \ket{\lambda_k}  \\
&=\sum_k \lambda_k  \bra{\lambda_k } \underbrace{\sum_j E_j^\dagger E_j}_{\leq \unit}  \ket{\lambda_k}  \\
&\leq \sum_k \lambda_k\\
&=1.
\end{split}
\end{align}
The lower bound $0 \leq \tr{\mathcal{E}(\rho)} $ follows from $\mathcal{E}(\rho)$ being a positive operator.
\end{enumerate}

``$\Rightarrow$'': Suppose that $\mathcal{E}$ satisfies (P1), (P2), and (P3). Let us introduce $\mathcal{H}_\text{E}$, which is of the same dimension as $\mathcal{H}$, and let $\ket{j_\text{E}}$ and $\ket{j_\text{S}}$ be orthonormal basis states of $\mathcal{H}_\text{E}$ and $\mathcal{H}$, respectively. Then define the non-normalized maximally entangled state
\begin{align}
\ket{\alpha} = \sum_j \ket{j_\text{E}} \ket{j_\text{S}} \in \mathcal{H}_\text{E} \otimes \mathcal{H},
\end{align}
and the operator
\begin{align}\label{eq:KrausSigma}
\begin{split}
\sigma &= (\unit \otimes \mathcal{E}) \ketbra{\alpha}{\alpha}\\
&\overset{\text{(P3)}}{=} \sum_{j,k} \ketbra{j_\text{E}}{k_\text{E}} \otimes \mathcal{E}(\ketbra{j_\text{S}}{k_\text{S}}),
\end{split}
\end{align}
which is the image of $\ketbra{\alpha}{\alpha}$ under a one-sided channel $\unit \otimes \mathcal{E}$. Further, let 
\begin{align}\label{eq:KrausPsiS}
\ket{\psi_\text{S}} = \sum_j \psi_j \ket{j_\text{S}} \in \mathcal{H}
\end{align}
be any state in $\mathcal{H}$, and define a corresponding state in $\mathcal{H}_\text{E}$,
\begin{align}\label{eq:KrausPsiE}
\ket{\psi_\text{E}} = \sum_j \psi_j^* \ket{j_\text{E}} \in \mathcal{H}_\text{E}.
\end{align}
First note that
\begin{align}\label{eq:KrausSigmaIdE}
\begin{split}
\bra{\psi_\text{E}}\sigma \ket{\psi_\text{E}} &\overset{\eqref{eq:KrausSigma}}{=} \bra{\psi_\text{E}}\sum_{j,k} \ketbra{j_\text{E}}{k_\text{E}} \otimes \mathcal{E}(\ketbra{j_\text{S}}{k_\text{S}})\ket{\psi_\text{E}} \\
&\overset{\eqref{eq:KrausPsiE}}{=}\sum_{j,k} \psi_j \psi_k^* \ \mathcal{E}(\ketbra{j_\text{S}}{k_\text{S}})\\
&\overset{\text{(P3)}}{=} \mathcal{E}\left(\sum_{j,k} \psi_j \psi_k^* \ \ketbra{j_\text{S}}{k_\text{S}} \right)\\
&\overset{\eqref{eq:KrausPsiS}}{=} \mathcal{E}(\ketbra{\psi_\text{S}} {\psi_\text{S}} ).
\end{split}
\end{align}
Using this and (P1) we find that $\sigma$ from Eq.~\eqref{eq:KrausSigma} is positive, hence, we can write it in its spectral decomposition $\sigma=\sum_j \ketbra{s_j}{s_j}$, with not necessarily normalized states $\ket{s_j} \in \mathcal{H}_\text{E} \otimes \mathcal{H}$. Then let us define the map
\begin{align}
\begin{split}
E_j: \mathcal{H} &\rightarrow \mathcal{H}\\
E_j(\ket{\psi_\text{S}}) &= \bracket{\psi_\text{E}}{s_j},
\end{split}
\end{align}
which is linear, since
\begin{align}
E_j(\lambda \ket{\psi_\text{S}}) \overset{\eqref{eq:KrausPsiS},~\eqref{eq:KrausPsiE}}{=} \bra{\psi_\text{E}} \lambda \ket{s_j}= \lambda \bracket{\psi_\text{E}}{s_j} = \lambda E_j(\ket{\psi_\text{S}})
\end{align}
for all $\lambda \in \mathbb{C}$, and 
\begin{align}
\begin{split}
E_j(\ket{\psi_\text{S}} + \ket{\phi_\text{S}}) &= (\bra{\psi_\text{E}} + \bra{\phi_\text{E}}) \ket{s_j}\\
&=\bracket{\psi_\text{E}}{s_j}+\bracket{\phi_\text{E}}{s_j}\\
&=E_j(\ket{\psi_\text{S}}) +E_j(\ket{\phi_\text{S}}).
\end{split}
\end{align}
Hence, we write $E_j(\ket{\psi_\text{S}}) \equiv E_j \ket{\psi_\text{S}}$ and find
\begin{align}\label{eq:KrausEsumEjs}
\begin{split}
\sum_j E_j \ketbra{\psi_\text{S}}{\psi_\text{S}} E_j^\dagger &= \sum_j \bracket{\psi_\text{E}}{s_j}\bracket{s_j}{\psi_\text{E}} \\
&= \bra{\psi_\text{E}} \sigma \ket{\psi_\text{E}}\\
&\overset{\eqref{eq:KrausSigmaIdE}}{=} \mathcal{E}(\ketbra{\psi_\text{S}}{\psi_\text{S}} )
\end{split}
\end{align}
for all states $\ket{\psi_\text{S}} \in \mathcal{H}$. Hence, for $\rho=\sum_k p_k \ketbra{\psi_k}{\psi_k} $ we have
\begin{align}
\begin{split}
\mathcal{E}(\rho) &= \mathcal{E}\left( \sum_k p_k \ketbra{\psi_k}{\psi_k}\right)\\
&\overset{\text{(P3)}}{=}\sum_k p_k\mathcal{E}\left(  \ketbra{\psi_k}{\psi_k}\right)\\
&\overset{\eqref{eq:KrausEsumEjs}}{=} \sum_k p_k \sum_j E_j \ketbra{\psi_k}{\psi_k} E_j^\dagger \\
&\overset{E_j\text{ linear}}{=} \sum_j E_j \sum_k p_k \ketbra{\psi_k}{\psi_k}E_j^\dagger \\
&=\sum_j E_j \rho E_j^\dagger.
\end{split}
\end{align}
Finally we have to show that $\sum_j E_j^\dagger E_j \leq \unit$. To this end consider 
\begin{align}
\tr{\mathcal{E}(\rho)}=\tr{ \sum_j E_j \rho E_j^\dagger}= \tr{ \sum_j E_j^\dagger E_j \rho}.
\end{align}
By (P2), we must have $\tr{ \sum_j E_j^\dagger E_j \rho} \leq 1$ for all $\rho$. This can only be if $\sum_j E_j^\dagger E_j \leq \unit$. For example, if $\rho=\ketbra{\psi}{\psi}$, then for all $\ket{\psi}$
\begin{align}
\begin{split}
\tr{\mathcal{E}(\ketbra{\psi}{\psi})}=\bra{\psi} \sum_j E_j^\dagger E_j \ket{\psi} \leq 1 \Leftrightarrow \sum_j E_j^\dagger E_j \leq \unit.
\end{split}
\end{align}
\end{proof}

\subsection{Freedom in the operator-sum representation}
Is the operator-sum representation a unique description of the corresponding quantum operation $\mathcal{E}$? The answer turns out to be no. 

\begin{theorem}[\textbf{Unitary freedom in the operator-sum representation}]
Let $\{ E_1,\dots, E_m\}$ and $\{F_1,\dots, F_n\}$ be operation elements giving rise to the quantum operation $\mathcal{E}$ and $\mathcal{F}$, respectively. By adding zero operators to the shorter list, we may ensure $m=n$. Then $\mathcal{E}=\mathcal{F}$ if and only if there exists an $m\times m$ unitary $U$, such that
\begin{align}\label{eq:UnitaryFreedomOperatorSum}
E_j = \sum_k U_{j,k} F_k.
\end{align}
\end{theorem}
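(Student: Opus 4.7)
The plan is to handle the two directions separately, with the reverse implication being a short calculation and the forward implication exploiting the same ``vectorization via maximally entangled states'' trick that powered the Kraus representation theorem above.

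For the easy direction ($\Leftarrow$), I would assume $E_j=\sum_k U_{j,k}F_k$ and plug directly into the operator-sum representation:
\begin{align*}
\mathcal{E}(\rho)=\sum_j E_j\rho E_j^\dagger=\sum_{k,l}\Bigl(\sum_j U_{j,k}U_{j,l}^*\Bigr)F_k\rho F_l^\dagger=\sum_{k,l}\delta_{k,l}F_k\rho F_l^\dagger=\mathcal{F}(\rho),
\end{align*}
where the middle equality uses $UU^\dagger=\unit$ componentwise, i.e.\ $\sum_j U_{j,k}U_{j,l}^*=(UU^\dagger)_{k,l}=\delta_{k,l}$. This is a routine calculation and I would spend only one or two lines on it.

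For the harder direction ($\Rightarrow$), I would mimic the Choi-state construction used in the Kraus theorem. Introduce an auxiliary reference system $\mathcal{H}_\mathrm{R}$ of the same dimension as $\mathcal{H}$ with ON-basis $\{\ket{j_\mathrm{R}}\}_j$, form the (unnormalized) maximally entangled state $\ket{\alpha}=\sum_j\ket{j_\mathrm{R}}\ket{j_\mathrm{S}}$, and define the vectorized operators $\ket{\tilde E_j}=(\unit\otimes E_j)\ket{\alpha}$ and $\ket{\tilde F_k}=(\unit\otimes F_k)\ket{\alpha}$. By linearity of $\mathcal{E}$ and $\mathcal{F}$ together with the hypothesis $\mathcal{E}=\mathcal{F}$, the two positive operators
\begin{align*}
(\unit\otimes\mathcal{E})(\ketbra{\alpha}{\alpha})=\sum_j\ketbra{\tilde E_j}{\tilde E_j},\qquad (\unit\otimes\mathcal{F})(\ketbra{\alpha}{\alpha})=\sum_k\ketbra{\tilde F_k}{\tilde F_k}
\end{align*}
coincide. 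Hence Theorem~\ref{th:unitaryfreedomdensity} on the unitary freedom in ensemble decompositions (applied after padding with zero vectors so that both lists have the same length $m$) yields an $m\times m$ unitary $U$ with $\ket{\tilde E_j}=\sum_k U_{j,k}\ket{\tilde F_k}$.

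The final step, and the place where the main obstacle lies, is to translate this relation between vectorized operators back into the desired relation $E_j=\sum_k U_{j,k}F_k$ between the operators themselves. The key observation is the ``cancellation lemma'' for maximally entangled states: if $A,B$ are operators on $\mathcal{H}$ with $(\unit\otimes A)\ket{\alpha}=(\unit\otimes B)\ket{\alpha}$, then expanding both sides in the basis $\{\ket{j_\mathrm{R}}\}_j$ gives $\sum_j\ket{j_\mathrm{R}}\otimes(A-B)\ket{j_\mathrm{S}}=0$, and orthogonality of $\{\ket{j_\mathrm{R}}\}_j$ forces $(A-B)\ket{j_\mathrm{S}}=0$ for every $j$, so $A=B$. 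Applying this with $A=E_j$ and $B=\sum_k U_{j,k}F_k$ completes the argument. I would flag this step explicitly because it is easy to overlook that one needs $\ket{\alpha}$ to have full Schmidt rank in order for the vectorization map $A\mapsto(\unit\otimes A)\ket{\alpha}$ to be injective; the whole proof hinges on this.
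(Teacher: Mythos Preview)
Your proof is correct and essentially identical to the paper's: both directions match, and in the forward direction the paper defines exactly the same vectorized operators $\ket{e_j}=\sum_l\ket{l_\mathrm{E}}E_j\ket{l_\mathrm{S}}$ (your $\ket{\tilde E_j}$), shows that $\sum_j\ketbra{e_j}{e_j}=\sum_k\ketbra{f_k}{f_k}$, invokes Theorem~\ref{th:unitaryfreedomdensity} on the unitary freedom in ensemble decompositions, and then un-vectorizes. The only cosmetic difference is that the paper performs the un-vectorization by pairing with an arbitrary $\ket{\psi_\mathrm{S}}$ via its conjugate $\ket{\psi_\mathrm{E}}$, whereas you argue directly on basis vectors; both amount to the same injectivity argument for the map $A\mapsto(\unit\otimes A)\ket{\alpha}$.
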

\begin{proof}
The key of the proof is Eq.~\eqref{eq:Ufreedom}, which states that for normalized states $\ket{\psi_j}$ and $\ket{\phi_k}$ and probability distributions ${p_j}_j$ and ${q_k}_k$ we have $\rho=\sum_j p_j \ketbra{\psi_j}{\psi_j}=\sum_k q_k \ketbra{\phi_k}{\phi_k}$ if and only if 
\begin{align}\label{eq:UfreedomKraus}
\sqrt{p_j} \ket{\psi_j} = \sum_k U_{j,k} \sqrt{q_k} \ket{\phi_k}
\end{align}
for some unitary $U$. \\
``$\Rightarrow$'': Suppose $\sum_j E_j \rho E_j^\dagger=\sum_k F_k \rho F_k^\dagger$ for all $\rho$. We start by defining the not necessarily normalized states 
\begin{align}\label{eq:USumef}
\begin{split}
\ket{e_j} &= \sum_l \ket{l_\text{E}} E_j\ket{l_\text{S}}\\
\ket{f_k} &= \sum_l \ket{l_\text{E}} F_k\ket{l_\text{S}},
\end{split}
\end{align}
where $\ket{l_\text{E}}$ and $\ket{l_\text{S}}$ are orthonormal basis states of $\mathcal{H}_\text{E}$ and $\mathcal{H}$, respectively. Now, note that
\begin{align}
\begin{split}
\sum_j \ketbra{e_j}{e_j}& \overset{\eqref{eq:USumef}}{=} \sum_{l,l'} \ketbra{l_\text{E}}{l'_\text{E}} \otimes \sum_j E_j \ketbra{l_\text{S}}{l'_\text{S}} E_j^\dagger \\
&= \sum_{l,l'} \ketbra{l_\text{E}}{l'_\text{E}} \otimes \sum_k F_k \ketbra{l_\text{S}}{l'_\text{S}} F_k^\dagger \\
& \overset{\eqref{eq:USumef}}{=} \sum_k \ketbra{f_k}{f_k}.
\end{split}
\end{align}
Thus, by Eq.~\eqref{eq:UfreedomKraus}, there exists a unitary $U$, such that
\begin{align}\label{eq:UKrausefU}
\ket{e_j}=\sum_k U_{j,k} \ket{f_k}.
\end{align}
Now, for an arbitrary state 
\begin{align}\label{eq:UKrausPsiS}
\ket{\psi_\text{S}} = \sum_l \psi_l \ket{l_\text{S}} \in \mathcal{H},
\end{align}
with corresponding state
\begin{align}\label{eq:UKrausPsiE}
\ket{\psi_\text{E}} = \sum_l \psi_l^* \ket{l_\text{E}} \in \mathcal{H}_\text{E},
\end{align}
we have
\begin{align}
\begin{split}
E_j \ket{\psi_\text{S}} &= \sum_l \psi_l E_j \ket{l_\text{S}} \\
&\overset{\eqref{eq:USumef},~\eqref{eq:UKrausPsiE}}{=} \bracket{\psi_\text{E}}{e_j}\\
&\overset{\eqref{eq:UKrausefU}}{=} \sum_k U_{j,k}\bracket{\psi_\text{E}}{f_k}\\
&\overset{\eqref{eq:USumef},~\eqref{eq:UKrausPsiS}}{=} \sum_k U_{j,k} F_k \ket{\psi_\text{S}},
\end{split}
\end{align}
Hence, we found that $E_j=\sum_k U_{j,k} F_k$. \\
``$\Leftarrow$'': Suppose that $E_j=\sum_k U_{j,k} F_k$. We then find
\begin{align}
\begin{split}
\sum_j E_j \rho E_j^\dagger &= \sum_j\left( \sum_k U_{j,k} F_k\right) \rho \left(\sum_{k'} U^*_{j,k'} F^\dagger_{k'} \right)\\
&=\sum_{k,k'} \underbrace{\left(\sum_j U^\dagger_{k',j} U_{j,k} \right)}_{[U^\dagger U]_{k',k}=\unit_{k',k}=\delta_{k',k}} F_k \rho F^\dagger_{k'}\\
&=\sum_k F_k \rho F_k^\dagger.
\end{split}
\end{align}
\end{proof}

\begin{example}
For the operation elements $\{E_1=\frac{1}{\sqrt{2}}(\ketbra{0}{0} + \ketbra{1}{1}), E_2=\frac{1}{\sqrt{2}}(\ketbra{0}{0} - \ketbra{1}{1})\}$ and $\{F_1=\ketbra{0}{0}, F_2=\ketbra{1}{1}\}$ on $\mathcal{H}=\mathbb{C}^2$, where
\begin{align}\label{eq:UfrEFex}
\begin{split}
E_1&=\frac{1}{\sqrt{2}}\left(F_1+F_2\right),\\
E_2&=\frac{1}{\sqrt{2}}\left(F_1-F_2\right),
\end{split}
\end{align}
we have
\begin{align}
\begin{split}
\mathcal{E}(\rho)&=E_1 \rho E_1^\dagger + E_2\rho E_2^\dagger\\
&= \frac{1}{2}(F_1+F_2)\rho(F_1^\dagger +F_2^\dagger )+\frac{1}{2}(F_1-F_2)\rho(F_1^\dagger -F_2^\dagger )\\
&= F_1 \rho F_1^\dagger + F_2 \rho F_2^\dagger \\
&= \mathcal{F}(\rho).
\end{split}
\end{align}
In Eq.~\eqref{eq:UfrEFex} we can identify the unitary $U=\frac{1}{\sqrt{2}} \begin{pmatrix}1&1\\1&-1\end{pmatrix}$ for which $E_j=\sum_k U_{j,k} F_k$ holds. 
\end{example}

\begin{remarks} With the unitary freedom~\eqref{eq:UnitaryFreedomOperatorSum} of the operator-sum representation one can show that all quantum operations $\mathcal{E}$ on a Hilbert space of dimension $\mathrm{dim}(\mathcal{H})=d$ can be generated by an operator-sum representation containing at most $d^2$ elements, i.e. $\mathcal{E}(\rho)=\sum_{j=1}^M E_k \rho E_k^\dagger$, where $1\leq M \leq d^2$. 
\begin{proof}
Exercise. 
\end{proof}
\end{remarks}

\subsection{Examples of single qubit quantum noise}
We now use the operator-sum representation to describe typical quantum noise channels of qubits. This is for example relevant for the detection and correction of errors in quantum computation known as \underline{quantum error correction}, or for the description of open quantum systems.

First recall from Eq.~\eqref{eq:QubitRho} that the state of a single qubit can be written as
\begin{align}\label{eq:rhoQbitOp}
\begin{split}
\rho&=\frac{1}{2}(\unit + \vec{v} \cdot \vec{\sigma})\\
&=\frac{1}{2} \begin{pmatrix}1+v_z & v_x- \im v_y \\ v_x + \im v_y & 1-v_z\end{pmatrix},
\end{split}
\end{align}
with $\vec{v}\in\mathbb{R}^3$ the Bloch vector representation of $\rho$ on the unit sphere.

\subsubsection*{Bit flip channel}
The bit flip channel describes a flip of the qubit (mediated by $X$) with probability $p$, while nothing happens with probability $1-p$. The corresponding operation elements are 
\begin{align}\label{eq:BitflipOpel}
E_0=\sqrt{1-p}\  \unit = \sqrt{1-p}\begin{pmatrix}1&0\\0&1\end{pmatrix}, \quad E_1= \sqrt{p}\ X=\sqrt{p}\begin{pmatrix}0&1\\1&0\end{pmatrix}.
\end{align}
Hence, the qubit transforms as
\begin{align}
\begin{split}
\mathcal{E}_\text{bit flip}(\rho)&= E_0 \rho E_0^\dagger + E_1 \rho E_1^\dagger \\
&=(1-p) \unit \rho \unit + p X \rho X\\
&\overset{\eqref{eq:rhoQbitOp}}{=} (1-p) \frac{1}{2}(\unit + \vec{v}\cdot \vec{\sigma}) + p \frac{1}{2}(X^2 + v_x X^3 + v_y XYX + v_z XZX)\\
&=(1-p) \frac{1}{2}(\unit + \vec{v}\cdot \vec{\sigma}) +p\frac{1}{2}(\unit + v_x X - v_y Y - v_z Z)\\
&=\frac{1}{2}(\unit +v_x X + (1-2p) v_y Y + (1-2p) v_z Z).
\end{split}
\end{align}
By comparison with~\eqref{eq:rhoQbitOp}, this implies the rescaling
\begin{align}
\begin{split}
v_x & \rightarrow v_x\\
v_y & \rightarrow (1-2p) v_y\\
v_z & \rightarrow (1-2p) v_z.
\end{split}
\end{align}

\subsubsection*{Phase flip channel}
The phase flip channel flips the phase of the qubit (mediated by $Z$) with probability $p$, while nothing happens with probability $1-p$. This is described by
\begin{align}\label{eq:ElementsPhaseFlip}
E_0=\sqrt{1-p}\  \unit = \sqrt{1-p}\begin{pmatrix}1&0\\0&1\end{pmatrix}, \quad E_1= \sqrt{p}\ Z=\sqrt{p}\begin{pmatrix}1&0\\0&-1\end{pmatrix}.
\end{align}
The outcome of the channel is
\begin{align}
\mathcal{E}_\text{phase flip}(\rho)=\frac{1}{2}( \unit + (1-2p) v_x X + (1-2p) v_y Y + v_z Z),
\end{align}
implying the rescaling
\begin{align}\label{eq:rescalingPhaseFlip}
\begin{split}
v_x & \rightarrow (1-2p) v_x\\
v_y & \rightarrow (1-2p) v_y\\
v_z & \rightarrow v_z.
\end{split}
\end{align}

\subsubsection*{Bit-phase flip channel}
Analogously, the bit-phase flip channel is mediated by $Y=\im XZ$, with operation elements
\begin{align}
E_0=\sqrt{1-p}\  \unit = \sqrt{1-p}\begin{pmatrix}1&0\\0&1\end{pmatrix}, \quad E_1= \sqrt{p}\ Y=\sqrt{p}\begin{pmatrix}0&-\im\\ \im&0\end{pmatrix},
\end{align}
resulting in the rescaling
\begin{align}
\begin{split}
v_x & \rightarrow (1-2p) v_x\\
v_y & \rightarrow v_y\\
v_z & \rightarrow (1-2p) v_z.
\end{split}
\end{align}

\subsubsection*{Depolarizing channel}
The depolarizing channel
\begin{align}\label{eq:DepolChannel}
\mathcal{E}(\rho) = (1-p) \rho + p \frac{\unit}{2}
\end{align}
replaces the qubit by the completely mixed state $\unit/2$ with probability $p$, while nothing happens with probability $1-p$. The operator elements are obtained using
\begin{align}
\frac{\unit}{2}= \frac{\rho + X\rho X + Y \rho Y + Z \rho Z}{4}
\end{align}
in Eq.~\eqref{eq:DepolChannel}, such that
\begin{align}
E_0=\sqrt{1-\frac{3}{4}p}\ \unit, \quad E_1=\frac{\sqrt{p}}{2}X, \quad E_2=\frac{\sqrt{p}}{2}Y, \quad E_3=\frac{\sqrt{p}}{2}Z.
\end{align}
Moreover, using~\eqref{eq:rhoQbitOp} in~\eqref{eq:DepolChannel} yields
\begin{align}
\begin{split}
\mathcal{E}(\rho)&=(1-p) \frac{1}{2}(\unit + \vec{v}\cdot\vec{\sigma}) + p \frac{1}{2} \unit \\
&=\frac{1}{2}(\unit + (1-p) \vec{v}\cdot\vec{\sigma}) ,
\end{split}
\end{align}
which shows that the depolarizing channel induces a shrinking of the Bloch vector $\vec{v}$, with rescaling
\begin{align}
\begin{split}
v_x & \rightarrow (1-p) v_x\\
v_y & \rightarrow (1-p) v_y\\
v_z & \rightarrow (1-p) v_z.
\end{split}
\end{align}
Note that the depolarizing channel results from a consecutive action of the bit, phase, and bit-phase flip channel, since $\mathcal{E}_\text{bit flip}( \mathcal{E}_\text{phase flip}(\mathcal{E}_\text{bit-phase flip}(\rho)))$, and all other permutations of these three channels, result in the rescaling $v_x \rightarrow (1-2p)^2 v_x$, $v_y \rightarrow (1-2p)^2 v_y$, $v_z\rightarrow (1-2p)^2 v_z$.

\subsubsection*{Amplitude damping}
Amplitude damping is an important description of energy dissipation. For example, consider a two-level atom which can loose energy through spontaneously emitting a photon into the environment. This process is described by the operation elements
\begin{align}
E_0=\begin{pmatrix}1&0\\0&\sqrt{1-\gamma}\end{pmatrix}, \quad E_1=\begin{pmatrix}0&\sqrt{\gamma}\\0&0\end{pmatrix},
\end{align}
where $\gamma$ is the probability of a de-excitation. The operator $E_0$ leaves $\ketbra{0}{0}$ unchanged, but reduces the amplitude of $\ketbra{1}{1}$ and the coherences $\ketbra{0}{1}$ and $\ketbra{1}{0}$, 
\begin{align}\label{eq:E0E1ADamp}
\begin{split}
E_0 \ketbra{0}{0} E_0^\dagger &= \ketbra{0}{0} \hspace{2.2cm} E_0 \ketbra{0}{1}E_0^\dagger = \sqrt{1-\gamma} \ketbra{0}{1}\\
E_0 \ketbra{1}{1} E_0^\dagger &= (1- \gamma) \ketbra{1}{1} \hspace{1cm} E_0 \ketbra{1}{0}E_0^\dagger = \sqrt{1-\gamma} \ketbra{1}{0},
\end{split}
\end{align}
and $E_1$ changes $\ketbra{1}{1}$ into $\ketbra{0}{0}$,
\begin{align}\label{eq:E0E1ADamp2}
\begin{split}
E_1 \ketbra{0}{0} E_1^\dagger &= 0 \hspace{1.9cm} E_1 \ketbra{0}{1}E_1^\dagger = 0\\
E_1\ketbra{1}{1} E_1^\dagger &= \gamma \ketbra{0}{0} \hspace{1cm} E_1 \ketbra{1}{0}E_1^\dagger = 0.
\end{split}
\end{align}
Together, this describes the dissipation of an excitation with the following rescaling of the Bloch vector $\vec{v}$:
\begin{align}
\begin{split}
v_x & \rightarrow \sqrt{1-\gamma} \ v_x\\
v_y & \rightarrow \sqrt{1-\gamma}\ v_y\\
v_z & \rightarrow \gamma + (1-\gamma) v_z.
\end{split}
\end{align}
\begin{remarks}
In quantum optics/mechanics one often has the time dependent de-excitation probability $\gamma=1-e^{-t/T_1}$, with $T_1$ called \underline{coherence time} (or \underline{relaxation time}).
\end{remarks}

\subsubsection*{Phase damping}
Phase damping describes the loss of quantum information without loss of energy. For example, if a spin state is influenced by a weakly fluctuating magnetic environment. The operation elements are
\begin{align}
E_0=\begin{pmatrix}1&0\\0&\sqrt{1-\lambda}\end{pmatrix}, \quad E_1=\begin{pmatrix}0&0\\0&\sqrt{\lambda}\end{pmatrix},
\end{align}
where $\lambda$ is the probability for the damping process to occur. Equivalently to~\eqref{eq:E0E1ADamp}, $E_0$ leaves $\ketbra{0}{0}$ unchanged and reduces the amplitude of $\ketbra{1}{1}$ and the coherences $\ketbra{0}{1}$ and $\ketbra{1}{0}$,
\begin{align}
\begin{split}
E_0 \ketbra{0}{0} E_0^\dagger &= \ketbra{0}{0} \hspace{2.2cm} E_0 \ketbra{0}{1}E_0^\dagger = \sqrt{1-\gamma} \ketbra{0}{1}\\
E_0 \ketbra{1}{1} E_0^\dagger &= (1- \lambda) \ketbra{1}{1} \hspace{1cm} E_0 \ketbra{1}{0}E_0^\dagger = \sqrt{1-\gamma} \ketbra{1}{0},
\end{split}
\end{align}
while $E_1$ reduces the amplitude of $\ketbra{1}{1}$ but (different to~\eqref{eq:E0E1ADamp2}) does not change it into a $\ketbra{0}{0}$ state, 
\begin{align}
\begin{split}
E_1 \ketbra{0}{0} E_1^\dagger &= 0\hspace{1.9cm} E_1 \ketbra{0}{1}E_1^\dagger = 0\\
E_1\ketbra{1}{1} E_1^\dagger &= \lambda \ketbra{1}{1} \hspace{1cm} E_1 \ketbra{1}{0}E_1^\dagger = 0.
\end{split}
\end{align}
By applying the unitary freedom~\eqref{eq:UnitaryFreedomOperatorSum} of the operation elements, one finds that phase damping is equivalent to the phase flip channel with operation elements~\eqref{eq:ElementsPhaseFlip},
\begin{align}
F_0=\sqrt{1-p}\begin{pmatrix}1&0\\0&1\end{pmatrix}, \quad F_1=\sqrt{p}\begin{pmatrix}1&0\\0&-1\end{pmatrix},
\end{align}
where $p=\frac{1}{2}(1-\sqrt{1-\lambda})$. Hence, the rescaling~\eqref{eq:rescalingPhaseFlip} becomes 
\begin{align}
\begin{split}
v_x & \rightarrow \sqrt{1-\lambda} v_x\\
v_y & \rightarrow \sqrt{1-\lambda} v_y\\
v_z & \rightarrow v_z.
\end{split}
\end{align}
\begin{remarks}
In quantum optics/mechanics one often has the time dependent damping probability $\gamma=1-e^{-t/T_2}$, where $T_2$ is called \underline{spin-spin relaxation time}.
\end{remarks}

\section{Distance measures} 
So far, we found that quantum noise can be described via quantum maps. That is, if an initial state $\rho$ (which carries some quantum information, e.g. a music track -- recall the introduction), is subject to quantum noise according to the map $\mathcal{E}$, it changes to $\rho'=\mathcal{E}(\rho)$. But how much does $\rho'$ differ from $\rho$? How can we find out whether $\rho$ and $\rho'$ are similar, and what does \emph{similar} mean in this context? In order to answer these questions, we now introduce distance measures.

\subsection{Classical distance measures}\label{sec:ClDistMeas}
In classical information theory you can think of information being equal to how much communication is needed in order to convey it. For example, suppose Alice and Bob share an alphabet $\{a_1, \dots, a_n\}$ composed of statistically independent letters $a_j$, which appear (in the message) with probability $p_j$, and the set of probabilities $\{p_1,\dots,p_n\}$ are known. If Alice sends Bob a message (which is usually considered to consist of either a single or infinitely many letters) using this alphabet, then information can be seen as how much communication is needed in order for Alice to send Bob her message. Information is minimal if Alice's message is build from a single letter (e.g. $a_3a_3 \dots a_3)$, since then $p_3=1$ and $p_j=0$ for $j\neq 3$, such that Bob already knows with certainty that the message will be $a_3a_3\dots a_3$. On the other hand, information is maximal if all letters appear with equal probability, $p_j=1/n$ (e.g. $a_2a_9\dots a_7$), since then Bob is maximally uncertain what the message will be. From this example, we see that information can be modeled via a random variable $X$ with outcomes $a_1,\dots,a_n$, where outcome $a_j$ appears with probability $p_j$. Hence, we are dealing with a probability distribution $\{p_1,\dots, p_n\}$. For more details on classical information, see our discussion in Sec.~\ref{sec:Shannon} below.

In order to compare classical information, we have to compare two probability distributions $\{p_1,\dots, p_n\}$ and $\{q_1,\dots, q_n\}$, i.e. two random variables $X$ and $Y$. 

\begin{definition}
The \underline{trace distance} (or \underline{$L_1$ distance}, or \underline{Kolmogorov distance}) between two probability distributions $P=\{p_x\}_x$ and $Q=\{q_x\}_x$ over the same index set is defined as
\begin{align}\label{eq:tracedistanceCl}
D(p_x,q_x)\equiv D(P,Q)=\frac{1}{2} \sum_x \abs{p_x-q_x}.
\end{align}
\end{definition}

\begin{remarks} The trace distance is a metric on space of probability distributions since it satisfies the following properties for being a metric:
\begin{enumerate}[i)]
\item Non-negative: $D(P,Q)\geq 0$,
\item Identity: $D(P,Q)=0 \Leftrightarrow P=Q$,
\item Symmetry: $D(P,Q)=D(Q,P)$,
\item Triangular inequality: $D(P,R) \leq D(P,Q)+D(Q,R)$
\end{enumerate}

\end{remarks}

\begin{definition}
The \underline{fidelity} between two probability distributions $P=\{p_x\}_x$ and $Q=\{q_x\}_x$ over the same index set is defined as
\begin{align}\label{eq:FidelityCl}
F(p_x,q_x)\equiv F(P,Q)= \sum_x \sqrt{p_xq_x}.
\end{align}
\end{definition}

\begin{remarks}\leavevmode
\begin{enumerate}[1)]
\item The fidelity is the inner product between two vectors $\vec{p}=(\sqrt{p_1},\sqrt{p_2},\dots)$ and $\vec{q}=(\sqrt{q_1},\sqrt{q_2},\dots)$, which lie on the surface of the unit sphere (since $\abs{\vec{p}}=\abs{\vec{q}}=1$). 

\item The fidelity is not a metric. It can be seen as a similarity measure, and satisfies $0\leq F(P,Q)\leq 1$, with $F(P,Q)=1$ if and only if $P=Q$. 
\end{enumerate}
\end{remarks}

\subsection{Trace distance}
We now generalize the trace distance~\eqref{eq:tracedistanceCl} between classical probability distributions to the trace distance between quantum states:
\begin{definition}
The \underline{trace distance} between two quantum states $\rho$ and $\sigma$ is defined as
\begin{align}\label{eq:Tracedistance}
D(\rho,\sigma)=\frac{1}{2}\tr{\abs{\rho-\sigma}},
\end{align}
where $\abs{A}=\sqrt{A^\dagger A}$, with $\sqrt{\cdot}$ the positive square root of a positive semi-definite matrix. 
\end{definition}

\begin{remarks}\leavevmode
\begin{enumerate}[1)]
\item The trace distance is a metric on the space of density operators since it is satisfies the following properties:
\begin{enumerate}[i)]
\item Non-negative: $D(\sigma,\rho)\geq 0$,
\item Identity: $D(\sigma,\rho)=0 \Leftrightarrow \rho=\sigma$,
\item Symmetry: $D(\sigma,\rho)=D(\rho,\sigma)$,
\item Triangular inequality: $D(\sigma,\tau) \leq D(\sigma,\rho)+D(\rho,\tau)$
\end{enumerate}

\item The trance distance is invariant under unitary transformations. That is, for $U$ unitary, we have
\begin{align}
D(U\rho U^\dagger,U\sigma U^\dagger)=D(\rho,\sigma).
\end{align}

\item If $\rho$ and $\sigma$ commute, then their trance distance $D(\sigma,\rho)$ is equal to the classical trance distance $D(S,R)$ between their eigenvalues $R=\{r_x\}_x$ and $S=\{s_x\}_x$. 

\item If $\rho=\frac{1}{2}(\unit + \vec{r}\cdot \vec{\sigma})$ and $\sigma=\frac{1}{2}(\unit + \vec{s}\cdot \vec{\sigma})$ describes a qubit with Bloch vector $\vec{r}$ and $\vec{s}$, respectively, then 
\begin{align}
D(\rho,\sigma)=\frac{\abs{\vec{r}- \vec{s}}}{2}.
\end{align}
\end{enumerate}
\end{remarks}
\begin{proof}
1)-4): Exercise. 
\end{proof}

Classical and quantum trace distance are related by the following theorem:
\begin{theorem}\label{thm:Dmax}
Let $\{E_m\}_m$ be a POVM, and let $p_m=\tr{E_m \rho}$ and $q_m=\tr{E_m \sigma}$ be the corresponding output probabilities by measuring $\rho$ and $\sigma$, respectively. Then 
\begin{align}\label{eq:Dmax}
D(\rho,\sigma)=\max_{\{E_m\}_m} D(p_m,q_m),
\end{align}
where the maximization is over all POVMs $\{E_m\}_m$.
\end{theorem}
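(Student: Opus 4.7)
The plan is to show both inequalities $D(\rho,\sigma)\ge D(p_m,q_m)$ for every POVM and $D(\rho,\sigma)= D(p_m,q_m)$ for some explicit POVM, using the Jordan-Hahn type decomposition of the Hermitian operator $\rho-\sigma$ as the key structural tool. Since $\rho-\sigma$ is Hermitian, its spectral decomposition allows us to write
\begin{align}
\rho-\sigma = P - Q,
\end{align}
with $P,Q\ge 0$ supported on orthogonal subspaces (collecting the strictly positive and strictly negative eigen-contributions, respectively). Then $|\rho-\sigma|=P+Q$, and from $\tr{\rho-\sigma}=0$ we get $\tr{P}=\tr{Q}$, hence
\begin{align}
D(\rho,\sigma)=\tfrac12\tr{P+Q}=\tr{P}=\tr{Q}.
\end{align}

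For the upper bound, I would split the outcome index set as $S=\{m:p_m\ge q_m\}$ and its complement. Because $\sum_m(p_m-q_m)=0$, the two sides of the classical trace distance coincide, giving
\begin{align}
D(p_m,q_m)=\sum_{m\in S}(p_m-q_m)=\tr{E_S(\rho-\sigma)}, \qquad E_S:=\sum_{m\in S}E_m.
\end{align}
Using $\rho-\sigma=P-Q$ and the positivity $Q\ge0$, $E_S\ge0$ (so $\tr{E_S Q}\ge 0$), this is bounded by $\tr{E_S P}$. Finally $E_S\le \unit$ (since $E_S+(\unit-E_S)=\unit$ with $\unit-E_S\ge 0$) together with $P\ge 0$ yields $\tr{E_S P}=\tr{P^{1/2}E_S P^{1/2}}\le \tr{P}=D(\rho,\sigma)$, completing the inequality.

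For the achievability, I would exhibit a specific POVM that saturates the bound. Let $\Pi_+$ be the projector onto the support of $P$ (the positive eigenspaces of $\rho-\sigma$) and set the two-outcome POVM $\{\Pi_+,\unit-\Pi_+\}$. Because $P$ and $Q$ have orthogonal supports, $\Pi_+ Q=0$, so
\begin{align}
p_+-q_+=\tr{\Pi_+(\rho-\sigma)}=\tr{\Pi_+ P}-\tr{\Pi_+ Q}=\tr{P}=D(\rho,\sigma),
\end{align}
and since $p_++p_-=q_++q_-=1$ one has $|p_--q_-|=|p_+-q_+|$, giving $D(p_m,q_m)=|p_+-q_+|=D(\rho,\sigma)$. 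The subtle step is the inequality $\tr{E_S P}\le \tr{P}$, where one must be careful to invoke $0\le E_S\le\unit$ rather than trying to bound each $E_m$ individually; apart from that the argument is essentially a translation of the classical Hahn decomposition into the spectral decomposition of $\rho-\sigma$.
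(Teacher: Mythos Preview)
Your proof is correct. Both you and the paper hinge on the same structural tool---the spectral (Jordan--Hahn) decomposition $\rho-\sigma=P-Q$ with $P,Q\ge 0$ orthogonally supported---and both exhibit the same optimal POVM built from the projectors onto the positive/negative eigenspaces. The only tactical difference is in the upper bound: the paper bounds each summand separately via $|\tr{E_m(P-Q)}|\le \tr{E_m(P+Q)}=\tr{E_m|\rho-\sigma|}$ (using $\tr{E_m P},\tr{E_m Q}\ge 0$) and then invokes $\sum_m E_m=\unit$, whereas you first group outcomes by the sign of $p_m-q_m$ to collapse the classical distance to $\tr{E_S(\rho-\sigma)}$ for a single effect $0\le E_S\le\unit$, and then bound this by $\tr{P}$. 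Your route is essentially the variational characterization $D(\rho,\sigma)=\max_{0\le E\le\unit}\tr{E(\rho-\sigma)}$ in disguise, and has the small bonus of making transparent that a two-outcome POVM already achieves the maximum; the paper's term-by-term estimate is slightly shorter but hides this.
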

\begin{proof}
Let us express $D(p_m,q_m)$ with the help of Eq.~\eqref{eq:tracedistanceCl},
\begin{align}\label{eq:ProofDistcl}
\begin{split}
D(p_m,q_m)&=\frac{1}{2}\sum_m \abs{p_m-q_m}\\
&=\frac{1}{2}\sum_m \abs{\tr{E_m(\rho-\sigma)}}.
\end{split}
\end{align}
Since $\rho-\sigma$ is Hermitian, i.e. $(\rho-\sigma)^\dagger=\rho^\dagger-\sigma^\dagger=\rho-\sigma$, it has a spectral decomposition with real eigenvalues $\lambda_j$,
\begin{align}
\begin{split}
\rho-\sigma &=\sum_j \lambda_j \ketbra{\lambda_j}{\lambda_j}\\
&=\underbrace{\sum_{\lambda_j>0}\lambda_j \ketbra{\lambda_j}{\lambda_j} }_{Q}+\underbrace{\sum_{\lambda_j\leq 0}\lambda_j \ketbra{\lambda_j}{\lambda_j} }_{-S} \\
&=Q-S.
\end{split}
\end{align}
Here we defined $Q=\sum_{\lambda_j>0}\lambda_j \ketbra{\lambda_j}{\lambda_j}$ and $S=-\sum_{\lambda_j\leq 0}\lambda_j \ketbra{\lambda_j}{\lambda_j}$. Note that $Q$ and $S$ are positive operators with orthogonal support. Accordingly, $\abs{\rho-\sigma}=\abs{Q-S}=Q+S$, and
\begin{align}\label{eq:DclDquineq0}
\begin{split}
\abs{\tr{E_m(\rho-\sigma)}}&=\abs{ \tr{E_m(Q-S)}}\\
& \leq \abs{\tr{E_m(Q+S)}}\\
&=\tr{E_m(Q+S)} \\
&=\tr{E_m\abs{\rho-\sigma}}.
\end{split}
\end{align}
Using this in Eq.~\eqref{eq:ProofDistcl} yields
\begin{align}\label{eq:DclDquineq}
\begin{split}
D(p_m,q_m)&\leq \frac{1}{2}\sum_m \tr{E_m\abs{ \rho-\sigma}}\\
&=\frac{1}{2} \tr{\sum_m E_m\abs{ \rho-\sigma}}\\
&=\frac{1}{2} \tr{\abs{ \rho-\sigma}}\\
&=D(\rho,\sigma),
\end{split}
\end{align}
where we used the completeness relation $\sum_m E_m=\unit$. 

Finally, we have to show that one can always find a POVM, such that the inequality in~\eqref{eq:DclDquineq} saturates. To this end, we choose a measurement whose POVM elements $E_m$ are projectors onto the support of $Q$ and $S$. In this case, the inequality in Eq.~\eqref{eq:DclDquineq0} saturates, and from Eq.~\eqref{eq:DclDquineq} we get $D(p_m,q_m)=D(\rho,\sigma)$, which finishes the proof. 
\end{proof}

The above theorem provides us with a clear interpretation of the trace distance: The trace distance $D(\rho,\sigma)$ is an upper bound of the classical trace distance $D(p_m,q_m)$ between the probability distributions $\{p_m\}_m$ and $\{q_m\}_m$ obtained by measuring $\sigma$ and $\rho$ according to any POVM, i.e. $D(p_m,q_m) \leq D(\rho,\sigma)$. Accordingly, if the trace distances between $\rho$ and $\sigma$ is small, for any possible measurement, measuring $\rho$ will yield a similar outcome as measuring $\sigma$. On the other hand, if the trace distances between $\rho$ and $\sigma$ is large, there are measurements such that the outcomes for $\rho$ and $\sigma$ differ significantly.

\begin{theorem}[\textbf{Trace preserving quantum operations are contractive}]\label{th:tpqoac}
Let $\rho$ and $\sigma$ be density operators on $\mathcal{H}$. If $\mathcal{E}$ is a trace preserving quantum operation, then 
\begin{align}\label{eq:Dtpqo}
D\left(\mathcal{E}(\rho), \mathcal{E}(\sigma)\right) \leq D(\rho,\sigma). 
\end{align}
\end{theorem}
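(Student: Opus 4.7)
The plan is to exploit Theorem \ref{thm:Dmax}, which was just established, so that the proof reduces to an easier classical-type estimate. First I would decompose the Hermitian operator $\rho - \sigma$ via its spectral decomposition into its positive and negative parts $\rho - \sigma = Q - S$, where $Q, S \geq 0$ have orthogonal support, exactly as in the proof of Theorem \ref{thm:Dmax}. Since $\mathrm{Tr}(\rho - \sigma) = 0$ we have $\mathrm{Tr}(Q) = \mathrm{Tr}(S)$, and $\abs{\rho - \sigma} = Q + S$, so that $D(\rho, \sigma) = \tfrac{1}{2}\mathrm{Tr}(Q+S) = \mathrm{Tr}(Q)$.

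Next, I would apply Theorem \ref{thm:Dmax} to the states $\mathcal{E}(\rho)$ and $\mathcal{E}(\sigma)$ and pick the optimal POVM $\{E_m\}_m$ realizing the maximum, so that
\begin{align*}
D\bigl(\mathcal{E}(\rho), \mathcal{E}(\sigma)\bigr) = \frac{1}{2}\sum_m \abs{\mathrm{Tr}\bigl(E_m \mathcal{E}(\rho - \sigma)\bigr)} = \frac{1}{2}\sum_m \abs{\mathrm{Tr}\bigl(E_m\mathcal{E}(Q)\bigr) - \mathrm{Tr}\bigl(E_m\mathcal{E}(S)\bigr)}.
\end{align*}
Because $\mathcal{E}$ is (completely) positive, both $\mathcal{E}(Q)$ and $\mathcal{E}(S)$ are positive operators, and since $E_m \geq 0$, each of the two terms inside the absolute value is a non-negative real number. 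I can therefore bound $\abs{a-b} \leq a+b$ for $a,b \geq 0$ to obtain
\begin{align*}
D\bigl(\mathcal{E}(\rho), \mathcal{E}(\sigma)\bigr) \leq \frac{1}{2}\sum_m \Bigl[\mathrm{Tr}\bigl(E_m\mathcal{E}(Q)\bigr) + \mathrm{Tr}\bigl(E_m\mathcal{E}(S)\bigr)\Bigr] = \frac{1}{2}\Bigl[\mathrm{Tr}\bigl(\mathcal{E}(Q)\bigr) + \mathrm{Tr}\bigl(\mathcal{E}(S)\bigr)\Bigr],
\end{align*}
where I used the POVM completeness relation $\sum_m E_m = \unit$. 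Finally, invoking that $\mathcal{E}$ is trace-preserving yields $\mathrm{Tr}(\mathcal{E}(Q)) = \mathrm{Tr}(Q)$ and $\mathrm{Tr}(\mathcal{E}(S)) = \mathrm{Tr}(S)$, so the right-hand side collapses to $\tfrac{1}{2}\mathrm{Tr}(Q+S) = D(\rho,\sigma)$, which is the desired inequality.

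The non-trivial step is the bound $\abs{a - b} \leq a + b$: naively one might hope that $\mathcal{E}(Q) - \mathcal{E}(S)$ is itself the Jordan decomposition of $\mathcal{E}(\rho) - \mathcal{E}(\sigma)$, but this generally fails because $\mathcal{E}(Q)$ and $\mathcal{E}(S)$ need not have orthogonal supports anymore. That is precisely why the inequality (as opposed to equality) appears, and it is the conceptual source of the contractivity. All remaining ingredients (positivity of $\mathcal{E}$ acting on positive operators, trace preservation, and the POVM characterisation of $D$) are already in hand, so no genuinely new tool is required.
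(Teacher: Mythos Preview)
Your proof is correct. The paper leaves this theorem as an exercise, so there is no explicit proof to compare against; however, your argument is precisely the intended one, since it uses only the tools set up immediately beforehand (the $Q-S$ decomposition and Theorem~\ref{thm:Dmax}) and is the standard route found, e.g., in Nielsen--Chuang. One minor remark: when you invoke trace preservation on $Q$ and $S$, these are positive but not normalised; this is harmless because $\mathcal{E}$ extends linearly via its Kraus form and $\sum_k E_k^\dagger E_k=\unit$ gives $\tr{\mathcal{E}(A)}=\tr{A}$ for \emph{all} operators $A$, not just density operators.
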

\begin{proof}
Exercise.
\end{proof}

This is a very important theorem, which tells us that trace preserving quantum operations can only lead to a loss of information. In order to illustrate this, consider the following example:
\begin{example}
Suppose that Alice and Bob can communicate via a noisy quantum channel described by a trace preserving quantum operation $\mathcal{E}$. Alice can prepare two states $\rho$ and $\sigma$, which satisfy $D(\rho,\sigma)=1$. That is, $\rho$ and $\sigma$ can be fully distinguished (they have orthogonal support), such that Alice and Bob can associate different messages with these states (for example $\rho=$``yes'' and $\sigma=$``no''). Suppose now that Alice prepares $\rho$ and sends her message through the noisy quantum channel to Bob, who receives the state $\mathcal{E}(\rho)$. However, if $D(\mathcal{E}(\rho),\mathcal{E}(\sigma)) < 1$, the states $\mathcal{E}(\rho)$ and $\mathcal{E}(\sigma)$ (partially) overlap in Hilbert space, and Bob cannot be certain whether he received $\rho=$``yes'' or $\sigma=$``no''. Accordingly, the noisy quantum channel led to a loss of information. In the worst case, $D(\mathcal{E}(\rho),\mathcal{E}(\sigma))=0$. Hence, $\mathcal{E}(\rho)=\mathcal{E}(\sigma)$, such that Bob has no information about Alice's original message. 
\end{example}

\begin{theorem}[\textbf{Strong convexity of the trace distance}]\label{thm:Dconvex}
Let $\{p_j\}_j$ and $\{q_j\}_j$ be probability distributions, and $\{\rho_j\}_j$ and $\{\sigma_j\}_j$ be density operators over the same index set. Then 
\begin{align}
D\left( \sum_j p_j \rho_j, \sum_j q_j \sigma_j \right) \leq D(p_j, q_j) +\sum_j p_j D(\rho_j,\sigma_j). 
\end{align}
\end{theorem}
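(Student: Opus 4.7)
My plan is to leverage the variational characterization of the trace distance from Theorem~\ref{thm:Dmax}, together with the decomposition of a Hermitian operator into its positive and negative parts that appeared in the proof of that theorem. The key idea is that for two density operators $\rho$ and $\sigma$, one has $D(\rho,\sigma) = \tr{P(\rho-\sigma)}$, where $P$ is the projector onto the support of the positive part of $\rho-\sigma$; this equality was essentially established inside the proof of Theorem~\ref{thm:Dmax}, since the trace-zero condition $\tr{\rho - \sigma} = 0$ forces $\tr{Q} = \tr{S}$ with $\rho - \sigma = Q - S$, whence $D(\rho,\sigma) = \tr{Q} = \tr{P(\rho-\sigma)}$.

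Applying this to the combined states, let $P$ denote the projector onto the positive support of $\sum_j p_j \rho_j - \sum_j q_j \sigma_j$. Then I would write
\begin{align*}
D\!\left(\sum_j p_j \rho_j, \sum_j q_j \sigma_j\right) &= \tr{P\left(\sum_j p_j \rho_j - \sum_j q_j \sigma_j\right)} \\
&= \sum_j p_j\, \tr{P(\rho_j-\sigma_j)} + \sum_j (p_j - q_j)\, \tr{P\sigma_j},
\end{align*}
where the second line follows from the trick of adding and subtracting $\sum_j p_j \tr{P\sigma_j}$.

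Next I would bound each of the two resulting sums. For the first sum, using the two-outcome POVM $\{P, \unit - P\}$ in Theorem~\ref{thm:Dmax} (and the fact that $\tr{(\rho_j-\sigma_j)} = 0$ makes the two classical contributions equal) gives $\tr{P(\rho_j-\sigma_j)} \leq |\tr{P(\rho_j-\sigma_j)}| \leq D(\rho_j,\sigma_j)$, so the first sum is at most $\sum_j p_j D(\rho_j,\sigma_j)$. For the second sum, I would use $0 \leq \tr{P\sigma_j} \leq 1$ to estimate $(p_j - q_j)\tr{P\sigma_j}$ by $(p_j - q_j)$ whenever $p_j \geq q_j$ and by $0$ whenever $p_j < q_j$, yielding
\begin{align*}
\sum_j (p_j - q_j)\, \tr{P\sigma_j} \;\leq\; \sum_{j:\, p_j \geq q_j} (p_j - q_j) \;=\; \tfrac{1}{2} \sum_j |p_j - q_j| \;=\; D(p_j,q_j),
\end{align*}
where the middle equality uses $\sum_j(p_j-q_j)=0$. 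Adding the two bounds completes the proof.

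I do not anticipate any major obstacle; the main subtlety is the non-symmetric role played by $p_j$ and $q_j$ (respectively $\rho_j$ and $\sigma_j$), which is resolved precisely by the add-and-subtract step above. Everything else reduces to the variational principle of Theorem~\ref{thm:Dmax} and elementary manipulations of classical probability distributions as in Sec.~\ref{sec:ClDistMeas}.
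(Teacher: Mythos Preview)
Your proof is correct; the paper leaves this theorem as an exercise, so there is no proof to compare against. The argument you give is the standard one (it is essentially the proof in Nielsen--Chuang): write $D$ via the optimal projector $P$, split the difference by adding and subtracting $\sum_j p_j\tr{P\sigma_j}$, and bound the two resulting sums separately using Theorem~\ref{thm:Dmax} and the trivial estimates $0\leq \tr{P\sigma_j}\leq 1$.
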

\begin{proof}
Exercise.
\end{proof}

\begin{remarks}\leavevmode
\begin{enumerate}[1)]
\item From the above theorem on the strong convexity of the trace distance, we directly find that the trace distance is convex in its first entry,
\begin{align}
D\left( \sum_j p_j \rho_j,\sigma\right) \leq \sum_j p_j D(\rho_j,\sigma).
\end{align}

\item By the symmetry property, the trace distance is also convex in the second entry. 
\end{enumerate}
\end{remarks}

\subsection{Fidelty}
Another important measure to compare two quantum states is the quantum fidelity, which is a generalization of the classical fidelity from Eq.~\eqref{eq:FidelityCl}.

\begin{definition}
The \underline{fidelity} between two quantum states $\rho$ and $\sigma$ is defined as
\begin{align}
F(\rho,\sigma)=\tr{ \sqrt{ \sqrt{\rho}\sigma \sqrt{\rho}}},
\end{align}
with $\sqrt{\cdot}$ the positive square root of a positive semi-definite matrix. 
\end{definition}

\begin{remarks}\leavevmode
\begin{enumerate}[1)]
\item The fidelity can be interpreted as a similarity measure. It is symmetric, $F(\rho,\sigma)=F(\sigma,\rho)$, and satisfies $0\leq F(\rho,\sigma)\leq 1$, with $F(\rho,\sigma)=0$ if and only if $\rho$ and $\sigma$ have orthogonal support, and $F(\rho,\sigma)=1$ if and only if $\rho=\sigma$. 

\item Similar to the trace distance, the fidelity is invariant under unitary transformations $U$,
\begin{align}
F(U\rho U^\dagger,U\sigma U^\dagger)=F(\rho,\sigma).
\end{align}

\item Similar to the trace distance, if $\rho$ and $\sigma$ commute, then their fidelity $F(\sigma,\rho)$ is equal to the classical fidelity $F(S,R)$ between their eigenvalues $R=\{r_x\}_x$ and $S=\{s_x\}_x$. 

\item The fidelity between a pure state $\ket{\psi}$ and an arbitrary state $\rho$ is
\begin{align}\label{eq:Frhomixed}
F(\ket{\psi},\rho)=\sqrt{\bra{\psi}\rho\ket{\psi}},
\end{align}
and the fidelity between two pure states $\ket{\psi}$ and $\ket{\phi}$ is
\begin{align}
F(\ket{\psi},\ket{\phi})=\abs{\bracket{\psi}{\phi}}.
\end{align}

\end{enumerate}
\end{remarks}
\begin{proof}
2)-4): Exercise. 
\end{proof}

\begin{theorem}[\textbf{Uhlmann's theorem}]\label{thm:Uhlmann}
Let $\rho$ and $\sigma$ be states on $\mathcal{H}$. Then 
\begin{align}\label{eq:Uhlmann}
F(\rho,\sigma)=\max_{\ket{\psi},\ket{\phi}} \abs{\bracket{\psi}{\phi}},
\end{align}
where the maximization is over all purifications $\ket{\psi} \in \mathcal{H} \otimes \mathcal{H} $ of $\rho$ and $\ket{\phi}\in \mathcal{H} \otimes \mathcal{H}$ of $\sigma$.
\end{theorem}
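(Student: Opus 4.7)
The plan is to parametrize all purifications concretely and then reduce the maximization to a standard matrix--trace inequality. First I would fix an ON-basis $\{\ket{i}\}_i$ of $\mathcal{H}$ and introduce the unnormalized maximally entangled vector $\ket{m}=\sum_i\ket{i}\otimes\ket{i}\in\mathcal{H}\otimes\mathcal{H}$. A direct computation gives $\trp{R}{(\sqrt{\rho}\otimes U)\ketbra{m}{m}(\sqrt{\rho}\otimes U^\dagger)}=\sqrt{\rho}\,UU^\dagger\sqrt{\rho}=\rho$ for every unitary $U$, so $(\sqrt{\rho}\otimes U)\ket{m}$ is a purification of $\rho$. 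Combined with the Schmidt decomposition, which writes any purification of $\rho$ in $\mathcal{H}\otimes\mathcal{H}$ as $\sum_j\sqrt{\lambda_j}\ket{p_j}\otimes\ket{r_j}$ with $\{\ket{p_j}\}$ the spectral basis of $\rho$ and $\{\ket{r_j}\}$ an arbitrary ON-set, this shows that \emph{every} purification of $\rho$ arises as $\ket{\psi}=(\sqrt{\rho}\otimes U_1)\ket{m}$ for some unitary $U_1$; analogously $\ket{\phi}=(\sqrt{\sigma}\otimes U_2)\ket{m}$.

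Next I would compute the overlap using the ``ricochet'' identity $\bra{m}(A\otimes B)\ket{m}=\sum_{i,j}A_{i,j}B_{i,j}=\tr{A B^{\top}}$, obtained by expanding $\ket{m}$ in the chosen basis. This yields
\begin{align*}
\bracket{\psi}{\phi}=\bra{m}\bigl(\sqrt{\rho}\sqrt{\sigma}\otimes U_1^\dagger U_2\bigr)\ket{m}=\tr{\sqrt{\rho}\sqrt{\sigma}\,W},
\end{align*}
where $W\equiv(U_1^\dagger U_2)^{\top}$. Since transposition is an involution on the unitary group and $U_1^\dagger U_2$ already sweeps it out as $U_1,U_2$ vary, $W$ ranges over all unitaries on $\mathcal{H}$, and the problem reduces to evaluating $\max_{W\text{ unitary}}\abs{\tr{\sqrt{\rho}\sqrt{\sigma}\,W}}$.

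For this trace maximization I would invoke the polar decomposition $\sqrt{\rho}\sqrt{\sigma}=V\abs{\sqrt{\rho}\sqrt{\sigma}}$ with $V$ unitary, rewrite $\tr{\sqrt{\rho}\sqrt{\sigma}\,W}=\tr{\abs{\sqrt{\rho}\sqrt{\sigma}}^{1/2}\,WV\,\abs{\sqrt{\rho}\sqrt{\sigma}}^{1/2}}$ by cyclicity, and apply Cauchy--Schwarz in the Hilbert--Schmidt inner product. This gives $\abs{\tr{\sqrt{\rho}\sqrt{\sigma}\,W}}\le\tr{\abs{\sqrt{\rho}\sqrt{\sigma}}}$, with equality attained at $W=V^\dagger$. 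Using $\abs{\sqrt{\rho}\sqrt{\sigma}}=\sqrt{(\sqrt{\rho}\sqrt{\sigma})^\dagger\sqrt{\rho}\sqrt{\sigma}}=\sqrt{\sqrt{\sigma}\rho\sqrt{\sigma}}$, I obtain
\begin{align*}
\max_{\ket{\psi},\ket{\phi}}\abs{\bracket{\psi}{\phi}}=\tr{\sqrt{\sqrt{\sigma}\rho\sqrt{\sigma}}}.
\end{align*}
To match the definition $F(\rho,\sigma)=\tr{\sqrt{\sqrt{\rho}\sigma\sqrt{\rho}}}$ I would invoke the standard spectral fact that $AB$ and $BA$ share the same non-zero eigenvalues: taking $A=\sqrt{\rho}\sqrt{\sigma}$ and $B=\sqrt{\sigma}\sqrt{\rho}$, the positive operators $\sqrt{\sigma}\rho\sqrt{\sigma}=BA$ and $\sqrt{\rho}\sigma\sqrt{\rho}=AB$ have identical non-zero spectra, hence their positive square roots have equal traces.

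The hard part is the bookkeeping in the first two steps, namely verifying that the parametrization $\ket{\psi}=(\sqrt{\rho}\otimes U_1)\ket{m}$ genuinely covers every purification and that the induced map $(U_1,U_2)\mapsto W$ is surjective onto the unitary group; once these are in place the trace bound and its saturation follow essentially automatically from polar decomposition plus Cauchy--Schwarz, and the spectral identification closing the argument is a one-line observation.
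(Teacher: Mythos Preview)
Your proposal is correct and follows essentially the same route as the paper: parametrize purifications via $\sqrt{\rho}$ acting on the unnormalized maximally entangled vector $\ket{m}$, use the identity $\bra{m}(A\otimes B)\ket{m}=\tr{AB^{\top}}$ to reduce the overlap to $\abs{\tr{\sqrt{\rho}\sqrt{\sigma}\,W}}$ over unitaries $W$, and then bound and saturate this trace via polar decomposition plus Cauchy--Schwarz. The only cosmetic difference is that the paper parametrizes purifications with two unitaries, $(\sqrt{\rho}\,U_Q\otimes U_R)\ket{m}$, whereas you use a single unitary on the reference side; these are equivalent by the ricochet identity $(A\otimes\unit)\ket{m}=(\unit\otimes A^{\top})\ket{m}$, so your version is slightly more economical but not substantively different.
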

\begin{proof}
In order to prove Uhlmann's theorem, we first have to prove three useful results: 
\begin{enumerate}[1)]
\item For any operator $A$ and unitary $U$, we have
\begin{align}\label{eq:UmodA}
\abs{\tr{AU}} \leq \tr{\abs{A}},
\end{align}
with equality if $U=V^\dagger$, where $A=V\abs{A}$ is the polar decomposition of $A$ [see Eq.~\eqref{eq:polardecomposition}]. 

\item Consider two quantum systems $Q$ and $R$ with the same Hilbert space $\mathcal{H}$, and ON-bases $\{\ket{i_Q}\}_i$ and $\{\ket{i_R}\}_i$. For operators $A$ on $Q$ and $B$ on $R$ we have with $\ket{m}=\sum_i \ket{i_Q}\ket{i_R}$
\begin{align}\label{eq:TrABtop}
\tr{A B^\top}=\bra{m} (A\otimes B)\ket{m},
\end{align}
where the matrix multiplication on the left hand side is with respect to the matrix representations of $A$ and $B$ in the basis $\{\ket{i_Q}\}$ and $\{\ket{i_R}\}$, respectively. 

\item For $Q$, $R$, and $\ket{m}$ as in 2), any purification $\ket{\psi}$ (living on the combined system of $Q$ and $R$) of $\rho$ (living on $Q$) can be written as
\begin{align}\label{eq:purificationsqrtrho}
\ket{\psi}=(\sqrt{\rho}U_Q \otimes U_R)\ket{m},
\end{align}
with $U_Q$ and $U_R$ unitary. 

\end{enumerate}

Let us start with proving 1): We use the polar decomposition $A=V\abs{A}$, and get
\begin{align}
\begin{split}
\abs{\tr{AU}} &= \abs{\tr{V \abs{A} U}}\\
&=\abs{\tr{V \sqrt{\abs{A}} \sqrt{\abs{A}}  U}}.
\end{split}
\end{align}
Next we use the Cauchy-Schwarz inequality $\abs{(A,B)}^2\leq (A,A)(B,B)$ for the Hilbert-Schmidt norm $(A,B)=\tr{A^\dagger B}$, yielding
\begin{align}
\begin{split}
\abs{\tr{AU}} &\leq \sqrt{  \tr{ \sqrt{\abs{A}} V^\dagger V \sqrt{\abs{A}} }  \tr{ U^\dagger \sqrt{\abs{A}}  \sqrt{\abs{A}} U } } \\
&= \sqrt{  \tr{ \sqrt{\abs{A}}  \sqrt{\abs{A}} }  \tr{  \sqrt{\abs{A}}  \sqrt{\abs{A}}  } } \\
&=\tr{\abs{A}}.
\end{split}
\end{align}
Clearly, the inequality saturates for $U=V^\dagger$, since then $|\tr{AV^\dagger}| =|\tr{V\abs{A}V^\dagger}| =|\tr{\abs{A}}|=\tr{\abs{A}}$. \\

Next we prove 2): Using the matrix multiplication $[A B^\top]_{j,k}=\sum_l A_{j,l}B^\top_{l,k}$ and the matrix elements $A_{j,l}=\bra{j_Q}A\ket{l_Q}$ and $B^\top_{l,k}=\bra{l_R}B^\top\ket{k_R}$ yields
\begin{align}
\begin{split}
\tr{A B^\top}&=\sum_j [A B^\top]_{jj}\\
&=\sum_{j,l} \bra{j_Q}A \ket{l_Q}\bra{l_R}B^\top\ket{j_R}\\
&=\sum_{j,l} \bra{j_Q}A\ket{l_Q}\bra{j_R}B\ket{l_R}\\
&=\left( \sum_j \bra{j_Q}\bra{j_R}\right) \left( A \otimes B \right) \left( \sum_l \ket{l_Q}\ket{l_R}\right)\\
&=\bra{m}(A \otimes B)\ket{m}.
\end{split}
\end{align}

In order to prove result 3), we first write
\begin{align}
\begin{split}
\ket{\psi}&=(\sqrt{\rho}U_Q \otimes U_R)\ket{m}\\
&=(\sqrt{\rho}U_Q \otimes U_R)\sum_i \ket{i_Q}\ket{i_R}\\
&=\sum_i \sqrt{\rho}U_Q \ket{i_Q} \otimes U_R\ket{i_R} \\
&=\sum_i \sqrt{\rho} \ket{\tilde{i}_Q} \ket{\tilde{i}_R},
\end{split} 
\end{align}
with $\ket{\tilde{i}_Q}=U_Q\ket{i_Q}$, and similar for $R$. We then find
\begin{align}
\begin{split}
\trp{R}{\ket{\psi}\bra{\psi}}&=\trp{R}{ \sum_{i,j} \sqrt{\rho}   \ketbra{\tilde{i}_Q}{\tilde{j}_Q} \sqrt{\rho} \otimes \ketbra{\tilde{i}_R}{\tilde{j}_R}}\\
&=\sum_i \sqrt{\rho}   \ketbra{\tilde{i}_Q}{\tilde{i}_Q} \sqrt{\rho}\\
&=\sqrt{\rho}   \sum_i  \ketbra{\tilde{i}_Q}{\tilde{i}_Q} \sqrt{\rho}\\
&=\sqrt{\rho}\sqrt{\rho}\\
&=\rho.
\end{split} 
\end{align}
That is, for any $U_Q$ and $U_R$, Eq.~\eqref{eq:purificationsqrtrho} is a purification of $\rho$. Moreover, by the Schmidt decomposition~\eqref{eq:Schmidt} (which is unique up to unitary transformations of the ON-bases), we find that any purification of $\rho$ on the combined system of $Q$ and $R$ can be written as in Eq.~\eqref{eq:purificationsqrtrho}.

We are now set to proof Uhlmann's theorem: Let us use Eq.~\eqref{eq:purificationsqrtrho} and suppose $\ket{\psi}=(\sqrt{\rho}U_Q\otimes U_R)\ket{m}$ is a purification of $\rho$ and $\ket{\phi}=(\sqrt{\sigma}V_Q\otimes V_R)\ket{m}$ is a purification of $\sigma$, with $U_Q,U_R,V_Q,V_R$ unitary. Then we get the inner product
\begin{align}
\begin{split}
\abs{\bracket{\psi}{\phi}}&= \abs{\bra{m} (U_Q^\dagger \sqrt{\rho}\otimes U_R^\dagger)  (\sqrt{\sigma}V_Q\otimes V_R)   \ket{m}}\\
&=\abs{\bra{m} (U_Q^\dagger \sqrt{\rho} \sqrt{\sigma}V_Q \otimes U_R^\dagger V_R)    \ket{m}}
\end{split}
\end{align}
Using Eq.~\eqref{eq:TrABtop} from above yields
\begin{align}
\begin{split}
\abs{\bracket{\psi}{\phi}}&=\abs{ \tr{ U_Q^\dagger \sqrt{\rho}\sqrt{\sigma}V_Q V_R^\top U_R^* }}\\
&=\abs{ \tr{ \sqrt{\rho}\sqrt{\sigma}V_Q V_R^\top U_R^* U_Q^\dagger }}.
\end{split}
\end{align}
By defining $U=V_Q V_R^\top U_R^* U_Q^\dagger $ and using Eq.~\eqref{eq:UmodA} from above, we find
\begin{align}
\begin{split}
\abs{\bracket{\psi}{\phi}}&=\abs{ \tr{ \sqrt{\rho}\sqrt{\sigma}U }}\\
&\leq \tr{\abs{\sqrt{\rho}\sqrt{\sigma}}}\\
&= \tr{\abs{\sqrt{\sigma}\sqrt{\rho}}}\\
&=\tr{\sqrt{\sqrt{\rho}   \sqrt{\sigma} \sqrt{\sigma} \sqrt{\rho}  }}\\
&=\tr{ \sqrt{ \sqrt{\rho} \sigma \sqrt{\rho}}  }.
\end{split}
\end{align}
To see that the equality can be reached, consider the polar decomposition $\sqrt{\rho}\sqrt{\sigma}=V|\sqrt{\rho}\sqrt{\sigma}|$. Hence, by choosing $V_R^\top =U_R^* =U_Q^\dagger=\unit$ and $V_Q=V^\dagger$ the inequality saturates. 
\end{proof}

\begin{remarks}
Uhlmann's theorem is very useful since many properties of the fidelity (which are listed above) directly follow from Eq.~\eqref{eq:Uhlmann}:
\begin{enumerate}[1)]
\item The fidelity is normalized, $0\leq F(\sigma,\rho)\leq 1$.
\item $F(\sigma,\rho)= 1$ if and only if $\sigma=\rho$.
\item $F(\sigma,\rho)= 0$ if and only if $\sigma$ and $\rho$ have orthogonal support.
\item The fidelity is symmetric $F(\sigma,\rho)=F(\rho,\sigma)$.
\item The fidelity is invariant under unitary transformations,
\begin{align}
F(U\rho U^\dagger,U\sigma U^\dagger)=\max_{\ket{\psi},\ket{\phi}} \abs{\bra{\psi}(U^\dagger \otimes \unit)(U \otimes \unit)\ket{\phi}} = \max_{\ket{\psi},\ket{\phi}} \abs{\bracket{\psi}{\phi}}=F(\rho,\sigma).
\end{align}
\end{enumerate}
\end{remarks}

Similar to the trace distance, the quantum fidelity is related to the classical fidelity via the following theorem: 
\begin{theorem}\label{thm:Fmin}
Let $\{E_m\}_m$ be a POVM, and let $p_m=\tr{E_m \rho}$ and $q_m=\tr{E_m \sigma}$ be the corresponding output probabilities by measuring $\rho$ and $\sigma$, respectively. Then 
\begin{align}\label{eq:Fmin}
F(\rho,\sigma)=\min_{\{E_m\}_m} F(p_m,q_m),
\end{align}
where the minimization is over all POVMs $\{E_m\}_m$.
\end{theorem}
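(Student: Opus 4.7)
The plan is to establish the two inequalities $F(\rho,\sigma)\leq F(p_m,q_m)$ for every POVM $\{E_m\}_m$, and equality for some specific choice. The first inequality will be proved by a Cauchy--Schwarz / triangle-inequality chain combined with the polar decomposition trick already used in the proof of Uhlmann's theorem; the second by exhibiting an explicit (projective) measurement.

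For the lower bound, I would first rewrite the classical summand as a product of Hilbert--Schmidt norms:
\begin{align*}
\sqrt{p_m q_m}=\sqrt{\tr{\sqrt{\rho}E_m\sqrt{\rho}}\,\tr{\sqrt{\sigma}E_m\sqrt{\sigma}}}=\norm{\sqrt{E_m}\sqrt{\rho}}_2\,\norm{\sqrt{E_m}\sqrt{\sigma}}_2.
\end{align*}
Since the HS norm is invariant under right multiplication by a unitary, I can freely insert an arbitrary unitary $U$ into the second factor, and then apply Cauchy--Schwarz in the HS inner product $(A,B)=\tr{A^\dagger B}$ to obtain $\sqrt{p_m q_m}\geq |\tr{\sqrt{\rho}\,E_m\sqrt{\sigma}\,U}|$. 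Summing over $m$, using the triangle inequality, and invoking the completeness relation $\sum_m E_m=\unit$ yields the $U$-dependent lower bound $F(p_m,q_m)\geq |\tr{\sqrt{\rho}\sqrt{\sigma}\,U}|$. Since this holds for all unitary $U$, I pick $U=V^\dagger$ from the polar decomposition $\sqrt{\rho}\sqrt{\sigma}=V|\sqrt{\rho}\sqrt{\sigma}|$, exactly as in result 1) of Uhlmann's theorem, to saturate this maximization. That choice gives $|\tr{\sqrt{\rho}\sqrt{\sigma}\,V^\dagger}|=\tr{|\sqrt{\rho}\sqrt{\sigma}|}$, and since $|\sqrt{\rho}\sqrt{\sigma}|$ and $|\sqrt{\sigma}\sqrt{\rho}|$ share the same spectrum, this trace equals $\tr{\sqrt{\sqrt{\rho}\sigma\sqrt{\rho}}}=F(\rho,\sigma)$. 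This proves $F(p_m,q_m)\geq F(\rho,\sigma)$ for every POVM.

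To prove that equality is attainable, I would first treat the case of invertible $\rho$ and consider the Hermitian operator
\begin{align*}
M=\rho^{-1/2}\sqrt{\sqrt{\rho}\,\sigma\,\sqrt{\rho}}\,\rho^{-1/2},
\end{align*}
whose spectral decomposition $M=\sum_m \mu_m\ketbra{m}{m}$ furnishes a projective POVM $E_m=\ketbra{m}{m}$. A direct calculation then shows that, with this choice, the Cauchy--Schwarz step and the triangle-inequality step simultaneously saturate: the vectors $\sqrt{E_m}\sqrt{\rho}$ and $\sqrt{E_m}\sqrt{\sigma}\,U$ become proportional (with the correct $U$ from the polar decomposition), and all the resulting trace terms share a common phase. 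A short computation then yields $\sum_m\sqrt{p_m q_m}=\tr{\sqrt{\sqrt{\rho}\sigma\sqrt{\rho}}}=F(\rho,\sigma)$. For singular $\rho$ I would restrict to the support of $\rho$ and apply the same construction there, or use a continuity argument, noting that both sides of~\eqref{eq:Fmin} are continuous in $\rho$ so that the invertible case suffices to conclude.

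The main obstacle is the second step: verifying that the specific POVM built from $M$ actually saturates all the inequalities in the first step. The Cauchy--Schwarz stage requires checking proportionality of two operator-valued ``vectors'' in HS-inner-product, and the triangle-inequality stage requires showing that the phases $\arg\tr{\sqrt{\rho}E_m\sqrt{\sigma}\,U}$ are all equal---both of which rely on carefully tracking how $E_m$ interacts with the polar decomposition of $\sqrt{\rho}\sqrt{\sigma}$. Handling the case where $\rho$ or $\sigma$ fails to be invertible (so $M$ is not defined on the full space) is a secondary technical point, best dealt with either by passing to supports or by an $\epsilon$-perturbation and continuity.
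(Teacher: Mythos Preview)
Your proposal is correct and follows essentially the same route as the paper: the inequality $F(\rho,\sigma)\le F(p_m,q_m)$ via Cauchy--Schwarz in the Hilbert--Schmidt inner product combined with the polar decomposition of $\sqrt{\rho}\sqrt{\sigma}$, and attainment via the projective measurement built from the eigenbasis of $M=\rho^{-1/2}\sqrt{\sqrt{\rho}\sigma\sqrt{\rho}}\,\rho^{-1/2}$ (the paper calls it $D$), with continuity handling the singular case. The only cosmetic difference is that you run the inequality chain from the classical side downward (hence the explicit triangle-inequality step), whereas the paper starts from $F(\rho,\sigma)=\tr{U\sqrt{\sigma}\sqrt{\rho}}$, inserts $\sum_m E_m=\unit$, and bounds upward.
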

\begin{proof}
First note that by the polar decomposition we have $\sqrt{\sigma}\sqrt{\rho}=U^\dagger |\sqrt{\sigma}\sqrt{\rho}|=U^\dagger \sqrt{\sqrt{\rho} \sigma \sqrt{\rho}}$, and, hence, $\sqrt{\sqrt{\rho} \sigma \sqrt{\rho}}=U \sqrt{\sigma}\sqrt{\rho}$. The fidelity can then be written as
\begin{align}
\begin{split}
F(\rho,\sigma)&=\tr{U \sqrt{\sigma}\sqrt{\rho}} \\
&=\sum_m \tr{ U \sqrt{\sigma} \sqrt{E_m} \sqrt{E_m} \sqrt{\rho}},
\end{split}
\end{align}
where we inserted the completeness relation $\sum_m E_m=\unit$. Next we use the Cauchy-Schwarz inequality $\tr{A^\dagger B} \leq \sqrt{ \tr{A^\dagger A} \tr{B^\dagger B}}$ for the Hilbert-Schmidt norm, with $A^\dagger = U \sqrt{\sigma} \sqrt{E_m}$ and $B=\sqrt{E_m} \sqrt{\rho}$,
\begin{align}\label{eq:Fidminproof}
\begin{split}
F(\rho,\sigma)&\leq \sum_m \sqrt{ \tr{  U \sqrt{\sigma} \sqrt{E_m} \sqrt{E_m} \sqrt{\sigma} U^\dagger}  \tr{\sqrt{\rho} \sqrt{E_m} \sqrt{E_m} \sqrt{\rho}} } \\
&=\sum_m \sqrt{\tr{E_m \sigma} \tr{E_m \rho}}\\
&= \sum_m \sqrt{q_m p_m}\\
&=F(p_m,q_m).
\end{split}
\end{align}
That is, we found $F(\rho,\sigma)\leq \min_{\{E_m\}_m} F(p_m,q_m)$. In order to see that there is a POVM for which the inequality saturates, let us consider the use of the Cauchy-Schwarz inequality in~\eqref{eq:Fidminproof}. Here the inequality saturates if $c_m A^\dagger = B^\dagger$, for any set of coefficients $c_m \in \mathbb{C}$. That is, we require
\begin{align}
c_m U \sqrt{\sigma} \sqrt{E_m} = \sqrt{\rho} \sqrt{E_m}.
\end{align}
Now assume that $\rho$ is invertible (which is the case if it is positive definite). Then
\begin{align}
\begin{split}
&\ c_m U \sqrt{\sigma} \sqrt{\rho} \ \rho^{-1/2} \sqrt{E_m} = \sqrt{\rho} \sqrt{E_m}\\
\Leftrightarrow &\  c_m \sqrt{\sqrt{\rho} \sigma \sqrt{\rho}} \ \rho^{-1/2} \sqrt{E_m} = \sqrt{\rho} \sqrt{E_m}\\
\Leftrightarrow &\  c_m \underbrace{\rho^{-1/2} \sqrt{\sqrt{\rho} \sigma \sqrt{\rho}} \ \rho^{-1/2}}_{D} \sqrt{E_m} =  \sqrt{E_m}\\
\Leftrightarrow &\  (c_m D-\unit) \sqrt{E_m} =0.
\end{split}
\end{align}
Now let $D=\sum_j d_j \ketbra{d_j}{d_j}$ be the spectral decomposition of $D$, then choose $E_m = \ketbra{d_m}{d_m}$ and $c_m=1/d_m$, such that
\begin{align}
\begin{split}
&\ \left( \frac{1}{d_m} \sum_j d_j \ketbra{d_j}{d_j} - \unit \right) \ketbra{d_m}{d_m}=0\\
&\Leftrightarrow\ \frac{1}{d_m} d_m \ketbra{d_m}{d_m}-\ketbra{d_m}{d_m}=0.
\end{split}
\end{align}
Since this is true, we found a POVM for which the inequality in~\eqref{eq:Fidminproof} saturates if $\rho$ is invertible. The case of non-invertible $\rho$ (i.e. if $\rho$ has eigenvalues equal to zero) follows from continuity. 
\end{proof}

\begin{theorem}[\textbf{Monotonicity of the fidelity}]
Let $\rho$ and $\sigma$ be density operators on $\mathcal{H}$. If $\mathcal{E}$ is a trace preserving quantum operation, then 
\begin{align}
F\left(\mathcal{E}(\rho), \mathcal{E}(\sigma)\right) \geq F(\rho,\sigma). 
\end{align}
\end{theorem}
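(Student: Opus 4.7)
The plan is to prove monotonicity by combining Uhlmann's theorem (Theorem~\ref{thm:Uhlmann}) with the system-environment model of trace-preserving quantum operations (Theorem~\ref{thm:EtoU}). The intuition is simple: fidelity, being the maximum overlap between purifications, can only go up when we trace out more systems, and every trace-preserving quantum operation is nothing more than a unitary followed by tracing out an environment.

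Concretely, I would first invoke Uhlmann's theorem to pick purifications $\ket{\psi}$ of $\rho$ and $\ket{\phi}$ of $\sigma$ on $\mathcal{H}\otimes \mathcal{H}_R$ that achieve the optimum, so that $F(\rho,\sigma) = |\bracket{\psi}{\phi}|$. Next, by Theorem~\ref{thm:EtoU}, there exists an environment $\mathcal{H}_\mathrm{E}$ with initial pure state $\ket{e_0}$ and a unitary $U$ on $\mathcal{H}\otimes \mathcal{H}_\mathrm{E}$ such that
\begin{align*}
\mathcal{E}(\rho) = \trp{E}{U(\rho\otimes \ketbra{e_0}{e_0})U^\dagger}, \qquad \mathcal{E}(\sigma) = \trp{E}{U(\sigma\otimes \ketbra{e_0}{e_0})U^\dagger}.
\end{align*}
The states $(U\otimes \unit_R)\ket{\psi}\otimes \ket{e_0}$ and $(U\otimes \unit_R)\ket{\phi}\otimes \ket{e_0}$ (after grouping $\mathcal{H}_R$ and $\mathcal{H}_\mathrm{E}$ into a combined reference system) then constitute purifications of $\mathcal{E}(\rho)$ and $\mathcal{E}(\sigma)$ respectively, as can be checked by tracing out $\mathcal{H}_R\otimes\mathcal{H}_\mathrm{E}$ using the purification property of $\ket{\psi},\ket{\phi}$ together with the definition of $\mathcal{E}$.

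Finally, I would apply Uhlmann's theorem in the other direction: since $F(\mathcal{E}(\rho),\mathcal{E}(\sigma))$ is the \emph{maximum} overlap over all purifications, it must be at least as large as the overlap of these particular purifications. Unitarity of $U$ and normalization of $\ket{e_0}$ give
\begin{align*}
F(\mathcal{E}(\rho),\mathcal{E}(\sigma)) \geq \abs{ \bra{\psi}\bra{e_0}(U^\dagger\otimes\unit_R)(U\otimes\unit_R)\ket{\phi}\ket{e_0}} = \abs{\bracket{\psi}{\phi}\bracket{e_0}{e_0}} = F(\rho,\sigma).
\end{align*}
The main (minor) obstacle is the bookkeeping of Hilbert spaces: Uhlmann's theorem in the excerpt is stated for purifications on $\mathcal{H}\otimes\mathcal{H}$, whereas our constructed purifications naturally live on the enlarged space $\mathcal{H}\otimes(\mathcal{H}_R\otimes\mathcal{H}_\mathrm{E})$. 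This is not a genuine difficulty, since Uhlmann's formula holds for any reference system of sufficient dimension (purifications of a density operator on different reference systems are related by an isometry that leaves overlaps invariant), but it is the one place where the proof needs a short justification rather than a direct citation.
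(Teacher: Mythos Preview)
Your proposal is correct and follows essentially the same route as the paper's intended proof (which is commented out in the source but present there): invoke Uhlmann's theorem to pick optimal purifications $\ket{\psi},\ket{\phi}$ with $F(\rho,\sigma)=|\bracket{\psi}{\phi}|$, use the system--environment model of Theorem~\ref{thm:EtoU} to write $\mathcal{E}$ as a unitary $U$ acting on system plus environment $\ket{e_0}$, observe that $(U\otimes\unit_R)\ket{\psi}\ket{e_0}$ and $(U\otimes\unit_R)\ket{\phi}\ket{e_0}$ purify $\mathcal{E}(\rho)$ and $\mathcal{E}(\sigma)$, and bound $F(\mathcal{E}(\rho),\mathcal{E}(\sigma))$ from below by their overlap. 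Your remark about the reference-space dimension in Uhlmann's theorem is a fair caveat, and your resolution via isometric embedding is the standard one.
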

\begin{proof}
Exercise.
\com{
We prove the theorem using theorem~\ref{thm:EtoU} and Uhlmann's theorem~\ref{thm:Uhlmann}. Let $\ket{\psi}\in\mathcal{H}_\text{S}\otimes \mathcal{H}_\text{R}$ be a purification of $\rho$, and let $\ket{\phi}\in\mathcal{H}_\text{S}\otimes \mathcal{H}_\text{R}$ be a purification of $\sigma$ (with $\mathcal{H}_\text{S}= \mathcal{H}_\text{R}$), such that $F(\rho,\sigma)=|\bracket{\psi}{\phi}|$. Further introduce an environment initially in the pure state $\ket{0} \in \mathcal{H}_\text{E}$. According to theorem~\eqref{thm:EtoU}, for $\mathcal{E}$ a trace preserving quantum operation, we can find a unitary $U$ (acting on the principal system $S$ and the environment $E$), such that $U\ket{\psi}\ket{0}$ is a purification of $\mathcal{E}(\rho)$, and $U\ket{\phi}\ket{0}$ is a purification of $\mathcal{E}(\sigma)$. Using Uhlmann's theorem~\eqref{thm:Uhlmann} then yields
\begin{align}
\begin{split}
F\left(\mathcal{E}(\rho), \mathcal{E}(\sigma)\right) &\geq \abs{\bra{\psi}\bra{0} U^\dagger U\ket{\phi}\ket{0}}\\
&=\abs{\bracket{\psi}{\phi}}\\
&=F(\rho,\sigma).
\end{split}
\end{align}
}
\end{proof}

\begin{theorem}[\textbf{Strong concavity of the fidelity}]\label{thm:Fconcav}
Let $\{p_j\}_j$ and $\{q_j\}_j$ be probability distributions, and $\{\rho_j\}_j$ and $\{\sigma_j\}_j$ be density operators over the same index set. Then 
\begin{align}
F\left( \sum_j p_j \rho_j, \sum_j q_j \sigma_j \right) \geq \sum_j \sqrt{p_jq_j} F(\rho_j,\sigma_j). 
\end{align}
\end{theorem}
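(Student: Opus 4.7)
The strategy I would pursue relies on Uhlmann's theorem (Theorem~\ref{thm:Uhlmann}), which characterizes the fidelity as a maximum of inner products between purifications. The plan is to construct explicit purifications of the two mixtures $\sum_j p_j \rho_j$ and $\sum_j q_j \sigma_j$ whose overlap is exactly the right-hand side $\sum_j \sqrt{p_j q_j} F(\rho_j,\sigma_j)$; then Uhlmann's theorem immediately gives the desired inequality because the true fidelity is the maximum over \emph{all} purifications.

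To build these purifications, I would first apply Uhlmann's theorem to each pair $(\rho_j,\sigma_j)$ individually, obtaining purifications $\ket{\psi_j},\ket{\phi_j}\in\mathcal{H}\otimes\mathcal{H}_R$ (taking $\mathcal{H}_R$ large enough to serve all $j$) such that $\bracket{\psi_j}{\phi_j}=F(\rho_j,\sigma_j)\geq 0$ (adjusting a global phase on each $\ket{\psi_j}$ if needed). I would then enlarge the reference by an ancilla $\mathcal{H}_A$ carrying the ``index'' information, with ON-basis $\{\ket{j}\}_j$, and define
\begin{align*}
\ket{\Psi}=\sum_j \sqrt{p_j}\,\ket{\psi_j}\otimes\ket{j},\qquad \ket{\Phi}=\sum_j \sqrt{q_j}\,\ket{\phi_j}\otimes\ket{j}.
\end{align*}
A direct computation using the orthonormality of $\{\ket{j}\}_j$ and the purification property shows that tracing out $\mathcal{H}_R\otimes\mathcal{H}_A$ yields
\begin{align*}
\trp{R,A}{\ketbra{\Psi}{\Psi}}=\sum_j p_j\,\rho_j,\qquad \trp{R,A}{\ketbra{\Phi}{\Phi}}=\sum_j q_j\,\sigma_j,
\end{align*}
so $\ket{\Psi}$ and $\ket{\Phi}$ are legitimate purifications of the two mixtures.

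The finishing step is to compute
\begin{align*}
\bracket{\Psi}{\Phi}=\sum_{j,k}\sqrt{p_j q_k}\,\bracket{\psi_j}{\phi_k}\,\bracket{j}{k}=\sum_j \sqrt{p_j q_j}\,\bracket{\psi_j}{\phi_j}=\sum_j \sqrt{p_j q_j}\,F(\rho_j,\sigma_j),
\end{align*}
and then apply Uhlmann's theorem to the pair $(\sum_j p_j\rho_j,\sum_j q_j\sigma_j)$, which gives $F\!\left(\sum_j p_j\rho_j,\sum_j q_j\sigma_j\right)\geq|\bracket{\Psi}{\Phi}|$, i.e.\ exactly the claim.

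The only genuinely delicate point is the preparatory step of choosing the purifications $\ket{\psi_j},\ket{\phi_j}$ to live in a \emph{common} reference space and, crucially, to have real and non-negative overlap $F(\rho_j,\sigma_j)$. The former is harmless (just embed into a sufficiently large $\mathcal{H}_R$), while the latter requires noting that Uhlmann's optimum $|\bracket{\psi_j}{\phi_j}|=F(\rho_j,\sigma_j)$ only fixes the modulus; multiplying one of the two purifications by the appropriate unit phase makes the inner product non-negative without disturbing the purification property. Once this bookkeeping is in place, everything else is a two-line calculation.
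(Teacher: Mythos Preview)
Your proof is correct and follows the standard route via Uhlmann's theorem. The paper itself does not give a proof of this statement---it is left as an exercise---so there is no detailed argument to compare against. That said, your construction (attaching an index ancilla and assembling purifications of the mixtures from optimal purifications of the individual pairs) is precisely the intended approach, and your handling of the two bookkeeping issues (a common reference space and the phase choice to make each $\bracket{\psi_j}{\phi_j}$ real and non-negative) is exactly right.
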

\begin{proof}
Exercise.
\end{proof}

\begin{remarks}\leavevmode
\begin{enumerate}[1)]
\item Note that with the help of theorem~\ref{thm:Fconcav} we find that the fidelity is concave in its first entry, 
\begin{align}
F\left( \sum_j p_j \rho_j,\sigma\right) \geq \sum_j p_j F(\rho_j,\sigma).
\end{align}

\item By the symmetry property, the fidelity is also concave in the second entry. 
\end{enumerate}
\end{remarks}

By comparing the properties of the fidelity (theorems~\ref{thm:Fmin}-\ref{thm:Fconcav}) with the properties of the trace distance (theorems~\ref{thm:Dmax}-\ref{thm:Dconvex}), we see that the fidelity behaves ``opposite'' to the trace distance. This suggests that both measures can be related to each other.

\subsection{Relation between trace distance and fidelity}
 Both, the fidelity and the trace distance can be used to quantify the closeness of two quantum states. In principle it does not matter which quantity to use, and often one can prove similar properties for both quantifiers. This is because of their close relation. 
 \begin{theorem}[\textbf{Fuchs-van de Graaf inequalities}]
 The fidelity $F(\rho,\sigma)$ and the trace distance $D(\rho,\sigma)$ between two quantum states $\rho$ and $\sigma$ satisfy
 \begin{align}\label{eq:Fuchs}
 1-F(\rho,\sigma) \leq D(\rho,\sigma) \leq \sqrt{1-F^2(\rho,\sigma)}.
 \end{align}
 \end{theorem}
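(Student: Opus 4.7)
The plan is to handle the two inequalities separately, reducing each to a simpler statement (about pure states for the upper bound, about classical distributions for the lower bound) and then lifting via results already proven in the chapter.

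For the upper bound $D(\rho,\sigma)\leq\sqrt{1-F^2(\rho,\sigma)}$, I would first establish the pure-state identity $D(\ketbra{\psi}{\psi},\ketbra{\phi}{\phi})=\sqrt{1-|\bracket{\psi}{\phi}|^2}$ by a direct $2\times 2$ diagonalization: in the subspace spanned by $\ket{\psi}$ and $\ket{\phi}$, write $\ket{\phi}=\alpha\ket{\psi}+\sqrt{1-|\alpha|^2}\ket{\psi^\perp}$ with $\alpha=\bracket{\psi}{\phi}$, and observe that the Hermitian difference $\ketbra{\psi}{\psi}-\ketbra{\phi}{\phi}$ has eigenvalues $\pm\sqrt{1-|\alpha|^2}$. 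Then, by Uhlmann's theorem~\ref{thm:Uhlmann}, I pick purifications $\ket{\psi},\ket{\phi}\in\mathcal{H}\otimes\mathcal{H}$ of $\rho$ and $\sigma$ attaining $|\bracket{\psi}{\phi}|=F(\rho,\sigma)$. Since the partial trace is a trace-preserving quantum operation, contractivity (Theorem~\ref{th:tpqoac}) yields $D(\rho,\sigma)\leq D(\ketbra{\psi}{\psi},\ketbra{\phi}{\phi})=\sqrt{1-F^2(\rho,\sigma)}$.

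For the lower bound $1-F(\rho,\sigma)\leq D(\rho,\sigma)$, I would first prove the classical analogue for distributions $\{p_x\},\{q_x\}$ using $2(1-F(p_x,q_x))=\sum_x(\sqrt{p_x}-\sqrt{q_x})^2$ together with the factorization $|p_x-q_x|=|\sqrt{p_x}-\sqrt{q_x}|(\sqrt{p_x}+\sqrt{q_x})$; since $\sqrt{p_x}+\sqrt{q_x}\geq|\sqrt{p_x}-\sqrt{q_x}|$, a term-by-term comparison gives $1-F(p_x,q_x)\leq D(p_x,q_x)$. To lift this to the quantum setting, I select a POVM $\{E_m\}_m$ that attains the \emph{minimum} in Theorem~\ref{thm:Fmin}, so that $F(p_m,q_m)=F(\rho,\sigma)$ for $p_m=\tr{E_m\rho}$ and $q_m=\tr{E_m\sigma}$. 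Because the trace distance is the \emph{maximum} over POVMs (Theorem~\ref{thm:Dmax}), the same POVM satisfies $D(p_m,q_m)\leq D(\rho,\sigma)$, and chaining with the classical bound yields $1-F(\rho,\sigma)=1-F(p_m,q_m)\leq D(p_m,q_m)\leq D(\rho,\sigma)$.

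The main conceptual subtlety is running the POVM optimization in the correct direction: because trace distance is a supremum and fidelity an infimum over POVMs, one must pick the POVM minimizing fidelity (rather than the one maximizing trace distance) so that the classical inequality chains the right way. Beyond that, both bounds are essentially immediate consequences of Uhlmann's theorem, contractivity of trace-preserving operations, and the classical analogues from Sec.~\ref{sec:ClDistMeas}.
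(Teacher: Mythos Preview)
Your proposal is correct and follows essentially the same route as the paper: Uhlmann plus contractivity of the partial trace for the upper bound, and the fidelity-minimizing POVM combined with the classical square-root inequality and Theorem~\ref{thm:Dmax} for the lower bound. The only cosmetic difference is that you spell out the $2\times2$ diagonalization for the pure-state identity $D(\ket{\psi},\ket{\phi})=\sqrt{1-|\bracket{\psi}{\phi}|^2}$, whereas the paper defers that step to the exercises.
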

\begin{proof}
We start with proving the second inequality $D(\rho,\sigma) \leq \sqrt{1-F^2(\rho,\sigma)}$: Let $\ket{\psi}$ be a purification of $\rho$, and let $\ket{\phi}$ be a purification of $\sigma$, such that $F(\rho,\sigma)=|\bracket{\psi}{\phi}|=F(\ket{\psi},\ket{\phi})$. Using that the partial trace is a trace preserving quantum operation, Eq.~\eqref{eq:Dtpqo} yields $D(\rho,\sigma) \leq D(\ket{\psi},\ket{\phi})$. As you will prove in the exercise [see remark 2) below], we further have $D(\ket{\psi},\ket{\phi})=\sqrt{1-F^2(\ket{\psi},\ket{\phi})}$, such that
\begin{align}
\begin{split}
D(\rho,\sigma)& \leq \sqrt{1-F^2(\ket{\psi},\ket{\phi})}\\
&=\sqrt{1- F^2(\rho,\sigma)}.
\end{split}
\end{align}

Next, we prove the first inequality $1-F(\rho,\sigma) \leq D(\rho,\sigma)$: Consider the POVM $\{E_m\}$, which minimizes~\eqref{eq:Fmin}, such that $F(\rho,\sigma)=\sum_m \sqrt{p_mq_m}$, with $p_m=\tr{E_m \rho}$ and $q_m=\tr{E_m \sigma}$. On the one hand, we then have
\begin{align}\label{eq:DFp1}
\begin{split}
\sum_m(\sqrt{p_m}-\sqrt{q_m})^2 &= \sum_m p_m + \sum_m q_m - 2 \sum_m \sqrt{p_mq_m}\\
&=2-2F(\rho,\sigma).
\end{split}
\end{align}
On the other hand, using $|\sqrt{p_m}-\sqrt{q_m}| \leq |\sqrt{p_m} + \sqrt{q_m}|$, we have
\begin{align}\label{eq:DFp2}
\begin{split}
\sum_m(\sqrt{p_m}-\sqrt{q_m})^2 & \leq \sum_m \abs{\sqrt{p_m} - \sqrt{q_m}} \abs{\sqrt{p_m} + \sqrt{q_m}} \\
&= \sum_m \abs{p_m-q_m}\\
&=2 D(p_m,q_m)\\
&\overset{\eqref{eq:Dmax}}{\leq} 2 D(\rho,\sigma).
\end{split}
\end{align}
By comparing Eqs~\eqref{eq:DFp1} and~\eqref{eq:DFp2}, we obtain $1-F(\rho,\sigma)\leq D(\rho,\sigma)$.
\end{proof}

\begin{remarks}\leavevmode
\begin{enumerate}[1)]
\item If one compares a pure state $\ket{\psi}$ with a mixed state $\rho$, the lower bound in~\eqref{eq:Fuchs} can be made tighter,
\begin{align}
1-F(\ket{\psi},\rho)^2 \leq D(\ket{\psi},\rho).
\end{align}

\item When comparing pure states, the fidelity and the trace distance are related by the equality
\begin{align}
D(\ket{\psi},\ket{\phi}) = \sqrt{1-F^2(\ket{\psi},\ket{\phi})}.
\end{align}
\end{enumerate}
\end{remarks}
\begin{proof}
1) and 2): Exercise. 
\end{proof}

\subsection{Quantifying the effect of quantum operations} 
Our initial motivation to study distance measures was to quantify how much the quantum map $\rho'=\mathcal{E}(\rho)$ (e.g. describing a quantum channel or the storage in an imperfect quantum memory) affects the initial state $\rho$ (e.g. described the information of a music track). Using the quantum fidelity and trace distance, we can now quantify to which extent the quantum map $\mathcal{E}$ affects our initial state $\rho$ by calculating $F(\rho,\rho')$ and $D(\rho,\rho')$, respectively. 
\begin{example}
Consider the Depolarizing channel with map $\mathcal{E}(\rho)=(1-p)\rho + p \unit/2$ from Eq.~\eqref{eq:DepolChannel} for a qubit initially in the pure state $\rho=\ketbra{\psi}{\psi}$. For the fideltiy $F(\ket{\psi},\mathcal{E}(\ketbra{\psi}{\psi}))$ we use Eq.~\eqref{eq:Frhomixed}, and obtain
\begin{align}\label{eq:ExDpchF}
\begin{split}
F(\ket{\psi},\mathcal{E}(\ketbra{\psi}{\psi}))&=\sqrt{ \bra{\psi} \left[(1-p) \ketbra{\psi}{\psi} + p \frac{\unit}{2} \right] \ket{\psi}}\\
&=\sqrt{1-\frac{p}{2}}.
\end{split}
\end{align}
On the other hand, using the trace distance~\eqref{eq:Tracedistance}, we have
\begin{align}
\begin{split}
D(\ket{\psi},\mathcal{E}(\ketbra{\psi}{\psi}))&=\frac{1}{2} \tr{\abs{ \ketbra{\psi}{\psi}- (1-p) \ketbra{\psi}{\psi} - p \frac{\unit}{2}}}\\
&=p\frac{1}{2} \tr{\abs{ \ketbra{\psi}{\psi} - \frac{\unit}{2}}}.
\end{split}
\end{align}
Since $\ketbra{\psi}{\psi}$ and $\unit/2$ commute, their trace distance is equal to the classical trace distance between their eigenvalues $\{1,0\}$ and $\{1/2,1/2\}$, i.e. $D(\ketbra{\psi}{\psi},\unit/2)=1/2(|1-1/2|+|0-1/2|)=1/2$, leading to
\begin{align}\label{eq:ExDpchD}
D(\ket{\psi},\mathcal{E}(\ketbra{\psi}{\psi}))&=\frac{p}{2}.
\end{align}
As intuitively expected, we see that the higher the probability $p$ to depolarize the initial state, the smaller the fidelity~\eqref{eq:ExDpchF}, and the larger the distance~\eqref{eq:ExDpchD} between the initial state $\rho$ and the final state $\rho'=\mathcal{E}(\rho)$. 
\end{example}

\begin{example}
As another interesting example, consider the bit flip channel with operation elements $E_0=\sqrt{1-p}\unit$ and $E_1=\sqrt{p} X$ from Eq.~\eqref{eq:BitflipOpel}. Similarly, for a qubit initially in the pure state $\rho=\ketbra{\psi}{\psi}$, the fidelity under the action of $\mathcal{E}$ becomes
\begin{align}
\begin{split}
F(\ket{\psi},\mathcal{E}(\ketbra{\psi}{\psi}))&=\sqrt{ \bra{\psi}\left[ (1-p)\ketbra{\psi}{\psi} + p X\ketbra{\psi}{\psi}X \right]\ket{\psi} }\\
&=\sqrt{1-p+p\abs{\bra{\psi}X\ket{\psi}}^2}.
\end{split}
\end{align}
Now, by the factor $\abs{\bra{\psi}X\ket{\psi}}^2$, the fidelity depends on the initial state. For example, if $\ket{\psi}=(\ket{0}+\ket{1})/\sqrt{2}$, we have $\abs{\bra{\psi}X\ket{\psi}}^2=1$, such that $F(\ket{\psi},\mathcal{E}(\ketbra{\psi}{\psi}))=1$. That is, the bit flip channel does not affect our initial state, since it flips $\ket{0}\rightarrow\ket{1}$ and $\ket{1}\rightarrow\ket{0}$, and, hence, $\ket{\psi}\rightarrow\ket{\psi}$. On the other hand, for the initial sate $\ket{\psi}=\ket{0}$, we get $\abs{\bra{\psi}X\ket{\psi}}^2=0$, and, accordingly, $F(\ket{\psi},\mathcal{E}(\ketbra{\psi}{\psi}))=\sqrt{1-p}$. In this case, the bit flip channel strongly affect our initial state by flipping $\ket{\psi}=\ket{0}\rightarrow\ket{1}$ with probability $p$. Note that if we don't know the initial state in advance, we can quantify the worst-case behavior by minimizing over all possible initial states, yielding $F_\text{min}(\mathcal{E})=\min_{\ket{\psi}} F(\ket{\psi},\mathcal{E}(\ketbra{\psi}{\psi})=\sqrt{1-p}$.
\end{example}

Distance measures are not only useful to quantify how much a quantum state changes if it is transmitted through a channel or stored in a memory, they also allow us to quantify how well a certain quantum operation on a state $\rho$ has been performed (e.g. a gate operation in the circuit of a quantum computational task). To this end, suppose we want to perform an operation ideally described by the unitary $U$, however, due to imperfections the performed operation is according to the map $\mathcal{E}$. We can then quantify the success of the gate operation by the \underline{gate fidelity}
\begin{align}\label{eq:GateFidelity}
F(U,\mathcal{E})=\min_{\ket{\psi}} F(U \ket{\psi}, \mathcal{E}(\ketbra{\psi}{\psi})),
\end{align}
which measures the similarity between the desired operation $U$, and the noisy operation $\mathcal{E}$.

\begin{example}
As an example, suppose we want to perform the quantum NOT gate (i.e., the Pauli-$X$ gate), but, instead, the noisy operation acts according to $\mathcal{E}(\rho)=(1-p) X\rho X + p Z \rho Z$. The gate fidelity~\eqref{eq:GateFidelity} then reads
\begin{align}
\begin{split}
F(X,\mathcal{E})&=\min_{\ket{\psi}} \sqrt{ \bra{\psi}X \left[ (1-p) X\ketbra{\psi}{\psi} X + p Z \ketbra{\psi}{\psi} Z   \right]    X\ket{\psi}}\\
&=\min_{\ket{\psi}} \sqrt{1-p + p \abs{ \bra{\psi}ZX \ket{\psi}}^2}\\
&=\sqrt{1-p}.
\end{split}
\end{align}
\end{example}

\section{Entropy and information} 
In the previous sections, we discussed and quantified how quantum maps affect quantum states, for example, if Alice wants to communicate with Bob by sending him information encoded in a quantum state $\rho$ through a quantum channel. However, yet we didn't discuss how much information a state $\rho$ contains, and, how much information will be lost when communicating through a noisy quantum channel. In order to treat such scenarios from an information theoretic point of view, we now start with quantifying the amount of information using entropy measures.

\subsection{Shannon entropy} \label{sec:Shannon}
Before we start discussing the information content of quantum states, let us first consider classical information as already introduced in our discussion about classical distance measures in Sec.~\ref{sec:ClDistMeas}. That is, we consider a message (think of Alice communicating with Bob), given by a string of letters from an alphabet $\{a_1, \dots, a_n\}$ with statistically independent letters $a_x$, which occur with a priori probabilities $p_x$ (note that these probabilities are a priori known). Classical information can, thus, be modeled by a random variable $X$ whose outcomes $a_1,\dots,a_n$ occur with probability $p_1,\dots,p_n$. 

Now suppose that Alice communicates the outcome of the random variable $X$ with Bob. If the a-priori probabilities are $p_1=1$ and $p_x=0$ for $x\neq 1$, then the message contains no information since Bob can be certain about the outcome before he learns the message. On the other hand, if $p_x=1/n$ for all $x$, then Bob is maximally uncertain about the outcome of the random variable, and the message has maximal information. In general, this information content is quantified by the Shannon entropy: 
\begin{definition}
The \underline{Shannon entropy} associated with a random variable $X$, whose outcomes $a_1,\dots,a_n$ occur with probability $p_1,\dots,p_n$, is
\begin{align}\label{eq:ShannonH}
H(X)\equiv H(p_1,\dots,p_n)=-\sum_{x=1}^n p_x \log p_x,
\end{align}
with (here and below) the logarithm taken to base $2$.
\end{definition}
The Shannon entropy can thus be seen as quantifying the uncertainty before we learn the outcome of $X$, or, equivalently, as quantifying the amount of information gained after learning the outcome. The reason why the Shannon entropy is defined in this particular form is due to $H(X)$ also quantifying the minimal physical resources required to store the information, i.e. it quantifies how many bits per source symbol are required to store the message. This is known as Shannon's noiseless coding theorem. Details as well as a proof of the theorem can be found in \cite{Nielsen-QC-2011}.

\begin{example}
Suppose Alice throws a balanced dice and communicates her result with Bob, who knows about Alice throwing a dice, but he doesn't know the outcome. In this case, Alice's message is chosen from the alphabet $\{1,2,3,4,5,6\}$ with a-priori probabilities $p_x=1/6$ for all $x=1,\dots,6$. Since the dice is balanced, the Shannon entropy gets maximal, $H(X)=-\log(1/6)=\log(6)\approx 2.585$. That is, it quantifies that Bob is maximally uncertain about the outcome, and that the information gain after learning the outcome is maximal. On the other hand, for a loaded dice with a-priori probabilities $p_6=1/3$ and $p_x=2/15$ for $x\neq 6$, we get a reduced Shannon entropy of $H(X)\approx 2.466$, and for $p_6=1$ and $p_x=0$ for $x\neq 6$, the Shannon entropy vanishes, $H(X)=0$, which tells us that there is no information gain by learning the outcome of the dice (note that $\lim_{p_x\rightarrow 0} p_x\log p_x=0$). 
\end{example}

\begin{example}
In order to see that the Shannon entropy also quantifies the minimal physical resources required to store the information, consider the example of an alphabet with four symbols, $\{1,2,3,4\}$, occurring with a-priori probabilities $p_1=1/2$, $p_2=1/4$, $p_3=p_4=1/8$. Naively, we would store the symbols using two bits, e.g. via $1=01$, $2=10$, $3=11$, $4=00$. That is, a message with $k$ symbols would require $2k$ bits. However, we can make use of the fact that the symbols occur with different probabilities $p_1>p_2>p_3=p_4$, and choose an encoding, such that symbols occurring with large probability are encoded in short bit strings. One of such a ``compressed'' encoding is $1=0$, $2=10$, $3=110$, and $4=111$. The average length of a symbol in this encoding is then $1\times 1/2+2\times 1/4 + 3\times 1/8+3\times 1/8=7/4$. That is, on average a message with $k$ symbols requires only $k 7/4 < 2k$ bits. Indeed, it turns out that this is the most compact way to compress the message. With this, we now see why the Shannon entropy is defined as in Eq.~\eqref{eq:ShannonH}: It quantifies the average symbol length, i.e. the minimal physical resources to store the information, $H(X)=-1/2\log(1/2) - 1/4\log(1/4)-1/8\log(1/8)-1/8\log(1/8)=7/4$.
\end{example}

\subsection{Further classical entropies} 
Before we can proceed with similar resource-theoretical considerations for quantum states, we must learn about some useful properties of the Shannon entropy, and study relations between two random variables $X$ and $Y$. 

\subsubsection*{Relative entropy}
First, let us ask how to measure the closeness between two probability distributions $\{p_x\}_x$ and $\{q_x\}_x$ (similar as the classical distance measures introduced in Sec.~\ref{sec:ClDistMeas}) via an entropic measure related to the Shannon entropy. 

\begin{definition}
Given two probability distributions $P=\{p_x\}_x$ and $Q=\{q_x\}_x$ over the same index set, the \underline{relative entropy} of $P$ to $Q$ is defined as
\begin{align}\label{eq:RelativeEntropy}
H(p_x\parallel q_x)\equiv H(P\parallel Q)=\sum_x p_x \log\frac{p_x}{q_x}=-H(X)-\sum_x p_x \log q_x.
\end{align}
\end{definition}

The following theorem makes it clear that the relative entropy is a distance measure: 

\begin{theorem}[\textbf{Non-negativity of the relative entropy}]\label{th:nonnegativerelativentropy}
The relative entropy is non-negative, $H(p_x\parallel q_x)\geq 0$, with equality if and only if $p_x=q_x$ for all $x$. 
\end{theorem}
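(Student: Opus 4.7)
The plan is to reduce the statement to the elementary inequality $\ln t \leq t-1$ for all $t>0$, with equality if and only if $t=1$. This inequality (a consequence of the concavity of the logarithm, or equivalently of the mean value theorem applied to $\ln$) is the workhorse behind most proofs of non-negativity of relative entropies, both classical and quantum, so it is the natural first step. Rewriting in base $2$, one obtains $\log t \leq (t-1)/\ln 2$, with equality iff $t=1$.

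Next I would deal with the support issue before turning the crank. Using the standard conventions $0\log 0 = 0$ and $p\log(p/0) = +\infty$ for $p>0$, the statement is trivial whenever there exists an $x$ with $p_x>0$ but $q_x=0$, since then $H(P\parallel Q)=+\infty\geq 0$ and equality cannot hold. So I would restrict the sum to the support $S=\{x : p_x>0\}$ and assume $q_x>0$ for every $x\in S$ (otherwise the result is immediate and the equality case fails).

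With the support restriction in place, write
\begin{align}
H(P\parallel Q) = \sum_{x\in S} p_x \log\frac{p_x}{q_x} = -\sum_{x\in S} p_x \log\frac{q_x}{p_x}.
\end{align}
Applying the logarithm inequality with $t=q_x/p_x$ gives $\log(q_x/p_x) \leq (q_x/p_x - 1)/\ln 2$, and multiplying by $p_x\geq 0$ and summing,
\begin{align}
\sum_{x\in S} p_x \log\frac{q_x}{p_x} \leq \frac{1}{\ln 2}\sum_{x\in S}(q_x-p_x) = \frac{1}{\ln 2}\Bigl(\sum_{x\in S} q_x - 1\Bigr) \leq 0,
\end{align}
where the last inequality uses $\sum_{x\in S} q_x \leq \sum_x q_x = 1$. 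Negating yields $H(P\parallel Q)\geq 0$.

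For the equality case, both inequalities above must saturate. The first saturates iff $q_x/p_x = 1$ for every $x\in S$, and the second iff $\sum_{x\in S} q_x = 1$, i.e.\ $q_x=0$ for $x\notin S$. Together these force $p_x=q_x$ for all $x$. I do not expect a serious obstacle here — the only subtle point is the bookkeeping of the support and the conventions about $0\log 0$, which is why I would handle it explicitly before invoking the log inequality. An alternative route would be to apply Jensen's inequality directly to the convex function $-\log$ against the probability measure $\{p_x\}$ evaluated at the points $q_x/p_x$; this gives the same bound in essentially one line, but the equality discussion is a bit cleaner via the $\log t \leq t-1$ form.
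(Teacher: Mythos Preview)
Your proof is correct and follows essentially the same approach as the paper: both reduce to the elementary inequality $\ln t \leq t-1$ (equivalently $\log t \leq (t-1)/\ln 2$), substitute $t=q_x/p_x$, and sum. Your treatment is slightly more careful in that you handle the support issues and the $0\log 0$ conventions explicitly, whereas the paper tacitly assumes $p_x,q_x>0$ throughout.
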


\begin{proof}
Consider $z=2^{\log z}$, such that $\ln(z)=\ln(2^{\log z})= \log z\ln 2$. Using $\ln(z)\leq z-1$, with equality if and only if $z=1$, we then have $\log z \leq (z-1)/\ln 2 \Leftrightarrow -\log z \geq (1-z)/\ln 2$. Hence, for $z=q_x/p_x$, the relative entropy~\eqref{eq:RelativeEntropy} becomes
\begin{align}
\begin{split}
H(p_x\parallel q_x)&=-\sum_x p_x \log\frac{q_x}{p_x}\\
&\geq \frac{1}{\ln 2} \sum_x p_x \left(1-\frac{q_x}{p_x}\right)\\
&=\frac{1}{\ln 2} \sum_x \left( p_x-q_x\right)\\
&=\frac{1}{\ln 2} \left( 1-1\right)\\
&=0,
\end{split}
\end{align}
with equality if an only if $z=q_x/p_x=1$, i.e. $p_x=q_x$ for all $x$. 
\end{proof}

\begin{remarks}
Note that the relative entropy is a distance measure between two probability distributions. However, since it is not symmetric and doesn't satisfy the triangular inequality, it is not a metric on the space of probability distributions [see below Eq.~\eqref{eq:tracedistanceCl}].
\end{remarks}

In the above example for the Shannon entropy, we stated that $H(X)$ is maximal for a uniformly distributed random variable $X$. With the help of theorem~\eqref{th:nonnegativerelativentropy}, we can now prove this statement: 

\begin{theorem}[\textbf{Maximum of the Shannon entropy}]\label{th:maxShanentropy}
For $X$ a random variable with $n$ outcomes, the Shannon entropy satisfies $H(X)\leq \log n$, with equality if and only if $X$ is uniformly distributed over those $n$ outcomes. 
\end{theorem}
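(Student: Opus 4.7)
The plan is to deduce the bound directly from the non-negativity of the relative entropy (Theorem~\ref{th:nonnegativerelativentropy}), which has just been established. The key observation is that $\log n$ naturally appears when one compares an arbitrary distribution $P=\{p_x\}_x$ to the uniform distribution $Q=\{q_x\}_x$ with $q_x = 1/n$ for all $x\in\{1,\dots,n\}$.

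Concretely, I would plug this uniform $Q$ into the definition~\eqref{eq:RelativeEntropy} of the relative entropy. Since $\log q_x = \log(1/n) = -\log n$ is independent of $x$, the sum $\sum_x p_x \log q_x$ collapses to $-\log n$ (using $\sum_x p_x = 1$), yielding
\begin{align*}
H(p_x \parallel 1/n) = -H(X) - \sum_x p_x \log(1/n) = -H(X) + \log n.
\end{align*}
By Theorem~\ref{th:nonnegativerelativentropy}, the left-hand side is non-negative, so rearranging gives $H(X) \leq \log n$, which is the desired inequality.

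For the equality statement, I would invoke the second half of Theorem~\ref{th:nonnegativerelativentropy}: the relative entropy $H(p_x \parallel q_x)$ vanishes if and only if $p_x = q_x$ for all $x$. Applied to $q_x = 1/n$, this means that $H(X) = \log n$ holds precisely when $p_x = 1/n$ for every $x$, i.e. when $X$ is uniformly distributed over its $n$ outcomes.

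There is no real obstacle here, since the heavy lifting (the inequality $\ln z \leq z - 1$ and its equality case) has already been absorbed into the proof of Theorem~\ref{th:nonnegativerelativentropy}. The only care required is to make sure the index set of the two distributions matches, which is trivial once we restrict to the $n$ outcomes of $X$; if some $p_x = 0$, the standard convention $0\log 0 = 0$ (consistent with $\lim_{p\to 0} p\log p = 0$ used earlier in the text) makes the computation go through unchanged.
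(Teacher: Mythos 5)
Your proposal is correct and follows essentially the same route as the paper: compute the relative entropy of $P$ against the uniform distribution $Q=\{1/n,\dots,1/n\}$, observe that it equals $-H(X)+\log n$, and invoke the non-negativity of the relative entropy (Theorem~\ref{th:nonnegativerelativentropy}) together with its equality condition. Your additional remark about the $0\log 0=0$ convention is a harmless refinement not spelled out in the paper's version.
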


\begin{proof}
Let us consider the relative entropy~\eqref{eq:RelativeEntropy} between the probability distribution $P=\{p_1,\dots, p_n\}$ of a random variable $X$, and the uniform distribution $Q=\{1/n,\dots, 1/n\}$,
\begin{align}
\begin{split}
H(p_x\parallel q_x)&=-H(X)-\sum_{x=1}^n p_x \log\left(\frac{1}{n}\right)\\
&=-H(X)+\log n. 
\end{split}
\end{align}
Using theorem~\ref{th:nonnegativerelativentropy}, we then find $H(X) \leq \log n$, with equality if and only if the random variable $X$ corresponds to the uniform distribution of all $n$ outcomes. 
\end{proof}

\subsubsection*{Joint entropy}
We now proceed with asking how the information content of two random variables $X$ and $Y$ are related to each other. To this end, let $p(x,y)$ be the \underline{joint probability} that the outcome of $X$ is $x$, and the outcome of $Y$ is $y$. At this point, it is useful to recall that if the two random variable are independent from each other, we have $p(x,y)=p(x)p(y)$, with $p(x)$ and $p(y)$ the unconditioned probability that $X$ and $Y$ yields the outcome $x$ and $y$, respectively. However, if the outcome of $X$ affects the outcome of $Y$ (or vice versa), then $p(x,y)=p(x)p(y|x)=p(y)p(x|y)$, where $p(x|y)$ is the \ul{conditional probability}, i.e. the probability that the outcome of $X$ is $x$ under the condition that the outcome of $Y$ is $y$. For the example that $x$ and $y$ can take the values $x\in\{A,B\}$ and $y\in\{0,1\}$, respectively, the tree diagram is as follows:

\begin{center}
\includegraphics[width=0.7\linewidth]{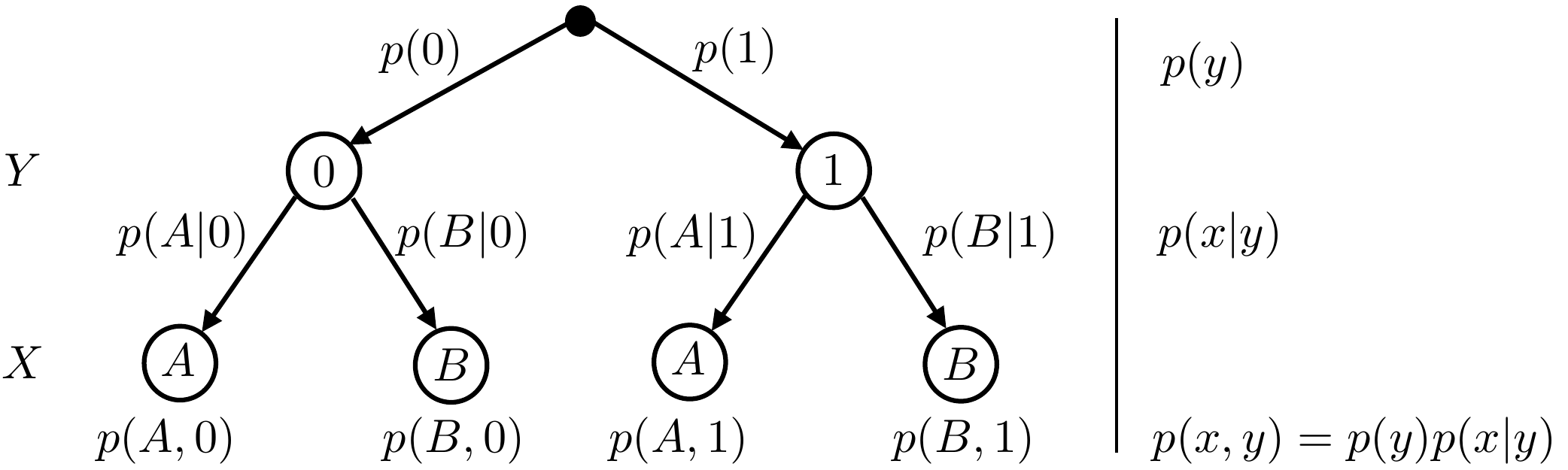}
\end{center}

With this in mind, it is intuitive to start quantifying the relation of the random variables via their joint entropy:
\begin{definition}
The \underline{joint entropy} of two random variables $X$ and $Y$ is defined as
\begin{align}\label{eq:jointentropy}
H(X,Y)=-\sum_{x,y} p(x,y) \log p(x,y). 
\end{align}
\end{definition}
By comparing with Eq.~\eqref{eq:ShannonH}, we see that the joint entropy $H(X,Y)$ is simply the Shannon entropy of the joint probabilities $p(x,y)$ of $X$ and $Y$. Hence, it quantifies our uncertainty about the pair $(X,Y)$, or, equivalently, the amount of information gained by learning the outcome of the pair $(X,Y)$. It is interesting to note that if the two random variables $X$ and $Y$ are independent, i.e. $p(x,y)=p(x)p(y)$, then the joint entropy yields $H(X,Y)=H(X)+H(Y)$. Hence, given the pair $(X,Y)$, if we know the outcome of $Y$, i.e. we gain the information $H(Y)$, we know nothing about the outcome of $X$. 

\subsubsection*{Conditional entropy}
Now suppose we are given the pair of random variables $(X,Y)$, and we already learned the outcome of $Y$, that is, we acquired $H(Y)$ bits of information about the pair $(X,Y)$. Our remaining uncertainty about the outcome of $X$ is then quantified by the conditional entropy:
\begin{definition}
For a pair of random variables $(X,Y)$, the entropy of $X$ conditioned on knowing the outcome of $Y$ defines the \underline{conditional entropy} 
\begin{align}\label{eq:conditionalentropy}
H(X|Y)=H(X,Y)-H(Y).
\end{align}
\end{definition}
Note that the expression of the conditional entropy is similar to the conditional probability $p(x|y)=p(x,y)/p(y)$, i.e. the probability that $X$ yields $x$ under the condition that $Y$ yields $y$. The only difference is that due to the logarithm in the entropy, we do not divide but subtract $H(Y)$ in Eq.~\eqref{eq:conditionalentropy}. Again, it is illustrative to consider the case of independent random variables for which $p(x,y)=p(x)p(y)$, and, as we saw above, $H(X,Y)=H(X)+H(Y)$. Using this in Eq.~\eqref{eq:conditionalentropy}, we see that, in this case, the conditional entropy becomes $H(X|Y)=H(X)$, since the outcome of $Y$ reveals no information about $X$. However, if knowing the outcome of $Y$ reveals some information about the outcome of $X$, then $H(X|Y)<H(X)$, i.e. $H(X,Y)<H(X)+H(Y)$. In other words, $X$ and $Y$ possess some mutual information. 

\subsubsection*{Mutual information}
\begin{definition}
The \underline{mutual information} of a pair of random variables $(X,Y)$ is
\begin{align}
H(X:Y)=H(X)+H(Y)-H(X,Y).
\end{align}
\end{definition}
Hence, we see that the mutual information between $X$ and $Y$ measures how much information they have in common. In the case of independent random variables $X$ and $Y$, we have $H(X,Y)=H(X)+H(Y)$, and, thus, $H(X:Y)=0$. On the other hand, if the outcomes of $X$ and $Y$ are correlated, then $H(X,Y)<H(X)+H(Y)$, and, accordingly, $H(X:Y)>0$. 

\begin{example}
While we already discussed the case of independent random variables to get an intuitive understanding of the defined entropic measures, let us now do an explicit example of two correlated random variables. Suppose my socks are covered by my trousers, and I tell you that I always wear one dark and one bright sock, but I randomly choose on which foot I wear the dark, and on which the bright sock. Now let $X$ be the random variable for the color of the left sock, which takes the outcome $x=0$ with probability $p(x=0)=1/2$ (corresponding to a dark sock), and the outcome $x=1$ with probability $p(x=1)=1/2$ (corresponding to a bright sock). Similarly, the random variable for the color of the right sock $Y$ yields $y=0$ with probability $p(y=0)=1/2$, and $y=1$ with probability $p(y=1)=1/2$. If I lift the left (right) trouser leg and you learn the color of the left (right) sock, you gain one bit of information, as measured by the Shannon entropy~\eqref{eq:ShannonH} $H(X)=H(Y)=1$. 

However, since you know that the color of the socks is anticorrelated, by lifting only one trouser leg, you can even be certain about the color of the other sock. This can now be quantified via the conditional probabilities $p(x=0|y=1)=p(x=1|y=0)=1$ and $p(x=0|y=0)=p(x=1|y=1)=0$, from which we find the joint probabilities $p(x=0,y=1)=p(x=1|y=0)=1/2$ and $p(x=0,y=0)=p(x=1,y=1)=0$. With this, the joint entropy~\eqref{eq:jointentropy} becomes $H(X,Y)=-1/2\log 1/2 - -1/2\log 1/2=\log 2=1$, and the conditional entropy $H(X|Y)=0$. The latter tells us that with knowing the outcome of $Y$, we can be certain about the outcome of $X$ (since the color of the socks is perfectly anticorrelated). Finally, for the mutual information, we get $H(X:Y)=1$, telling us that $X$ and $Y$ have one bit of information in common, namely all their information, since $H(X)=H(Y)=1$. 
\end{example}

Finally, let us state some useful properties of the above entropic measures:
\begin{theorem}\label{th:entropiesXYproperties}
For two random variables $X$ and $Y$ the following holds:
\begin{enumerate}[1)]
\item $H(X,Y)=H(Y,X)$, and $H(X:Y)=H(Y:X)$.
\item $H(Y|X)\geq 0$ and thus $H(X:Y) \leq H(Y)$, with equality if and only if $Y$ is a function of $X$, i.e., $Y=f(X)$.
\item $H(X) \leq H(X,Y)$, with equality if and only if $Y$ is a function of $X$.
\item $H(X,Y)\leq H(X)+H(Y)$, with equality if and only if $X$ and $Y$ are independent.
\item $H(Y|X)\leq H(Y)$ and thus $H(X:Y)\geq 0$, with equality if and only if $X$ and $Y$ are independent.
\end{enumerate}
\end{theorem}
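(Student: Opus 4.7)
The plan is to prove the five properties in a carefully chosen order so that each result builds on the previous ones, relying almost entirely on Theorem~\ref{th:nonnegativerelativentropy} (non-negativity of the relative entropy) as the one nontrivial input. Properties 1), 2), 3) fall out of direct manipulations of the definitions, while 4) is the real content and uses the relative-entropy bound; then 5) is a one-line consequence of 4).

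First I would dispatch 1) by noting that the joint probability is symmetric, $p(x,y)=p(y,x)$, so the defining sum in~\eqref{eq:jointentropy} is unchanged by swapping the roles of $X$ and $Y$; symmetry of $H(X{:}Y)$ then follows from the definition $H(X{:}Y)=H(X)+H(Y)-H(X,Y)$. For 2), I would rewrite
\begin{align*}
H(Y|X) = H(X,Y)-H(X)=-\sum_{x,y} p(x,y)\log\frac{p(x,y)}{p(x)} = -\sum_{x,y} p(x,y)\log p(y|x),
\end{align*}
using $p(x)=\sum_y p(x,y)$. Since $0\le p(y|x)\le 1$, each summand is non-negative, giving $H(Y|X)\ge 0$; equality forces $p(y|x)\in\{0,1\}$ whenever $p(x,y)>0$, which says precisely that $Y$ is determined by $X$, i.e.\ $Y=f(X)$. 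The inequality $H(X{:}Y)\le H(Y)$ is then immediate from $H(X{:}Y)=H(Y)-H(Y|X)$. Property 3) follows by the chain-rule identity $H(X,Y)=H(X)+H(Y|X)$ together with 2).

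The key step is 4). Here I would apply Theorem~\ref{th:nonnegativerelativentropy} to the joint distribution $p(x,y)$ and the product-of-marginals distribution $p(x)p(y)$:
\begin{align*}
0 \le H\bigl(p(x,y)\,\|\,p(x)p(y)\bigr) = \sum_{x,y} p(x,y)\log\frac{p(x,y)}{p(x)p(y)} = H(X)+H(Y)-H(X,Y),
\end{align*}
with equality iff $p(x,y)=p(x)p(y)$ for all $x,y$, i.e.\ iff $X$ and $Y$ are independent. Rearranging yields 4) with the stated equality condition. Finally, 5) follows by combining 4) with $H(Y|X)=H(X,Y)-H(X)\le H(Y)$, and then $H(X{:}Y)=H(Y)-H(Y|X)\ge 0$, with equality precisely when $X$ and $Y$ are independent.

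The only conceptual obstacle is identifying the right probability distributions to feed into the non-negativity of the relative entropy for 4); once that substitution is made, everything else is bookkeeping with the chain rule $H(X,Y)=H(X)+H(Y|X)$. The equality conditions require only a careful reading of when $\log(z)=z-1$ (used inside Theorem~\ref{th:nonnegativerelativentropy}) and when $p(y|x)\in\{0,1\}$, both of which translate into the stated functional or independence characterizations.
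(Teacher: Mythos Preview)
Your proposal is correct and complete; the paper leaves this theorem as an exercise, so there is no detailed proof to compare against. Your approach---handling 1)--3) by direct manipulation of the definitions and the chain rule, then deriving 4) from Theorem~\ref{th:nonnegativerelativentropy} applied to $p(x,y)$ versus $p(x)p(y)$, and obtaining 5) as a corollary---is exactly the standard one and would be the expected solution.
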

\begin{proof}
Exercise. 
\end{proof}

\subsection{Von Neumann entropy} 
We are now set to continue with quantifying the information content of quantum states. Why would we want to do this? For example, suppose that Alice communicates with Bob through a quantum channel using an alphabet $\{\rho_1,\dots,\rho_n\}$ whose letters are quantum states $\rho_x$ (instead of classical quantities), which appear with a-prior probabilities $p_x$. Accordingly, Bob receives the state $\rho=\sum_x p_x \rho_x$. If the letters $\rho_x$ received by Bob are mutually orthogonal, we can treat the problem classically. However, the letters $\rho_x$ might not necessarily be orthogonal such that Bob cannot discriminate them with certainty. Hence, we now would like to know the actual information content of $\rho$ (as an explicit example, you might think of finding the maximum capacity for the communication via single photons through an optical fiber, where the photons' wavepackets can partially overlap). As another example, suppose that Alice and Bob share an entangled state. Can we  quantify their mutual information similar as for classical correlations (remember the example with the correlated socks)? We now set the stage to study such scenarios by defining entropy measures for quantum states.

First recall that the density operator $\rho$ on a $n$-dimensional Hilbert space $\mathcal{H}$ is a generalization of a random variable $X$ with $n$ outcomes, i.e. a generalization of a probability distribution $P=\{p_1,\dots,p_n\}$. Hence, we want to generalize the Shannon entropy~\eqref{eq:ShannonH} such that it accounts for quantum states $\rho$, and reduces to the Shannon entropy when considering $n$ orthogonal (pure) states. This is achieved by the von Neumann entropy:
\begin{definition}
The \underline{von Neumann entropy} of a state $\rho$ on $\mathcal{H}$ is defined as
\begin{align}\label{eq:vonNeumannentropy}
S(\rho)=-\tr{\rho \log \rho}.
\end{align}
\end{definition}

The connection of the von Neumann entropy~\eqref{eq:vonNeumannentropy} to the Shannon entropy~\eqref{eq:ShannonH} becomes more clear by inserting the eigendecomposition $\rho=\sum_x \lambda_x \ketbra{\lambda_x}{\lambda_x}$, with eigenvalues $\lambda_x$, into~\eqref{eq:vonNeumannentropy}, resulting in 
\begin{align}\label{eq:vonNeumannentropyEigen}
S(\rho)=-\sum_x \lambda_x \log \lambda_x.
\end{align}
Hence, by comparing with~\eqref{eq:ShannonH}, we see that the von Neumann entropy is the Shannon entropy of the eigenvalues of $\rho$, i.e. the Shannon entropy with respect to the orthogonal (pure) eigenstates $\ket{\lambda_x}$. The expression~\eqref{eq:vonNeumannentropyEigen} of the von Neumann entropy in terms of the eigenvalues of $\rho$ is very useful for calculations, and allows us to directly infer some useful properties of $S(\rho)$. 
 
\begin{theorem}[\textbf{Basic properties of the von Neumann entropy}]\label{th:basicpropvonNeumann}
\ 
\begin{enumerate}[1)]
\item $S(\rho)\geq 0$ with equality if and only if $\rho$ is pure, i.e. $\rho=\ketbra{\psi}{\psi}$. 
\item $S(\rho) \leq \log n$ with equality if and only if $\rho$ is maximally mixed, i.e. $\rho=\unit/n$.
\item $S(\rho)$ is invariant under unitary transformations $U$, i.e. $S(\rho)=S(U\rho U^\dagger)$.  \item If a composite system $AB$ is pure, then $S(A)=S(B)$. 
\item If $p_j$ are probabilities, and $\rho_j$ states having orthogonal support, then 
\begin{align}\label{eq:Sorthogonal}
S\left( \sum_j p_j \rho_j\right)=H(p_j)+ \sum_j p_j S(\rho_j).
\end{align}
\item $S(\rho \otimes \sigma)=S(\rho)+S(\sigma)$.
\end{enumerate}
\end{theorem}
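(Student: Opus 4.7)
My plan is to prove all six properties by working with the eigendecomposition $\rho=\sum_x \lambda_x \ketbra{\lambda_x}{\lambda_x}$, which by Eq.~\eqref{eq:vonNeumannentropyEigen} reduces $S(\rho)$ to the Shannon entropy $H(\{\lambda_x\})$ of the eigenvalue distribution. Properties 1), 2), and 3) then become almost immediate. For 1), each term $-\lambda_x\log\lambda_x\geq 0$ with the convention $0\log 0=0$, and vanishes iff $\lambda_x\in\{0,1\}$; since $\sum_x\lambda_x=1$, the sum vanishes iff exactly one eigenvalue equals $1$, i.e., iff $\rho$ is pure. For 2), the eigenvalues form a probability distribution over $n$ outcomes, so theorem~\ref{th:maxShanentropy} yields $S(\rho)=H(\{\lambda_x\})\leq \log n$ with equality iff $\lambda_x=1/n$ for all $x$, i.e., iff $\rho=\unit/n$. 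For 3), $U\rho U^\dagger$ has eigenvalues $\lambda_x$ with rotated eigenvectors $U\ket{\lambda_x}$, so $S(\rho)=H(\{\lambda_x\})=S(U\rho U^\dagger)$.

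Property 4) is a direct consequence of the Schmidt decomposition: writing $\ket{\psi_{AB}}=\sum_j\sqrt{\lambda_j}\ket{v_j}\otimes\ket{w_j}$, Eq.~\eqref{eq:Schmidteigenvaluesrhoa} gives $\rho_A=\sum_j\lambda_j\ketbra{v_j}{v_j}$ and the analogous expression for $\rho_B$. Both reduced states share the same nonzero eigenvalues, so $S(A)=H(\{\lambda_j\})=S(B)$.

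For 5), since the $\rho_j$ have pairwise orthogonal support, the spectral decomposition of $\sum_j p_j\rho_j$ is obtained by concatenating the spectral decompositions of the $\rho_j$: if $\rho_j=\sum_k \lambda_k^{(j)}\ketbra{\lambda_k^{(j)}}{\lambda_k^{(j)}}$, then the eigenvalues of $\sum_j p_j\rho_j$ are $\{p_j\lambda_k^{(j)}\}_{j,k}$. Plugging into Eq.~\eqref{eq:vonNeumannentropyEigen} and splitting the logarithm,
\begin{align}
S\Big(\sum_j p_j\rho_j\Big) &= -\sum_{j,k} p_j\lambda_k^{(j)}\left(\log p_j + \log \lambda_k^{(j)}\right) \\
&= -\sum_j p_j\log p_j\, \underbrace{\sum_k\lambda_k^{(j)}}_{=1} + \sum_j p_j\Big(-\sum_k\lambda_k^{(j)}\log\lambda_k^{(j)}\Big) \\
&= H(p_j) + \sum_j p_j\, S(\rho_j).
\end{align}

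For 6), I would use the functional identity $\log(\rho\otimes\sigma)=(\log\rho)\otimes\unit + \unit\otimes(\log\sigma)$, which follows immediately from the eigendecompositions of $\rho$ and $\sigma$. Then
\begin{align}
S(\rho\otimes\sigma) &= -\tr{(\rho\otimes\sigma)\big[(\log\rho)\otimes\unit + \unit\otimes(\log\sigma)\big]} \\
&= -\tr{\rho\log\rho}\,\tr{\sigma} - \tr{\rho}\,\tr{\sigma\log\sigma} \\
&= S(\rho)+S(\sigma),
\end{align}
using $\tr{\rho}=\tr{\sigma}=1$. None of the six steps is particularly hard; the only subtle points are handling the $0\log 0$ convention in 1), and in 5) verifying that orthogonal supports really do let one read off the full spectrum as the union of the rescaled individual spectra — which follows from the fact that orthogonal subspaces diagonalize jointly.
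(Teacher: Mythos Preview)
Your proof is correct; the paper itself leaves this theorem as an exercise, so there is no proof in the paper to compare against. Each of your six arguments is standard and sound, with the only mild caveat that in 6) the operator identity $\log(\rho\otimes\sigma)=(\log\rho)\otimes\unit+\unit\otimes(\log\sigma)$ should be read on the support of $\rho\otimes\sigma$ (or, equivalently, one works directly with the product eigenvalues $r_i s_j$ and the $0\log 0$ convention), but this does not affect the conclusion.
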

\begin{proof}
Exercise.
\end{proof}

\begin{example}
As an example, suppose that Alice and Bob communicate via an alphabet $\{\rho_1=\ketbra{\phi_1}{\phi_1}, \rho_2=\ketbra{\phi_2}{\phi_2}\}$ composed of two pure qubit states $\ket{\phi_x}$, which appear with equal a-priori probability $p_1=p_2=1/2$. First suppose that the states are orthogonal, $\ket{\phi_1}=\ket{0}$ and $\ket{\phi_2}=\ket{1}$. Accordingly, we have $\rho=(\ketbra{0}{0}+\ketbra{1}{1})/2$, which has eigenvalues $\lambda_1=\lambda_2=1/2$, resulting in the von Neumann entropy $S(\rho)=1$. 

Now suppose non-orthogonal states $\ket{\phi_1}=\ket{0}$ and $\ket{\phi_2}=(\ket{0}+\ket{1})/\sqrt{2}$, such that Bob cannot discriminate $\rho_1$ and $\rho_2$ with certainty. A short calculation shows that $\rho=(\rho_1+\rho_2)/2$ has eigenvalues $\lambda_{1/2}=(2\pm \sqrt{2})/4$, leading to a reduced von Neumann entropy of $S(\rho)\approx 0.60$. Hence, we see that due to the non-orthogonality of the states $\rho_x$, the information content (per letter) of Alice's message reduces. 

Finally, in the extreme case of parallel states, i.e. $\ket{\phi_1}=\ket{\phi_2}=\ket{0}$, we find that $\rho=\ketbra{0}{0}$ is pure, and, thus, $S(\rho)=0$. Hence, Alice's message has no information content. 
\end{example}

\subsection{Further quantum entropies} 

\subsubsection*{Relative entropy}
Similar to the relative entropy~\eqref{eq:RelativeEntropy} of two random variables, we can define the relative entropy of two quantum states:
\begin{definition}
For two density operators $\rho$ and $\sigma$ on $\mathcal{H}$, the \underline{relative entropy} is 
\begin{align}\label{eq:quantumrelativeentropy}
S(\rho \parallel \sigma)= \tr{\rho \log \rho}-\tr{\rho \log \sigma}.
\end{align}
\end{definition}
In analogy to theorem~\ref{th:nonnegativerelativentropy}, the relative entropy of quantum states is also non-negative. This is known as Klein's inequality:
\begin{theorem}[\textbf{Klein's inequality}]
The relative entropy of two density operators $\rho$ and $\sigma$ on $\mathcal{H}$ is non-negative, i.e.,
\begin{align}\label{eq:Kleinineq}
S(\rho \parallel \sigma)\geq 0,
\end{align}
with equality if and only if $\rho=\sigma$. 
\end{theorem}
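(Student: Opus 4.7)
The plan is to reduce the quantum statement to the already-established classical one (Theorem~\ref{th:nonnegativerelativentropy}) by diagonalising both operators and exploiting concavity of the logarithm. Write the spectral decompositions $\rho = \sum_i p_i \ketbra{i}{i}$ and $\sigma = \sum_j q_j \ketbra{j}{j}$, with $\{\ket{i}\}_i$ and $\{\ket{j}\}_j$ ON-bases of $\mathcal{H}$. Inserting these into Eq.~\eqref{eq:quantumrelativeentropy} and evaluating the traces in the $\ket{i}$ basis gives
\begin{align}
S(\rho\parallel\sigma) = \sum_i p_i \log p_i - \sum_{i,j} p_i\, P_{ij}\, \log q_j,
\end{align}
where $P_{ij}=|\bracket{i}{j}|^2$. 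The matrix $(P_{ij})$ is doubly stochastic: both its rows and columns sum to one, since $\{\ket{i}\}_i$ and $\{\ket{j}\}_j$ are ON-bases.

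Next I would introduce the auxiliary probabilities $r_i = \sum_j P_{ij} q_j$, so $r_i\geq 0$ and $\sum_i r_i = 1$. By concavity of $\log$ (Jensen's inequality applied to the probabilities $P_{ij}$ at fixed $i$) we have
\begin{align}
\sum_j P_{ij} \log q_j \leq \log\!\left(\sum_j P_{ij} q_j\right) = \log r_i,
\end{align}
with equality iff all $q_j$ with $P_{ij}>0$ coincide. Summing against $p_i$ and combining with the display above yields
\begin{align}
S(\rho\parallel\sigma) \geq \sum_i p_i \log\frac{p_i}{r_i} = H\!\left(\{p_i\}\parallel\{r_i\}\right) \geq 0,
\end{align}
where the last inequality is Theorem~\ref{th:nonnegativerelativentropy} applied to the classical distributions $\{p_i\}$ and $\{r_i\}$. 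This gives the non-negativity claim.

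The main (and really the only) obstacle is the equality case, which must be unpacked carefully. Equality forces two things simultaneously: (i) the Jensen step saturates for every $i$, i.e.\ for each $i$ the value $q_j$ is constant on $\{j:P_{ij}>0\}$; and (ii) the classical relative entropy vanishes, so $p_i=r_i$ for all $i$. A brief argument then shows these two conditions together imply $\rho$ and $\sigma$ share a common eigenbasis with identical eigenvalues, i.e.\ $\rho=\sigma$. Concretely, (i) means that any two basis vectors $\ket{i},\ket{i'}$ whose supports in the $\ket{j}$-basis overlap (even partially) force the corresponding $q_j$'s to agree; grouping the eigenspaces of $\sigma$ by their eigenvalues, this implies each $\ket{i}$ lies in a single eigenspace of $\sigma$, so $[\rho,\sigma]=0$. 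Once $\rho$ and $\sigma$ commute, they are simultaneously diagonalisable, $P_{ij}$ becomes a permutation matrix, $r_i$ is simply a relabelling of the $q_j$'s, and condition (ii) then forces the spectra to match entry by entry, giving $\rho=\sigma$. The converse, $\rho=\sigma \Rightarrow S(\rho\parallel\sigma)=0$, is immediate from the definition.
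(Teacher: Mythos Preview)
Your proof is correct and follows essentially the same route as the paper's: diagonalise both operators, apply concavity of the logarithm (Jensen) to bound $\sum_j P_{ij}\log q_j$ by $\log r_i$, and then invoke the classical non-negativity result (Theorem~\ref{th:nonnegativerelativentropy}) for $\{p_i\}$ versus $\{r_i\}$. Your treatment of the equality case is in fact slightly more careful than the paper's, since you correctly state the Jensen equality condition as ``all $q_j$ with $P_{ij}>0$ coincide'' and explicitly handle the possibility of degenerate eigenvalues of $\sigma$ before concluding that $P$ can be taken to be a permutation.
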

\begin{proof}
Let $\rho=\sum_x r_x \ketbra{r_x}{r_x}$ and $\sigma=\sum_y s_y \ketbra{s_y}{s_y}$ be the eigendecompositions of $\rho$ and $\sigma$. Plugging this into~\eqref{eq:quantumrelativeentropy} yields
\begin{align}\label{eq:Kleinproof1}
\begin{split}
S(\rho \parallel \sigma)&=\tr{\sum_x r_x \ketbra{r_x}{r_x} \sum_y \log r_y \ketbra{r_y}{r_y}}- \tr{\sum_x r_x \ketbra{r_x}{r_x} \sum_y \log s_y \ketbra{s_y}{s_y} } \\
&=\sum_x r_x \log r_x - \sum_{x,y}  r_x \log s_y \abs{\bracket{s_y}{r_x}}^2\\
&=\sum_x r_x \left( \log r_x - \sum_y  \log s_y \abs{\bracket{s_y}{r_x}}^2 \right).
\end{split}
\end{align}
Next, consider the right term in the parentheses. Since the logarithm is a strictly concave function, i.e. $\log((1-t)x_1+tx_2) \geq (1-t) \log x_1 + t \log x_2$ for $t\in[0,1]$, and $\abs{\bracket{s_k}{r_j}}^2\geq 0$ and $\sum_y \abs{\bracket{s_y}{r_x}}^2=1$, we have 
\begin{align}\label{eq:Kleinproof2}
\log\left(  \sum_y  s_y \abs{\bracket{s_y}{r_x}}^2\right) \geq  \sum_y \abs{\bracket{s_y}{r_x}}^2 \log s_y,
\end{align}
with equality if and only if for each $x$ there exists a value of $y$ for which $\abs{\bracket{s_y}{r_x}}^2=1$, i.e. if and only if $P_{x,y}=\abs{\bracket{s_y}{r_x}}^2$ is a permutation matrix. Using the shorthand $q_x=\sum_y  s_y \abs{\bracket{s_y}{r_x}}^2$, and plugging~\eqref{eq:Kleinproof2} into Eq.~\eqref{eq:Kleinproof1} yields
\begin{align}
\begin{split}
S(\rho \parallel \sigma)& \geq \sum_x r_x \left( \log r_x - \log q_x \right)\\
&=\sum_x r_x \log \frac{r_x}{q_x}.
\end{split}
\end{align}
The right hand side is the classical relative entropy from Eq.~\eqref{eq:RelativeEntropy}, which is non-negative according to theorem~\ref{th:nonnegativerelativentropy}. Hence, we find
\begin{align}
S(\rho \parallel \sigma)& \geq 0,
\end{align}
with equality if and only if $P_{x,y}=\abs{\bracket{s_y}{r_x}}^2$ is a permutation matrix and $r_x=q_x$ for all $x$ (the latter condition is due to theorem~\ref{th:nonnegativerelativentropy}). We can simplify the equality condition by relabeling the eigenstates of $\sigma$, such that $P_{x,y}$ is the identity. Together with the condition $r_x=q_x$, this is equivalent to $\rho=\sigma$, such that $S(\rho \parallel \sigma) \geq 0$ with equality if and only if $\rho=\sigma$.
\end{proof}

\subsubsection*{Joint entropy, conditional entropy, and mutual information}
In analogy to the classical joint entropy $H(X,Y)$, conditional entropy $H(X|Y)$, and mutual information $H(X:Y)$ of the pair of random variables $(X,Y)$, composed of the random variables $X$ and $Y$ with Shannon entropys $H(X)$ and $H(Y)$, we can now introduce similar definitions for a  quantum state $\rho_\mathrm{AB}$ of a composite system $AB$, and the reduced states $\rho_\mathrm{A}=\trp{B}{\rho_\mathrm{AB}}$ and $\rho_\mathrm{B}=\trp{A}{\rho_\mathrm{AB}}$ of subsystems $A$ and $B$, which have von Neumann entropy $S(A)=S(\rho_\mathrm{A})$ and $S(B)=S(\rho_\mathrm{B})$:
\begin{definition}\label{def:jointentropyQ}
For $\rho_\mathrm{AB}$ a density matrix of a composite quantum system with components $A$ and $B$, the \underline{joint entropy} is defined as
\begin{align}
S(A,B)=-\tr{\rho_\mathrm{AB} \log \rho_\mathrm{AB}}.
\end{align}
\end{definition}
\begin{definition}
For a composite quantum system $AB$, the \underline{conditional entropy} is defined as
\begin{align}\label{eq:conditionalentropyQ}
S(A|B)=S(A,B)-S(B).
\end{align}
\end{definition}
\begin{definition}\label{def:mutualinfoQ}
The \underline{mutual information} of a composite quantum system $AB$ is defined as
\begin{align}\label{eq:mutualinfoQ}
S(A:B)=S(A)+S(B)-S(A,B).
\end{align}
\end{definition}

\begin{remarks}\leavevmode
\begin{enumerate}[1)]
\item The joint entropy satisfies $S(A,B)=S(\rho_\mathrm{AB})$.

\item The quantum counterparts of the entropies do not necessarily satisfy the same relations as the classical entropies. For example, while $H(X|Y)\geq 0$ (cf. theorem~\ref{th:entropiesXYproperties}), it can happen that $S(A|B)<0$ (see the example below).

\item If $\ket{\psi}$ is a pure state of a composite quantum system $AB$, then $S(A|B)<0$ if and only if $\ket{\psi}$ is entangled.
\begin{proof}
Exercise. 
\end{proof}

\end{enumerate}
\end{remarks}

\begin{example}
Recall our above example of \textit{classical correlations}, where we considered perfectly anticorrelated  random variables $X$ and $Y$, which correspond to the color of two socks and take the outcome $0$ (for a dark sock) and $1$ (for a bright sock). There we found 
\begin{align*}
H(X)=H(Y)=1,\quad H(X,Y)=1,\quad H(X|Y)=0,\quad H(X:Y)=1.
\end{align*}
If we were to describe this classical correlation by a composite quantum state, it would correspond to the mixed state $\rho_\mathrm{AB}=(\ketbra{10}{10}+\ketbra{01}{01})/2$, with reduced states $\rho_\mathrm{A}=\rho_\mathrm{B}=\unit/2$. Using the above definitions of the quantum entropies, it is easy to see that they coincide with the classical entropies, 
\begin{align*}
S(A)=S(B)=1,\quad S(A,B)=1,\quad S(A|B)=0,\quad S(A:B)=1.
\end{align*}

Now let us see that \textit{quantum correlations} can differ from that. To this end, consider a composite system $AB$ in the entangled two-qubit state $\ket{\psi^+}=(\ket{01}+\ket{10})/\sqrt{2}$, such that $\rho_\mathrm{AB}=\ketbra{\psi^+}{\psi^+}$ is pure, and $\rho_\mathrm{A}=\rho_\mathrm{B}=\unit/2$ is maximally mixed. Hence, by theorem~\ref{th:basicpropvonNeumann} and definitions~\ref{def:jointentropyQ} -~\ref{def:mutualinfoQ}, we have 
\begin{align*}
S(A)=S(B)=1,\quad S(A,B)=0,\quad S(A|B)=-1,\quad S(A:B)=2.
\end{align*}
Hence, compared to classical correlations, we see that entanglement can lead to an enhanced mutual information.
\end{example}

\subsection{Further properties of the von Neumann entropy} 
We continue with stating further important properties of the von Neumann entropy $S(\rho)$. 
\begin{theorem}[\textbf{Projective measurements increase entropy}]\label{th:entroproject}
Let $\rho$ be a density operator on $\mathcal{H}$, let $\{P_x\}_x$ be a complete set of orthogonal projectors, and let $\rho'=\sum_x P_x \rho P_x$ be the state after the projective measurement when we don't learn the outcome. Then 
\begin{align}
S(\rho') \geq S(\rho),
\end{align}
with equality if and only if $\rho=\rho'$.
\end{theorem}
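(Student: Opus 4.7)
My plan is to reduce the statement to Klein's inequality~\eqref{eq:Kleinineq} applied to the pair $(\rho,\rho')$. That inequality reads $S(\rho\parallel\rho') = -S(\rho) - \tr{\rho\log\rho'} \geq 0$, with equality if and only if $\rho=\rho'$. So it will suffice to prove the identity $\tr{\rho\log\rho'} = -S(\rho')$; substituting this into Klein's inequality immediately rearranges to $S(\rho')\geq S(\rho)$, with the correct equality condition inherited from Klein.

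The structural fact I would exploit is that $\rho'=\sum_x P_x\rho P_x$ is block-diagonal with respect to the projectors $\{P_x\}_x$. Because $\{P_x\}_x$ is a \emph{complete orthogonal} family, $P_yP_x=\delta_{x,y}P_x$, from which $P_y\rho'=P_y\rho P_y=\rho' P_y$ for every $y$. Hence each $P_y$ commutes with $\rho'$, and by spectral calculus it also commutes with $\log\rho'$.

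With this commutation I would compute $\tr{\rho\log\rho'}$ as follows. Insert the completeness relation $\unit=\sum_x P_x$ on the left, then use $P_x=P_x^2$, cyclicity of the trace, and $P_x\log\rho'=\log\rho'\,P_x$ to rewrite each summand as $\tr{P_x\rho\log\rho'} = \tr{P_x^2\rho\log\rho'} = \tr{P_x\rho P_x\log\rho'}$. Summing over $x$ yields $\tr{\rho\log\rho'}=\tr{\rho'\log\rho'}=-S(\rho')$, which is precisely the identity needed to close the argument via Klein.

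I do not anticipate a real obstacle: every ingredient is in place in the excerpt (Klein's inequality is Eq.~\eqref{eq:Kleinineq}, and the elementary properties of projectors are standard). The only step deserving a moment of care is the commutation $[P_x,\log\rho']=0$, which genuinely relies on orthogonality of the $P_x$ rather than mere completeness; for a generic (non-projective) Kraus decomposition the analogous identity fails, and so does the conclusion. The equality case $S(\rho')=S(\rho)$ then corresponds exactly to the equality case in Klein's inequality, namely $\rho=\rho'$, which is the stated condition.
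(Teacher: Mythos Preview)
Your proposal is correct and follows essentially the same route as the paper: both apply Klein's inequality to the pair $(\rho,\rho')$, then establish the key identity $\tr{\rho\log\rho'}=\tr{\rho'\log\rho'}$ by inserting $\unit=\sum_x P_x^2$, using cyclicity of the trace, and invoking $[P_x,\log\rho']=0$ (which the paper likewise derives from $P_x\rho'=P_x\rho P_x=\rho'P_x$). The equality condition is handled identically via the equality case of Klein's inequality.
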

\begin{proof}
Consider Klein's inequality~\eqref{eq:Kleinineq} for $\rho$ and $\rho'$, and the definition of the relative entropy from Eq.~\eqref{eq:quantumrelativeentropy},
\begin{align}
0\leq S(\rho \parallel \rho') = -S(\rho) - \tr{\rho \log \rho'},
\end{align}
with equality if and only if $\rho=\rho'$. Next, we show that $-\tr{\rho \log \rho'}=-\tr{\rho' \log \rho'}=S(\rho')$ from which our claim follows. To this end, let us use the completeness relation $\sum_x P_x =\unit$, and $P_x=P_x^2$, such that
\begin{align}\label{eq:ProjecEntrop1}
\begin{split}
-\tr{\rho \log \rho'}&=-\tr{ \sum_x P_x^2 \rho \log \rho'}\\
&=-\tr{ \sum_x P_x \rho \log \rho' P_x}.
\end{split}
\end{align}
Next, note that $\rho' P_x = \sum_y P_y \rho P_y P_x=P_x \rho P_x= \sum_y P_x P_y \rho P_y= P_x \rho'$. Hence, $\rho'$ and $P_x$ commute, and so does $\log \rho'$ and $P_x$. Using this in Eq.~\eqref{eq:ProjecEntrop1} yields
\begin{align}
\begin{split}
-\tr{\rho \log \rho'}&=-\tr{ \sum_x P_x \rho P_x \log \rho' }\\
&=-\tr{ \rho' \log \rho' }\\
&=S(\rho'),
\end{split}
\end{align}
which completes the proof.
\end{proof}

\begin{theorem}[\textbf{Subadditivity}]
For a composite quantum system $AB$,
\begin{align}\label{eq:subadditivity}
S(A,B) \leq S(A) + S(B),
\end{align}
with equality if and only if $\rho_\mathrm{AB}=\rho_\mathrm{A}\otimes \rho_\mathrm{B}$.
\end{theorem}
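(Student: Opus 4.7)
The plan is to derive subadditivity as a direct consequence of Klein's inequality~\eqref{eq:Kleinineq}, applied to the true joint state $\rho_{\mathrm{AB}}$ and the product of its marginals $\rho_{\mathrm{A}} \otimes \rho_{\mathrm{B}}$. This is the canonical route: rewrite the non-negative relative entropy $S(\rho_{\mathrm{AB}} \parallel \rho_{\mathrm{A}} \otimes \rho_{\mathrm{B}})$ in a form that produces exactly the three entropies appearing in the claimed inequality.

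First, I would unpack the definition~\eqref{eq:quantumrelativeentropy}:
\begin{align*}
S(\rho_{\mathrm{AB}} \parallel \rho_{\mathrm{A}} \otimes \rho_{\mathrm{B}}) = \tr{\rho_{\mathrm{AB}} \log \rho_{\mathrm{AB}}} - \tr{\rho_{\mathrm{AB}} \log(\rho_{\mathrm{A}} \otimes \rho_{\mathrm{B}})},
\end{align*}
so the first trace is simply $-S(A,B)$. The crucial algebraic step is to split the logarithm of the tensor product as $\log(\rho_{\mathrm{A}} \otimes \rho_{\mathrm{B}}) = \log(\rho_{\mathrm{A}}) \otimes \unit + \unit \otimes \log(\rho_{\mathrm{B}})$. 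I would justify this by taking spectral decompositions $\rho_{\mathrm{A}} = \sum_j a_j \ketbra{a_j}{a_j}$ and $\rho_{\mathrm{B}} = \sum_k b_k \ketbra{b_k}{b_k}$, so that $\rho_{\mathrm{A}} \otimes \rho_{\mathrm{B}}$ has eigenvalues $a_j b_k$ on the product eigenbasis; applying $\log$ eigenvalue-wise gives $\log(a_j b_k) = \log a_j + \log b_k$, which reassembles into the claimed sum (with the convention that the logarithm acts on the support).

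Then, using linearity of the trace and the definition of the partial trace,
\begin{align*}
\tr{\rho_{\mathrm{AB}} (\log \rho_{\mathrm{A}} \otimes \unit)} = \tr{\rho_{\mathrm{A}} \log \rho_{\mathrm{A}}} = -S(A),
\end{align*}
and analogously the other term yields $-S(B)$. Combining gives
\begin{align*}
S(\rho_{\mathrm{AB}} \parallel \rho_{\mathrm{A}} \otimes \rho_{\mathrm{B}}) = -S(A,B) + S(A) + S(B) \geq 0,
\end{align*}
which is exactly~\eqref{eq:subadditivity}. For the equality clause, Klein's inequality gives equality iff $\rho_{\mathrm{AB}} = \rho_{\mathrm{A}} \otimes \rho_{\mathrm{B}}$, so the ``if and only if'' statement follows for free.

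The only real subtlety (not a deep obstacle, but worth handling carefully) is the log-of-tensor-product identity when either marginal is not full rank: one must restrict to the support of $\rho_{\mathrm{A}} \otimes \rho_{\mathrm{B}}$ and verify that $\rho_{\mathrm{AB}}$ is automatically supported there whenever $S(\rho_{\mathrm{AB}} \parallel \rho_{\mathrm{A}} \otimes \rho_{\mathrm{B}})$ is finite (else the inequality is trivial with the right-hand side being $+\infty$). Apart from this technicality, the proof is essentially a one-line computation once Klein's inequality is invoked.
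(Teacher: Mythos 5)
Your proposal is correct and follows essentially the same route as the paper's proof: apply Klein's inequality to $\rho_{\mathrm{AB}}$ versus $\rho_{\mathrm{A}}\otimes\rho_{\mathrm{B}}$, split $\log(\rho_{\mathrm{A}}\otimes\rho_{\mathrm{B}})=\log\rho_{\mathrm{A}}\otimes\unit+\unit\otimes\log\rho_{\mathrm{B}}$, and reduce via the partial trace, with the equality clause inherited from the saturation condition of Klein's inequality. Your added remarks on justifying the logarithm identity through the product eigenbasis and on the support issue are sensible refinements, not a different argument.
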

\begin{proof}
Let us consider Klein's inequality~\eqref{eq:Kleinineq} and  Eq.~\eqref{eq:quantumrelativeentropy}, $S(\rho) \leq -\tr{\rho \log \sigma}$. By setting $\rho=\rho_\mathrm{AB}$ and $\sigma=\rho_\mathrm{A} \otimes \rho_\mathrm{B}$, we get
\begin{align}
S(A,B) &\leq -\tr{\rho_\mathrm{AB} \log\left( \rho_\mathrm{A} \otimes \rho_\mathrm{B}\right)}.
\end{align}
Next, we use 
\begin{align}
\begin{split}
\log\left( \rho_\mathrm{A} \otimes \rho_\mathrm{B}\right) &= \log\left(\left( \rho_\mathrm{A} \otimes \unit\right) \left( \unit \otimes \rho_\mathrm{B}\right)\right)\\
&=\log\left( \rho_\mathrm{A} \otimes \unit\right) +\log \left( \unit \otimes \rho_\mathrm{B}\right)\\
&=\log  \rho_\mathrm{A} \otimes \unit + \unit \otimes  \log \rho_\mathrm{B},
\end{split}
\end{align}
such that
\begin{align}
\begin{split}
S(A,B) &\leq -\tr{\rho_\mathrm{AB} \left( \log\rho_\mathrm{A} \otimes \unit + \unit \otimes \log \rho_\mathrm{B}\right)}\\
&=-\tr{\rho_\mathrm{AB} \left( \log\rho_\mathrm{A} \otimes \unit \right)}-\tr{\rho_\mathrm{AB} \left( \unit \otimes \log\rho_\mathrm{B} \right)}\\
&=-\trp{A}{\rho_\mathrm{A} \log\rho_\mathrm{A} }-\trp{B}{\rho_\mathrm{B}  \log\rho_\mathrm{B}}\\
&=S(A)+S(B).
\end{split}
\end{align}
Since Klein's inequality saturates if and only if $\rho=\sigma$, this proves the subadditivity from Eq.~\eqref{eq:subadditivity}. 
\end{proof}

\begin{theorem}[\textbf{Concavity of the entropy; Entropy of a mixture}]
Let $\rho_x$ be density operators and $p_x$ probabilities satisfying $\sum_x p_x=1$. Then 
\begin{align}\label{eq:concavityS}
\sum_x p_x S(\rho_x) \leq S\left(\sum_x p_x \rho_x \right) \leq H(p_x) + \sum_x p_x S(\rho_x).
\end{align}
The first inequality saturates if and only if all states $\rho_x$ are identical, and the second inequality saturates if and only if the states $\rho_x$ have support on orthogonal subspaces. 
\end{theorem}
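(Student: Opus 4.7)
The plan is to handle the two inequalities separately using the classical-quantum extension trick. For the lower bound, I would introduce an auxiliary Hilbert space $\mathcal{H}_B$ with orthonormal basis $\{\ket{x}\}$ and form
\begin{align*}
\rho_{AB} = \sum_x p_x\, \rho_x \otimes \ketbra{x}{x}_B.
\end{align*}
The summands $\rho_x \otimes \ketbra{x}{x}$ have mutually orthogonal supports (because the $\ket{x}_B$ are orthonormal), so property~5) of Theorem~\ref{th:basicpropvonNeumann} together with the additivity property~6) gives $S(\rho_{AB}) = H(p_x) + \sum_x p_x S(\rho_x)$. The reduced states are $\rho_A = \sum_x p_x \rho_x$ and $\rho_B = \sum_x p_x \ketbra{x}{x}$, with $S(\rho_B) = H(p_x)$. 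Plugging these into subadditivity $S(\rho_{AB}) \leq S(\rho_A) + S(\rho_B)$ and cancelling $H(p_x)$ on both sides immediately yields $\sum_x p_x S(\rho_x) \leq S(\sum_x p_x \rho_x)$. The equality case of subadditivity, $\rho_{AB} = \rho_A \otimes \rho_B$, forces $\rho_x = \sum_y p_y \rho_y$ for every $x$ with $p_x > 0$, i.e.\ all the $\rho_x$ coincide.

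For the upper bound I would first expand each $\rho_x = \sum_j \lambda_x^j \ketbra{e_x^j}{e_x^j}$ by its spectral decomposition, so that $\rho = \sum_{x,j} r_{x,j} \ketbra{e_x^j}{e_x^j}$ with $r_{x,j} := p_x \lambda_x^j$ is a (generally non-orthogonal) pure-state decomposition of $\rho$. Let $\rho = \sum_k \mu_k \ketbra{\mu_k}{\mu_k}$ be the true spectral decomposition, so $S(\rho) = H(\mu_k)$. Theorem~\ref{th:unitaryfreedomdensity} now supplies a unitary $U$ with $\sqrt{r_{x,j}}\,\ket{e_x^j} = \sum_k U_{(x,j),k}\sqrt{\mu_k}\ket{\mu_k}$; taking the squared norm and using the orthonormality of $\{\ket{\mu_k}\}$ gives $r_{x,j} = \sum_k M_{(x,j),k}\, \mu_k$ with $M_{(x,j),k} := |U_{(x,j),k}|^2$. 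Unitarity of $U$ makes $M$ doubly stochastic: $\sum_k M_{(x,j),k} = 1$ (rows) and $\sum_{x,j} M_{(x,j),k} = 1$ (columns). A Jensen-inequality argument for the concave function $f(t) = -t\log t$ then gives
\begin{align*}
H(r_{x,j}) = \sum_{x,j} f\!\left(\sum_k M_{(x,j),k}\mu_k\right) \geq \sum_{x,j,k} M_{(x,j),k}\, f(\mu_k) = \sum_k f(\mu_k) = H(\mu_k) = S(\rho),
\end{align*}
where row-stochasticity permits Jensen and column-stochasticity collapses the outer sum. A short calculation separating $\log(p_x\lambda_x^j) = \log p_x + \log \lambda_x^j$ shows $H(r_{x,j}) = H(p_x) + \sum_x p_x S(\rho_x)$, yielding the upper bound. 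Equality in Jensen forces $M$ to be a permutation matrix (whenever the $\mu_k$ are distinct), which via the unitary relation identifies each $\ket{e_x^j}$ with an eigenvector $\ket{\mu_k}$ of $\rho$; hence the $\rho_x$ must have mutually orthogonal supports.

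The main obstacle is the doubly-stochastic entropy step, because the excerpt never isolates the lemma ``$\vec{r} = M\vec{\mu}$ with $M$ doubly stochastic implies $H(\vec{r}) \geq H(\vec{\mu})$'' as a prior result. I would have to inline the Jensen computation and be careful that \emph{both} normalizations of $M$ are used precisely where needed. The second delicate point is extracting the stated saturation criterion from the permutation condition on $M$, particularly if $\rho$ has degenerate eigenvalues, which would require a small separate unitary-rotation argument within each eigenspace before concluding that the $\rho_x$'s supports pair up orthogonally.
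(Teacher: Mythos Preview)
Your treatment of the first inequality is identical to the paper's: same classical-quantum extension $\rho_{AB}=\sum_x p_x\,\rho_x\otimes\ketbra{x}{x}$, same computation of the three entropies, same appeal to subadditivity and its equality condition.

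For the upper bound you take a genuinely different route. The paper first handles pure $\rho_x=\ketbra{\psi_x}{\psi_x}$ by forming the purification $\ket{AB}=\sum_x\sqrt{p_x}\,\ket{\psi_x}\ket{x}$, noting $S(\rho)=S(B)$, and then invoking Theorem~\ref{th:entroproject} (projective measurements increase entropy) on the $B$ register to get $S(\rho)\le H(p_x)$; mixed $\rho_x$ are then reduced to this case by spectral decomposition, exactly as you do in your last step. Your argument instead goes through Theorem~\ref{th:unitaryfreedomdensity} and a doubly-stochastic/Jensen computation, which is essentially the Schur--Horn majorization proof that any pure-state decomposition has Shannon entropy at least the spectral entropy. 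Both are correct. The paper's version is shorter here because Theorem~\ref{th:entroproject} has already absorbed the Klein-inequality work; your version is more self-contained and makes the underlying majorization structure explicit, at the cost of inlining the Jensen step. On the saturation condition the paper has the cleaner path: equality in Theorem~\ref{th:entroproject} means $\rho_B=\rho_B'$, which forces $\bracket{\psi_y}{\psi_x}=0$ for $x\neq y$ directly, bypassing the degenerate-eigenvalue subtlety you flagged.
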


\begin{proof}
We start with proving the first inequality, which is the concavity of the entropy: Consider a composite system $AB$, and the state
\begin{align}\label{eq:concavvNrhoab}
\rho_\mathrm{AB}=\sum_x p_x \rho_x \otimes \ketbra{x}{x},
\end{align}
with $\rho_x$ belonging to subsystem $A$, and $\{\ket{x}\}_x$ an ON-basis of subsystem $B$. Then we have
\begin{align}
S(A)&=S\left( \sum_x p_x \rho_x  \right)\\
S(B)&=S\left( \sum_x p_x \ketbra{x}{x} \right)=H(p_x),
\end{align}
and
\begin{align}\label{eq:EntropyCalc}
\begin{split}
S(A,B)&=S\left(\sum_x p_x \rho_x \otimes \ketbra{x}{x}\right)\\
&\overset{\eqref{eq:Sorthogonal}}{=} H(p_x) + \sum_x p_x S(\rho_x \otimes \ketbra{x}{x})\\
&=H(p_x) + \sum_x p_x \left[S(\rho_x) + S(\ketbra{x}{x}) \right]\\
&=H(p_x) + \sum_x p_x S(\rho_x).
\end{split}
\end{align}
The first and second equation follows from taking the partial trace in Eq.~\eqref{eq:concavvNrhoab} over subsystem $B$ and $A$, respectively, and the last equation follows from the orthogonality of $\rho_x \otimes \ketbra{x}{x}$ for different $x$ and items 1),5), and 6) in theorem~\ref{th:basicpropvonNeumann}. Plugging these equations into the subadditivity from Eq.~\eqref{eq:subadditivity} then leads to
\begin{align}\label{eq:proofconcavS}
 \sum_x p_x S(\rho_x) \leq S\left( \sum_x p_x \rho_x  \right),
\end{align}
which coincides with the first inequality in~\eqref{eq:concavityS}. Since the subadditivity~\eqref{eq:subadditivity} saturates if and only if $\rho_\mathrm{AB}=\rho_\mathrm{A}\otimes \rho_\mathrm{B}$, by Eq.~\eqref{eq:concavvNrhoab}, we see that~\eqref{eq:proofconcavS} saturates if and only if all $\rho_x$ are identical. This finishes the proof of the first inequality in~\eqref{eq:concavityS}.

Next we prove the second inequality in~\eqref{eq:concavityS}. First, we consider pure states $\rho_x=\ketbra{\psi_x}{\psi_x}$. Let us introduce an additional system $B$ with ON-basis $\{\ket{x}\}_x$, and consider the joint pure state
\begin{align}
\ket{AB}= \sum_x \sqrt{p_x} \ket{\psi_x} \ket{x}
\end{align}
on the composite system $AB$. Since $\ket{AB}$ is pure, we have 
\begin{align}
S(B)=S(A)=S\left(\sum_x p_x \ketbra{\psi_x}{\psi_x} \right)=S(\rho),
\end{align}
where we introduced the shorthand $\rho=\sum_x p_x \rho_x=\sum_x p_x \ketbra{\psi_x}{\psi_x}$. Next, suppose that we perform a projective measurement on system $B$ with projectors $P_x=\ketbra{x}{x}$. Then the state of subsystem $B$ after the measurement is $\rho'=\sum_x p_x \ketbra{x}{x}$. Accordingly, using theorem~\ref{th:entroproject}, we find $S(\rho)=S(B) \leq S(B') = H(p_x)$, with equality if and only if all states $\rho_x$ have support on orthogonal subspaces, since then $\rho=\rho'$. Including $S(\rho_x)=0$ for pure states $\rho_x=\ketbra{\psi_x}{\psi_x}$, this can also be written as
\begin{align}
S(\rho) \leq H(p_x) + \sum_x p_x S(\rho_x),
\end{align}
which proves the second inequality in~\eqref{eq:concavityS} for pure states. 

Finally, let us consider the case of mixed states $\rho_x$. Consider the eigendecomposition $\rho_x=\sum_y \lambda_y^x \ketbra{\lambda_y^x}{\lambda_y^x}$, such that $\rho=\sum_{x,y} p_x \lambda_y^x \ketbra{\lambda_y^x}{\lambda_y^x}$. By applying our result for pure states, we find
\begin{align}
\begin{split}
S(\rho) &\leq H(p_x\lambda_y^x)\\
&=-\sum_{x,y} p_x \lambda_y^x \log\left( p_x \lambda_y^x\right)\\
&=-\sum_x\left( \sum_y \lambda_y^x \right) p_x \log p_x - \sum_x  p_x \sum_y \lambda_y^x \log \lambda_y^x.
\end{split}
\end{align}
Using $\sum_y \lambda_y^x=1$ for all $x$, we then arrive at
\begin{align}
S(\rho) &\leq H(p_x) + \sum_x p_x S(\rho_x),
\end{align}
which coincides with the second inequality in~\eqref{eq:concavityS}. From the condition for saturation in the case of pure states, and the fact that $\ket{\lambda_y^x}$ are orthogonal for fixed $x$ and different $y$, we find that the inequality saturates if and only if $\rho_x$ have support on orthogonal subspaces. 
\end{proof}

\begin{theorem}[\textbf{Strong subadditivity}]
For a composite quantum system $ABC$,
\begin{align}\label{eq:strongsubadd}
S(A,B,C)+S(B) \leq S(A,B)+S(B,C)
\end{align}
\end{theorem}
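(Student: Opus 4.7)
The plan is to reduce strong subadditivity to the \emph{monotonicity of the quantum relative entropy under trace-preserving quantum operations}: for any two density operators $\rho$ and $\sigma$ on $\mathcal{H}_1 \otimes \mathcal{H}_2$, one has $S(\trp{2}{\rho} \parallel \trp{2}{\sigma}) \leq S(\rho \parallel \sigma)$. This is a nontrivial fact, essentially equivalent to Lieb's concavity theorem, and it is not proven elsewhere in these notes; I would cite it as the deep input (cf.\ Sec.~11.3 in \cite{Nielsen-QC-2011}) and then derive strong subadditivity from it by a short algebraic manipulation.

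Concretely, I would apply the monotonicity inequality to the two states $\rho_\mathrm{ABC}$ and $\rho_\mathrm{A}\otimes\rho_\mathrm{BC}$, with $\rho_\mathrm{A}=\trp{BC}{\rho_\mathrm{ABC}}$ and $\rho_\mathrm{BC}=\trp{A}{\rho_\mathrm{ABC}}$, tracing out system $C$. Since $\trp{C}{\rho_\mathrm{ABC}}=\rho_\mathrm{AB}$ and $\trp{C}{\rho_\mathrm{A}\otimes\rho_\mathrm{BC}}=\rho_\mathrm{A}\otimes\rho_\mathrm{B}$, this yields
\begin{align}
S(\rho_\mathrm{AB} \parallel \rho_\mathrm{A}\otimes\rho_\mathrm{B}) \leq S(\rho_\mathrm{ABC} \parallel \rho_\mathrm{A}\otimes\rho_\mathrm{BC}).
\end{align}

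The remaining step is a straightforward calculation. Using $\log(\sigma_\mathrm{A}\otimes\sigma_\mathrm{B})=\log\sigma_\mathrm{A}\otimes\unit + \unit\otimes\log\sigma_\mathrm{B}$ together with the fact that the partial trace of $\rho_\mathrm{AB}$ onto each subsystem reproduces $\rho_\mathrm{A}$ and $\rho_\mathrm{B}$ (and analogously for the tripartite state), I would compute
\begin{align}
S(\rho_\mathrm{AB} \parallel \rho_\mathrm{A}\otimes\rho_\mathrm{B}) &= -S(A,B)+S(A)+S(B),\\
S(\rho_\mathrm{ABC} \parallel \rho_\mathrm{A}\otimes\rho_\mathrm{BC}) &= -S(A,B,C)+S(A)+S(B,C).
\end{align}
Substituting these into the monotonicity inequality and canceling the common $S(A)$ term gives exactly $S(A,B,C)+S(B)\leq S(A,B)+S(B,C)$, i.e.~Eq.~\eqref{eq:strongsubadd}.

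The main obstacle is unambiguously the monotonicity of the relative entropy itself: it is the only nontrivial ingredient, and deriving it from first principles requires either Lieb's concavity theorem or the modern approach via Petz's recovery map. Within the scope of these notes it is natural to state it as an accepted deep result and carry out only the elementary reduction above. An alternative, less transparent route would be to prove strong subadditivity directly by combining Lieb's theorem with the Stinespring/purification construction, but passing through relative-entropy monotonicity is by far the cleanest presentation.
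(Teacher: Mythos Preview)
Your proposal is correct: reducing strong subadditivity to monotonicity of the relative entropy under the partial trace, and recognizing both sides as mutual informations $S(A{:}B)\leq S(A{:}B,C)$, is the standard clean derivation, and your algebra is right. The paper itself does not give a proof at all---it simply refers the reader to page~521 of \cite{Nielsen-QC-2011}, where the argument ultimately rests on Lieb's theorem as you anticipate---so your write-up actually supplies more than the notes do.
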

\begin{proof}
See page 521 in \cite{Nielsen-QC-2011}.
\end{proof}

\begin{theorem}[\textbf{Conditioning reduces entropy}]
For a composite quantum system $ABC$,
\begin{align}
S(A|B,C) \leq S(A|B).
\end{align}
\end{theorem}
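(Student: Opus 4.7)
The plan is to derive the inequality as a direct consequence of strong subadditivity, which has just been stated. The key observation is that the conditional entropies appearing on both sides of the claim can each be rewritten as differences of joint von Neumann entropies via the definition in Eq.~\eqref{eq:conditionalentropyQ}, and when this rewriting is done, the desired inequality reduces precisely to Eq.~\eqref{eq:strongsubadd}.

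Concretely, I would first expand the right-hand side using $S(A|B) = S(A,B) - S(B)$, and the left-hand side using $S(A|B,C) = S(A,B,C) - S(B,C)$, treating $BC$ as a single composite subsystem (which is legitimate since the conditional entropy is defined in Definition on any bipartition). The claim $S(A|B,C) \leq S(A|B)$ then becomes
\begin{align}
S(A,B,C) - S(B,C) \leq S(A,B) - S(B),
\end{align}
which, after rearranging, reads
\begin{align}
S(A,B,C) + S(B) \leq S(A,B) + S(B,C).
\end{align}

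This is exactly the strong subadditivity inequality~\eqref{eq:strongsubadd}, so invoking the previous theorem finishes the proof. There is no real obstacle here, since the hard work has already been done in proving strong subadditivity (which is cited from \cite{Nielsen-QC-2011}); the statement ``conditioning reduces entropy'' is simply a repackaging of strong subadditivity in terms of conditional entropies, and the main content of the proof is recognizing this equivalence through the definition of $S(\cdot|\cdot)$.
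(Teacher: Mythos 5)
Your proof is correct and follows exactly the same route as the paper: rewrite both conditional entropies via $S(A|B,C)=S(A,B,C)-S(B,C)$ and $S(A|B)=S(A,B)-S(B)$, rearrange, and observe that the resulting inequality is precisely strong subadditivity~\eqref{eq:strongsubadd}. Nothing is missing.
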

\begin{proof}
Using the definition of the conditional entropy~\eqref{eq:conditionalentropyQ}, we have $S(A|B,C)=S(A,B,C)-S(B,C)$ and $S(A|B)=S(A,B)-S(B)$, such that 
\begin{align}
\begin{split}
&S(A|B,C) \leq S(A|B)\\
\Leftrightarrow& S(A,B,C)-S(B,C) \leq S(A,B)-S(B) \\
\Leftrightarrow&S(A,B,C) + S(B) \leq S(A,B) +S(B,C),
\end{split}
\end{align}
which is the strong subadditivity from Eq.~\eqref{eq:strongsubadd}. 
\end{proof}

\begin{theorem}[\textbf{Discarding quantum systems never increases mutual information}]\label{th:discardmutual}
For a composite quantum system $ABC$,
\begin{align}\label{eq:discarding}
S(A:B) \leq S(A:B,C).
\end{align}
\end{theorem}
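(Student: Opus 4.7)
The plan is to reduce the claim directly to strong subadditivity (theorem on Eq.~\eqref{eq:strongsubadd}), which has already been stated in the excerpt and may therefore be used as a black box. There are essentially no obstacles beyond careful bookkeeping of the definitions, since the inequality is an algebraic rearrangement of strong subadditivity.

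First I would expand both sides using the definition of the quantum mutual information from Eq.~\eqref{eq:mutualinfoQ}. Explicitly,
\begin{align}
S(A:B) &= S(A) + S(B) - S(A,B),\\
S(A:B,C) &= S(A) + S(B,C) - S(A,B,C),
\end{align}
where in the second line I treat $BC$ as a single composite subsystem, so that its von Neumann entropy is $S(B,C) = -\tr{\rho_{BC}\log\rho_{BC}}$ with $\rho_{BC} = \trp{A}{\rho_{ABC}}$, and similarly $S(A,B,C) = S(\rho_{ABC})$.

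Next I would subtract one from the other. The common term $S(A)$ cancels, so the desired inequality $S(A:B)\leq S(A:B,C)$ is equivalent to
\begin{align}
S(B) - S(A,B) \leq S(B,C) - S(A,B,C),
\end{align}
which upon rearranging is precisely
\begin{align}
S(A,B,C) + S(B) \leq S(A,B) + S(B,C).
\end{align}
This is exactly strong subadditivity~\eqref{eq:strongsubadd}, so the claim follows immediately.

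The only conceptual step is recognizing that the monotonicity of mutual information under discarding a subsystem is, up to algebra, the content of strong subadditivity; no new estimate is needed. If one wanted equality conditions, the main work would shift to characterising saturation in strong subadditivity (Markov chain states), which is substantially more involved, but this is not required by the statement here.
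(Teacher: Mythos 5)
Your proposal is correct and follows essentially the same route as the paper: expand both mutual informations via Eq.~\eqref{eq:mutualinfoQ}, cancel $S(A)$, and observe that the resulting inequality is exactly strong subadditivity~\eqref{eq:strongsubadd}. No gaps.
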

\begin{proof}
With the definition of the mutual information from Eq.~\eqref{eq:mutualinfoQ}, we have
\begin{align}
\begin{split}
&S(A:B) \leq S(A:B,C)\\
\Leftrightarrow & S(A)+S(B)-S(A,B) \leq S(A)+S(B,C) - S(A,B,C) \\
\Leftrightarrow & S(A,B,C)+S(B) \leq S(A,B)+S(B,C),
\end{split}
\end{align}
which is the strong subadditivity from Eq.~\eqref{eq:strongsubadd}. 
\end{proof}

\begin{theorem}[\textbf{Quantum operations never increase mutual information}]\label{th:mutualinfoquantumop}
Let $AB$ be a composite quantum system, $\mathcal{E}$ a trace-preserving quantum operation on subsystem $B$, $S(A:B)$ the mutual information between $A$ and $B$ before the operation, and $S(A',B')$ the mutual information after $\mathcal{E}$ was applied to subsystem $B$. Then
\begin{align}
S(A':B')\leq S(A:B).
\end{align}
\end{theorem}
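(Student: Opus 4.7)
The plan is to reduce this statement to the already-established fact that discarding quantum systems cannot increase mutual information (Theorem~\ref{th:discardmutual}), using the Stinespring-style dilation of $\mathcal{E}$ provided by Theorem~\ref{thm:EtoU}. The key observation is that a trace-preserving quantum operation on $B$ can be realized by coupling $B$ to an ancillary environment $E$ prepared in a pure state $\ket{e_0}$, applying a unitary $U$ on $BE$, and then tracing out $E$. Since none of these steps touch $A$ directly, and unitaries together with the addition of an uncorrelated pure ancilla both preserve all relevant entropies, the only operation that can decrease the mutual information with $A$ is the final partial trace over $E$.

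First I would invoke Theorem~\ref{thm:EtoU} to find a Hilbert space $\mathcal{H}_E$, a pure state $\ket{e_0}\in\mathcal{H}_E$, and a unitary $U$ on $\mathcal{H}_B\otimes\mathcal{H}_E$ such that $\mathcal{E}(\sigma)=\trp{E}{U(\sigma\otimes\ketbra{e_0}{e_0})U^\dagger}$ for every density operator $\sigma$ on $\mathcal{H}_B$. Applying $\unit_A\otimes U$ to $\rho_{AB}\otimes\ketbra{e_0}{e_0}$ then produces a state $\tilde{\rho}_{ABE}$ on the composite system, whose partial trace over $E$ equals $\rho_{A'B'}=(\unit_A\otimes\mathcal{E})(\rho_{AB})$.

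Next I would show that adjoining the environment in a pure product state leaves the mutual information unchanged, and that the subsequent unitary preserves it as well. For the first part, note that the state $\rho_{AB}\otimes\ketbra{e_0}{e_0}$ yields the reduced state $\rho_B\otimes\ketbra{e_0}{e_0}$ on $BE$, and item 6) of Theorem~\ref{th:basicpropvonNeumann} together with item 1) gives $S(BE)=S(B)+S(\ketbra{e_0}{e_0})=S(B)$ and $S(ABE)=S(AB)$; hence $S(A:BE)=S(A:B)$. For the second part, observe that $\unit_A\otimes U$ acts trivially on $A$, so $\rho_A$ is unchanged; moreover, by item 3) of Theorem~\ref{th:basicpropvonNeumann}, $S(BE)$ and $S(ABE)$ are invariant under the unitary $U$ on $BE$ and $\unit_A\otimes U$ on $ABE$ respectively. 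Therefore $S(A:BE)$ evaluated on $\tilde{\rho}_{ABE}$ equals its value on $\rho_{AB}\otimes\ketbra{e_0}{e_0}$, namely $S(A:B)$.

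Finally I would apply Theorem~\ref{th:discardmutual} to $\tilde{\rho}_{ABE}$: discarding the subsystem $E$ gives $S(A:B')\leq S(A:B'E')$, where the right-hand side equals $S(A:BE)=S(A:B)$ by the preceding paragraph, and the left-hand side is $S(A':B')$ since $A$ is untouched. Chaining these identities and the inequality yields $S(A':B')\leq S(A:B)$, as desired. The only subtle point is the careful identification of the various subsystems before and after the dilation; there is no serious analytic obstacle, since every ingredient has already been proved earlier in the excerpt.
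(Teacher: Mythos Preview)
Your proof is correct and follows essentially the same approach as the paper: dilate $\mathcal{E}$ via a pure ancilla and a unitary on $BE$, note that adjoining a pure uncorrelated ancilla and applying a unitary on $BE$ both preserve $S(A:BE)$, and then invoke Theorem~\ref{th:discardmutual} to conclude. The paper's version is terser (it calls the ancilla $C$ and does not spell out the entropy identities), but the logical structure is identical.
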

\begin{proof}
Let us introduce a third subsystem $C$, initially in a pure state and uncorrelated with $B$, such that $\mathcal{E}$ acts as a unitary transformation on the composite system $BC$ [recall Eq.~\eqref{eq:osr}]. Since $C$ is initially in a pure state and uncorrelated with $B$, we have $S(A:B)=S(A:B,C)$, and since $\mathcal{E}$ only acts unitarily on $BC$, we have $S(A:B,C)=S(A':B',C')$. Moreover, using~\eqref{eq:discarding}, we get $S(A':B',C')\geq S(A',B')$. Hence, altogether we have $S(A:B) \geq S(A':B')$. 
\end{proof}

\subsection{The Holevo bound} 
Let us come back to the communication scenario between Alice and Bob. Suppose Alice uses the alphabet $\{\rho_1,\dots,\rho_n\}$ whose letters are not necessarily orthogonal quantum states $\rho_x$. Suppose that these states appear with a-priori probabilities $\{p_1,\dots,p_n\}$, which defines the   random variable $X$. Sometimes, we also say that Alice draws states from the ensemble $\mathcal{E}=\{p_x,\rho_x\}_{x=1}^n$, and sends them to Bob. Hence, Bob receives the state $\rho=\sum_x p_x \rho_x$. His task is to determine the value of $X$ as best as he can. To do so, he performs a measurement according to the POVM $\{E_1,\dots,E_m\}$. The outcome of his measurement is the set of probabilities $\{p(1),\dots,p(m)\}$, where $p(y)=\tr{\rho E_y}$, which we associate with the random variable $Y$. That is, Bob tries to identify the value of $X$ based on his measurement result $Y$. 

The amount of information gained by Bob can be quantified by the mutual information $H(X:Y)$ of the random variable $X$ (which Bob would like to learn) and Bob's measurement outcome $Y$. Bob can infer $X$ with certainty if and only if $H(X:Y)=H(X)$ (for a proof, see theorem~11.5 on page 509 in  \cite{Nielsen-QC-2011}), while in general $H(X:Y)\leq H(X)$ (see theorem~\ref{th:entropiesXYproperties}). Indeed, the closer $H(X:Y)$ to $H(X)$, the more information has Bob about $X$. Accordingly, Bob tries to choose the POVM such that $H(X:Y)$ becomes maximal. This defines Bob's accessible information about $X$: 

\begin{definition}
The \underline{accessible information} of an ensemble $\mathcal{E}=\{p_x,\rho_x\}_{x=1}^n$ associated with the random variable $X$ is defined as
\begin{align}
H_\mathrm{acc}(\mathcal{E})=\max_{\{E_y\}_y} H(X:Y),
\end{align}
with the maximization running over all POVMs $\{E_1,\dots,E_m\}$ with outcome $Y$. 
\end{definition}

Note that for states $\rho_x$ with support on orthogonal subspaces one can always find a POVM for which $H_\mathrm{acc}(\mathcal{E})=H(X:Y)=H(X)$. This doesn't hold for non-orthogonal states, since non-orthogonal states cannot reliably be distinguished, such that by measuring the states $\rho_x$ we can never be certain about $X$. In other words, $H_\mathrm{acc}(\mathcal{E}) < H(X)$ (we will see this in a bit). But how large can the accessible information be? There is no general method for calculating $H_\mathrm{acc}(\mathcal{E})$, however, the following theorem provides an useful upper bound for the accessible information:

\begin{theorem}[\textbf{The Holevo bound}]
For an ensemble $\mathcal{E}=\{p_x,\rho_x\}_{x=1}^n$ with associated state $\rho=\sum_x p_x \rho_x$, the accessible information is bounded by
\begin{align}\label{eq:Holevo}
H_\mathrm{acc}(\mathcal{E}) \leq S(\rho)- \sum_x p_x S(\rho_x).
\end{align}
\end{theorem}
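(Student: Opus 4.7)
The plan is to use a three-system trick: introduce a classical ``preparation'' register $P$ that records the value of $X$, the quantum system $Q$ holding $\rho_x$, and a ``measurement apparatus'' register $M$ initially in a pure state. The point is to convert Bob's measurement into a trace-preserving quantum operation on $QM$, then invoke the two theorems stating that such operations cannot increase mutual information (Theorem~\ref{th:mutualinfoquantumop}) and that discarding systems cannot increase mutual information (Theorem~\ref{th:discardmutual}).

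First I would define the joint state
\begin{align*}
\rho^{PQM}=\sum_x p_x\ketbra{x}{x}_P\otimes \rho_x^Q\otimes \ketbra{0}{0}_M,
\end{align*}
with $\{\ket{x}\}_x$ an ON-basis of $P$. Since the components $\ketbra{x}{x}_P\otimes \rho_x^Q$ have mutually orthogonal support on $PQ$, property~5 of Theorem~\ref{th:basicpropvonNeumann} gives $S(P,Q)=H(p_x)+\sum_x p_x S(\rho_x)$, while $S(P)=H(p_x)$ and $S(Q)=S(\rho)$. Plugging into Definition~\ref{def:mutualinfoQ} yields
\begin{align*}
S(P:Q)=S(\rho)-\sum_x p_x S(\rho_x),
\end{align*}
which is already the right-hand side of the claimed bound. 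Because $M$ is initially pure and uncorrelated, $S(P:Q,M)=S(P:Q)$.

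Next I would realize Bob's POVM $\{E_y\}_y$ as a trace-preserving quantum operation on $QM$: following the operator-sum construction and the remark in Sec.~\ref{sec:POVM}, map $\sigma^Q\otimes\ketbra{0}{0}_M \mapsto \sum_y (\sqrt{E_y}\,\sigma^Q\sqrt{E_y})\otimes\ketbra{y}{y}_M$. Applying this channel to $\rho^{PQM}$, Theorem~\ref{th:mutualinfoquantumop} gives $S(P':Q',M')\leq S(P:Q,M)=S(P:Q)$, and Theorem~\ref{th:discardmutual} (tracing out $Q'$) gives $S(P':M')\leq S(P':Q',M')$.

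Finally I would identify $S(P':M')$ with the classical mutual information $H(X:Y)$. Tracing out $Q'$ leaves
\begin{align*}
\rho'^{PM}=\sum_{x,y} p_x\,p(y\mid x)\,\ketbra{x}{x}_P\otimes \ketbra{y}{y}_M,
\end{align*}
with $p(y\mid x)=\tr{E_y\rho_x}$. This is a diagonal (classical) state in the product basis, so its von~Neumann entropies coincide with the Shannon entropies of the joint distribution $p(x,y)=p_x\,p(y\mid x)$; in particular $S(P':M')=H(X:Y)$. Chaining the inequalities gives $H(X:Y)\leq S(\rho)-\sum_x p_x S(\rho_x)$, and maximizing the left-hand side over all POVMs $\{E_y\}_y$ yields the Holevo bound. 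The only subtlety I expect is the careful justification that the channel on $QM$ really is trace-preserving and completely positive (which is why the ancilla $M$ and the particular Kraus form $\sqrt{E_y}\otimes\ketbra{y}{0}$ are needed); once that is in place the remaining steps are direct applications of the theorems already proved.
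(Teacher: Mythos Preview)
Your proposal is correct and follows essentially the same route as the paper: the same three-register setup $PQM$, the same realization of the POVM as a trace-preserving channel $\sigma\otimes\ketbra{0}{0}\mapsto\sum_y\sqrt{E_y}\,\sigma\sqrt{E_y}\otimes\ketbra{y}{y}$ on $QM$, and the same chain $S(P:Q)=S(P:Q,M)\geq S(P':Q',M')\geq S(P':M')=H(X:Y)$ via Theorems~\ref{th:mutualinfoquantumop} and~\ref{th:discardmutual}. The computations of $S(P:Q)$ and $S(P':M')$ also match the paper's exactly.
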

\begin{proof}
Let us consider a tripartite system $PQM$, where subsystem $P$ is the ``preparation system'' with orthonormal basis states $\ket{x}$, subsystem $Q$ is the ``quantum system'' which Alice sends to Bob, and subsystem $M$ is the ``measuring device'' whose orthonormal basis states $\ket{y}$ correspond to the measurement outcomes $y=1,\dots, m$. Before Bob's measurement, we consider the common state
\begin{align}\label{eq:rhoPQMHolevo}
\rho_{PQM}=\sum_x p_x \ketbra{x}{x} \otimes \rho_x \otimes \ketbra{0}{0},
\end{align}
i.e. the preparation system labels the states $\rho_x$ of the quantum system, and the measuring device is initially in $\ketbra{0}{0}$. Bob's measurement of the POVM $\{E_y\}_y$ is modeled by a quantum operation $\mathcal{E}$ acting merely on $QM$ (but not on $P$) according to 
\begin{align}
\mathcal{E}\left( \rho_x \otimes \ketbra{0}{0} \right)= \sum_y \sqrt{E_y} \rho_x \sqrt{E_y} \otimes \ketbra{y}{y}.
\end{align}
Hence, the state after the measurement (indicated by primes) is
\begin{align}\label{eq:rhoPQMPrimeHolevo}
\rho_{P'Q'M'}=\sum_{x,y} p_x \ketbra{x}{x} \otimes \sqrt{E_y} \rho_x \sqrt{E_y}  \otimes \ketbra{y}{y}.
\end{align}
Note that $\mathcal{E}$ is a trace-preserving quantum operation since 
\begin{align}
\begin{split}
\tr{\sum_y \sqrt{E_y} \rho_x \sqrt{E_y} \otimes \ketbra{y}{y}} &= \tr{\sum_y \sqrt{E_y} \rho_x \sqrt{E_y} } \\
 &=\tr{\sum_y E_y \rho_x}\\
 &=\tr{\rho_x}\\
 &=1.
\end{split}
\end{align}

Now, observe that before the measurement we have $S(P:Q)=S(P:Q,M)$, since $Q$ and $M$ are uncorrelated. The measurement operation $\mathcal{E}$ on $QM$ is trace-preserving, and, by theorem~\ref{th:mutualinfoquantumop}, cannot increase mutual information, such that $S(P:Q,M)\geq S(P':Q',M')$. Finally, we discard subsystem $Q$, leading to $S(P':Q',M')\geq S(P':M')$, since discarding quantum systems cannot increase mutual information [see theorem~\ref{th:discardmutual}]. Putting all together gives
\begin{align}\label{eq:ProofHolevo}
S(P':M') \leq S(P:Q).
\end{align}
With a little bit of rewriting, we see that this is the Holevo bound. First, consider the right hand side: $S(P:Q)=S(P)+S(Q)-S(P,Q)$. By tracing over the respective subsystems in Eq.~\eqref{eq:rhoPQMHolevo}, we have $\rho_{PQ}=\sum_x p_x \ketbra{x}{x} \otimes \rho_x$, $\rho_P=\sum_x p_x \ketbra{x}{x}$, and $\rho_Q=\sum_x p_x \rho_x$. From this we find $S(P)=H(X)$, $S(Q)=S(\rho)$, and, by theorem~\ref{th:basicpropvonNeumann} [see items 5) and 6) and Eq.~\eqref{eq:EntropyCalc}], $S(P,Q)=H(X) + \sum_x p_x S(\ketbra{x}{x}\otimes\rho_x)=H(X) + \sum_x p_x S(\rho_x)$. Altogether, this gives
\begin{align}
S(P:Q)=S(\rho)-\sum_x p_x S(\rho_x),
\end{align}
which is the right hand side of the Holevo bound~\eqref{eq:Holevo}. 

Next, consider the left hand side in~\eqref{eq:ProofHolevo}, $S(P':M')=S(P')+S(M')-S(P',M')$: Tracing over subsystem $Q$ in Eq.~\eqref{eq:rhoPQMPrimeHolevo} yields $\rho_{P'M'}=\sum_{x,y} p_x\tr{E_y \rho_x} \ketbra{x}{x} \otimes \ketbra{y}{y}$. Now, observe that $p_x\tr{E_y \rho_x}=p_x p(y|x)=p(x,y)$, such that $\rho_{P'M'}=\sum_{x,y} p(x,y) \ketbra{x}{x} \otimes \ketbra{y}{y}$. Moreover, we have $\rho_{P'}=\sum_x p_x \ketbra{x}{x}$, and $\rho_{M'}=\sum_y p_y \ketbra{y}{y}$. Accordingly, $S(P')=H(X)$, $S(M')=H(Y)$, and $S(P',M')=H(X,Y)$. Altogether, we then get
\begin{align}\label{eq:SPMHol}
S(P':M')=H(X)+H(Y)-H(X,Y)=H(X:Y). 
\end{align}
Finally note that Eq.~\eqref{eq:ProofHolevo} holds independently on the exact POVM. Hence, by choosing the POVM $\{E_y\}_y$, such that $H(X:Y)$ is maximal, Eq.~\eqref{eq:SPMHol} reads 
\begin{align}
S(P':M')=\max_{\{E_y\}_y}H(X:Y)=H_\mathrm{acc}(\mathcal{E}),
\end{align}
which coincides with the left hand side of the Holevo bound~\eqref{eq:Holevo}, and finishes the proof. 
\end{proof}

\begin{remarks}\leavevmode
\begin{enumerate}[1)]
\item The right hand side of~\eqref{eq:Holevo} is often denoted $\chi(\mathcal{E})=S(\rho)- \sum_x p_x S(\rho_x)$, and called \underline{Holevo $\chi$ quantity} or \underline{Holevo information}.

\item By inserting the second inequality from Eq.~\eqref{eq:concavityS} into~\eqref{eq:Holevo}, we find $H_\mathrm{acc}(\mathcal{E}) \leq H(X)$. Since the inequality in~\eqref{eq:concavityS} saturates if and only if the states $\rho_x$ are mutually orthogonal, we see that $H_\mathrm{acc}(\mathcal{E}) < H(X)$ for non-orthogonal states. That is, if Alice prepares non-orthogonal states, Bob can never find the value of $X$ with certainty. 

\item After a moment of thought, we see that the Holevo bound implies that $n$ qubits cannot be used to transmit more than $n$ bits of classical information. 

\end{enumerate}
\end{remarks}

\begin{example}
Suppose that Alice communicates with Bob using the alphabet $\{\ketbra{0}{0},\ketbra{1}{1}\}$, where both states appear with a-priori probability $1/2$. However, their communication channel is noisy, acting as an amplitude damping channel $\mathcal{E}_\mathrm{AD}$, with $\mathcal{E}_\mathrm{AD}(\ketbra{0}{0})=\ketbra{0}{0}$, and $\mathcal{E}_\mathrm{AD}(\ketbra{1}{1})=(1-\gamma)\ketbra{1}{1} + \gamma \ketbra{0}{0}$, where $0\leq \gamma \leq 1$ is the strength of the noise [see Eqs.~\eqref{eq:E0E1ADamp} and~\eqref{eq:E0E1ADamp2}]. Hence, Bob receives 
\begin{align}
\begin{split}
\rho&=\sum_x p_x \mathcal{E}(\rho_x)=\frac{1+\gamma}{2} \ketbra{0}{0} + \frac{1-\gamma}{2} \ketbra{1}{1}
\end{split}
\end{align}
instead of $\rho=(\ketbra{0}{0}+\ketbra{1}{1})/2$ (from which he would be able to reconstruct $X$ with certainty by measuring the POVM with measurement operators $\{\ketbra{0}{0}, \ketbra{1}{1}\}$). How much information got lost due to the noisy channel? This can now be bounded using the Holevo bound: For our example, a short calculation of the Holevo information $\chi(\mathcal{E})$ reveals that 
\begin{align}
H_\mathrm{acc}(\mathcal{E}) \leq \chi(\mathcal{E}) = - \frac{1+\gamma}{2} \log\left(\frac{1+\gamma}{2}\right)  - \frac{1-\gamma}{2} \log\left(\frac{1-\gamma}{2}\right),
\end{align}
where $\chi(\mathcal{E})$ is a monotonously decreasing function in $\gamma$, with $\chi(\mathcal{E})=1$ for $\gamma=0$, and $\chi(\mathcal{E})=0$ for $\gamma=1$. That is, if there is no noise ($\gamma=0$), Bob can find a measurement with which he can learn $X$ with certainty, while for increasing noise $\gamma$, Bob's accessible information about $X$ decreases, with no information accessible in the case of maximal noise, $\gamma=1$. Hence, due to the noisy channel, we loose on average $\chi(\mathcal{E})|_{\gamma=0}-\chi(\mathcal{E})=1-\chi(\mathcal{E})$ bis of information per letter. 
\end{example}

\section{A brief introduction to entanglement theory} 
In Sec.~\ref{sec:bipureent} we briefly introduced entanglement for bipartite pure states. In particular, we defined a state $\ket{\psi}\in\mathcal{H}_\mathrm{A} \otimes \mathcal{H}_\mathrm{B}$ to be entangled if it cannot be written as a product state $\ket{\psi_\mathrm{A}} \otimes \ket{\psi_\mathrm{B}}$ for any choices of $\ket{\psi_\mathrm{A}} \in \mathcal{H}_\mathrm{A} $ and $\ket{\psi_\mathrm{A}} \in \mathcal{H}_\mathrm{B}$. That is, we defined entangled states as ``not being a product state''. 

More precisely, we call a bipartite state entangled if it contains correlations between subsystem $A$ and $B$, which cannot be described by classical correlations. For pure bipartite states, this is equivalent to being not a product state. The reason is that pure bipartite states cannot describe classical correlations between $A$ and $B$, such that all correlations of a pure bipartite state (i.e. everything beyond a product state) must be purely quantum. This is due to classical correlations being described by mixed states. For example, remember our example of two socks with correlated color, which we described by the mixed state $\rho_\mathrm{AB}=(\ketbra{10}{10}+\ketbra{01}{01})/2$. This state describes only classical correlations, and, most importantly, it is \textit{not} a product state. Hence, we see that beyond pure bipartite states, entanglement cannot be described as ``not being a product state''. Accordingly, we have to come up with a more precise criterion to decide whether a bipartite state contains correlations, which cannot be captured by classical correlations. In order to do so, we first have to develop a clear picture of classical correlations, and transformations, which cannot generate entanglement (i.e. any other correlations than classical correlations).

\subsection{LOCC, entanglement, and SEP} 
In order to see how a state without any quantum correlations looks like, let us start from a bipartite state $\rho_\mathrm{AB} \in \mathcal{D}(\mathcal{H}_\mathrm{A} \otimes \mathcal{H}_\mathrm{B})$ without any correlations between subsystem $A$ and $B$ (neither quantum nor classical), 
\begin{align}\label{eq:uncorrprodst}
\rho=\rho_\mathrm{A} \otimes \rho_\mathrm{B},
\end{align}
and ask how this state changes, if Alice (in possession of subsystem $A$) and Bob (in possession of subsystem $B$) perform operations on their system, which cannot generate more than classical correlations. First, we note that any operation $\mathcal{E}_\mathrm{loc}$ which acts locally on each subsystem cannot generate correlations. These local operations are described by the operation elements $\{E_j\otimes F_k\}_{j,k}$, mapping product states to product states,
\begin{align}
\mathcal{E}_\mathrm{loc}(\rho_\mathrm{A} \otimes \rho_\mathrm{B})=\left(\sum_j E_j \rho_\mathrm{A}E_j^\dagger \right) \otimes \left( \sum_k F_k \rho_\mathrm{B} F_k^\dagger \right).
\end{align}
However, there is more than that: Suppose that Alice and Bob can not only perform local operations, but also communicate via a classical channel. Obviously, this cannot allow Alice and Bob to generate more than classical correlations. That is, the resulting state under transformations based on \ul{local operations and classical communication} (LOCC) must be unentangled. 

Lets see how LOCC transformations affect the product state~\eqref{eq:uncorrprodst}. Suppose Alice performs a local operation $\{E_j\otimes \unit\}_j$ on her system (think of a measurement), and sends the outcome $j$ to Bob, who then performs a local operation $\{\unit \otimes F_{jk}\}_k$ conditioned on Alice's message, resulting in the map
\begin{align}\label{eq:LOCC1round}
\begin{split}
\mathcal{E}_\mathrm{LOCC}^{(1)}(\rho_\mathrm{A} \otimes \rho_\mathrm{B})&=\sum_{j,k} E_j \rho_\mathrm{A} E_j^\dagger \otimes F_{jk} \rho_\mathrm{B} F_{jk}^\dagger.
\end{split}
\end{align}
By the conditioned operation of Bob, this procedure introduces classical correlations between $A$ and $B$, such that~\eqref{eq:LOCC1round} is no longer a product state. Alice and Bob can now continue with this procedure, without generating any other than classical correlations. That is, Bob can send his outcome $k$ via classical communication to Alice, who performs a conditioned local operation $\{E_{jkl} \otimes \unit\}_l$ on her system, and sends her outcome to Bob, who continues in the same fashion. After arbitrary much rounds (possibly infinitely many), Alice and Bob end up with the state
\begin{align}\label{eq:LOCCproduct}
\mathcal{E}_\mathrm{LOCC}(\rho_\mathrm{A} \otimes \rho_\mathrm{B})&=\sum_{j,k,l,\cdots} \cdots E_{jkl}E_j \rho_\mathrm{A} E_j^\dagger E_{jkl}^\dagger \cdots \otimes  \cdots F_{jk} \rho_\mathrm{B} F_{jk}^\dagger \cdots.
\end{align}
This procedure of LOCC is fundamental for entanglement theory since it defines the set of operations under which entanglement cannot increase.

\begin{definition}\label{def:LOCC}
The set of \underline{local operations and classical communication} (LOCC)  on a bipartite state $\rho_\mathrm{AB} \in \mathcal{D}(\mathcal{H}_\mathrm{A}\otimes \mathcal{H}_\mathrm{B})$ is defined as the set of maps
\begin{align}
\mathcal{E}_\mathrm{LOCC}(\rho_\mathrm{AB})=\sum_{ j,k,l,\cdots} \cdots (E_{jkl}\otimes \unit)(\unit \otimes F_{jk})(E_j \otimes \unit) \rho_\mathrm{AB} (E_j^\dagger \otimes \unit)(\unit \otimes F_{jk}^\dagger)(E_{jkl}^\dagger\otimes \unit)\cdots,
\end{align}
where $\{E_{ jk \cdots l} \otimes \unit \}_l$ are local operations on subsystem $A$, and $\{ \unit \otimes F_{ j \cdots k} \}_k$ local operations on subsystem $B$. 
\end{definition}

Let us come back to the state~\eqref{eq:LOCCproduct} of Alice and Bob after transforming a product state $\rho_\mathrm{A} \otimes \rho_\mathrm{B}$ via LOCC operations. After a moment of thought, we see that it can be written as
\begin{align}\label{eq:sepLOCC}
\mathcal{E}_\mathrm{LOCC}(\rho_\mathrm{A} \otimes \rho_\mathrm{B})&=\sum_j p_j \ 
\rho_\mathrm{A}^j \otimes \rho^j_\mathrm{B},
\end{align}
where $p_j \geq 0$ are probabilities satisfying $\sum_j p_j =1$, i.e., $\{p_j\}_j$ is a classical probability distribution. For example, for the state~\eqref{eq:LOCC1round}, we have $p_j=\tr{E_j^\dagger E_j \rho_\mathrm{A} }$, $\rho^j_\mathrm{A} =E_j \rho_\mathrm{A} E_j^\dagger/p_j$, and $\rho^j_\mathrm{B}=\sum_k F_{kj} \rho_\mathrm{B} F_{kj}^\dagger$. That is, we found that via LOCC operations (which cannot generate entanglement), only states of the form~\eqref{eq:sepLOCC} can be generated, where the probabilities $p_j$ account for possible classical correlations. These states are called separable:

\begin{definition}
A state $\rho_\mathrm{ABC\cdots} \in \mathcal{D}(\mathcal{H}_\mathrm{A}\otimes \mathcal{H}_\mathrm{B} \otimes \mathcal{H}_\mathrm{C} \cdots)$ is said to be \underline{separable}, if it can be written in the form
\begin{align}\label{eq:separable}
\rho_\mathrm{ABC\cdots}=\sum_j p_j\ \rho_\mathrm{A}^j \otimes \rho_\mathrm{B}^j\otimes \rho_\mathrm{C}^j \otimes \cdots,
\end{align}
with $p_j \geq 0$, and $\sum_j p_j=1$. 
\end{definition}

Although we introduced LOCC operations merely for bipartite states, it can similarly be defined for multipartite states. Accordingly, for the sake of completeness, we here defined separable states for multipartite states. This definition allows us now to properly define entangled states:
\begin{definition}
All non-separable states $\rho_\mathrm{ABC\cdots} \in \mathcal{D}(\mathcal{H}_\mathrm{A}\otimes \mathcal{H}_\mathrm{B} \otimes \mathcal{H}_\mathrm{C} \cdots)$ are said to be \underline{entangled}.
\end{definition}

\begin{remarks}\leavevmode
\begin{enumerate}[1)]
\item Note that, given a separable, non-entangled state, by construction, LOCC operations cannot generate entangled states.

\item Can LOCC operations generate all separable states? Yes: Suppose we start from a product state $\rho_\mathrm{A}\otimes \rho_\mathrm{B} \otimes \rho_\mathrm{C} \otimes \cdots$, Alice samples from the probability distribution $\{p_j\}_j$, informs all other parties via classical communication about the outcome $j$, who then create the state $\rho_X^{j}$ via local operations. This results in the separable state~\eqref{eq:separable}.

\end{enumerate}
\end{remarks}

From definition~\ref{def:LOCC} of LOCC operations on bipartite states, we see that dealing with LOCC can quickly become complicated. A common trick is to consider the set of separable operations (SEP) instead, which is a proper superset of LOCC operations, i.e. $\mathrm{LOCC} \subset \mathrm{SEP}$, and then reducing to LOCC operations. Separable operations are defined as follows:
\begin{definition}
The set of \underline{separable operations} (SEP) on a bipartite state $\rho_\mathrm{AB}\in \mathcal{E}(\mathcal{H}_\mathrm{A}\otimes \mathcal{H}_\mathrm{B})$ is defined as the set of maps with operation elements $\{E_j \otimes F_j\}_j$, acting as
\begin{align}
\mathcal{E}_\mathrm{SEP}(\rho_\mathrm{AB})=\sum_j E_j \otimes F_j \rho_\mathrm{AB} E_j^\dagger \otimes F_j^\dagger.
\end{align}
\end{definition}

\begin{remarks}\leavevmode
\begin{enumerate}[1)]

\item Note that SEP operations map separable states to separable states.

\item Just like LOCC operations, SEP operations can generate all separable states by acting on product states. 

\item Since LOCC is a \textit{proper} subset of SEP, i.e. $\mathrm{LOCC} \subset \mathrm{SEP}$, there are operations in SEP, which don't belong to LOCC. Hence, while we require entanglement to be non-increasing under LOCC operations, the same is not necessarily true for SEP operations.

\end{enumerate}
\end{remarks}

\subsection{Transforming bipartite pure states via LOCC} 
So far, we argued that LOCC cannot generate entanglement. Accordingly, if a state $\rho$ can be transformed into $\sigma$ via LOCC, then $\rho$ must be ``at least as entangled'' as $\sigma$. As a consequence, if the LOCC operation is invertible, which is the case if and only if it corresponds to local unitary operations, then $\rho$ and $\sigma$ must have the same amount of entanglement. Those states are called local unitarily equivalent: 
\begin{definition}
Two states $\rho,\sigma \in \mathcal{D}(\mathcal{H}_\mathrm{A}\otimes \mathcal{H}_\mathrm{B} \otimes \mathcal{H}_\mathrm{C} \cdots)$ are said to be \ul{local unitarily equivalent} if there exists a local unitary transformation $U_\mathrm{A}\otimes U_\mathrm{B} \otimes U_\mathrm{C} \cdots$, with $U_X$ unitary, such that
\begin{align}
\rho=(U_\mathrm{A}\otimes U_\mathrm{B} \otimes U_\mathrm{C} \cdots)\sigma (U_\mathrm{A}^\dagger \otimes U_\mathrm{B}^\dagger \otimes U_\mathrm{C}^\dagger \cdots).
\end{align}
\end{definition}
These considerations are a good starting point to quantify the amount of entanglement of states. However, before we can continue in this direction, let us ask whether and, if so, how we can specify that $\rho$ can be transformed into $\sigma$ via LOCC. 

For simplicity, let us consider bipartite pure states $\ket{\psi},\ket{\phi}\in \mathcal{H}_\mathrm{A} \otimes \mathcal{H}_\mathrm{B}$. First, recall the Schmidt decomposition theorem~\eqref{eq:Schmidt}, which states that any bipartite pure state can be written in Schmidt form $\ket{\psi}=\sum_{j=1}^d \sqrt{\lambda_{\psi,j}} \ket{v_{\psi,j}}\otimes \ket{w_{\psi,j}}$, with Schmidt coefficients $\sqrt{\lambda_{\psi,j}} \geq 0$, and $\{\ket{v_{\psi,j}}\}_j$ and $\{\ket{w_{\psi,j}}\}_j$ an ON-basis of $\mathcal{H}_\mathrm{A}$ and $\mathcal{H}_\mathrm{B}$, respectively. Now suppose we want to compare the entanglement of $\ket{\psi}$ and $\ket{\phi}$: First, we can find local unitary operations (which, as we know, don't affect the state's entanglement), such that $\ket{\psi} \rightarrow \sum_{j=1}^d \sqrt{\lambda_{\psi,j}} \ket{v_j}\otimes \ket{w_j}$ and $\ket{\phi} \rightarrow \sum_{j=1}^d \sqrt{\lambda_{\phi,j}} \ket{v_j}\otimes \ket{w_j}$ only differ by their Schmidt coefficients. That is, deciding whether $\ket{\psi}$ can be transformed into $\ket{\phi}$ via LOCC can only depend on their Schmidt coefficients. In order to compare Schmidt coefficients, let us introduce the $d$-dimensional vector $\lambda_\psi=(\lambda_{\psi,1}, \dots, \lambda_{\psi,d})$ containing all squared Schmidt coefficients $\lambda_{\psi,j}$ of $\ket{\psi}$, and let us denote by $\lambda_{\psi,j}^\downarrow$ the squared Schmidt coefficients in decreasing order, i.e. $\lambda_{\psi,1}^\downarrow \geq \dots \geq\lambda_{\psi,d}^\downarrow$, and similar for $\ket{\phi}$. As we will see now, comparing the set of Schmidt coefficients via majorization will answer our question.

\begin{definition}
Given two vectors $x,y\in\mathbb{R}^d$, we say that $x$ is \underline{mayorized} by $y$, written $x \prec y$, if $\sum_{j=1}^k x_j^\downarrow \leq \sum_{j=1}^k y_j^\downarrow$ for all $k=1,\dots,d$, with equality instead of inequality when $k=d$. 
\end{definition}

\begin{theorem}\label{th:purestatetransLOCC}
Any bipartite pure state $\ket{\psi}\in\mathcal{H}_\mathrm{A}\otimes \mathcal{H}_\mathrm{B}$ can be transformed to another pure state $\ket{\phi}$ with unit probability via LOCC if and only if $\lambda_\psi \prec \lambda_\phi$. 
\end{theorem}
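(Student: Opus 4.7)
The plan is to reduce the problem to a statement purely about Alice's reduced density matrices and then invoke the Hardy-Littlewood-Polya (HLP) characterization of majorization. First, I would use local-unitary freedom on both sides to bring $\ket{\psi}$ and $\ket{\phi}$ into Schmidt form with respect to a common pair of ON-bases $\{\ket{j}_\mathrm{A}\}$ and $\{\ket{j}_\mathrm{B}\}$ with Schmidt coefficients ordered non-increasingly; since local unitaries are trivially LOCC-reversible, this is without loss of generality. In these bases Alice's reduced state is $\rho_\psi^\mathrm{A}=\mathrm{diag}(\lambda_\psi)$ and similarly $\rho_\phi^\mathrm{A}=\mathrm{diag}(\lambda_\phi)$ (cf.\ remark~2 under the Schmidt decomposition theorem). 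The central analytic tool is then the HLP theorem, which says $\lambda_\psi \prec \lambda_\phi$ if and only if there exist unitaries $\{V_k\}$ and probabilities $\{p_k\}$ with $\rho_\psi^\mathrm{A} = \sum_k p_k V_k \rho_\phi^\mathrm{A} V_k^\dagger$.

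For the sufficiency direction ($\Leftarrow$), I would, assuming $\lambda_\psi \prec \lambda_\phi$, use HLP to write $\rho_\psi^\mathrm{A} = \sum_k p_k V_k \rho_\phi^\mathrm{A} V_k^\dagger$ and then define Alice's Kraus operators $A_k = \sqrt{p_k}\,\sqrt{\rho_\phi^\mathrm{A}}\,V_k^\dagger\,(\rho_\psi^\mathrm{A})^{-1/2}$ (restricted to the support of $\rho_\psi^\mathrm{A}$). A direct check shows that the HLP identity is exactly equivalent to the completeness relation $\sum_k A_k^\dagger A_k = \unit$, so $\{A_k^\dagger A_k\}$ is a valid POVM. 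Computing $(A_k\otimes \unit)\ket{\psi}$ in the chosen Schmidt basis, one finds that outcome $k$ occurs with probability $p_k$ and leaves Alice's reduced state as $\rho_\phi^\mathrm{A}$. By Uhlmann's theorem (Thm.~\ref{thm:Uhlmann}) applied to the two purifications $(A_k\otimes \unit)\ket{\psi}/\sqrt{p_k}$ and $\ket{\phi}$ of $\rho_\phi^\mathrm{A}$, there exists a unitary $U_k$ on Bob's side such that $(A_k\otimes U_k)\ket{\psi}=\sqrt{p_k}\ket{\phi}$. The resulting protocol — Alice measures $\{A_k^\dagger A_k\}$, sends $k$ classically to Bob, Bob applies $U_k$ — is LOCC by definition~\ref{def:LOCC} and deterministically outputs $\ket{\phi}$.

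For the necessity direction ($\Rightarrow$), I would first invoke the Lo–Popescu reduction, which asserts that any deterministic LOCC transformation between two pure bipartite states can be compressed to a single round of the canonical form above: Alice measures with Kraus operators $\{A_k\}$, announces $k$, Bob responds with a unitary $U_k$. Determinism requires $(A_k\otimes U_k)\ket{\psi}=\sqrt{p_k}\ket{\phi}$ for every outcome. Tracing out Bob gives $A_k \rho_\psi^\mathrm{A} A_k^\dagger = p_k \rho_\phi^\mathrm{A}$, and a polar decomposition on $A_k(\rho_\psi^\mathrm{A})^{1/2}$ forces $A_k = \sqrt{p_k}\,\sqrt{\rho_\phi^\mathrm{A}}\,V_k^\dagger(\rho_\psi^\mathrm{A})^{-1/2}$ for some unitary $V_k$ (on the support of $\rho_\psi^\mathrm{A}$). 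Plugging into $\sum_k A_k^\dagger A_k = \unit$ produces exactly $\rho_\psi^\mathrm{A} = \sum_k p_k V_k \rho_\phi^\mathrm{A} V_k^\dagger$, and HLP then yields $\lambda_\psi \prec \lambda_\phi$.

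I expect two main obstacles. The conceptually hardest step is the Lo–Popescu reduction in the necessity direction: a priori, a LOCC protocol may involve arbitrarily many rounds of adaptive back-and-forth measurements, and showing that such a protocol can always be flattened to one Alice-measurement plus one conditional Bob-unitary without loss of generality relies crucially on the purity of the output $\ket{\phi}$ — purity propagates backwards through each round, forcing intermediate branches to be locally unitarily equivalent and therefore mergeable. A secondary technical nuisance is the treatment of degenerate spectra: when $\rho_\psi^\mathrm{A}$ has zero eigenvalues (Schmidt rank less than $\dim \mathcal{H}_\mathrm{A}$), $(\rho_\psi^\mathrm{A})^{-1/2}$ must be interpreted as the Moore–Penrose pseudoinverse on the support, and one must check that all operators constructed in the sufficiency direction still map $\ket{\psi}$ into the correct subspace, and that the Schmidt-rank condition $\mathrm{rank}(\lambda_\psi) \geq \mathrm{rank}(\lambda_\phi)$ implicit in $\lambda_\psi \prec \lambda_\phi$ makes the polar decomposition in the necessity direction well-defined.
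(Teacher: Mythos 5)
Your proposal is correct and follows essentially the same route as the proof the notes defer to (Nielsen and Chuang, p.~576): reduce to Alice's reduced density operators in a common Schmidt basis, invoke the random-unitary (Hardy--Littlewood--P\'olya/Uhlmann) characterization $\rho_\psi^\mathrm{A}=\sum_k p_k V_k \rho_\phi^\mathrm{A} V_k^\dagger \Leftrightarrow \lambda_\psi \prec \lambda_\phi$, and connect it to a one-round ``Alice measures, Bob unitarily corrects'' protocol via the polar decomposition and the unitary relation between purifications. You have also correctly identified the two points that genuinely require care --- the Lo--Popescu flattening of a multi-round protocol in the necessity direction and the support/pseudoinverse issues when $\rho_\psi^\mathrm{A}$ is rank-deficient --- so nothing essential is missing.
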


\begin{proof}
See page 576 in \cite{Nielsen-QC-2011}.
\end{proof}

\begin{remarks}\leavevmode
\begin{enumerate}[1)]

\item From this theorem, we can directly infer that there are \underline{maximally entangled states}, which must be local unitarily equivalent to 
\begin{align}\label{eq:maxentstates}
\ket{\psi_\mathrm{max}}= \frac{1}{\sqrt{d}} \sum_{j=1}^d \ket{j,j},
\end{align}
with $d=\min(d_\mathrm{A},d_\mathrm{B})$, and $d_\mathrm{A(B)}$ the dimension of $\mathcal{H}_\mathrm{A(B)}$. Hence, any pure state $\ket{\phi}$ can be generated with certainty from $\ket{\psi_\mathrm{max}}$ via LOCC. 

\item For a bipartite system of two qubits $\mathcal{H}=\mathbb{C}^2\otimes \mathbb{C}^2$, for which $d=2$, Eq.~\eqref{eq:maxentstates} shows that the Bell states~\eqref{eq:Bellstates} (which are local unitarily equivalent) are maximally entangled. 

\item As a direct consequence of theorem~\ref{th:purestatetransLOCC}, there are \underline{incomparable} states, i.e., pairs of states such that neither can be converted with certainty  into the other via LOCC. Hence, we cannot tell which state is more entangled than the other.

\end{enumerate}
\end{remarks}

\begin{example}
As an example for incomparable states, let $\mathcal{H}=\mathbb{C}^3\otimes \mathbb{C}^3$, such that $d=3$, and consider the states
\begin{align}
\begin{split}
\ket{\psi}&=\sqrt{0.4} \ket{00}+ \sqrt{0.4} \ket{11}+ \sqrt{0.2} \ket{22},\\
\ket{\phi}&=\sqrt{0.48} \ket{00}+ \sqrt{0.26} \ket{11}+ \sqrt{0.26} \ket{22},
\end{split}
\end{align}
for which $\lambda_\psi^\downarrow=(0.4,0.4,0.2)$, and $\lambda_\phi^\downarrow=(0.48,0.26,0.26)$. Let us now check whether $\lambda_\psi$ is majorized by $\lambda_\phi$ or vice versa:
\begin{align}
\begin{split}
\sum_{j=1}^1 \lambda_{\psi,j}^\downarrow=0.4 \quad &< \quad 0.48= \sum_{j=1}^1 \lambda_{\phi,j}^\downarrow\\
\sum_{j=1}^2 \lambda_{\psi,j}^\downarrow=0.8\quad &> \quad 0.74=\sum_{j=1}^2 \lambda_{\phi,j}^\downarrow\\
\sum_{j=1}^3 \lambda_{\psi,j}^\downarrow=1\quad &=  \quad 1=\sum_{j=1}^3 \lambda_{\phi,j}^\downarrow.
\end{split}
\end{align}
We see that neither $\lambda_\psi \prec \lambda_\phi$ nor $\lambda_\psi \succ \lambda_\phi$, hence, $\ket{\psi}$ and $\ket{\phi}$ are incomparable states. 
\end{example}

\begin{example}
The majorization criterion from theorem~\ref{th:purestatetransLOCC} tells whether $\ket{\psi}$ can be transformed to $\ket{\phi}$ via LOCC, but it doesn't tell how. To see how it works, consider the following example: Suppose we want to transform the maximally entangled two-qubit state $\ket{\psi}=(\ket{00}+\ket{11})/\sqrt{2}$ into $\ket{\phi}=\sqrt{\lambda_1}\ket{00}+\sqrt{\lambda_2}\ket{11}$. This must be doable, since $\lambda_\psi \prec \lambda_\phi$. In this case, the transformation works as follows: Suppose that Alice first measures her qubit according to the map $\{E_j\}_j$ with operation elements
\begin{align}
\begin{split}
E_1&= \sqrt{\lambda_1} \ketbra{0}{0}+ \sqrt{\lambda_2} \ketbra{1}{1},\\
E_2&= \sqrt{\lambda_1} \ketbra{0}{1}+ \sqrt{\lambda_2} \ketbra{1}{0},
\end{split}
\end{align}
satisfying $\sum_j E_j^\dagger E_j = \unit$. Suppose Alice sends the outcome $j\in\{1,2\}$ of her measurement via classical communication to Bob, who then transforms his state, conditioned on $j$, according to the unitary operation $F_{j1}$, where
\begin{align}
F_{11}= \ketbra{0}{0}+ \ketbra{1}{1},\quad F_{21}= \ketbra{0}{1}+ \ketbra{1}{0}.
\end{align}
Note that $F_{j1}$ is unitary for all $j$, i.e. $F_{j1}^\dagger F_{j1}=\unit$. The total LOCC operation is then described by
\begin{align}\label{eq:ExampleLOCC}
\mathcal{E}_\mathrm{LOCC}(\ket{\psi})= \sum_{j} (\unit\otimes F_{j1}) (E_j \otimes \unit) \ketbra{\psi}{\psi} (E_j^\dagger \otimes \unit)(\unit \otimes F_{j1}^\dagger).
\end{align}
To calculate the outcome, consider
\begin{align}
\begin{split}
 (\unit\otimes F_{11}) (E_1 \otimes \unit) \ket{\psi}&= (E_1\otimes F_{11})\frac{1}{\sqrt{2}}\left(\ket{00} \otimes \ket{11} \right)\\
 &=\frac{1}{\sqrt{2}}\left( E_1\ket{0}\otimes F_{11} \ket{0}+E_1\ket{1}\otimes F_{11} \ket{1}\right)\\
 &=\frac{1}{\sqrt{2}}\left( \sqrt{\lambda_1}\ket{00}+\sqrt{\lambda_2}\ket{11}\right)\\
 &=\frac{1}{\sqrt{2}} \ket{\phi},
\end{split}
\end{align}
and
\begin{align}
\begin{split}
 (\unit\otimes F_{21}) (E_2 \otimes \unit) \ket{\psi}&= (E_2\otimes F_{21})\frac{1}{\sqrt{2}}\left(\ket{00} \otimes \ket{11} \right)\\
 &=\frac{1}{\sqrt{2}}\left( E_2\ket{0}\otimes F_{21} \ket{0}+E_2\ket{1}\otimes F_{21} \ket{1}\right)\\
 &=\frac{1}{\sqrt{2}}\left( \sqrt{\lambda_2}\ket{11}+\sqrt{\lambda_1}\ket{00}\right)\\
 &=\frac{1}{\sqrt{2}} \ket{\phi}.
\end{split}
\end{align}
By using this in Eq.~\eqref{eq:ExampleLOCC}, we find
\begin{align}
\mathcal{E}_\mathrm{LOCC}(\ket{\psi})= \frac{1}{2} \ketbra{\phi}{\phi} +\frac{1}{2} \ketbra{\phi}{\phi} =\ketbra{\phi}{\phi},
\end{align}
which we were aiming for. 
\end{example}

\subsection{Entanglement monotones} 
Above we argued that entanglement, i.e., any quantity which possibly measures entanglement, must be non-increasing under LOCC operations. Such functions are called entanglement monotones: 
\begin{definition}
A function $\mathcal{M}: \mathcal{H} \rightarrow \mathbb{R}$ is said to be an \underline{entanglement monotone} if it is non-increasing under LOCC operations. 
\end{definition}

\begin{definition}
A function $\mathcal{M}: \mathbb{R}^d \rightarrow \mathbb{R}$ is called \underline{Schur-concave} if $\mathcal{M}(x) \geq \mathcal{M}(y)$ for all $x,y\in\mathbb{R}^d$ satisfying $x \prec y$. 
\end{definition}

\begin{remarks}\leavevmode
\begin{enumerate}[1)]

\item Any entanglement measure must be an entanglement monotone. However, entanglement measures must also satisfy additional properties, which we will discuss in a bit.  

\item Entanglement monotones must satisfy $\mathcal{M}(\rho)=\mathcal{M}(\sigma)$ if $\rho$ and $\sigma$ are local unitarily equivalent. As a result, local unitarily equivalent states must have the same amount of entanglement.

\item  Any entanglement monotone for bipartite pure states must be a Schur-concave function. Conversely, a Schur-concave function provides an entanglement monotone for bipartite pure states. 

\item If $\ket{\psi}$ can be transformed into $\ket{\phi}$ via LOCC, i.e. $\lambda_\psi \prec \lambda_\phi$, then $\mathcal{M}(\lambda_\psi) \geq \mathcal{M}(\lambda_\phi)$.

\end{enumerate}
\end{remarks}

Suppose now, we are given a function $\mathcal{M}: \mathbb{R}^d \rightarrow \mathbb{R}$. The following theorem tells us how to find out whether it is Schur-concave:

\begin{theorem}
A function $\mathcal{M}: \mathbb{R}^d \rightarrow \mathbb{R}$ is Schur-concave if and only if it is symmetric under permutations of its arguments $\lambda_1,\dots,\lambda_d$, and 
\begin{align}\label{eq:Schurconcavity}
(\lambda_1-\lambda_2)\left( \frac{\partial \mathcal{M}}{\partial \lambda_1} - \frac{ \partial\mathcal{M}}{\partial \lambda_2}\right) \leq 0.
\end{align}
\end{theorem}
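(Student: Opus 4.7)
My plan is to prove each direction of this Schur--Ostrowski-type characterization separately, leveraging the fact that the majorization order is generated by permutations together with a distinguished class of elementary ``smoothing'' moves known as T-transforms (or Robin Hood transfers).

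For the ``only if'' direction, symmetry is almost immediate: for any permutation $\pi$ of the coordinates, the sorted vectors $\pi(\lambda)^\downarrow$ and $\lambda^\downarrow$ coincide, so $\pi(\lambda)\prec\lambda$ and $\lambda\prec\pi(\lambda)$ simultaneously. Schur-concavity then forces $\mathcal{M}(\pi(\lambda))=\mathcal{M}(\lambda)$. To obtain the differential inequality, I would introduce the one-parameter family
\begin{align*}
\lambda(\epsilon)=(\lambda_1-\epsilon,\,\lambda_2+\epsilon,\,\lambda_3,\dots,\lambda_d),
\end{align*}
for $\lambda_1\geq\lambda_2$ and $0\leq\epsilon\leq(\lambda_1-\lambda_2)/2$, and verify directly from the definition of majorization that $\lambda(\epsilon)\prec\lambda(0)$ for every such $\epsilon$. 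Schur-concavity then yields $\mathcal{M}(\lambda(\epsilon))\geq\mathcal{M}(\lambda(0))$, so differentiating at $\epsilon=0$ gives $-\partial_1\mathcal{M}+\partial_2\mathcal{M}\geq 0$ whenever $\lambda_1\geq\lambda_2$; multiplying by the non-negative factor $(\lambda_1-\lambda_2)$ reproduces the stated inequality, and by symmetry in $(\lambda_1,\lambda_2)$ the inequality extends to the opposite ordering as well.

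For the ``if'' direction, the core tool I would invoke is the Hardy--Littlewood--P\'olya lemma: $x\prec y$ if and only if $x$ is obtained from $y$ by a finite sequence of permutations and T-transforms of the form above. Assuming this, the argument runs as follows. Given $x\prec y$, decompose the passage $y\to x$ into elementary steps $y=y^{(0)}\to y^{(1)}\to\cdots\to y^{(N)}=x$, each step being either a coordinate permutation or a T-transform acting on two selected coordinates $i,j$. Permutation steps preserve $\mathcal{M}$ by symmetry. For a T-transform step parametrized by $\epsilon\in[0,\epsilon_*]$, I would consider the scalar function $g(\epsilon)=\mathcal{M}(y^{(k)}(\epsilon))$ and compute
\begin{align*}
g'(\epsilon)=-\bigl(\partial_i\mathcal{M}-\partial_j\mathcal{M}\bigr)
=-\frac{(\lambda_i-\lambda_j)(\partial_i\mathcal{M}-\partial_j\mathcal{M})}{\lambda_i-\lambda_j}\geq 0,
\end{align*}
where the sign follows from the hypothesis applied to the active pair (using symmetry to transport the assumed inequality from indices $1,2$ to indices $i,j$). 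Integrating shows $\mathcal{M}(y^{(k+1)})\geq\mathcal{M}(y^{(k)})$, and chaining the steps yields $\mathcal{M}(x)\geq\mathcal{M}(y)$, which is Schur-concavity.

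The main obstacle is the Hardy--Littlewood--P\'olya decomposition into T-transforms; this is the substantive combinatorial input and is not itself obvious from the majorization definition. In a self-contained treatment I would prove it by induction on the number of coordinates where $x$ and $y$ differ, each inductive step producing a T-transform that matches the largest component of $y$ to the largest component of $x$ while preserving majorization of the remainder. A secondary subtlety is ensuring sufficient regularity of $\mathcal{M}$ (differentiability along the linear T-transform paths) so that the fundamental theorem of calculus may be applied; this is implicitly assumed by the very statement of the theorem through the presence of the partial derivatives.
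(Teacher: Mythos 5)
The paper leaves this theorem as an exercise, so there is no in-text proof to compare against; your argument is the standard Schur--Ostrowski proof and is essentially correct. The ``only if'' direction (permutation invariance from two-sided majorization, and the differential inequality from the one-sided derivative along a Robin Hood transfer) is sound, and the ``if'' direction correctly reduces to the Hardy--Littlewood--P\'olya decomposition of $x\prec y$ into permutations and T-transforms, which you rightly identify as the one substantive external input. Two small presentational points: in the display for $g'(\epsilon)$ you divide by $\lambda_i-\lambda_j$, which vanishes at the endpoint of a full T-transform where the two active coordinates meet; the cleaner statement is that for $\lambda_i-\epsilon>\lambda_j+\epsilon$ the hypothesis gives $\partial_i\mathcal{M}-\partial_j\mathcal{M}\leq 0$ directly, while at equality permutation symmetry forces $\partial_i\mathcal{M}=\partial_j\mathcal{M}$, so $g'\geq 0$ throughout without any division. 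Also, the hypothesis is stated only for the pair $(\lambda_1,\lambda_2)$, so the transport to a general pair $(i,j)$ via symmetry, which you invoke, should be spelled out (apply the inequality to $\mathcal{M}\circ\pi$ for a permutation $\pi$ carrying $i,j$ to $1,2$). Neither point is a gap in the mathematics.
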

\begin{proof}
Exercise. 
\end{proof}

\subsection{Some entanglement monotones and measures} 
We are now set to study entanglement measures in more detail. First, recall that any entanglement measure must be an entanglement monotone. However, there are more requirements (axioms) for potential entanglement measures. While there is no agreement upon the complete list of axioms, we here define entanglement measures via the following list of axioms:

\begin{definition}\label{def:entmeasure}
A function $E:\mathcal{H} \rightarrow \mathbb{R}$ is called an \underline{entanglement measure} if it satisfies
\begin{enumerate}[i)]
\item Monotonicity: $E$ must be an entanglement monotone,
\item Convexity: $E( p \rho + (1-p) \sigma) \leq p E(\rho) + (1-p) E(\sigma)$ for all $\rho,\sigma$ on $\mathcal{H}_\mathrm{A}\otimes \mathcal{H}_\mathrm{B}$ and $0\leq p \leq 1$,
\item Additivity: $E(\rho^{\otimes n})=n E(\rho)$ for all $\rho$ on $\mathcal{H}_\mathrm{A}\otimes \mathcal{H}_\mathrm{B}$,
\item Subadditivity: $E(\rho \otimes \sigma)\leq E(\rho)+E(\sigma)$ for all $\rho,\sigma$ on $\mathcal{H}_\mathrm{A}\otimes \mathcal{H}_\mathrm{B}$.
\end{enumerate}
\end{definition}

\subsubsection*{Bipartite pure states} 
Let us recall, that for a bipartite pure state $\ket{\psi}\in\mathcal{H}_\mathrm{A}\otimes \mathcal{H}_\mathrm{B}$ the squared Schmidt coefficients $\lambda_j$ of $\ket{\psi}$ are the eigenvalues of the reduced state $\rho_\mathrm{A(B)}=\trp{B(A)}{\ketbra{\psi}{\psi}}$ [cf. Eq.~\eqref{eq:Schmidteigenvaluesrhoa}]. With this in mind, we can now state an entanglement measure for bipartite pure states:

\begin{theorem}
For bipartite pure states $\ket{\psi}\in\mathcal{H}_\mathrm{A}\otimes \mathcal{H}_\mathrm{B}$ the von Neumann entropy of the reduced density matrix is an entanglement measure, 
\begin{align}\label{eq:EntanglementEntropy}
E(\ket{\psi})=S(\rho_\mathrm{A(B)})=-\tr{\rho_\mathrm{A(B)} \log \rho_\mathrm{A(B)} }=-\sum_j \lambda_j \log \lambda_j,
\end{align}
called \ul{entanglement entropy} (or \ul{entropy of entanglement}).
\end{theorem}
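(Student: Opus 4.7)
The plan is to verify the four properties of an entanglement measure from Definition~\ref{def:entmeasure}, specialized to pure states. First I would check that $E(\ket{\psi})$ is well defined, i.e.\ independent of whether we trace out $A$ or $B$: by property~4 of Theorem~\ref{th:basicpropvonNeumann}, $S(\rho_\mathrm{A}) = S(\rho_\mathrm{B})$ for any pure bipartite state, and by Eq.~\eqref{eq:Schmidteigenvaluesrhoa} both reduced states share the nonzero spectrum $\{\lambda_j\}$ of squared Schmidt coefficients, so $E(\ket{\psi}) = -\sum_j \lambda_j \log\lambda_j = H(\lambda_\psi)$.

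For monotonicity under LOCC, I would invoke Theorem~\ref{th:purestatetransLOCC}: a transformation $\ket{\psi}\to\ket{\phi}$ via LOCC with unit probability exists iff $\lambda_\psi \prec \lambda_\phi$. By the remark that any Schur-concave function of the squared Schmidt coefficients provides an entanglement monotone for pure states, it suffices to show that $H(\lambda) = -\sum_j \lambda_j \log\lambda_j$ is Schur-concave. Symmetry under permutations of the $\lambda_j$ is immediate. Computing $\partial H/\partial\lambda_j = -(\log\lambda_j + 1/\ln 2)$, the criterion~\eqref{eq:Schurconcavity} reduces to
\begin{align}
(\lambda_1 - \lambda_2)\left( \frac{\partial H}{\partial \lambda_1} - \frac{\partial H}{\partial \lambda_2}\right) = (\lambda_1 - \lambda_2)\log\frac{\lambda_2}{\lambda_1} \leq 0,
\end{align}
which holds since $\log$ is increasing. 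Hence $E(\ket{\psi}) \geq E(\ket{\phi})$ whenever $\ket{\psi}$ can be converted to $\ket{\phi}$ via LOCC.

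For additivity and subadditivity, I would exploit the tensor-product property of $S$: if $\rho_\mathrm{A}$ denotes the reduced state of $\ket{\psi}$, the reduced state of $\ket{\psi}^{\otimes n}$ on the $n$-fold copy of subsystem $A$ equals $\rho_\mathrm{A}^{\otimes n}$, so property~6 of Theorem~\ref{th:basicpropvonNeumann} gives $E(\ket{\psi}^{\otimes n}) = S(\rho_\mathrm{A}^{\otimes n}) = n\, S(\rho_\mathrm{A}) = n\, E(\ket{\psi})$. For subadditivity, view $\ket{\psi}_{AB}\otimes\ket{\phi}_{A'B'}$ bipartitely as $AA'\,|\,BB'$; the reduced state on $AA'$ is $\rho_\mathrm{A}\otimes\rho_{\mathrm{A}'}$, so
\begin{align}
E(\ket{\psi}\otimes\ket{\phi}) = S(\rho_\mathrm{A}\otimes\rho_{\mathrm{A}'}) = S(\rho_\mathrm{A}) + S(\rho_{\mathrm{A}'}) = E(\ket{\psi}) + E(\ket{\phi}),
\end{align}
i.e.\ subadditivity holds with equality.

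Convexity is the subtle point and the main obstacle of the statement, since pure states do not form a convex set: a convex combination $p\ketbra{\psi}{\psi} + (1-p)\ketbra{\phi}{\phi}$ is generally mixed and lies outside the domain on which $E$ has been defined. I would address this by noting that the natural extension to mixed states is the convex roof $E(\rho) = \min_{\{q_i, \ket{\chi_i}\}} \sum_i q_i\, E(\ket{\chi_i})$, where the minimum runs over all pure-state decompositions $\rho = \sum_i q_i \ketbra{\chi_i}{\chi_i}$. This extension is manifestly convex by construction and coincides with $E(\ket{\psi})$ on pure states (since the minimum is then attained by the trivial decomposition), so convexity is satisfied in the extended sense, completing the verification of all four axioms.
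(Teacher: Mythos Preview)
Your proof is correct and follows essentially the same structure as the paper's: the Schur-concavity check for monotonicity and the tensor-product argument for additivity/subadditivity are identical to what the paper does.

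The only genuine difference is in the treatment of convexity. The paper disposes of it more cheaply: since the theorem is stated for pure states only, the convex combination $p\ketbra{\psi}{\psi}+(1-p)\ketbra{\phi}{\phi}$ is required to be pure as well, which forces $\ket{\psi}=\ket{\phi}$ (or $p\in\{0,1\}$), and then convexity holds trivially as an equality. You instead acknowledge that pure states are not convex and pass to the convex-roof extension, arguing convexity there. Your route is more honest about the domain issue and has the pedagogical benefit of motivating the entanglement of formation, but the paper's argument is shorter and sufficient for the claim as stated.
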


\begin{proof}
\textit{i) Monotonicity}: Clearly, $E(\ket{\psi})$ is invariant under permutations of the Schmidt coefficients $\lambda_j$. Hence, we are left with checking Schur-concavity, which can be done using Eq.~\eqref{eq:Schurconcavity}:
\begin{align}
\begin{split}
(\lambda_1-\lambda_2)\left(\frac{\partial E}{\partial \lambda_1} -\frac{\partial E}{\partial \lambda_2} \right)&=(\lambda_1-\lambda_2)\left(-1-\log \lambda_1 + 1 + \log \lambda_2 \right)\\
&=(\lambda_1-\lambda_2)(\log\lambda_2-\log\lambda_1)\\
&\leq 0.
\end{split}
\end{align}
\textit{ii) Convexity}: Convexity is always satisfied for pure states. To see this, consider $\rho=\ketbra{\psi}{\psi}$ and $\sigma=\ketbra{\phi}{\phi}$, then $p \rho + (1-p) \sigma$ is pure (which we require, since we consider merely pure states) for all $0\leq p \leq 1$ if and only if $\ket{\psi}=\ket{\phi}$, such that $E(p \rho + (1-p) \sigma)=E(\rho)=E(\sigma)$. \\
\textit{iii) Additivity and iv) Subadditivity}: Consider the bipartite pure states $\ket{\psi}\in\mathcal{H}_\mathrm{A}\otimes \mathcal{H}_\mathrm{B}$ and $\ket{\phi}\in\mathcal{H}_\mathrm{C}\otimes \mathcal{H}_\mathrm{D}$, and the reduced density matrices $\rho_\mathrm{A}=\trp{B}{\ketbra{\psi}{\psi}}$ and $\rho_\mathrm{C}=\trp{D}{\ketbra{\phi}{\phi}}$. Then 
\begin{align}
\begin{split}
E(\ket{\psi} \otimes \ket{\phi})&= S( \trp{B,D}{ \ketbra{\psi}{\psi} \otimes \ketbra{\phi}{\phi}})\\
&=S(\rho_\mathrm{A} \otimes \rho_\mathrm{C})\\
&\overset{th.~\ref{th:basicpropvonNeumann}}{=} S(\rho_\mathrm{A} )+S(\rho_\mathrm{C})\\
&=E(\ket{\psi})+E(\ket{\phi}),
\end{split}
\end{align}
which proves both additivity and subadditivity.
\end{proof}

Another quantity frequently used in the literature is the so-called concurrence. The concurrence is \textit{not} an entanglement measure, but an entanglement monotone:

\begin{theorem}
For bipartite pure states $\ket{\psi}\in\mathcal{H}_\mathrm{A}\otimes \mathcal{H}_\mathrm{B}$ with reduced density operator $\rho_\mathrm{A(B)}$, the \underline{concurrence}
\begin{align}\label{eq:concurrencePure}
c(\ket{\psi})=\sqrt{2\left(1-\tr{\rho_\mathrm{A(B)}^2} \right)}
\end{align}
is an entanglement monotone. 
\end{theorem}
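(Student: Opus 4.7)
The plan is to reduce the claim to the Schur-concavity criterion already stated in the excerpt. For a bipartite pure state $\ket{\psi}$ with squared Schmidt coefficients $\lambda_1,\dots,\lambda_d$, the reduced operator has spectrum $\{\lambda_j\}_j$, so $\tr{\rho_\mathrm{A(B)}^2}=\sum_j \lambda_j^2$ and thus
\begin{align*}
c(\ket{\psi})=\sqrt{2\Bigl(1-\sum_{j=1}^d \lambda_j^2\Bigr)}.
\end{align*}
This expresses $c$ as a function of $(\lambda_1,\dots,\lambda_d)$ alone, which is what I need in order to invoke the remarks below the definition of entanglement monotone (any Schur-concave function of the Schmidt coefficients gives an entanglement monotone on bipartite pure states).

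First I would verify symmetry under permutations of the arguments $\lambda_j$: this is manifest because only $\sum_j \lambda_j^2$ enters. Second, I would check the differential Schur-concavity criterion~\eqref{eq:Schurconcavity}. Computing
\begin{align*}
\frac{\partial c}{\partial \lambda_j}=\frac{-2\lambda_j}{\sqrt{2\bigl(1-\sum_k \lambda_k^2\bigr)}},
\end{align*}
one gets
\begin{align*}
(\lambda_1-\lambda_2)\left(\frac{\partial c}{\partial \lambda_1}-\frac{\partial c}{\partial \lambda_2}\right)=\frac{-2(\lambda_1-\lambda_2)^2}{\sqrt{2\bigl(1-\sum_k \lambda_k^2\bigr)}}\leq 0,
\end{align*}
which establishes Schur-concavity in the interior, where $\sum_k\lambda_k^2<1$.

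The only subtle point — which I expect to be the main (mild) obstacle — is the boundary behavior. When $\sum_k \lambda_k^2=1$ the state is a product state, the partial derivatives blow up, and the criterion~\eqref{eq:Schurconcavity} is not directly applicable. I would handle this by a continuity/approximation argument: Schur-concavity on the open simplex extends to the closed probability simplex by continuity of $c$, and the majorization-monotonicity $\lambda_\psi\prec\lambda_\phi \Rightarrow c(\ket{\psi})\geq c(\ket{\phi})$ passes to the boundary. Combined with theorem~\ref{th:purestatetransLOCC}, which says LOCC convertibility of pure states is equivalent to majorization of the Schmidt vectors, this yields non-increase of $c$ under LOCC, i.e.\ $c$ is an entanglement monotone. (Note that $c$ is not claimed to be additive, so I do not need to verify all axioms of Definition~\ref{def:entmeasure}; only monotonicity is required.)
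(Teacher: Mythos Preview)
Your proof is correct and follows essentially the same route as the paper: express $c$ via the squared Schmidt coefficients, note permutation symmetry, and verify the differential Schur-concavity criterion~\eqref{eq:Schurconcavity}, obtaining $-\tfrac{2}{c}(\lambda_1-\lambda_2)^2\leq 0$. You are in fact slightly more careful than the paper, which does not comment on the boundary case $\sum_k\lambda_k^2=1$ where $c=0$ and the partials are singular.
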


\begin{proof}
The concurrence is an entanglement monotone if it is Schur-concave. To prove Schur-concavity, let us write $c(\ket{\psi})$ in terms of the squared Schmidt coefficients of $\ket{\psi}$, i.e. the eigenvalues of $\rho_\mathrm{A(B)}$,
\begin{align}\label{eq:concurrenceEigval}
c(\ket{\psi})=\sqrt{2\left(1-\sum_j \lambda_j^2\right)}.
\end{align}
First, we see that $c(\ket{\psi})$ is invariant under permutations of the squared Schmidt coefficients $\lambda_j$. Hence, we are left with showing that $c(\ket{\psi})$ satisfies Eq.~\eqref{eq:Schurconcavity},
\begin{align}
\begin{split}
(\lambda_1-\lambda_2)\left( \frac{\partial c}{\partial \lambda_1} - \frac{\partial c}{\partial \lambda_2}   \right)&=(\lambda_1-\lambda_2)\left(-\frac{2 \lambda_1}{c} + \frac{2 \lambda_2}{c} \right)\\
&=-\frac{2}{c}(\lambda_1-\lambda_2)^2\\
&\leq 0,
\end{split}
\end{align}
which finishes the proof. 
\end{proof}

\begin{remarks}\leavevmode
\begin{enumerate}[1)]

\item Note that for both, the entanglement entropy and concurrence, we have $ E(\ket{\psi}),c(\ket{\psi}) \geq 0$, with equality if and only if $\ket{\psi}$ is separable (i.e. non-entangled). On the other hand, their upper bound differs, $E(\ket{\psi})\leq \log d$, and $c(\ket{\psi})\leq \sqrt{2(1-1/d)}$, with equality if and only if $\ket{\psi}$ is maximally entangled. (Recall that $d$ is the number of Schmidt coefficients)

\item For a two-qubit system $\mathcal{H}=\mathbb{C}^2 \otimes \mathbb{C}^2$ the concurrence is often defined as
\begin{align}\label{eq:concurrencequbit}
c(\ket{\psi})= |\bra{\psi^*} Y \otimes Y \ket{\psi} |,
\end{align}
with $Y$ the Pauli-$Y$ matrix from Eq.~\eqref{eq:Pauli}, and $\bra{\psi^*}$ the complex conjugate of $\bra{\psi}$, with the complex conjugation performed in the computational basis $\{\ket{0},\ket{1}\}$ (i.e. the eigenbasis of the Pauli-$Z$ matrix), such that for $\bra{\psi}=\sum_{j,k} \psi^*_{j,k} \bra{j,k}$, we have $\bra{\psi^*}=\sum_{j,k} \psi_{j,k} \bra{j,k}$. 

\item For a two-qubit state $\ket{\psi}=\sum_{j,k} \psi_{j,k} \ket{j,k}$, the concurrence~\eqref{eq:concurrencequbit} can also be written as
\begin{align}
c(\ket{\psi})=2\abs{ \psi_{00}\psi_{11}- \psi_{01}\psi_{10}}.
\end{align}

\end{enumerate}
\end{remarks}

\begin{proof}
2), 3): Exercise. 
\end{proof}

\subsubsection*{Bipartite mixed states} 
Finding an entanglement monotone for mixed states is not as simple as for pure states. The reason is that pure states can only describe quantum correlation, but mixed states can describe both quantum and classical correlations. However, there is a nice workaround known as the convex roof construction.  

Let's consider a bipartite mixed state $\rho=\sum_j p_j \ketbra{\psi_j}{\psi_j}$. Since it corresponds to a mixture of pure states $\ket{\psi_j}$ appearing with associated probabilities $p_j \geq 0$, it is suggestive to generalize an entanglement monotone $\mathcal{M}(\ket{\psi})$ for pure states by the average value $\sum_j p_j \mathcal{M}(\ket{\psi_j})$. However, this construction bears one problem: A density operator has no unique pure state decomposition [cf. theorem~\ref{th:unitaryfreedomdensity}]. Indeed, different pure state decompositions can lead to different average values of the entanglement monotone. This problem can be solved by taking the infimum over all pure state decompositions (i.e. the minimal average value), called convex roof, which must be an entanglement monotone. 
\begin{definition}
The \ul{convex roof} of an entanglement monotone $\mathcal{M}(\ket{\psi})$ for pure states is defined as
\begin{align}\label{eq:convexroof}
\mathcal{M}(\rho)=\inf_{\{p_j,\ket{\psi_j}\}_j} \sum_j p_j \mathcal{M}(\ket{\psi_j}),
\end{align}
with the infimum running over all pure state decompositions $\rho=\sum_j p_j \ketbra{\psi_j}{\psi_j}$. 
\end{definition}

With this at hand, we can directly apply this to the entanglement entropy, resulting in an entanglement monotone for bipartite mixed states:
\begin{definition}
The convex roof of the entanglement entropy, 
\begin{align}\label{eq:EOF}
E(\rho)=\inf_{\{p_j,\ket{\psi_j}\}_j} \sum_j p_j E(\ket{\psi_j}),
\end{align}
is called \ul{entanglement of formation}.
\end{definition}

The entanglement of formation is not additive, i.e. it does not satisfies item \textit{iii)} of our requirements for an entanglement measure from definition~\ref{def:entmeasure}. Hence, it is no entanglement measure according to our definition.

Note that solving the optimization problem in~\eqref{eq:convexroof} and~\eqref{eq:EOF} is in general a computationally hard task. Fortunately, solutions are known for some cases, in particular for the case of mixed two-qubit states. To this end, we now restrict to bipartite \textit{two-level} systems (i.e. two qubits).

\subsubsection*{Bipartite mixed two-level systems} 

\begin{theorem}\label{th:concmixedtwo}
Let $c(\rho)$ be the convex roof of the concurrence for a bipartite two-level system. Then 
\begin{align}\label{eq:concurrencetwolevel}
c(\rho)=\max(0,\lambda_1-\lambda_2-\lambda_3-\lambda_4),
\end{align}
where $\lambda_j$ are the eigenvalues in decreasing order of the Hermitian matrix $R=\sqrt{\sqrt{\rho}\tilde{\rho}\sqrt{\rho} }$, with $\tilde{\rho}=(Y\otimes Y)\rho^* (Y \otimes Y)$, $Y$ the Pauli-$Y$ matrix, and the complex conjugation of $\rho$ taken with respect to the computational basis (i.e. the eigenbasis of the Pauli-$Z$ matrix). 
\end{theorem}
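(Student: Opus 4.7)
The plan is to reduce the convex-roof optimization to a diagonalization problem for a complex symmetric matrix. First I would establish the pure-state identity
\[
c(\ket{\psi}) \;=\; \bigl|\bra{\psi}(Y\otimes Y)\ket{\psi^{*}}\bigr|,
\]
verifying directly from Eq.~\eqref{eq:concurrencePure} that both expressions equal $2\sqrt{\lambda_{1}\lambda_{2}}$ when $\ket{\psi}=\sqrt{\lambda_{1}}\ket{00}+\sqrt{\lambda_{2}}\ket{11}$ is the Schmidt form; Eq.~\eqref{eq:concurrencequbit} then follows for all two-qubit pure states by local unitary invariance of both sides and the fact that local unitaries $U_{A}\otimes U_{B}$ conjugate $Y\otimes Y$ into $(U_{A}Y U_{A}^{\top})\otimes(U_{B}Y U_{B}^{\top})=(\det U_{A})(\det U_{B})(Y\otimes Y)$, leaving the modulus unchanged.

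Next I would rewrite the convex roof~\eqref{eq:convexroof} in terms of \emph{subnormalized} vectors $\ket{\tilde{\psi}_{j}}=\sqrt{p_{j}}\ket{\psi_{j}}$, so that $\rho=\sum_{j}\ketbra{\tilde{\psi}_{j}}{\tilde{\psi}_{j}}$ and the objective becomes
\[
\sum_{j} p_{j}\,c(\ket{\psi_{j}}) \;=\; \sum_{j}\bigl|T_{jj}\bigr|, \qquad T_{ij}:=\bra{\tilde{\psi}_{i}}(Y\otimes Y)\ket{\tilde{\psi}_{j}^{*}}.
\]
The matrix $T$ is complex \emph{symmetric} (not Hermitian), and by theorem~\ref{th:unitaryfreedomdensity} all subnormalized decompositions are related by isometries $V$, inducing the transformation $T\mapsto V T V^{\top}$ (note the transpose, due to the conjugation in the spin-flip). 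A fixed reference decomposition, obtained from the spectral decomposition of $\rho$, gives a reference matrix $T_{0}$ whose singular values coincide with the eigenvalues of $R=\sqrt{\sqrt{\rho}\tilde{\rho}\sqrt{\rho}}$; this is because $T_{0}T_{0}^{\dagger}$ is unitarily similar to $\sqrt{\rho}\tilde{\rho}\sqrt{\rho}$.

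The Autonne–Takagi factorization then lets me write $T_{0}=U D U^{\top}$ with $U$ unitary and $D=\operatorname{diag}(\lambda_{1},\lambda_{2},\lambda_{3},\lambda_{4})$ in decreasing order. Absorbing $U$ into the choice of decomposition, the entire problem reduces to the purely matrix-theoretic question
\[
c(\rho)\;=\;\min_{V\text{ unitary}}\;\sum_{j}\bigl|(VDV^{\top})_{jj}\bigr|.
\]
I would first prove the lower bound $c(\rho)\geq\lambda_{1}-\lambda_{2}-\lambda_{3}-\lambda_{4}$ (when this is nonnegative) by observing that $\sum_{j}(VDV^{\top})_{jj}=\operatorname{tr}(VDV^{\top})$, that the phases $e^{i\theta_{j}}$ of the diagonal entries may be chosen to align the contribution of $\lambda_{1}$ against those of $\lambda_{2},\lambda_{3},\lambda_{4}$, and applying the triangle inequality together with $|\sum_{j}e^{-i\theta_{j}}(VDV^{\top})_{jj}|\geq \lambda_{1}-\lambda_{2}-\lambda_{3}-\lambda_{4}$, which follows from a careful analysis of how unitary rotations act on the diagonal of a diagonal matrix under the congruence $V(\cdot)V^{\top}$.

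The hard part will be constructing an explicit optimal $V$ that saturates the bound (and showing that $c(\rho)=0$ is achievable whenever $\lambda_{1}\leq\lambda_{2}+\lambda_{3}+\lambda_{4}$). The construction is Wootters' key combinatorial step: one builds a four-state ensemble by first passing to a basis in which the diagonal entries of $VDV^{\top}$ all have equal modulus, and then applying additional orthogonal (real) rotations that preserve this modulus while rotating the phases so that the diagonal sum vanishes in the small-$\lambda_{1}$ regime, or shrinks to $\lambda_{1}-\lambda_{2}-\lambda_{3}-\lambda_{4}$ in the large-$\lambda_{1}$ regime. Verifying that such $V$ exist in both regimes — and that the resulting pure-state ensemble is a legitimate decomposition of $\rho$ — is where the delicate work lies; for the details I would follow the argument in \cite{Nielsen-QC-2011}. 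Taking the maximum with $0$ then handles the sign and completes the identification with Eq.~\eqref{eq:concurrencetwolevel}.
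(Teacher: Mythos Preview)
The paper does not actually prove this theorem; its entire proof reads ``See W.\ K.\ Wootters, \textit{Phys.\ Rev.\ Lett.} \textbf{80}, 2245 (1998), or Ref.~\cite{Mintert-BC-2009}.'' Your outline is a faithful sketch of precisely that Wootters argument (spin-flip form of the pure-state concurrence, unitary freedom $T\mapsto V T V^{\top}$ on the symmetric matrix, Autonne--Takagi to diagonalize, then the two-sided estimate on $\sum_j|(VDV^{\top})_{jj}|$), so in that sense you are aligned with the paper --- you have simply unpacked what the citation points to.

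One caveat: your lower-bound paragraph is the weakest link. The inequality $\sum_j|(VDV^{\top})_{jj}|\geq \lambda_1-\lambda_2-\lambda_3-\lambda_4$ does not follow from a single application of the triangle inequality with freely chosen phases; one has to argue that for \emph{every} unitary $V$ there exist signs $\epsilon_j\in\{\pm1\}$ with $\sum_j\epsilon_j(VDV^{\top})_{jj}$ real and at least $\lambda_1-\lambda_2-\lambda_3-\lambda_4$, which Wootters obtains by first passing to a decomposition in which the diagonal entries are already real and then using the observation that a sign change on a column of $V$ is itself a unitary. As written, your phrase ``a careful analysis of how unitary rotations act on the diagonal'' hides exactly the step that carries the content. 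Also, \cite{Nielsen-QC-2011} does not contain the full construction of the optimal ensemble; you will need Wootters' original paper for that.
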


\begin{proof}
See W. K. Wootters, \textit{Phys. Rev. Lett.} \textbf{80}, 2245 (1998), or Ref.~\cite{Mintert-BC-2009}. 
\end{proof}

By theorem~\ref{th:concmixedtwo}, we found a solution to the optimization problem of the convex roof construction $c(\rho)=\inf_{\{p_j,\ket{\psi_j}\}_j} \sum_j p_j c(\ket{\psi_j})$ of the concurrence. Now, note that any state $\ket{\psi} \in \mathbb{C}^2\otimes \mathbb{C}^2$ is described by two Schmidt coefficient, whose square must sum to unity, i.e. $\lambda_1+\lambda_2=1$. Hence, $c(\ket{\psi})=c(\lambda_1)$ only depends on a single Schmidt coefficient. Accordingly, we can find the inverse function $\lambda_1(c)$, which, using Eq.~\eqref{eq:concurrenceEigval}, reads
\begin{align}\label{eq:lambdaSSc}
\lambda_{1}(c)=\frac{1 + \sqrt{1-c^2}}{2}, \quad \lambda_{2}(c)=\frac{1-\sqrt{1-c^2}}{2}.
\end{align}
Similarly, since the entanglement entropy~\eqref{eq:EntanglementEntropy} is also a function of the Schmidt coefficients, for bipartite two-level systems, it can only depend on a single Schmidt coefficient given in~\eqref{eq:lambdaSSc}, i.e. $E(\ket{\psi})=E(\lambda_1(c))$.  As we show in the following, this allows us to find a solution for the entanglement of formation~\eqref{eq:EOF} of mixed bipartite two-level systems in terms of the solution~\eqref{eq:concurrencetwolevel} of the concurrence.

\begin{definition}\label{def:E}
The function $\mathcal{E}:\mathbb{R} \rightarrow \mathbb{R}$, $c\mapsto \mathcal{E}(c)$, is defined as
\begin{align}
\mathcal{E}(c)=-\frac{1+\sqrt{1-c^2}}{2} \log\left( \frac{1+\sqrt{1-c^2}}{2}\right) -\frac{1-\sqrt{1-c^2}}{2} \log \left(\frac{1-\sqrt{1-c^2}}{2}\right).
\end{align}
\end{definition}

Note that this function is simply the Shannon entropy of the Schmidt coefficients from Eq.~\eqref{eq:lambdaSSc}. With this definition we can now state the relation between the concurrence and the entanglement of formation for bipartite two-level systems. 

\begin{theorem}
For $\rho$ a mixed state of a bipartite two-level system with concurrence $c(\rho)$, the entanglement of formation is given by $E(\rho)=\mathcal{E}(c(\rho))$, with $\mathcal{E}$ from Def.~\ref{def:E}.
\end{theorem}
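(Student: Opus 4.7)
The plan is to reduce the mixed-state problem to a pure-state calculation by exploiting two facts: for any bipartite two-level pure state, the entanglement entropy and concurrence depend only on the single independent Schmidt coefficient, and they are linked precisely by the function $\mathcal{E}$ of Def.~\ref{def:E}; second, $\mathcal{E}(c)$ turns out to be monotonically increasing and convex on $c \in [0,1]$. Given these two ingredients, the relation $E(\rho) = \mathcal{E}(c(\rho))$ follows from a Jensen-type argument combined with the convex-roof definitions of $E(\rho)$ and $c(\rho)$.

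First I would verify the pure-state identity $E(\ket{\psi}) = \mathcal{E}(c(\ket{\psi}))$ for bipartite two-level states. From Eq.~\eqref{eq:lambdaSSc} the squared Schmidt coefficients are $\lambda_{1,2}(c) = (1 \pm \sqrt{1-c^2})/2$, and substituting these into the entanglement entropy~\eqref{eq:EntanglementEntropy} yields exactly $\mathcal{E}(c)$. Next I would establish the analytical properties of $\mathcal{E}$: writing $\mathcal{E}(c) = h(\lambda_1(c))$ with $h$ the binary entropy, one differentiates to obtain
\begin{align*}
\frac{d\mathcal{E}}{dc} = \frac{c}{2\sqrt{1-c^2}}\,\log\frac{\lambda_1(c)}{\lambda_2(c)} \geq 0,
\end{align*}
and a further differentiation (or the more convenient observation that $\mathcal{E}$ is convex in the variable $c^2$, since $h$ is concave in $\lambda_1$ while $\lambda_1$ is a concave function of $c^2$) shows $d^2\mathcal{E}/dc^2 \geq 0$ on $[0,1]$, giving convexity.

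For the lower bound, take any pure-state decomposition $\rho = \sum_j p_j \ketbra{\psi_j}{\psi_j}$. Using the pure-state identity and then convexity of $\mathcal{E}$,
\begin{align*}
\sum_j p_j E(\ket{\psi_j}) = \sum_j p_j \mathcal{E}(c(\ket{\psi_j})) \geq \mathcal{E}\!\left(\sum_j p_j c(\ket{\psi_j})\right).
\end{align*}
Since $\sum_j p_j c(\ket{\psi_j}) \geq c(\rho)$ by the convex-roof definition of the concurrence, and $\mathcal{E}$ is monotonically increasing, the right-hand side is bounded below by $\mathcal{E}(c(\rho))$. Taking the infimum over decompositions yields $E(\rho) \geq \mathcal{E}(c(\rho))$.

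The hard part is the matching upper bound, which requires exhibiting an optimal decomposition of $\rho$ in which every pure state carries the same concurrence $c(\ket{\psi_j}) = c(\rho)$. For such a decomposition the two inequalities above collapse to equalities, giving $E(\rho) \leq \mathcal{E}(c(\rho))$. The existence of this ``isoconcurrence'' decomposition is the nontrivial content of Wootters' construction: starting from the spectral decomposition of $R$ in theorem~\ref{th:concmixedtwo}, one builds a family of subnormalized vectors whose pairwise phases can be adjusted so that all resulting pure states have concurrence exactly equal to $\lambda_1 - \lambda_2 - \lambda_3 - \lambda_4$ (or $0$). I would cite the detailed construction in W.~K.~Wootters, \emph{Phys. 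Rev. Lett.} \textbf{80}, 2245 (1998), since carrying it out in full is a dedicated argument of its own. Combining both bounds gives $E(\rho) = \mathcal{E}(c(\rho))$.
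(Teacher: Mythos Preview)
Your proposal is correct and in fact goes well beyond what the paper does: the paper's ``proof'' of this theorem consists solely of the citation to S.~A.~Hill and W.~K.~Wootters, \emph{Phys.\ Rev.\ Lett.}~\textbf{78}, 5022 (1997), with no argument given. Your outline---the pure-state identity $E(\ket{\psi})=\mathcal{E}(c(\ket{\psi}))$, monotonicity and convexity of $\mathcal{E}$ on $[0,1]$, the Jensen-type lower bound $E(\rho)\geq\mathcal{E}(c(\rho))$, and the matching upper bound via Wootters' isoconcurrence decomposition---is precisely the structure of the actual proof in the Hill--Wootters and Wootters papers, so your approach is the same as the one the paper defers to. One minor remark: the paper cites the 1997 Hill--Wootters paper, which treats only states with maximally mixed marginals; the general two-qubit result that you (correctly) invoke for the isoconcurrence decomposition is in the 1998 Wootters paper you cite.
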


\begin{proof}
See S. A. Hill and W. K. Wootters, \textit{Phys. Rev. Lett.} \textbf{78}, 5022 (1997).
\end{proof}

\begin{example}
As an example, let us calculate the concurrence~\eqref{eq:concurrencetwolevel} for the bipartite mixed state $\rho=p \ketbra{\phi^+}{\phi^+} + (1-p) \unit/4$, with $\ket{\phi^+}=(\ket{00}+\ket{11})/\sqrt{2}$ the maximally entangled Bell state from Eq.~\eqref{eq:Bellstates}, and $0\leq p \leq 1$. Let us first calculate $\tilde{\rho}$. To this end, note that $\rho^*=\rho$, and $Y^2=\unit$, such that
\begin{align}
\begin{split}
\tilde{\rho}&=(Y\otimes Y)\rho^*(Y \otimes Y) \\
&=p (Y\otimes Y)\ketbra{\phi^+}{\phi^+}(Y\otimes Y) + \frac{1-p}{4} \unit.
\end{split}
\end{align}
Next, using $Y \ket{0}=\im \ket{1}$ and $Y \ket{1}=-\im \ket{0}$, we get $(Y\otimes Y) \ket{\phi^+}=-\ket{\phi^+}$, and, accordingly, 
\begin{align}
\begin{split}
\tilde{\rho}&=p \ketbra{\phi^+}{\phi^+} + \frac{1-p}{4} \unit\\
&=\rho.
\end{split}
\end{align}
With this, we find the Hermitian matrix $R=\sqrt{\sqrt{\rho} \tilde{\rho} \sqrt{\rho}}=\rho$, i.e. it is equal to the state $\rho$. Next, a short calculation reveals the eigenvalues of $R=\rho$ to be $\lambda_1=(1+3p)/4$, and $\lambda_2=\lambda_3=\lambda_4=(1-p)/4$. Accordingly, the concurrence~\eqref{eq:concurrencetwolevel} becomes 
\begin{align}
\begin{split}
c(\rho)&=\max(0,\lambda_1-\lambda_2-\lambda_3-\lambda_4)\\
&=\max\left( 0, \frac{3p-1}{2} \right)\\
&=\begin{cases}
0 & \text{for } 0\leq p \leq 1/3 \\
\frac{3p-1}{2} & \text{for } 1/3 < p \leq 1. 
\end{cases}
\end{split}
\end{align}
This is indeed and interesting result: Although $\rho$ describes a mixture of the maximally entangled state $\ket{\phi^+}$ and the maximally mixed state $\unit/4$ for all values $p>0$, it is not entangled in the range $0 \leq p \leq 1/3$. 
\end{example}

\subsection{Positive partial transpose and negativity} 
Recall that for mixed bipartite states, so far we found algebraically computable entanglement monotones (i.e. expressions of entanglement monotones involving e.g. no optimization procedure over an infinitely large set) merely for a Hilbert space dimension of $2\times 2$, i.e. for two-qubit systems. In the following we will formulate an algebraically computable entanglement monotone for bipartite mixed states for any finite Hilbert space dimension. To this end, we first consider a common method how to decide whether a a state is separable or not, and, based on this, we then formulate such an entanglement monotone.

First recall that a separable bipartite state reads $\rho=\sum_j p_j \rho_\mathrm{A}^j \otimes \rho_\mathrm{B}^j$ [see Eq.~\eqref{eq:separable}], with $p_j\geq 0$ and $\sum_j p_j=1$. Next, consider a positive map $\mathcal{E}:\mathcal{H}_\mathrm{A} \rightarrow \mathcal{H}_\mathrm{A} $ such that $\mathcal{E}(A) \geq 0$ for all $A \geq 0$ and $A$ on $\mathcal{H}_\mathrm{A}$. An important property of positive maps is that the extension $\mathcal{E}\otimes \unit$ to a composite system $\mathcal{H}_\mathrm{A} \otimes \mathcal{H}_\mathrm{B}$ is not necessarily positive (if the extension was positive, we would call $\mathcal{E}$ \ul{completely positive} [see Eq.~\eqref{eq:P1}]). Hence, if the extended map $\mathcal{E}\otimes \unit$ is not positive, then there exist states $\sigma \in \mathcal{D}(\mathcal{H}_\mathrm{A} \otimes \mathcal{H}_\mathrm{B})$, such that $(\mathcal{E} \otimes \unit) (\sigma) \ngeq 0$. This, however, cannot happen for product states $\rho=\sum_j p_j \rho_\mathrm{A}^j \otimes \rho_\mathrm{B}^j$, since, due to $\mathcal{E}(\rho_\mathrm{A}^j) \geq 0$,
\begin{align}
(\mathcal{E}\otimes \unit)(\rho)=\sum_j p_j \mathcal{E}(\rho_\mathrm{A}^j) \otimes \rho_\mathrm{B}^j
\end{align}
must be positive, i.e. $(\mathcal{E}\otimes \unit)(\rho)\geq 0$. However, note that the fact that $(\mathcal{E}\otimes \unit)(\sigma)\geq 0$ for a single map $\mathcal{E}$ does not necessarily imply that the state $\sigma$ is separable. 

\begin{theorem}
A bipartite state $\rho \in \mathcal{D}(\mathcal{H}_\mathrm{A} \otimes \mathcal{H}_\mathrm{B})$ is separable if and only if
\begin{align}
(\mathcal{E}\otimes \unit)(\rho)\geq 0
\end{align}
for all positive maps $\mathcal{E}: \mathcal{H}_\mathrm{A} \rightarrow \mathcal{H}_\mathrm{A}$.
\end{theorem}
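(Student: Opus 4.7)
The ``only if'' direction is immediate and was essentially already stated in the paragraph preceding the theorem: if $\rho=\sum_j p_j\,\rho_\mathrm{A}^j\otimes\rho_\mathrm{B}^j$ is separable and $\mathcal{E}$ is positive, then
\begin{align*}
(\mathcal{E}\otimes\unit)(\rho)=\sum_j p_j\,\mathcal{E}(\rho_\mathrm{A}^j)\otimes\rho_\mathrm{B}^j
\end{align*}
is a convex combination of positive operators (each $\mathcal{E}(\rho_\mathrm{A}^j)\geq 0$ because $\mathcal{E}$ is positive, each $\rho_\mathrm{B}^j\geq 0$), hence $\geq 0$. So the real work is the ``if'' direction, which I would tackle by contrapositive: assume $\rho$ is entangled and construct a positive map $\mathcal{E}$ for which $(\mathcal{E}\otimes\unit)(\rho)\not\geq 0$.

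The first step is to produce an \emph{entanglement witness} via Hahn--Banach separation. The set $\mathrm{SEP}$ of separable states is a closed convex subset of the real vector space of Hermitian operators on $\mathcal{H}_\mathrm{A}\otimes\mathcal{H}_\mathrm{B}$ (closedness uses finite dimensionality). Since $\rho\notin\mathrm{SEP}$, Hahn--Banach separation yields a Hermitian operator $W$ and a scalar $c$ with $\tr(W\sigma)\geq c$ for every $\sigma\in\mathrm{SEP}$ and $\tr(W\rho)<c$; since $\mathrm{SEP}$ contains the maximally mixed state, one can shift $W\mapsto W-c\unit/\dim$ so that $\tr(W\sigma)\geq 0$ for all separable $\sigma$ while $\tr(W\rho)<0$. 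Specializing to product states $\sigma=\ketbra{a}{a}\otimes\ketbra{b}{b}$ gives $\bra{a\otimes b}W\ket{a\otimes b}\geq 0$, i.e.\ $W$ is \emph{block-positive}, though not positive (otherwise $\tr(W\rho)\geq 0$).

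The second step is to convert $W$ into a positive map via the Choi--Jamio\l{}kowski isomorphism. Given Hermitian $W$ on $\mathcal{H}_\mathrm{A}\otimes\mathcal{H}_\mathrm{B}$, define $\mathcal{E}_W:\mathcal{B}(\mathcal{H}_\mathrm{A})\to\mathcal{B}(\mathcal{H}_\mathrm{B})$ by $\mathcal{E}_W(X)=\trp{A}{(X^\top\otimes\unit)W}$. A direct computation shows $\bra{b}\mathcal{E}_W(\ketbra{a^*}{a^*})\ket{b}=\bra{a\otimes b}W\ket{a\otimes b}$, so block-positivity of $W$ is equivalent to positivity of $\mathcal{E}_W$. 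Hence $\mathcal{E}_W$ is a positive (but not completely positive) map. Finally, one shows non-positivity of $(\mathcal{E}_W\otimes\unit)(\rho)$ by testing against the unnormalized maximally entangled vector $\ket{\Omega}=\sum_i\ket{i}\ket{i}$ on the appropriate doubled system: a short computation gives $\bra{\Omega}(\mathcal{E}_W\otimes\unit)(\rho^\top\otimes\unit)\ket{\Omega}$ (suitably interpreted) equal to $\tr(W\rho)<0$, so $(\mathcal{E}_W\otimes\unit)(\rho)\not\geq 0$.

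\textbf{Expected obstacle.} The hard part is bookkeeping the Choi--Jamio\l{}kowski correspondence: getting the transposes and identifications right so that the equivalence ``$W$ block-positive $\Leftrightarrow\mathcal{E}_W$ positive'' is exact, and so that the inequality $\tr(W\rho)<0$ translates cleanly into non-positivity of $(\mathcal{E}_W\otimes\unit)(\rho)$. A secondary subtlety is that the theorem restricts to maps $\mathcal{E}:\mathcal{H}_\mathrm{A}\to\mathcal{H}_\mathrm{A}$, while the witness naturally produces a map $\mathcal{H}_\mathrm{A}\to\mathcal{H}_\mathrm{B}$; I would resolve this either by embedding $\mathcal{H}_\mathrm{B}$ isometrically into $\mathcal{H}_\mathrm{A}$ (enlarging dimensions if necessary and extending $\mathcal{E}_W$ trivially on the orthogonal complement, which preserves positivity) or by symmetrically applying $\mathcal{E}$ on the $B$ side instead, which leaves the argument unchanged.
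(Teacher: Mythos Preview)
The paper does not give a proof; it simply cites M.~Horodecki, P.~Horodecki, and R.~Horodecki, \textit{Phys.\ Lett.\ A} \textbf{223}, 1 (1996). Your proposal is precisely the argument of that paper: Hahn--Banach separation of $\rho$ from the closed convex set $\mathrm{SEP}$ to obtain a block-positive witness $W$, followed by the Jamio\l{}kowski isomorphism to turn $W$ into a positive map. So your approach is correct and matches the cited source.

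Two small cleanups. First, the shift should be $W\mapsto W-c\,\unit$, not $W-c\,\unit/\dim$: for any state $\sigma$ one has $\tr{(W-c\,\unit)\sigma}=\tr{W\sigma}-c$, which is $\geq 0$ on $\mathrm{SEP}$ and $<0$ at $\rho$ exactly as desired. Second, your final step is morally right but the formula is garbled; the tidy version is: if $W=(\Lambda\otimes\unit)(\ketbra{\Omega}{\Omega})$ is the Choi representation with $\Lambda$ positive, then
\[
0>\tr{W\rho}=\tr{\ketbra{\Omega}{\Omega}\,(\Lambda^{*}\otimes\unit)(\rho)},
\]
where $\Lambda^{*}$ is the (Hilbert--Schmidt) adjoint map, which is positive iff $\Lambda$ is. Since $\ketbra{\Omega}{\Omega}\geq 0$, this forces $(\Lambda^{*}\otimes\unit)(\rho)\not\geq 0$, and $\mathcal{E}=\Lambda^{*}$ is the positive map you want. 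Your remark about the codomain mismatch ($\mathcal{H}_\mathrm{A}\to\mathcal{H}_\mathrm{B}$ versus $\mathcal{H}_\mathrm{A}\to\mathcal{H}_\mathrm{A}$) and its resolution by isometric embedding is correct.
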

\begin{proof}
See M. Horodecki, et al., \textit{Phys. Lett. A}, \textbf{223}, 1 (1996).
\end{proof}

As a direct result, in order to conduct that $\rho$ is entangled, it is sufficient to find a single positive map $\mathcal{E}$, such that $(\mathcal{E}\otimes \unit)(\rho)\ngeq 0$. It turns out that a large class of entangled states can be concluded as being entangled via the transposition $\mathcal{E}=T$, which is a positive map. 

\begin{definition}
The \ul{partial transpose} of $\rho\in\mathcal{D}(\mathcal{H}_\mathrm{A} \otimes \mathcal{H}_\mathrm{B})$ is defined as
\begin{align}
\rho^\mathrm{pt}=(T\otimes \unit)(\rho),
\end{align}
with the transposition taken in the computational basis.
\end{definition}

\begin{theorem}[\textbf{Peres-Horodecki criterion or PPT criterion}]
A bipartite state $\rho\in\mathcal{D}(\mathcal{H}_\mathrm{A} \otimes \mathcal{H}_\mathrm{B})$ acting on a Hilbert space with dimension $\mathrm{dim}(\mathcal{H}_\mathrm{A} \otimes \mathcal{H}_\mathrm{B})=2\times 2$ or $2\times 3$ is separable if and only if it has a positive partial transpose, i.e.
\begin{align}
\rho^\mathrm{pt} \geq 0.
\end{align}
\end{theorem}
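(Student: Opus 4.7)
The statement is an ``if and only if,'' and the two directions have very different flavors, so I would attack them separately. The ``only if'' direction (separable $\Rightarrow$ PPT) is dimension-independent and completely elementary, while the ``if'' direction (PPT $\Rightarrow$ separable) relies on a deep structural theorem about positive maps that is specific to the dimensions $2\times 2$ and $2\times 3$.

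First, the easy direction. If $\rho = \sum_j p_j\, \rho_\mathrm{A}^j \otimes \rho_\mathrm{B}^j$ is separable, then
\begin{align}
\rho^\mathrm{pt} = (T \otimes \unit)(\rho) = \sum_j p_j\, (\rho_\mathrm{A}^j)^\top \otimes \rho_\mathrm{B}^j.
\end{align}
Since each $\rho_\mathrm{A}^j$ is a density operator, its transpose $(\rho_\mathrm{A}^j)^\top$ is again a density operator (same real eigenvalues), hence positive; tensoring with a positive operator preserves positivity, and a convex combination of positive operators is positive. Therefore $\rho^\mathrm{pt} \geq 0$. Note this half works in arbitrary finite dimension.

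The nontrivial direction uses the characterization stated just above the theorem: $\rho$ is separable iff $(\mathcal{E}\otimes \unit)(\rho) \geq 0$ for every positive map $\mathcal{E}$ on $\mathcal{H}_\mathrm{A}$. The key algebraic input I would invoke is the Størmer--Woronowicz decomposition theorem, which asserts that in precisely the dimensions $2\times 2$ and $2\times 3$ every positive map $\mathcal{E}:\mathcal{H}_\mathrm{A}\rightarrow \mathcal{H}_\mathrm{A}$ admits a decomposition
\begin{align}
\mathcal{E} = \mathcal{E}_1 + \mathcal{E}_2 \circ T,
\end{align}
where $\mathcal{E}_1,\mathcal{E}_2$ are completely positive and $T$ is transposition. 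Granting this, suppose $\rho^\mathrm{pt} \geq 0$ and let $\mathcal{E}$ be an arbitrary positive map. Then
\begin{align}
(\mathcal{E}\otimes \unit)(\rho) = (\mathcal{E}_1 \otimes \unit)(\rho) + (\mathcal{E}_2 \otimes \unit)\bigl((T\otimes \unit)(\rho)\bigr) = (\mathcal{E}_1 \otimes \unit)(\rho) + (\mathcal{E}_2 \otimes \unit)(\rho^\mathrm{pt}).
\end{align}
Because $\mathcal{E}_1,\mathcal{E}_2$ are completely positive, their extensions by $\unit$ map positive operators to positive operators; applied to $\rho \geq 0$ and $\rho^\mathrm{pt} \geq 0$ respectively, both summands are positive, so $(\mathcal{E}\otimes \unit)(\rho) \geq 0$. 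Since $\mathcal{E}$ was arbitrary, the Horodecki characterization yields that $\rho$ is separable.

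The main obstacle is of course the Størmer--Woronowicz decomposition itself: it is a genuinely hard result whose proof I would only sketch or cite, and whose failure in higher dimensions is exactly the reason the PPT criterion is only sufficient (not necessary) for separability beyond $2\times 2$ and $2\times 3$ --- bound entangled PPT states exist there. In the write-up I would emphasize this point to motivate the next section on entanglement monotones derived from PPT (e.g.\ negativity), which remain meaningful in arbitrary dimension even though they no longer detect all entangled states.
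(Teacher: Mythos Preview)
Your argument is correct and is in fact the standard proof. The paper does not give its own proof here---it simply cites Peres, \textit{Phys. Rev. Lett.} \textbf{77}, 1413 (1996)---so there is no detailed approach to compare against. Worth noting: the Peres paper actually establishes only the easy direction (separable $\Rightarrow$ PPT) and conjectures the converse; the ``if'' direction in $2\times 2$ and $2\times 3$ is due to the Horodeckis (the reference cited for the preceding theorem), and their proof is exactly the St{\o}rmer--Woronowicz route you outline. So your write-up is more complete than the paper's citation suggests, and your identification of the St{\o}rmer--Woronowicz decomposition as the real bottleneck---and the reason the criterion fails beyond $2\times 3$---is precisely the right emphasis.
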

\begin{proof}
See A. Peres, \textit{Phys. Rev. Lett.} \textbf{77}, 1413 (1996).
\end{proof}

Hence, for Hilbert spaces of dimension $2\times 2$ and $2\times 3$, the PPT criterion allows us to conclude whether a state is entangled or not. Moreover, the PPT criterion can be used to define an algebraically computable entanglement monotone for bipartite mixed states acting on a Hilbert space of any finite dimension.

\begin{theorem}
For $\rho\in\mathcal{D}(\mathcal{H}_\mathrm{A} \otimes \mathcal{H}_\mathrm{B})$ a bipartite mixed state, the \ul{negativity}
\begin{align}
\mathcal{N}(\rho)= \frac{\norm{\rho^\mathrm{pt}}_1-1}{2}
\end{align}
is an entanglement monotone, where $\norm{A}_1=\tr{\abs{A}}$ is the trace norm of an Hermitian operator $A$. 
\end{theorem}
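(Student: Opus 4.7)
The plan is to show that for every LOCC operation $\mathcal{E}_{\mathrm{LOCC}}$ one has $\mathcal{N}(\mathcal{E}_{\mathrm{LOCC}}(\rho))\leq\mathcal{N}(\rho)$, by exploiting how the partial transpose intertwines with local Kraus operators together with the contractivity of the trace norm under CPTP maps.

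First I would record the useful rewriting
\begin{align*}
\mathcal{N}(\rho)=\sum_{\lambda_i<0}|\lambda_i|,
\end{align*}
where the $\lambda_i$ are the eigenvalues of $\rho^{\mathrm{pt}}$. This follows from the fact that $\rho^{\mathrm{pt}}$ is Hermitian with $\tr{\rho^{\mathrm{pt}}}=\tr{\rho}=1$, so that $\norm{\rho^{\mathrm{pt}}}_1=\sum_i|\lambda_i|=1+2\sum_{\lambda_i<0}|\lambda_i|$. In particular $\mathcal{N}(\rho)\geq 0$, with $\mathcal{N}(\rho)=0$ whenever $\rho$ is separable, since in that case $\rho^{\mathrm{pt}}=\sum_j p_j(\rho_\mathrm{A}^j)^T\otimes\rho_\mathrm{B}^j\geq 0$. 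This already gives the expected normalization on the separable set.

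Next, since $\mathrm{LOCC}\subset\mathrm{SEP}$, it suffices to prove monotonicity for any separable (hence CPTP) operation of the form $\mathcal{E}(\rho)=\sum_k(A_k\otimes B_k)\rho(A_k^\dagger\otimes B_k^\dagger)$ with $\sum_k A_k^\dagger A_k\otimes B_k^\dagger B_k=\unit$. Applying the partial transpose on subsystem $\mathrm{A}$, which acts as $(X\otimes Y)\mapsto X^T\otimes Y$, and using $(A_k)^{T\dagger}=A_k^*$, I would compute
\begin{align*}
[\mathcal{E}(\rho)]^{\mathrm{pt}}=\sum_k(A_k^*\otimes B_k)\,\rho^{\mathrm{pt}}\,(A_k^T\otimes B_k^\dagger)=:\tilde{\mathcal{E}}(\rho^{\mathrm{pt}}).
\end{align*}
The key observation is that $\tilde{\mathcal{E}}$ is itself a CPTP map: transposing the first tensor factor of $\sum_k A_k^\dagger A_k\otimes B_k^\dagger B_k=\unit$ yields $\sum_k A_k^T A_k^*\otimes B_k^\dagger B_k=\unit$, which is exactly the completeness relation for the Kraus operators $A_k^*\otimes B_k$ of $\tilde{\mathcal{E}}$. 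The conclusion then follows from the contractivity of the trace norm under CPTP maps, $\norm{\tilde{\mathcal{E}}(\rho^{\mathrm{pt}})}_1\leq\norm{\rho^{\mathrm{pt}}}_1$; subtracting $1$ and dividing by $2$ yields $\mathcal{N}(\mathcal{E}(\rho))\leq\mathcal{N}(\rho)$.

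The main obstacle is justifying the separable Kraus form for a general LOCC protocol, since by Definition~\ref{def:LOCC} an LOCC map may consist of arbitrarily many rounds of local operations conditioned on previously communicated classical outcomes. I would handle this by induction on the number of rounds: each single round is manifestly of separable Kraus form (a local operation on one side, conditioned on the other), the composition of separable maps is separable, and the contraction argument above applies at each step. The one analytic ingredient—trace-norm contractivity under CPTP maps—is standard and, if needed, may be obtained from the dual characterization $\norm{X}_1=\sup_{\norm{Y}_\infty\leq 1}|\tr{XY}|$ together with the unitality $\tilde{\mathcal{E}}^\dagger(\unit)=\unit$ implied by the completeness relation.
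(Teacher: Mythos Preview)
Your argument is correct and is essentially the proof given in the reference the paper cites (Vidal and Werner, \textit{Phys.\ Rev.\ A} \textbf{65}, 032314 (2002)); the paper itself does not spell out the proof. The two key steps you use --- that the partial transpose turns a separable Kraus decomposition into another trace-preserving CP map via $A_k\otimes B_k\mapsto A_k^*\otimes B_k$, and trace-norm contractivity of such maps --- are exactly the mechanism in the original paper, so nothing further is needed beyond noting that $\mathrm{LOCC}\subset\mathrm{SEP}$ is already stated in the notes and can simply be cited rather than re-derived by induction on rounds.
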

\begin{proof}
See G. Vidal and R. F. Werner, \textit{Phys. Rev. A} \textbf{65}, 032314 (2002)
\end{proof}

\begin{remarks}\leavevmode
\begin{enumerate}[1)]

\item Note that while $\tr{\rho^\mathrm{pt}}=1$, we have that $\norm{\rho^\mathrm{pt}}_1 = \tr{\abs{\rho^\mathrm{pt}}} \geq 1$, with equality if and only if $\rho^\mathrm{pt} \geq 0$. Hence, $\mathcal{N}(\rho) \geq 0$, with equality if and only if $\rho^\mathrm{pt} \geq 0$. 

\item The negativity is an entanglement monotone for arbitrary finite Hilbert space dimensions, which can be computed with standard linear algebra packages. 

\item Note that for Hilbert space dimension larger than $2\times 3$, the negativity vanishes for some entangled states (since there exist entangled states with positive partial transpose for systems larger than $2\times 3$). Hence, it is no entanglement measure. 
\end{enumerate}
\end{remarks}

\end{document}